\title{
  Generalised Multiparty Session Types with Crash-Stop Failures (Technical
  Report, \today)
}
\titlerunning{%
  Generalised Multiparty Session Types with Crash-Stop Failures (Tech Report)
}
\author{Adam D. Barwell}{Imperial College London}{%
a.barwell@imperial.ac.uk
}{%
https://orcid.org/0000-0003-1236-7160
}{}
\author{Alceste Scalas}{DTU Compute --- Technical University of Denmark}{%
alcsc@dtu.dk
}{%
https://orcid.org/0000-0002-1153-6164
}{}
\author{Nobuko Yoshida}{Imperial College London}{%
yoshida@doc.ic.ac.uk
}{%
https://orcid.org/0000-0002-3925-8557
}{}
\author{Fangyi Zhou}{Imperial College London}{%
fangyi.zhou15@imperial.ac.uk
}{%
https://orcid.org/0000-0002-8973-0821
}{}
\authorrunning{A.D. Barwell, A. Scalas, N. Yoshida, F. Zhou} %
\keywords{Session Types, Concurrency, Failure Handling, Model Checking} %
\newtcolorbox{myframe}[1][]{
  enhanced,
  arc=0pt,
  outer arc=0pt,
  colback=white,
  boxrule=0.5pt,
  boxsep=0mm,
  left=1mm,
  right=1mm,
  top=0.5mm,
  bottom=0.5mm,
  #1
}
\lstdefinelanguage{Scribble}{%
  basicstyle=\footnotesize\ttfamily,
  stringstyle=\color{Blue},
  showstringspaces=false,
  keywords={nested,new,calls,and,as,at,by,catches,choice,continue,do,from,global,import,instantiates,interruptible,local,module,or,par,protocol,rec,role,sig,throws,to,type,with,int,aux},
  morestring=[b]",
  morestring=[b]',
  morecomment=[l][\color{greencomments}]{//},
}
\lstdefinelanguage{nuScr}{%
  basicstyle=\footnotesize\ttfamily,
  stringstyle=\color{Blue},
  showstringspaces=false,
  keywords={
    nested,new,calls,and,as,at,by,catches,choice,continue,do,from,global,import,instantiates,interruptible,local,module,or,par,protocol,rec,role,sig,throws,to,type,with,int,aux,
    safe
  },
  morestring=[b]",
  morestring=[b]',
  morecomment=[l][\color{greencomments}]{//},
  morecomment=[s][\color{magenta}]{(*}{*)},
}
\tikzset{
  >=stealth,
  node distance=2cm,
  every state/.style={thick, fill=gray!10},
  initial text=$ $,
}
\Crefname{section}{\S\!}{\S\!}%
\Crefname{subsection}{\S\!}{\S\!}%
\Crefname{subsubsection}{\S\!}{\S\!}%
\Crefname{appendix}{\S\!}{\S\!}
\Crefname{definition}{Def.\@}{Defs.\@}%
\Crefname{figure}{Fig.\@}{Figs.\@}%
\Crefname{example}{Ex.\@}{Exs.\@}%
\Crefname{corollary}{Cor.\@}{Cors.\@}%
\Crefname{theorem}{Thm.\@}{Thms.\@}%
\Crefname{proposition}{Prop.\@}{Props.\@}%
\crefname{section}{\S\!}{\S\!}%
\crefname{subsection}{\S\!}{\S\!}%
\crefname{subsubsection}{\S\!}{\S\!}%
\crefname{appendix}{\S\!}{\S\!}
\crefname{definition}{Def.\@}{Defs.\@}%
\crefname{figure}{Fig.\@}{Figs.\@}%
\crefname{example}{Ex.\@}{Exs.\@}%
\crefname{corollary}{Cor.\@}{Cors.\@}%
\crefname{theorem}{Thm.\@}{Thms.\@}%
\crefname{proposition}{Prop.\@}{Props.\@}%
\newcolumntype{L}{>{$}l<{$}}
\newcolumntype{C}{>{$}c<{$}}
\newcolumntype{P}[1]{>{\centering\arraybackslash$}p{#1}<{$}}
\newif\ifdraft%
\newcommand{\ifempty}[3]{%
  \ifthenelse{\isempty{#1}}{#2}{#3}%
}%
\newcommand{\ifTR}[1]{%
  \iftoggle{techreport}{#1}{}%
}%
\newcommand{\dom}[1]{{\color{black}\operatorname{dom}\!\left({#1}\right)}}%
\newcommand{\fv}[1]{\operatorname{fv}\!\left({#1}\right)}%
\newcommand{\fc}[1]{\operatorname{fc}\!\left({#1}\right)}%
\newcommand{\dpv}[1]{\operatorname{dpv}\!\left({#1}\right)}%
\newcommand{\fpv}[1]{\operatorname{fpv}\!\left({#1}\right)}%
\newcommand{\notImpliedBy}{\mathrel{{\kern 1em}{\not{\kern -1em}\impliedby}}}%
\newcommand{\coloncolonequals}{\Coloneqq}%
\newcommand{\bnfdef}{\coloncolonequals}%
\newcommand{\bnfsep}{\mathbin{\;\big|\;}}%
\newcommand{\theTool}[0]{\texttt{mpstk}\xspace}
\def\aka{a.k.a.\@\xspace}%
\def\etc{\emph{etc.}\@\xspace}%
\def\eg{e.g.\@\xspace}%
\def\Eg{E.g.\@\xspace}%
\def\ie{i.e.\@\xspace}%
\def\wrt{w.r.t.\@\xspace}%
\def\formulae{formul\ae\xspace}
\def\Formulae{Formul\ae\xspace}
\newcommand{\exampleName}[1]{\ensuremath{\mathsf{#1}}\xspace}
\definecolor{ruleColor}{rgb}{0.1, 0.3, 0.1}%
\newcommand{\inferrule}[1]{{\color{ruleColor}\text{\upshape\textsc{\scriptsize[#1]}}}}%
\newcommand{\inference}[3][]{\infer[\ifempty{#1}{}{\inferrule{#1}}]{#3}{#2}}%
\newcommand{\cinference}[3][]{\infer=[\ifempty{#1}{}{\inferrule{#1}}]{#3}{#2}}%
\newcommand{\inferenceSingle}[2][]{{#2}\ifempty{#1}{}{\;\;\inferrule{#1}}}%
\newcommand{\setenum}[1]{\mathord{{\color{black}\left\{#1\right\}}}}%
\newcommand{\setcomp}[2]{\mathord{%
  {\color{black}\left\{{#1} \,\middle|\, {#2}\right\}}}}%
\newcommand{\predP}[1][]{\ifempty{#1}{\varphi}{\varphi_{#1}}}%
\newcommand{\predPApp}[2][]{\ifempty{#1}{\predP}{\predP[{#1}]}\!\left({#2}\right)}%
\newcommand{\bind}[2]{\nicefrac{#2}{#1}}%
\newcommand{\substenum}[1]{\mathord{\left\{{#1}\right\}}}%
\newcommand{\subst}[2]{\substenum{\bind{#1}{#2}}}%
\definecolor{hlColor}{rgb}{0.65, 1.0, 0.65}%
\newcommand{\highlightColour}{yellow!40}
\newcommand{\highlightText}[1]{%
  {\setlength{\fboxsep}{0pt}\colorbox{\highlightColour}{#1}}\xspace%
}%
\newcommand{\highlight}[2][\highlightColour]{\mathchoice%
  {\setlength{\fboxsep}{0pt}\colorbox{#1}{$\displaystyle#2$}}%
  {\setlength{\fboxsep}{0pt}\colorbox{#1}{$\textstyle#2$}}%
  {\setlength{\fboxsep}{0pt}\colorbox{#1}{$\scriptstyle#2$}}%
  {\setlength{\fboxsep}{0pt}\colorbox{#1}{$\scriptscriptstyle#2$}}}%
\newcommand{\lbbar}{\{\kern-0.2em|}
\newcommand{\rbbar}{|\kern-0.2em\}}
\definecolor{tyColorCustom}{rgb}{0.0, 0.0, 0.85}%
\newcommand{\tyCol}[1]{{\color{tyColorCustom}{#1}}}%
\newcommand{\tyFont}[1]{{#1}}%
\newcommand{\tyFmt}[1]{\tyCol{\tyFont{#1}}}%
\newcommand{\tyFontC}[1]{\operatorname{#1}}%
\newcommand{\tyFmtC}[1]{\tyCol{\tyFontC{#1}}}%
\newcommand{\tyGround}[1][]{\tyFmt{\ifempty{#1}{B}{B_{#1}}}}%
\newcommand{\tyGroundi}[1][]{\tyFmt{\ifempty{#1}{B'}{B'_{#1}}}}%
\newcommand{\tyBool}{\tyFmtC{bool}}%
\newcommand{\tyUnit}{\tyFmtC{unit}}%
\newcommand{\tyInt}{\tyFmtC{int}}%
\newcommand{\tyReal}{\tyFmtC{real}}%
\newcommand{\tyS}[1][]{\tyCol{\ifempty{#1}{\tyFont{S}}{\tyFont{S}_{#1}}}}%
\newcommand{\tySi}[1][]{\tyCol{\ifempty{#1}{\tyFont{S}'}{\tyFont{S}'_{#1}}}}%
\newcommand{\tyT}[1][]{\tyCol{\ifempty{#1}{\tyFont{T}}{\tyFont{T}_{#1}}}}%
\newcommand{\tyTi}[1][]{\tyCol{\ifempty{#1}{\tyFont{T}'}{\tyFont{T}'_{#1}}}}%
{\centerline{\bf --- Begin Copied From Previous Paper ---} \hrule}%
{\hrule \centerline{\bf --- End Copied From Previous Paper ---}}%
  \hrule\vspace{1mm}}%
\hrule\vspace{1mm}\centerline{\bf --- End Discussion ---}}%
\definecolor{roleColor}{rgb}{0.5, 0.0, 0.0}%
\newcommand{\roleCol}[1]{{\color{roleColor}#1}}%
\newcommand{\roleSet}{\roleCol{\mathfrak{R}}}%
\newcommand{\roleFmt}[1]{\ensuremath{{\boldsymbol{\roleCol{\mathtt{#1}}}}}\xspace}%
\newcommand{\roleP}[1][]{%
  \ifempty{#1}{{\color{roleColor}\roleFmt{p}}}{{\color{roleColor}\roleFmt{p}_{#1}}}%
}%
\newcommand{\rolePi}[1][]{%
  \ifempty{#1}{{\color{roleColor}\roleFmt{p}'}}{{\color{roleColor}\roleFmt{p}'_{#1}}}%
}%
\newcommand{\roleQ}[1][]{%
  \ifempty{#1}{{\color{roleColor}\roleFmt{q}}}{{\color{roleColor}\roleFmt{q}_{#1}}}%
}%
\newcommand{\roleQi}[1][]{%
  \ifempty{#1}{{\color{roleColor}\roleFmt{q}'}}{{\color{roleColor}\roleFmt{q}'_{#1}}}%
}%
\newcommand{\roleR}[1][]{%
  \ifempty{#1}{{\color{roleColor}\roleFmt{r}}}{{\color{roleColor}\roleFmt{r}_{\!#1}}}%
}%
\newcommand{\roleS}[1][]{%
  \ifempty{#1}{{\color{roleColor}\roleFmt{s}}}{{\color{roleColor}\roleFmt{s}_{\!#1}}}%
}%
\newcommand{\rolesR}[1][]{%
  \ifempty{#1}{{\color{roleColor}\roleFmt{\mathcal
        R}}}{{\color{roleColor}\roleFmt{\mathcal R}_{#1}}}
}
\newcommand{\rolesRi}[1][]{%
  \ifempty{#1}{{\color{roleColor}\roleFmt{\mathcal
        R'}}}{{\color{roleColor}\roleFmt{\mathcal R'}_{#1}}}
}
\newcommand{\rolesS}[1][]{%
  \ifempty{#1}{{\color{roleColor}\roleFmt{\mathcal
        R}}}{{\color{roleColor}\roleFmt{\mathcal R}_{#1}}}
}
\newcommand{\rolesSEmpty}[0]{\gtFmt{\emptyset}}
\definecolor{gtColor}{rgb}{0.43, 0.21, 0.1}%
\newcommand{\gtFmt}[1]{\ensuremath{{\color{gtColor}#1}}\xspace}%
\newcommand{\gtMsgFmt}[1]{\gtFmt{\labFmt{#1}}}%
\newcommand{\gtLab}[1][]{%
  \ifempty{#1}{\gtMsgFmt{m}}{{\color{gtColor}\gtMsgFmt{m}_{#1}}}%
}%
\newcommand{\labFmt}[2][]{\ensuremath{\ifempty{#1}{\mathtt{#2}}{\mathtt{#2}\textsubscript{#1}}}\xspace}%
\newcommand{\errKFmt}[1]{\stStopLab \stSeq {#1}}
\definecolor{stColor}{rgb}{0, 0, 0.9}%
\newcommand{\stFmt}[1]{\ensuremath{{\color{stColor}#1}}\xspace}%
\newcommand{\stTypeInt}{\stFmt{\operatorname{Int}}}%
\newcommand{\stTypeString}{\stFmt{\operatorname{Str}}}%
\newcommand{\stIn}[4]{\ifempty{#1}{}{\roleFmt{#1}}\stFmt{\&\left\{{#2}\ifempty{#3}{}{({#3})} \ifempty{#4}{}{\stSeq #4}\right\}}}%
\newcommand{\stInNB}[4]{\ifempty{#1}{}{\roleFmt{#1}}\stFmt{\&{#2}\ifempty{#3}{}{({#3})} \ifempty{#4}{}{\stSeq #4}}}%
\newcommand{\stOut}[3]{\ifempty{#1}{}{\roleFmt{#1}}\stFmt{\oplus{#2}\ifempty{#3}{}{({#3})}}}%
\newcommand{\stChoice}[2]{\stLabFmt{#1}\ifempty{#2}{}{\stFmt{({#2})}}}%
\newcommand{\stSeq}{\mathbin{\!\stFmt{.}\!}}%
\newcommand{\stIntC}{\mathbin{\stFmt{\oplus}}}%
\newcommand{\stIntSum}[3]{\roleFmt{#1}\stFmt{\oplus\!\left\{#3\right\}_{#2}}}%
\newcommand{\stExtC}{\mathbin{\stFmt{\&}}}%
\newcommand{\stExtSum}[3]{\roleFmt{#1}\stFmt{\&\!\left\{#3\right\}_{#2}}}%
\newcommand{\stExtSumErr}[4]{\roleFmt{#1}\stFmt{\&\!\left\{#3, \errKFmt{#4}\right\}\ifempty{#2}{}{_{#2}}}}%
\newcommand{\stRec}[2]{\stFmt{\mu{#1}.{#2}}}%
\newcommand{\stEnd}{\stFmt{\mathsf{end}}}%
\newcommand{\stStop}{\stFmt{\mathsf{stop}}}%
\newcommand{\stLabFmt}[1]{\stFmt{\labFmt{#1}}}%
\newcommand{\stLab}[1][]{%
  \ifempty{#1}{\stLabFmt{m}}{\stLabFmt{m}_{{\color{stColor}#1}}}
}%
\newcommand{\stLabi}[1][]{%
  \ifempty{#1}{\stLabFmt{m'}}{\stLabFmt{m'}_{{\color{stColor}#1}}}
}%
\newcommand{\stStopLab}[0]{\stLabFmt{\mathsf{crash}}}
\newcommand{\stCrashLab}[0]{\stLabFmt{\mathsf{crash}}}
\newcommand{\stLabOK}[0]{\stLabFmt{ok}}
\newcommand{\stLabKO}[0]{\stLabFmt{ko}}
\newcommand{\stS}[1][]{\stFmt{\ifempty{#1}{S}{S_{#1}}}}%
\newcommand{\stSi}[1][]{\stFmt{\ifempty{#1}{S'}{S'_{#1}}}}%
\newcommand{\stT}[1][]{\stFmt{\ifempty{#1}{T}{T_{#1}}}}%
\newcommand{\stTi}[1][]{\stFmt{\ifempty{#1}{T'}{T'_{#1}}}}%
\newcommand{\stU}[1][]{\stFmt{\ifempty{#1}{U}{U_{#1}}}}%
\newcommand{\stLabP}[1][]{\stFmt{\stLab[#1](\stS[#1])}}%
\newcommand{\stLabPi}[1][]{\stFmt{\stLabi[#1](\stSi[#1])}}%
\newcommand{\stRecVarBase}{\stFmt{\mathbf{t}}}%
\newcommand{\stRecVar}[1][]{\stFmt{\ifempty{#1}{\stRecVarBase}{\stRecVarBase_{#1}}}}%
\newcommand{\stRecVari}[1][]{\stFmt{\ifempty{#1}{\stRecVar'}{\stRecVar'_{#1}}}}%
\newcommand{\tyGroundSub}{\mathrel{{<}{:}}}
\newcommand{\stSub}{\mathrel{\stFmt{\leqslant}}}%
\newcommand{\stNotSub}{\mathrel{\stFmt{\not\leqslant}}}%
\newcommand{\ltsSendRecv}[4]{\mpChanRole{#1}{#2}\mpFmt{[\roleFmt{#3}]}\labFmt{#4}}
\newcommand{\ltsCrDe}[3]{\mpChanRole{#1}{#2}\stFmt{\mathord{\odot}}\roleFmt{#3}}
\newcommand{\ltsCrash}[2]{\mpChanRole{#1}{#2}\stFmt{\lightning}}
\newcommand{\ltsCrashSmall}[2]{\mpChanRole{#1}{#2}\stFmt{\lightning}}
\definecolor{mpColor}{rgb}{0, 0, 0}%
\newcommand{\mpFmt}[1]{{\color{mpColor}#1}}%
\newcommand{\mpLab}[1][]{%
  \mpFmt{\ifempty{#1}{\labFmt{m}}{{\labFmt{m}}_{\mathnormal #1}}}%
}%
\newcommand{\mpLabi}[1][]{%
  \mpFmt{\ifempty{#1}{\labFmt{m}'}{\labFmt{m}'_{\mathnormal #1}}}%
}%
\newcommand{\mpLabCrash}[0]{\mpFmt{\labFmt{\mathsf{crash}}}}
\newcommand{\mpLabFmt}[1]{\mpFmt{\labFmt{#1}}}%
\newcommand{\mpChanRole}[2]{\mpFmt{{#1}[{#2}]}}%
\newcommand{\mpNil}{\mpFmt{\mathbf{0}}}%
\newcommand{\mpSeq}{\mathbin{\mpFmt{\!.\!}}}%
\newcommand{\mpChoice}[3]{%
  \mpFmt{%
    \mpLabFmt{#1}\ifempty{#2}{}{({#2})}\ifempty{#3}{}{\vphantom{x}\mpSeq {#3}}%
  }%
}%
\newcommand{\mpChoiceNoBind}[3]{%
  \mpFmt{%
    \mpLabFmt{#1}\ifempty{#2}{}{\langle{#2}\rangle}\ifempty{#3}{}{\vphantom{x}\mpSeq {#3}}%
  }%
}%
\newcommand{\mpBranch}[7]{%
  \mpFmt{%
    {#1}[\roleFmt{#2}] \mathbin{\!\ifempty{\sum}{\sum}{\&}\!}%
    \ifempty{#3}{%
      \{\mpChoice{#4}{#5}{#6}\ifempty{#7}{}{,\;\mpChoice{\mpLabCrash}{}{#7}}\}_{#3}
    }{%
      \{\mpChoice{#4}{#5}{#6}\ifempty{#7}{}{,\;\mpChoice{\mpLabCrash}{}{#7}}\}_{#3}
    }%
  }%
}%
\newcommand{\mpBranchSingle}[5]{%
  \mpFmt{%
    {#1}[\roleFmt{#2}] \mathbin{\!\ifempty{\sum}{\sum}{\&}\!}%
    \mpChoice{#3}{#4}{#5}
  }%
}%
\newcommand{\mpSel}[5]{%
  \mpFmt{%
    {#1}[\roleFmt{#2}] \mathbin{\!\oplus\!}%
    \mpChoiceNoBind{#3}{#4}{#5}%
  }%
}%
\newcommand{\mpPar}{\mathbin{\mpFmt{\mid}}}%
\newcommand{\mpBigPar}[2]{\mathbin{\mpFmt{\Pi_{#1}}{#2}}}%
\newcommand{\mpRes}[2]{\mpFmt{\left(\mathbf{\nu}{#1}\right){#2}}}%
\newcommand{\mpJustDef}[3]{%
  \mpFmt{{#1}(#2) = {#3}}%
}%
\newcommand{\mpDef}[4]{%
  \mpFmt{\mathsf{def}\;\mpJustDef{#1}{#2}{#3}\;\mathsf{in}\;{#4}}%
}%
\newcommand{\mpDefAbbrev}[2]{%
  \mpFmt{\mathsf{def}\;{#1}\;\mathsf{in}\;{#2}}%
}%
\newcommand{\mpCall}[2]{\mpFmt{{#1}\!\left\langle{#2}\right\rangle}}%
\newcommand{\mpCallSmall}[2]{\mpFmt{{#1}\langle{#2}\rangle}}%
\newcommand{\mpErr}{\mpFmt{\boldsymbol{\mathsf{err}}}}%
\newcommand{\mpStop}[2]{\mpChanRole{#1}{#2}\mpFmt{\ensuremath{\lightning}}}
\newcommand{\mpStopDef}[0]{\mpStop{\mpS}{\roleP}}
\newcommand{\mpCtx}[1][]{\mpFmt{\ifempty{#1}{\mathbb{C}}{\mathbb{C}_{#1}}}}%
\newcommand{\mpCtxi}[1][]{\mpFmt{\ifempty{#1}{\mathbb{C}'}{\mathbb{C}'_{#1}}}}%
\newcommand{\mpCtxHole}{[\,]}%
\newcommand{\mpCtxApp}[2]{{#1}\!\left[{#2}\right]}%
\newcommand{\mpV}[1][]{\mpFmt{\ifempty{#1}{v}{v_{#1}}}}
\newcommand{\mpW}[1][]{\mpFmt{\ifempty{#1}{w}{w_{#1}}}}
\newcommand{\mpC}[1][]{\mpFmt{\ifempty{#1}{c}{c_{#1}}}}%
\newcommand{\mpD}[1][]{\mpFmt{\ifempty{#1}{d}{d_{#1}}}}%
\newcommand{\mpS}[1][]{\mpFmt{\ifempty{#1}{s}{s_{#1}}}}%
\newcommand{\mpSi}[1][]{\mpFmt{\ifempty{#1}{s'}{s'_{#1}}}}%
\newcommand{\mpX}[1][]{\mpFmt{\ifempty{#1}{X}{X_{#1}}}}%
\newcommand{\mpY}[1][]{\mpFmt{\ifempty{#1}{Y}{Y_{#1}}}}%
\newcommand{\mpP}[1][]{\mpFmt{\ifempty{#1}{P}{P_{#1}}}}%
\newcommand{\mpPi}[1][]{\mpFmt{\ifempty{#1}{P'}{P'_{#1}}}}%
\newcommand{\mpPii}[1][]{\mpFmt{\ifempty{#1}{P''}{P''_{#1}}}}%
\newcommand{\mpQ}[1][]{\mpFmt{\ifempty{#1}{Q}{Q_{#1}}}}%
\newcommand{\mpQi}[1][]{\mpFmt{\ifempty{#1}{Q'}{Q'_{#1}}}}%
\newcommand{\mpR}[1][]{\mpFmt{\ifempty{#1}{R}{R_{#1}}}}%
\newcommand{\mpDefD}[1][]{\mpFmt{\ifempty{#1}{D}{D_{#1}}}}%
\newcommand{\mpDefDi}[1][]{\mpFmt{\ifempty{#1}{D'}{D'_{#1}}}}%
\newcommand{\mpMove}{\to}%
\newcommand{\mpMoveLab}[1]{\xrightarrow{#1}}
\newcommand{\mpMoveMaybeCrash}[1][]{%
  \ifempty{#1}{%
    \mathrel{\mpMove}%
  }{%
    \mathrel{\mpMove_{\!\lightning \setminus {#1}}}%
  }
}%
\newcommand{\mpMoveMaybeCrashChecked}[1][]{%
  \ifempty{#1}{%
    \mathrel{\mpMoveLab{\checkmark}}%
  }{%
    \text{\fxASerror{FIX MACRO USE!}}
  }
}%
\newcommand{\mpMoveStar}{\mathrel{\mpMove{}^{\!\!\!*}}}%
\newcommand{\mpMovePlus}{\mathrel{\mpMove{}^{\!\!\!+}}}%
\newcommand{\mpMoveMaybeCrashCheckedStar}[1][]{%
  \ifempty{#1}{%
    \mathrel{{\mpMoveLab{\checkmark}}{}^{\!*}}%
  }{%
    \text{\fxASerror{FIX MACRO USE!}}
  }
}%
\newcommand{\mpMoveMaybeCrashCheckedPlus}[1][]{%
  \ifempty{#1}{%
    \mathrel{{\mpMoveLab{\checkmark}}{}^{\!+}}%
  }{%
    \text{\fxASerror{FIX MACRO USE!}}
  }
}%
\newcommand{\mpMoveP}[1]{\mpFmt{#1}\!{\mpMove}}%
\newcommand{\mpNotMoveP}[1]{\mpFmt{#1}\!\not{\!\!\mpMove}}%
\newcommand{\mpMoveCrash}{\mpMove}
\newcommand{\iruleMPRedComm}{R-$\oplus\&$}%
\newcommand{\iruleMPRedCommE}{R-$\lightning\mpLab$}%
\newcommand{\iruleMPRedCommEGround}{R-$\lightning\mpLab\tyGround$}%
\newcommand{\iruleMPRedCommD}{R-$\odot$}%
\newcommand{\iruleMPRedPar}{R-$\mpPar$}%
\newcommand{\iruleMPRedRes}{R-$\mpFmt{\mathbf{\nu}}$}%
\newcommand{\iruleMPRedCall}{R-$\mpX$}%
\newcommand{\iruleMPRedCongr}{R-$\equiv$}%
\newcommand{\iruleMPCtx}[1][]{R-Ctx\ifempty{#1}{}{#1}}
\newcommand{\iruleMPRedCtx}{\iruleMPCtx[]}%
\newcommand{\iruleMPRedCtxCrash}{\iruleMPCtx[$\lightning$\!]}%
\newcommand{\iruleMPErrLabel}{R-Err}%
\newcommand{\iruleMPCrash}[1][]{R-$\lightning\ifempty{#1}{}{#1}$}
\newcommand{\iruleMPCrashS}{\iruleMPCrash[\oplus]}
\newcommand{\iruleMPCrashR}{\iruleMPCrash[\&]}
\newcommand{\iruleSafeComm}{S-${\stIntC}{\stExtC}$}%
\newcommand{\iruleSafeCrash}{S-${\stFmt{\lightning}}{\stExtC}$}%
\newcommand{\iruleSafeMove}{S-$\stEnvMoveMaybeCrash$}%
\newcommand{\iruleStSubGround}{Sub-$\tyGround$}
\newcommand{\iruleStSubEnd}{Sub-$\stEnd$}
\newcommand{\iruleStSubStop}{Sub-$\stStop$}
\newcommand{\iruleStSubRecL}{Sub-$\stFmt{\mu}$L}
\newcommand{\iruleStSubRecR}{Sub-$\stFmt{\mu}$R}
\newcommand{\iruleStSubOut}{Sub-$\stFmt{\oplus}$}
\newcommand{\iruleStSubIn}{Sub-$\stFmt{\&}$}
\newcommand{\iruleMPEnd}{T-$\stEnvEndPred$}%
\newcommand{\iruleMPX}{T-$\mpFmt{X}$}%
\newcommand{\iruleMPGround}{T-$\tyGround$}
\newcommand{\iruleMPSub}{T-Sub}%
\newcommand{\iruleMPNil}{T-$\mpNil$}%
\newcommand{\iruleMPDef}{T-$\mpFmt{\mathsf{def}}$}%
\newcommand{\iruleMPCall}{T-Call}%
\newcommand{\iruleMPPar}{T-$\mpPar$}%
\newcommand{\iruleMPResProp}{T-$\mpFmt{\mathbf{\nu}}$}%
\newcommand{\iruleMPBranch}{T-$\mpFmt{\&}$}%
\newcommand{\iruleMPSel}{T-$\mpFmt{\oplus}$}%
\newcommand{\stStopSym}{\stFmt{\ensuremath{\lightning}}}
\newcommand{\iruleMPStop}{T-$\lightning$}%
\newcommand{\iruleTCtxOut}{$\stEnv$-$\stFmt{\oplus}$}%
\newcommand{\iruleTCtxIn}{$\stEnv$-$\stFmt{\&}$}%
\newcommand{\iruleTCtxCom}{$\stEnv$-$\stFmt{\oplus\&}$}%
\newcommand{\iruleTCtxRec}{$\stEnv$-$\mu$}%
\newcommand{\iruleTCtxCong}{$\stEnv$-$\stEnvComp$}%
\newcommand{\iruleTCtxCongBasic}{$\stEnv$-$\stEnvComp$$\tyGround$}%
\newcommand{\iruleTCtxCrash}{$\stEnv$-$\lightning$}
\newcommand{\iruleTCtxCrashDetect}{$\stEnv$-$\odot$}
\newcommand{\iruleTCtxSendToCrashed}{$\stEnv$-$\lightning\gtLab$}
\newcommand{\iruleTCtxCrashed}{$\stEnv$-$\stStop$}
\newcommand{\iruleCongParComm}[0]{C-Par}
\newcommand{\iruleCongParAssoc}[0]{C-Assoc}
\newcommand{\iruleCongParId}[0]{C-ParId}
\newcommand{\iruleCongResElim}[0]{C-ResElim}
\newcommand{\iruleCongResVar}[0]{C-ResVar}
\newcommand{\iruleCongResLift}[0]{C-ResLift}
\newcommand{\iruleCongStopElim}[0]{C-CrashElim}
\newcommand{\iruleCongDefElim}[0]{C-DefElim}
\newcommand{\iruleCongDefLift}[0]{C-DefLift}
\newcommand{\iruleCongDefParLift}[0]{C-DefParLift}
\newcommand{\iruleCongDefOrd}[0]{C-DefOrd}
\newcommand{\iruleFmlaSafe}{$\mu$-safe}
\newcommand{\iruleFmlaDF}[0]{$\mu$-df}
\newcommand{\iruleFmlaTerm}[0]{$\mu$-term}
\newcommand{\iruleFmlaNTerm}[0]{$\mu$-nterm}
\newcommand{\iruleFmlaLive}[0]{$\mu$-live}
\newcommand{\stEnv}[1][]{\stFmt{\ifempty{#1}{\Gamma}{\Gamma_{\!#1}}}}%
\newcommand{\stEnvi}[1][]{\stFmt{\ifempty{#1}{\Gamma'}{\Gamma'_{\!#1}}}}%
\newcommand{\stEnvii}[1][]{\stFmt{\ifempty{#1}{\Gamma''}{\Gamma''_{\!#1}}}}%
\newcommand{\stEnviii}[1][]{\stFmt{\ifempty{#1}{\Gamma'''}{\Gamma'''_{\!#1}}}}%
\newcommand{\stEnvEmpty}{\stFmt{\emptyset}}%
\newcommand{\stEnvMap}[2]{\stFmt{\mpFmt{#1}\mathbin{\!:\!}{#2}}}%
\newcommand{\stEnvComp}{\mathpunct{\stFmt{,}}}%
\newcommand{\stEnvApp}[2]{\stFmt{#1\!\left(\mpFmt{#2}\right)}}%
\newcommand{\stEnvMove}{\mathrel{\stFmt{\to}}}%
\newcommand{\stEnvMoveMaybeCrash}[1][]{%
  \ifempty{#1}{%
    \mathrel{\stFmt{\to_{\!\lightning}}}%
  }{%
    \mathrel{\stFmt{\to_{\!\lightning \setminus {#1}}}}%
  }
}%
\newcommand{\stEnvAnnotOutSym}{\stFmt{\oplus}}%
\newcommand{\stEnvAnnotInSym}{\stFmt{\&}}%
\newcommand{\stEnvAnnotGenericSym}[1][]{\stFmt{\ifempty{#1}{\alpha}{\alpha_{#1}}}}%
\newcommand{\stEnvMoveAnnot}[1]{\mathrel{\stFmt{\xrightarrow{#1}}}}
\newcommand{\stEnvMoveGenAnnot}{\stEnvMoveAnnot{\stEnvAnnotGenericSym}}%
\newcommand{\stEnvMoveInAnnot}[3]{%
  \stEnvMoveAnnot{\stEnvInAnnot{#1}{#2}{#3}}%
}%
\newcommand{\stEnvMoveOutAnnot}[3]{%
  \stEnvMoveAnnot{\stEnvOutAnnot{#1}{#2}{#3}}%
}%
\newcommand{\stEnvMoveCommAnnot}[4]{%
  \stEnvMoveAnnot{\ltsSendRecv{#1}{#2}{#3}{#4}}%
}%
\newcommand{\stEnvMoveCrDeAnnot}[3]{%
  \stEnvMoveAnnot{\ltsCrDe{#1}{#2}{#3}}%
}%
\newcommand{\stEnvInAnnot}[3]{\mpChanRole{\mpS}{#1}:{#2}{\stEnvAnnotInSym}{#3}}%
\newcommand{\stEnvOutAnnot}[3]{\mpChanRole{\mpS}{#1}:{#2}{\stEnvAnnotOutSym}{#3}}%
\newcommand{\stEnvInAnnotSmall}[3]{\mpChanRole{\mpS}{#1}{:}{#2}{\stEnvAnnotInSym}{#3}}%
\newcommand{\stEnvOutAnnotSmall}[3]{\mpChanRole{\mpS}{#1}{:}{#2}{\stEnvAnnotOutSym}{#3}}%
\newcommand{\stEnvCommAnnotSmall}[3]{\ltsSendRecv{\mpS}{#1}{#2}{#3}}%
\newcommand{\stEnvMoveP}[1]{{#1}\!\!\stEnvMove}%
\newcommand{\stEnvMoveMaybeCrashP}[2][]{{#2}\!\!\stEnvMoveMaybeCrash[#1]}%
\newcommand{\stEnvNotMoveP}[1]{{#1}\!\!\not\stEnvMove}%
\newcommand{\stEnvNotMoveMaybeCrashP}[2][]{{#2}\!\!\not\stEnvMoveMaybeCrash[#1]}%
\newcommand{\stEnvMoveStar}{\mathrel{\stFmt{\stEnvMove{}^{\!\!\!*}}}}%
\newcommand{\stEnvMoveMaybeCrashStar}[1][]{%
  \ifempty{#1}{%
    \mathrel{\stFmt{\to^*_{\!\lightning}}}%
  }{%
    \mathrel{\stFmt{\to^*_{\!\lightning \setminus {#1}}}}%
  }
}%
\newcommand{\stEnvMoveAnnotP}[2]{{#1}\!\!\stEnvMoveAnnot{#2}}%
\newcommand{\stEnvMoveGenAnnotP}[1]{{#1}\!\!\stEnvMoveGenAnnot}%
\newcommand{\stEnvEndPred}{\operatorname{end}}%
\newcommand{\stEnvEndP}[1]{\stEnvEndPred(\stFmt{#1})}%
\newcommand{\stEnvSafePred}{\operatorname{safe}}%
\newcommand{\stEnvSafeP}[1]{\stEnvSafePred(\stFmt{#1})}%
\newcommand{\stEnvSafeSessRolesSP}[3]{\stEnvSafePred({#1};{#2},\stFmt{#3})}%
\newcommand{\mpEnv}[1][]{\stFmt{\ifempty{#1}{\Theta}{\Theta_{#1}}}}%
\newcommand{\mpEnvEmpty}{\stFmt{\emptyset}}%
\newcommand{\mpEnvMap}[2]{\stFmt{\mpFmt{#1}{:}\stFmt{#2}}}%
\newcommand{\mpEnvComp}{\mathpunct{\stFmt{,}}}%
\newcommand{\mpEnvApp}[2]{\stFmt{#1}\!\left(\mpFmt{#2}\right)}%
\newcommand{\stEnvEntails}[3]{%
  \stFmt{#1} \vdash \stFmt{\mpFmt{#2} \mathbin{\!:\!} {#3}}%
}%
\newcommand{\mpEnvEntails}[3]{%
  \stFmt{#1} \vdash \stFmt{\mpFmt{#2} \mathbin{\!:\!} \stFmt{#3}}%
}%
\newcommand{\stJudge}[3]{%
  \stFmt{\ifempty{#1}{#2}{{#1} \cdot {#2}}%
  \mathrel{\mpFmt{\vdash}} \mpFmt{#3}}%
}%
\newcommand{\actCrashed}[2]{\mpChanRole{#1}{#2}\stLabFmt{\mathsf{stop}}}%
\newcommand{\muCol}[1]{{\color{red}#1}}%
\newcommand{\muFmt}[1]{\muCol{\mathsf{#1}}}%
\newcommand{\muJudge}[2]{{#1} \mathrel{\muCol{\models}} \muCol{#2}}%
\newcommand{\muVar}[1][]{\muCol{\ifempty{#1}{\muFmt{Z}}{\muFmt{Z}_{#1}}}}%
\newcommand{\muVari}[1][]{\muCol{\ifempty{#1}{\muFmt{Z}'}{\muFmt{Z}'_{#1}}}}%
\newcommand{\muData}[1][]{\muCol{\ifempty{#1}{\muFmt{d}}{\muFmt{d}_{#1}}}}%
\newcommand{\muAct}[1][]{\muCol{\ifempty{#1}{\alpha}{\alpha_{#1}}}}%
\newcommand{\muForall}[2]{\muCol{\forall{#1}\mathbin{\!.\!}{#2}}}%
\newcommand{\muExists}[2]{\muCol{\exists{#1}\mathbin{\!.\!}{#2}}}%
\newcommand{\muPred}[1][]{\muCol{\ifempty{#1}{\phi}{\phi_{#1}}}}%
\newcommand{\muPredi}[1][]{\muCol{\ifempty{#1}{\phi'}{\phi'_{#1}}}}%
\newcommand{\muAnd}{\mathbin{\muCol{\land}}}%
\newcommand{\muBox}[2]{\muCol{[{#1}]{#2}}}%
\newcommand{\muDiamond}[2]{\muCol{\langle{#1}\rangle{#2}}}%
\newcommand{\muTrue}{\muCol{\top}}%
\newcommand{\muGFP}[2]{\muCol{\nu{#1}\mathbin{\!.\!}{#2}}}%
\newcommand{\muLFP}[2]{\muCol{\mu{#1}\mathbin{\!.\!}{#2}}}%
\newcommand{\muOr}{\mathbin{\muCol{\lor}}}%
\newcommand{\muFalse}{\muCol{\bot}}%
\newcommand{\muImplies}{\mathbin{\muCol{\Rightarrow}}}%
\newcommand{\muWordEmpty}[1][]{\muCol{\epsilon}}%
\begin{document}

\maketitle              %

\begin{abstract}
Session types enable the specification and verification of communicating systems.
However, their theory often assumes that processes never fail.
To address this limitation, we present a generalised multiparty session type
(MPST) theory with \emph{crash-stop failures}, where processes can crash arbitrarily.

Our new theory validates more protocols and processes \wrt previous work.
We apply minimal syntactic changes to standard session $\pi$-calculus and types:
we model crashes and their handling semantically,
with a generalised MPST typing system parametric on a
behavioural safety property. %
We cover the spectrum between fully reliable and fully unreliable sessions,
via \emph{optional reliability assumptions},
and prove type safety and protocol conformance in the presence of crash-stop
failures.%

Introducing crash-stop failures has non-trivial consequences:
writing correct processes that handle %
all crash scenarios
can be difficult.
Yet, our generalised MPST theory
allows us to tame this complexity, via model checking,
to validate whether a multiparty session
satisfies desired behavioural properties, \eg deadlock-freedom or
liveness, even in presence of crashes.
We implement our approach %
using the mCRL2 model checker, and
evaluate it with
examples extended from the literature.
\end{abstract}

\section{Introduction}
Multiparty session types (MPST)~\cite{HYC16} provide a typing discipline for
message-passing processes.
The theory ensures well-typed processes
enjoy desirable properties,
\aka \emph{the Session Theorems}:
type safety (processes communicate without errors),
protocol conformance
(\aka~\emph{session fidelity}, processes behave according to their types),
deadlock-freedom (processes do not get stuck), and liveness (input/output
actions eventually succeed).
Researchers devote significant effort into integrating session
types in programming languages and
tools~\cite{BehTypesTheoryTools2017}.

A common assumption in session type theory is that everything is
reliable and there are no failures, which
is often unrealistic in real-world %
systems.
So, we pose a question:
how can we better
model %
systems \emph{with failures}, and make session types %
less idealistic?

In this paper, we take steps towards bridging the gap between
theory
and practice with a new \emph{generalised} multiparty session type
theory that models failures with \emph{crash-stop}
semantics~\cite[\S 2.2]{DBLP:books/daglib/0025983}:
processes may crash, and crashed processes stop interacting with the world.
This model is standard in distributed systems, and is used in
related work on session types with error-handling
capabilities~\cite{ESOP18CrashHandling,OOPSLA21FaultTolerantMPST}.
However, unlike previous work,
we allow
\emph{any} process to crash arbitrarily,
and
support optional assumptions on non-crashing processes.

In our new theory, we add crashing and crash handling
semantics to processes and session types.
With minimal changes to the standard surface syntax,
we
model a variety of subtle, complex behaviours arising from unreliable communicating
processes.
An active process $\mpP$ may crash arbitrarily, and a process $\mpQ$
interacting with $\mpP$ might need to be prepared to handle possible crashes.
Messages sent from $\mpQ$ to a crashed $\mpP$ are lost
-- but if $\mpQ$ tries to receive from $\mpP$, then $\mpQ$ can detect that
$\mpP$ has crashed, and take a crash handling branch.
Meanwhile, another process $\mpR$ may (or may not) have detected $\mpP$'s crash,
and may be handling it -- and in either case, any interaction between $\mpQ$
and $\mpR$ should remain correct.

Our MPST theory is generalised in two aspects:
\emph{(1)} we introduce \emph{optional reliability assumptions}, so we can
model a mixture of reliable and unreliable communicating peers; and
\emph{(2)} our type system is parametric on a type-level behavioural property $\predP$
which can be instantiated as safety,
deadlock freedom,
liveness, \etc (in the style of \cite{POPL19LessIsMore}),
while accounting for potential crashes.
We prove %
\emph{session fidelity},
showing how type-level properties transfer to
well-typed processes;
we also prove that our new theory satisfies other Session Theorems of MPST, %
while (unlike previous work) being resilient to arbitrary crash-stop failures.
With
\emph{optional reliability assumptions},
one may declare that some peers will never crash for the
duration of the protocol.
Such optional assumptions allow for simplifying protocols and
programs: if a peer is assumed reliable, the other peers can interact with it
without needing to handle its crashes.
By making such assumptions explicit and customisable, our theory
supports a \emph{spectrum} of scenarios ranging from only having sessions
with reliable peers (thus subsuming classic MPST
works~\cite{POPL19LessIsMore,HYC16}), to having no reliable
peers at all.

As in the real world, a system with crash-stop failures can have subtle complex
behaviours;
hence, writing protocols and processes where all possible crash scenarios are
correctly handled can be hard.
This highlights a further benefit of our generalised theory:
we formalise our  %
behavioural properties as modal $\mu$-calculus \formulae,
and verify them with a model checker.
To show the feasibility of our approach, we present an accompanying
tool, utilising the mCRL2 model checker~\cite{TACAS19mCRL2},
for verifying session properties under
optional reliability assumptions.

\subparagraph{Overview.}
\label{sec:overview}

Session typing systems assign \emph{session types} (\aka local
types) to communication channels, used by processes
to send and receive messages.
In essence, a session type
describes a \emph{protocol}:
how a \emph{role} is expected to interact with other roles
in a multiparty session.
The type system checks %
whether a process implements desired protocols. %

As an example, consider a simple Domain Name System (DNS) scenario:
a client $\roleP$ queries a server $\roleQ$
for an IP address of a host name.
With classic session types (without crashes),
we use the type
\(
\stT[\roleP] = \stOut{\roleQ}{\stLabFmt{req}}{} \stSeq
\stInNB{\roleQ}{\stLabFmt{res}}{}{}
\)
to represent the client $\roleP$'s behaviour:
first sending ($\stFmt{\oplus}$) a $\stLabFmt{req}$uest message
to server $\roleQ$, and then receiving ($\stFmt{\&}$) a $\stLabFmt{res}$ponse
from $\roleQ$.
The server implements a dual type %
\(
\stT[\roleQ] = \stInNB{\roleP}{\stLabFmt{req}}{}{
  \stOut{\roleP}{\stLabFmt{res}}{} \stSeq %
}
\),
who receives a $\stLabFmt{req}$uest from client $\roleP$,
and then sends a $\stLabFmt{res}$ponse to $\roleP$.
We can write a process %
$\mpQ =
\mpBranchSingle{\mpChanRole{\mpS}{\roleQ}}{\roleP}{req}{}{\mpSel{\mpChanRole{\mpS}{\roleQ}}{\roleP}{res}{}{\mpNil}}$
for the server.
Using $\stT[\roleQ]$, we type-check the channel (\aka session endpoint)
$\mpChanRole{\mpS}{\roleQ}$, where $\mpQ$ plays the role $\roleQ$ on session
$\mpS$.
Here, $\mpQ$ type-checks -- it uses channel $\mpChanRole{\mpS}{\roleQ}$ correctly, according to type $\stT[\roleQ]$.
In this work, we augment the classic session types theory by introducing process
failures with \emph{crash-stop} semantics~\cite[\S 2.2]{DBLP:books/daglib/0025983}.
We adopt the following failure model:
\emph{(1)}
    processes have \emph{crash-stop} failures, \ie~they may crash and do
    not recover;
\emph{(2)}
    communication channels deliver messages in order, without losses (unless
    the recipient has crashed);
\emph{(3)}
    each process has a failure detector~%
    \cite{JACM96FailureDetector},
    so a process trying to receive from a crashed peer accurately detects the
    crash. %
    The combination of \emph{(1)}, %
\emph{(2)}, and
\emph{(3)}
is called the \emph{crash-fail}
model in \cite[\S 2.6.2]{DBLP:books/daglib/0025983}.

We now revise our DNS example in the presence of failures.
Let us assume that the server $\roleQ$ may crash, whereas the client $\roleP$
remains reliable.
The client $\roleP$ %
may now send its $\stLabFmt{req}$uest to a new \emph{failover server} $\roleR$
(assumed reliable for simplicity).
We represent this scenario %
by a type for the new failover server $\stTi[\roleR]$, and a new branch in
$\stTi[\roleP]$ for handling $\roleQ$'s crash:

\smallskip
\centerline{\(
\stTi[\roleP] \;=\; \stOut{\roleQ}{\stLabFmt{req}}{} \stSeq
  \stExtSum{\roleQ}{}{
    \begin{array}{@{}l@{}}
    \stChoice{\stLabFmt{res}}{}%
    \\
    \stChoice{\stCrashLab}{} \stSeq
    \stOut{\roleR}{\stLabFmt{req}}{} \stSeq
    \stInNB{\roleR}{\stLabFmt{res}}{}{}%
    \end{array}
  }
\qquad
\begin{array}{l}
\stTi[\roleQ] \;=\; \stInNB{\roleP}{\stLabFmt{req}}{}{
  \stOut{\roleP}{\stLabFmt{res}}{}%
}
\\
\stTi[\roleR] \;=\; \stInNB{\roleQ}{\stCrashLab}{}{} \stSeq
  \stInNB{\roleP}{\stLabFmt{req}}{}{} \stSeq
  \stOut{\roleP}{\stLabFmt{res}}{}{}%
\end{array}
\)}
\smallskip
\noindent
Here, $\stTi[\roleP]$ states that client $\roleP$ first sends
a message to
the unreliable server $\roleQ$; %
then, $\roleP$ expects a $\stLabFmt{res}$ponse
from $\roleQ$.
If $\roleQ$ crashes, %
the client $\roleP$ detects the crash and
handles it (via the new $\stCrashLab$ handling branch)
by $\stLabFmt{req}$uesting from the failover
server $\roleR$.
Meanwhile, $\roleR$ \emph{also} detects
whether $\roleQ$ has crashed.
If so, $\roleR$ activates its $\stCrashLab$ handling branch
and handles $\roleP$'s $\stLabFmt{req}$uest.

\begin{figure}
  \centering
  \begin{minipage}{0.4\textwidth}
    \centering
    \includegraphics[height=2.5cm]{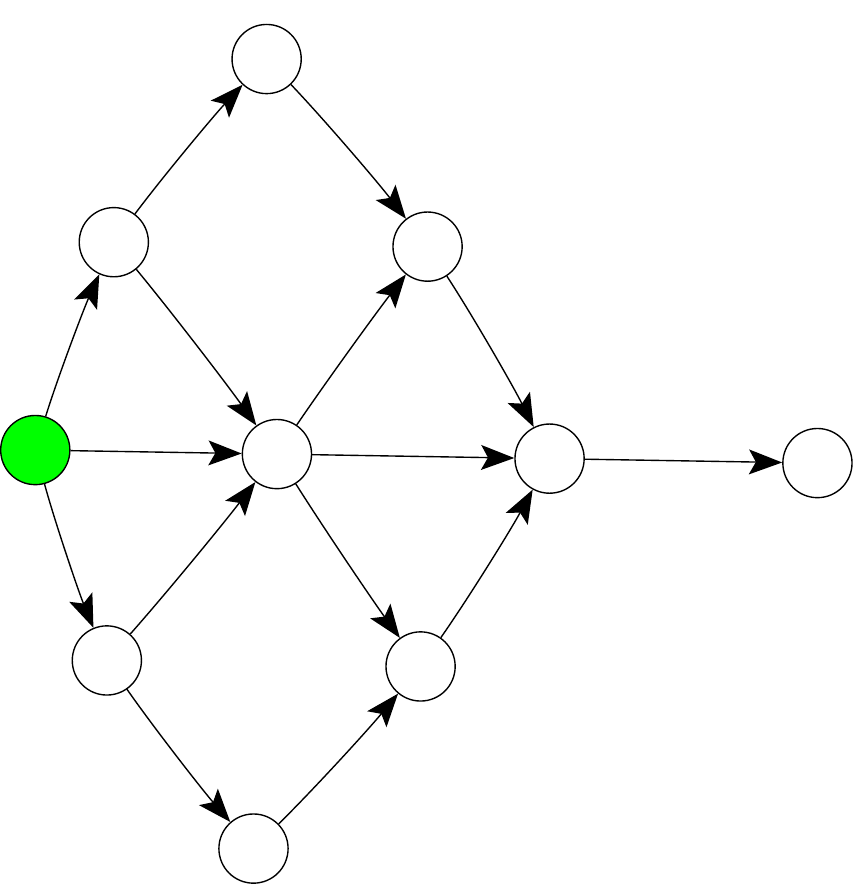}
  \end{minipage}
  \quad
  \begin{minipage}{0.55\textwidth}
    \centering
    \includegraphics[height=2.5cm]{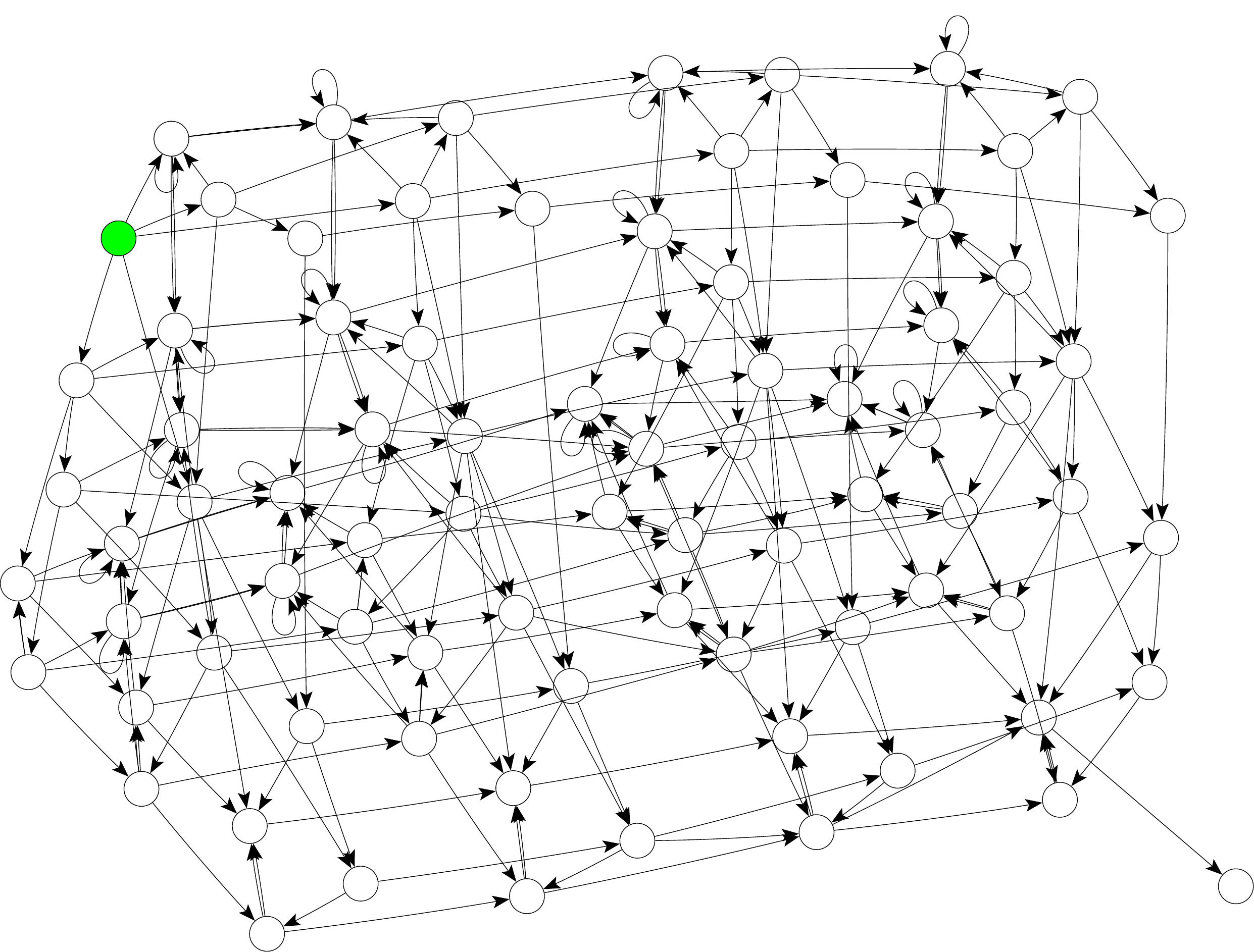}
  \end{minipage}
  \caption{Transition systems (based on \Cref{def:mpst-env-reduction}, with labels omitted) generated from the DNS examples.
  Left: without crashes/handling. Right: with crashes (for $\roleQ$) and crash handling.
  }
  \label{fig:overview:lts-compare}
\end{figure}

In our model,
crash detection and handling is done on the receiving side,
\eg $\stTi[\roleP]$ detects whether $\roleQ$ has $\stCrashLab$ed when
waiting for a $\stLabFmt{res}$ponse,
while $\stTi[\roleR]$ monitors whether $\roleQ$ has
crashed.
Handling crashes when receiving messages from a reliable role is unnecessary,
\eg the server $\roleQ$ does not need crash handling when it receives from the
(reliable) client $\roleP$;
similarly, the (reliable) roles $\roleP$ and $\roleR$ interact without
crash handling.
This failure model is reflected in the semantics of both processes and
session types in our work.
Unlike classic MPST works,
we allow processes to crash \emph{arbitrarily} while attempting inputs or
outputs (\cref{sec:session-calculus}).
When a process crashes, the channel endpoints held by the process also
crash, and are assigned the new type $\stStop$ (\Cref{sec:gtype}).
\Eg when the server process $\mpQ$ crashes, the endpoint
$\mpChanRole{\mpS}{\roleQ}$ held by $\mpQ$ becomes a crashed endpoint
$\mpStop{\mpS}{\roleQ}$;
accordingly, the server type $\stTi[\roleQ]$ advances to $\stStop$ to reflect
the crash.

To ensure that communicating processes are type-safe even in the presence of
crashes,
we require their session types %
to satisfy a \emph{safety property} accounting for possible crashes
(\Cref{def:mpst-env-safe}), %
which can be refined, \eg as deadlock-freedom or liveness
(\Cref{def:typing-ctx-properties}).
We prove subject reduction,
session fidelity,
and various process properties
(deadlock-freedom, liveness, \etc)
even in the presence of crashes and optional reliability assumptions
(Thms.~\ref{lem:subject-reduction}, \ref{lem:session-fidelity}, \ref{lem:stenv-proc-properties}).
Despite minimal changes to the surface syntax of session types
and processes,
the semantics surrounding crashes introduce subtle behaviours and increase complexity.
Taking the DNS examples above, we compare the sizes of their (labelled) transition
systems in \cref{fig:overview:lts-compare} (based on
\Cref{def:mpst-env-reduction}):
the original system (left, two roles $\roleP$ and $\roleQ$, no crashes) has
10 states and 15 transitions;
and
the revised system (right, $\roleQ$ may crash, with a new role $\roleR$)
has 101 states and 427 transitions.
We discuss another, more complex example in \Cref{sec:model-checking}.
Checking whether a given combination of session types with possible crashes
is safe, deadlock-free, or live, can be challenging due to
non-trivial behaviours and increased model size
arising from crashes and crash handling.
To tackle this, we show how to automatically
verify such type-level properties %
by representing them as modal $\mu$-calculus \formulae via
the mCRL2 model checker~\cite{TACAS19mCRL2}. %
%
%
%
%
%
%
%
%
%
%
%
%
%
%
%
%

%
%
%
%
%
%
%
%
%
%
%
%
%
%
%
%
%
%

%
%
%
%
%
%
%
%
%
%
%
%
%
%
%
%
%
%
%
%
%
%
%
%
%
%
%
%
%
%
%
%
%
%
%
%
%
%
%
%
%
%
%
%
%
%
%
%
%
%
%
%
%
%
%
%
%
%
%
%
%
%
%
%
%
%
%
%
%
%
%
%
%
%
%
%
%
%
%
%
%
%
%
%
%
%
%
%
%
%
%
%
%
%
%
%
%
%
%
%
%
%
%
%
%
%
%
%
%
%
%
%
%
%
%
%
%
%
%
%
%
%
%
%
%
%
%
%
%
%
%
%
%
%
%
%
%
%
%
%
%
%

%
%
%
%
%
%
%
%
%
%
%
%
%
%
%
%
%

%
\subparagraph{Contributions and Structure.}
In~\!\!\descriptionlabel{\cref{sec:session-calculus}}
  we introduce a
  multiparty session $\pi$-calculus (with minimal changes to the standard syntax)
  giving crash and crash handling semantics modelling \emph{crash-stop} failures.
In~\!\!\descriptionlabel{\cref{sec:gtype}}
  we present \emph{multiparty session types with crashes}:
  they describe how communication channels should be used
  to send/receive messages, and handle crashes.
  We formalise the semantics
  of
  collections of local types under \emph{optional}
  reliability assumptions;
  we introduce a type system, and
  prove %
  the Session Theorems:
  type safety, protocol conformance, and process properties (deadlock-freedom, termination, liveness, \etc)
  in
  \cref{lem:subject-reduction,lem:session-fidelity,lem:stenv-proc-properties}.
In~\!\!\descriptionlabel{\cref{sec:model-checking}}
  we show how model checking can be
  incorporated to verify %
  our behavioural properties, %
  by %
  expressing them %
  as modal $\mu$-calculus \formulae.
We discuss related work and conclude
in~\!\!\descriptionlabel{\cref{sec:related}}.
\iftoggle{techreport}{
  The appendices%
}{
  Appendices~\!\!\descriptionlabel{\cref{sec:appendix:examples}} and~\!\!\descriptionlabel{\cref{sec:appdx:examples}}
}
include additional examples\ifTR{, definitions, proofs of main theorems,}
and
more details about the tool implementing our
theory using the mCRL2 model checker.%
\iftoggle{techreport}{}{
  Further appendices with proofs are available in a separate technical report \cite{techreport}.%
}

\section{Multiparty Session Calculus with Crash-Stop Semantics}
\label{sec:session-calculus}

In this section, we formalise the syntax %
and operational semantics %
of our multiparty session $\pi$-calculus, where a process can fail arbitrarily,
and crashes can be detected and handled by receiving processes.
For clarity of presentation, we formalise a synchronous semantics.
%
%
%
%

\subparagraph{Syntax of Processes.}%
\label{sec:session-calculus:syntax}
Our multiparty session $\pi$-calculus %
models processes that interact
via multiparty channels, and may arbitrarily crash.
For simplicity of presentation, our calculus is streamlined to focus on communication;
standard extensions, \eg with expressions and ``if\ldots then\ldots else'' statements,
are routine and orthogonal to our formulation. %

\begin{definition}[Syntax of Multiparty Session $\pi$-Calculus]%
\label{def:mpst-syntax-terms}
Let $\roleP, \roleQ, \ldots$ denote \emph{roles}
belonging to a set $\roleSet$;
let $\mpS, \mpSi, \ldots$ denote \emph{sessions};
let $\mpFmt{x}, \mpFmt{y}, \ldots$ denote \emph{variables};
let $\mpLab, \mpLabi, \ldots$ denote \emph{message labels}; %
let $\mpX, \mpY, \ldots$ denote \emph{process variables}.
The \emph{multiparty session $\pi$-calculus} syntax is:%

\smallskip%
\centerline{$%
\begin{array}{r@{\hskip 2mm}c@{\hskip 2mm}l@{\hskip 2mm}l}
  \textstyle%
  \mpC
  &\coloncolonequals&%
  \mpFmt{x} \bnfsep \mpChanRole{\mpS}{\roleP}%
  & \mbox{\footnotesize(variable or channel for session $\mpS$ with role $\roleP$)}
  \\
  \mpD
  &\coloncolonequals&
  \mpV \bnfsep \mpC
  & \mbox{\footnotesize(basic value, variable, or channel with role)}
  \\
  \mpW
  &\coloncolonequals&
  \mpV \bnfsep \mpChanRole{\mpS}{\roleP}
  & \mbox{\footnotesize(basic value or channel with role)}
  \\[0mm]
  \mpP, \mpQ
  &\coloncolonequals&%
  \mpNil
    \;\bnfsep\; \mpRes{\mpS}{\mpP}%
    \;\bnfsep\; \mpP \mpPar \mpQ
  &
  \mbox{\footnotesize(inaction, restriction, parallel composition)}
  \\
  &&
  \mpSel{\mpC}{\roleQ}{\mpLab}{\mpD}{\mpP}
  \quad \highlight{\mbox{\footnotesize(where $\mpLab \neq \mpLabCrash$)}}
  &
  \mbox{\footnotesize(selection towards role $\roleQ$)}
  \\
  &&
  \mpBranch{\mpC}{\roleQ}{i \in I}{\mpLab[i]}{x_i}{\mpP[i]}{}%
  &
  \mbox{\footnotesize(branching from role $\roleQ$ with an index set $I \neq \emptyset$)}
  \\
  &&
  \mpDefAbbrev{\mpDefD}{\mpP}%
  \bnfsep%
  \mpCallSmall{\mpX}{\widetilde{\mpD}}
  &
  \mbox{\footnotesize(process definition, process call)}
  \\[0mm]
  &&
  \mpErr%
  \bnfsep%
  \highlight{\mpStopDef}%
  &
  \mbox{\footnotesize(error, crashed channel endpoint)}
  \\[0mm]
  \mpDefD%
  &\coloncolonequals&%
  \mpJustDef{\mpX}{\widetilde{x}}{P}
  &
  \mbox{\footnotesize(declaration of process variable $\mpX$)}
  \\[1mm]
\end{array}
$}
\noindent%
We write $\mpBigPar{i \in I}{\mpP[i]}$ for the parallel composition of processes $\mpP[i]$.
Restriction, branching, and process definitions and declarations act as
binders, as expected;
$\fc{\mpP}$ is the set of \emph{free channels with roles} in $\mpP$ (including
$\mpChanRole{\mpS}{\roleP}$ in $\mpStopDef$), and $\fv{\mpP}$ is
the set of \emph{free variables} in $\mpP$.
Noticeable changes \wrt standard session calculi are \highlightText{highlighted}.
\end{definition}

Our calculus
(\cref{def:mpst-syntax-terms}) %
includes basic values $\mpV$ (\eg unit $()$, integers, strings),
channels with roles (\aka session endpoints) $\mpChanRole{\mpS}{\roleP}$,
session scope restriction $\mpRes{\mpS}{\mpP}$,
inaction $\mpNil$, parallel
composition $P \mpPar Q$, process
definition $\mpDefAbbrev{\mpDefD}{\mpP}$, process call
$\mpCallSmall{\mpX}{\widetilde{\mpD}}$, and error $\mpErr$.
\emph{Selection} (\aka internal choice)
$\mpSel{\mpC}{\roleQ}{\mpLab}{\mpD}{\mpP}$
sends a message $\mpLab$ with payload
$\mpD$ to role $\roleQ$ via endpoint $\mpC$, where $\mpC$ may be a variable or channel with role,
while $\mpD$ may also be a basic value.
\emph{Branching} (\aka external choice)
$\mpBranch{\mpC}{\roleQ}{i \in I}{\mpLab[i]}{x_i}{\mpP[i]}{}$
expects to receive a message $\mpLab[i]$
(for some $i \in I$) %
from role $\roleQ$ via endpoint $\mpC$,
and then continues
as $\mpP[i]$.
Importantly, a process implements crash detection by ``receiving'' the
special message label $\mpLabCrash$ in an external choice; such special message
\emph{cannot} be sent by any process (side condition
$\mpLab \neq \mpLabCrash$ in selection). For example,
$\mpBranch{\mpChanRole{\mpS}{\roleP}}{\roleQ}{}{\mpLab}{\mpFmt{x}}{\mpP}{\mpPi}$
is a process that uses the session endpoint $\mpChanRole{\mpS}{\roleP}$ to
receive message $\mpLab$ from $\roleQ$, but if $\roleQ$ has crashed, then the
process continues as $\mpPi$.
Finally, our calculus includes \emph{crashed session endpoints} $\mpStopDef$,
denoting that the endpoint for role $\roleP$ in session $\mpS$ has crashed.

\begin{figure}[t]
  \centerline{\(%
  \begin{array}{@{}c@{}}
    \begin{array}{@{}rl@{}}
      \inferrule{\iruleMPRedComm}& %
      \mpBranch{\mpChanRole{\mpS}{\roleP}}{\roleQ}{i \in I}{%
        \mpLab[i]}{x_i}{\mpP[i]}{}%
      \,\mpPar\,%
      \mpSel{\mpChanRole{\mpS}{\roleQ}}{\roleP}{\mpLab[k]}{%
        \mpW
      }{\mpQ}%
      \;\;\mpMove\;\;%
      \mpP[k]\subst{\mpFmt{x_k}}{\mpW}
      \,\mpPar\,%
      \mpQ%
      \qquad
      \text{if\, $k \!\in\! I$}%
      \\[1mm]%

      \inferrule{\iruleMPErrLabel}&%
      \mpBranch{\mpChanRole{\mpS}{\roleP}}{\roleQ}{i \in I}{%
        \mpLab[i]}{x_i}{\mpP[i]}{}%
      \,\mpPar\,%
      \mpSel{\mpChanRole{\mpS}{\roleQ}}{\roleP}{\mpLab}{%
        \mpW
      }{\mpQ}%
      \;\mpMove\;%
      \mpErr%
      \qquad%
      \text{if\, $\forall i \!\in\! I: \mpLab[i] \!\neq\! \mpLab$}%
    \\[1mm]

      \inferrule{\iruleMPRedCall}& %
      \mpDef{\mpX}{x_1,\ldots,x_n}{\mpP}{(%
        \mpCall{\mpX}{%
          \mpW[1], \ldots, \mpW[n]
        }%
        \,\mpPar\,%
        \mpQ%
        )%
      }%
      \\%
      &\hspace{10mm}%
      \;\mpMove\;%
      \mpDef{\mpX}{x_1,\ldots,x_n}{\mpP}{(%
        \mpP
        \subst{\mpFmt{x_1}}{\mpW[1]}%
        \cdots
        \subst{\mpFmt{x_n}}{\mpW[n]}%
        \,\mpPar\,%
          \mpQ%
          )%
      }%
      \\[1mm]%
      \inferrule{\iruleMPRedCtx}&%
      \mpP \mpMove \mpPi%
      \;\;\text{implies}\;\;%
      \mpCtxApp{\mpCtx}{\mpP} \mpMove \mpCtxApp{\mpCtx}{\mpPi}%
      \\[1mm]%

 \inferrule{\iruleMPRedCongr} &
 \mpPi \equiv \mpP
 \;\;\text{and}\;\;
 \mpP \mpMove \mpQ
 \;\;\text{and}\;\;
 \mpQ \equiv \mpQi
 \;\;\text{implies}\;\;
 \mpPi \mpMove \mpQi

 \\[0mm]%

    \inferrule{\iruleMPCrashS} &
    \highlight{%
    \mpP \;=\; \mpSel{\mpChanRole{\mpS}{\roleP}}{\roleQ}{\mpLab}{
      \mpW
    }{\mpPi}
    \;\mpMoveCrash\;
    \mpBigPar{j \in J}{\mpStop{\mpS[j]}{\roleP[j]}}
    }%

    \hfill
    \highlight{%
      \text{where $\{\mpChanRole{\mpS[j]}{\roleP[j]}\}_{j \in J} = \fc{\mpP}$} %
    }%
    \\[0mm]
  \inferrule{\iruleMPCrashR} &
    \highlight{%
    \mpP \;=\; \mpBranch{\mpChanRole{\mpS}{\roleP}}{\roleQ}{i \in I}{\mpLab[i]}{x_i}{\mpP[i]}{}
    \;\mpMoveCrash\;
    \mpBigPar{j \in J}{\mpStop{\mpS[j]}{\roleP[j]}}
    }%

    \hfill
    \highlight{%
      \text{where $\{\mpChanRole{\mpS[j]}{\roleP[j]}\}_{j \in J} = \fc{\mpP}$}
    }%
  \\[0mm]

    \inferrule{\iruleMPRedCommEGround} &
    \highlight{%
    \mpStop{\mpS}{\roleP}
      \,\mpPar\,
      \mpSel{\mpChanRole{\mpS}{\roleQ}}{\roleP}{\mpLab}{
        \mpV
      }{\mpQi}
      \;\;\mpMove\;\;
      \mpStop{\mpS}{\roleP}
      \,\mpPar\,
      \mpQi
    }%
  \\[0mm]

  \inferrule{\iruleMPRedCommE} &
    \highlight{%
    \mpStop{\mpS}{\roleP}
      \,\mpPar\,
      \mpSel{\mpChanRole{\mpS}{\roleQ}}{\roleP}{\mpLab}{
        \mpChanRole{\mpSi}{\roleR}
      }{\mpQi}
      \;\;\mpMove\;\;
      \mpStop{\mpS}{\roleP}
      \,\mpPar\,
      \mpStop{\mpSi}{\roleR}
      \,\mpPar\,
      \mpQi
    }%
  \\[0mm]

  \inferrule{\iruleMPRedCommD} &
    \highlight{%
    \mpBranch{\mpChanRole{\mpS}{\roleP}}{\roleQ}{i \in I}{\mpLab[i]}{x_i}{\mpP[i]}{\mpPi}
    \,\mpPar\,
    \mpStop{\mpS}{\roleQ}
    \;\;\mpMove\;\;
    \mpPi
    \,\mpPar\,
    \mpStop{\mpS}{\roleQ}
    }%
    \end{array}
    \end{array}
    \)}%
  \caption{%
    Semantics of our session $\pi$-calculus.
    Rule \inferrule{\iruleMPRedCongr} uses the congruence $\equiv$ defined in \Cref{sec:structural-congruence}.
  }%
  \label{fig:mpst-pi-semantics}%
  %
\end{figure}

%

\subparagraph{Operational Semantics.}%
\label{sec:session-calculus:semantics}
We give the operational semantics of our session $\pi$-calculus in
\cref{def:mpst-pi-semantics}, using
a standard \emph{structural congruence}
extended with a new \emph{crash elimination rule} %
which garbage-collects sessions where all endpoints are crashed:
(full congruence rules in \cref{sec:structural-congruence})
\smallskip
\centerline{\(
    \inferenceSingle[\iruleCongStopElim]{
      \mpRes{\mpS}{(\mpStop{\mpS}{\roleP[1]} \mpPar \cdots \mpPar \mpStop{\mpS}{\roleP[n]})}
      \;\equiv\;
      \mpNil
    }
\)} %

\begin{definition}%
  \label{def:mpst-proc-context}%
  \label{def:mpst-pi-reduction-ctx}%
  \label{def:mpst-pi-semantics}%
  \label{def:mpst-pi-error}%
  A \emph{reduction context} $\mpCtx$ is defined as:
  \;\hfill
  \(%
  \mpCtx \,\coloncolonequals\,%
  \mpCtx \mpPar \mpP%
  \bnfsep%
  \mpRes{\mpS}{\mpCtx}%
  \bnfsep%
  \mpDefAbbrev{\mpDefD}{\mpCtx}%
  \bnfsep%
  \mpCtxHole%
  \)%

  \noindent%
  \emph{The reduction $\!\mpMove\!$} %
  is defined in \cref{fig:mpst-pi-semantics}; %
  we write $\mpMovePlus$\slash\,$\mpMoveStar$ for its transitive\,\slash\,reflexive-transitive closure.
  We write $\mpNotMoveP{\mpP}$ iff $\not\exists\mpPi$
  such that $\mpP \!\mpMove\! \mpPi$ is derivable \emph{without}
  rules \inferrule{\iruleMPCrashS} and \inferrule{\iruleMPCrashR}
  (\ie $\mpP$ is stuck, unless a crash occurs).
  We say \emph{$\mpP$ has an error} %
  iff $\exists \mpCtx$ with %
  \,$\mpP \!=\! \mpCtxApp{\mpCtx}{\mpErr}$.
\end{definition}

\noindent

Part of our operational semantics rules in \cref{fig:mpst-pi-semantics} are standard.
Rule \inferrule{\iruleMPRedComm} describes a communication on session $\mpS$
between receiver $\roleP$ and sender
$\roleQ$, if the sent message $\mpLab[k]$ can be handled by the
receiver ($k\!\in\!I$);
otherwise, a message label mismatch causes an $\mpErr$or via rule
\inferrule{\iruleMPErrLabel}.
Rule \inferrule{\iruleMPRedCall} expands process
definitions when called.
Rules \inferrule{\iruleMPRedCtx} and \inferrule{\iruleMPRedCongr}
allow processes to reduce under reduction contexts
and modulo structural congruence.%

The remaining rules in \Cref{fig:mpst-pi-semantics} (\highlightText{highlighted})
are novel: they model crashes, and crash handling.
Rules \inferrule{\iruleMPCrashS} and
\inferrule{\iruleMPCrashR} state that a
process $\mpP$ may crash while attempting any selection or branching operation,
respectively;
when $\mpP$ crashes, it reduces to a parallel composition where all the channel
endpoints held by $\mpP$ are crashed.
The \emph{lost message rules} \inferrule{\iruleMPRedCommEGround} and
\inferrule{\iruleMPRedCommE}
state that if a process %
sends a message to a crashed endpoint, then the message is lost; if the message payload is a session endpoint
$\mpChanRole{\mpSi}{\roleR}$, then it becomes crashed.
Finally, the \emph{crash handling rule} \inferrule{\iruleMPRedCommD} states
that if a process attempts to receive a message from a crashed endpoint,
then the process detects the crash and follows its crash handling branch $\mpPi$.
We now show an example of rule \inferrule{\iruleMPCrashS}; %
more examples can be found in \cref{sec:appendix:examples}.

\begin{example}
Processes
\(
  \mpP = \mpSel{\mpChanRole{\mpS}{\roleP}}{\roleQ}{\mpLabi}{
    \mpChanRole{\mpS}{\roleR}}{
      \mpBranchSingle{\mpChanRole{\mpS}{\roleP}}{\roleR}{\mpLab}{x}{}
    }
\)
and
\(
  \mpQ = \mpBranchSingle{\mpChanRole{\mpS}{\roleQ}}{\roleP}{\mpLabi}{x}{
    \mpSel{x}{\roleP}{\mpLab}{42}{}}
\)
communicate on a session $\mpS$; $\mpP$ uses $\mpChanRole{\mpS}{\roleP}$ to send
$\mpChanRole{\mpS}{\roleR}$ to role $\roleQ$; $\mpQ$ uses
$\mpChanRole{\mpS}{\roleQ}$ to receive it, then sends a message to role $\roleP$
via $\mpChanRole{\mpS}{\roleR}$.
Suppose that $\mpP$
crashes before sending: this gives rise to the reduction (by rule \inferrule{\iruleMPCrashS})\;
\(
  \mpRes{\mpS}{(\mpP \mpPar \mpQ)}
  \mpMoveCrash
  \mpRes{\mpS}{(
    \mpStop{\mpS}{\roleP}
    \mpPar
    \mpStop{\mpS}{\roleR}
    \mpPar
    \mpQ
  )}
\).
\;Observe that $\mpChanRole{\mpS}{\roleP}$ and $\mpChanRole{\mpS}{\roleR}$, which were held by $\mpP$, are now crashed.
\end{example}
\section{Multiparty Session Types with Crashes}\label{sec:gtype}

In this section, we present a generalised type system for our multiparty session
$\pi$-calculus (introduced in \cref{def:mpst-syntax-terms}). As in standard MPST, we assign session types to channel endpoints;
we show the syntax of our types in \cref{sec:gtype:syntax}, where
our key additions are \emph{crash handling branches}, and a new type $\stStop$ for crashed endpoints.
In \cref{sec:gtype:lts-context}, we
give a labelled transition system (LTS) semantics to typing contexts,
to represent the behaviour of a collection of types.

Unlike classic MPST, our type system is generalised in the style of
\cite{POPL19LessIsMore}, hence it has \emph{no} global types; rather, it uses a
\emph{safety property}
formalising the \emph{minimum} requirement for a typing context to ensure
\emph{subject reduction} (and thus, type safety).
In this paper, such a safety property is defined in \cref{sec:type-system-safety}:
unlike previous work, the property accounts for potential crashes,
and supports explicit (and optional) reliability assumptions.
We show typing rules
in \cref{sec:type-system:tyrules},
and the main properties of the typing system: subject reduction
(\cref{lem:subject-reduction}) and session fidelity
(\cref{lem:session-fidelity}) in \cref{sec:type-system:subj-red}.
Finally, we demonstrate how we can infer runtime process properties from typing contexts in \cref{sec:dedlock-freedom}.

\subsection{Types}%
\label{sec:gtype:syntax}

A \emph{session type} describes
how a process is expected to use a communication channel to send/receive
messages to/from other roles involved in a multiparty session.
We formalise the syntax of session types in \cref{fig:syntax-mpst}, where we add the
$\stStop$ type to their standard syntax~\cite{POPL19LessIsMore}.

\begin{definition}[Types]%
    \label{fig:syntax-local-type}%
    \label{fig:syntax-mpst}%
    \label{def:local-types}
    \label{def:ground-types}
    Our types include both basic types and \emph{session types}:

    \smallskip
    \centerline{\(
    \begin{array}{r@{\quad}c@{\quad}l@{\quad}l}
      \tyGround & \bnfdef & \tyInt \bnfsep \tyBool \bnfsep \tyReal \bnfsep \tyUnit \bnfsep \ldots
        & \text{\footnotesize (basic types)} \\
      \stS & \bnfdef & \tyGround \bnfsep \stT
        & \text{\footnotesize (basic type or session type)} \\
      \stT & \bnfdef & \stExtSum{\roleP}{i \in I}{\stChoice{\stLab[i]}{\stS[i]} \stSeq \stT[i]}
        \!\!\quad \bnfsep \!\!\quad \stIntSum{\roleP}{i \in I}{\stChoice{\stLab[i]}{\stS[i]} \stSeq \stT[i]}
          & \text{\footnotesize (external or internal choice, with $I \neq
          \emptyset$)} \\
        & \bnfsep & \stRec{\stRecVar}{\stT} \quad \bnfsep \quad \stRecVar
        \quad \bnfsep \quad \stEnd
        &
        \text{\footnotesize (recursion, type variable, or termination)} \\
      \stU & \bnfdef & \tyT \bnfsep \highlight{\stStop}
        & \text{\footnotesize (session type or $\highlight{$\text{crash type}$}$)}
    \end{array}
  \)}%
  \smallskip

  \noindent%
  In internal and external choices, the index set $I$ must be non-empty,
  and labels $\stLab[i]$ must be pair-wise distinct.
  Types are always \emph{closed} (\ie each recursion variable $\stRecVar$ is bound under a $\stRec{\stRecVar}{\ldots}$)
  and recursion variables are \emph{guarded}, \ie they can only appear
  under an internal/external choice (\eg $\stRec{\stRecVar}{\stRec{\stRecVari}{\stRecVar}}$ is not a valid type).
  For brevity, we may omit the payload type $\tyUnit$ and the trailing $\stEnd$: \eg
  $\stOut{\roleP}{\stLab[1]}{} \stSeq \stInNB{\roleR}{\stLab[2]}{}{}$
  is shorthand for $\stOut{\roleP}{\stLab[1]}{\tyUnit} \stSeq \stInNB{\roleR}{\stLab[2]}{\tyUnit}{} \stSeq \stEnd$.
\end{definition}

The internal choice (selection) type
$\stIntSum{\roleP}{i \in I}{\stChoice{\stLab[i]}{\stS[i]} \stSeq \stT[i]}$
denotes \emph{sending} a message $\stLab[i]$ (by picking some $i \in I$)
with a payload of type $\stS[i]$ to role $\roleP$, and then continue the protocol as $\stT[i]$.
Dually, the external choice (branching) type
$\stExtSum{\roleP}{i \in I}{\stChoice{\stLab[i]}{\stS[i]} \stSeq \stT[i]}$
denotes \emph{receiving} a message $\stLab[i]$ (for any $i \in I$)
with a payload of type $\stS[i]$ from role $\roleP$, and then continue as $\stT[i]$.
The type $\stEnd$ indicates that a session endpoint should not be used for further communications.

\subparagraph{Crashes and Crash Detection.}
The key novelty of \Cref{fig:syntax-local-type} is the new type $\stStop$
describing a crashed session endpoint.
Similarly to \Cref{def:mpst-syntax-terms}, we also introduce a distinguished
message label $\stCrashLab$ for crash handling in external choices. For example,
recall the types in \Cref{sec:overview}:
\begin{itemize}%
  \item the type $\stExtSum{\roleQ}{}{\stChoice{\stLabFmt{res}}{} \stSeq \stT,\,
    \stChoice{\stCrashLab}{} \stSeq \stTi}$ means that we expect a
    $\stChoice{\stLabFmt{res}}{}$ponse message from role $\roleQ$, but if we detect that
    $\roleQ$ has crashed, then the protocol continues along the handling branch
    $\stTi$;
  \item the type $\stInNB{\roleQ}{\stCrashLab}{}{} \stSeq \stT$ denotes a
    ``pure'' crash recovery behaviour: we are not communicating with $\roleQ$,
    but the recovery protocol $\stT$ is activated whenever we detect that $\roleQ$ has crashed.
\end{itemize}
Since $\mpLabCrash$ messages \emph{cannot} be crafted by any role in a session (see \Cref{def:mpst-syntax-terms}),
we postulate that the $\stCrashLab$ message label cannot appear in internal choice types.

\subparagraph{Session Subtyping.} We use a subtyping relation $\stSub$ that is mostly standard:
a subtype can have wider internal choices and narrower external choices
\wrt a supertype.
To correctly support crash handling, we apply two changes:
\emph{(1)}
    we add the relation $\stStop \stSub \stStop$, and
\emph{(2)}
  we treat external choices with a singleton $\stCrashLab$ branch in a special way:
  they represent a ``pure'' crash recovery protocol (as outlined above),
  hence we do not allow the supertype to have more input branches.
  This way, a ``pure'' crash recovery type can only be implemented by
  a ``pure'' crash recovery process (with a singleton $\mpLabCrash$ detection branch);
  such processes are treated specially by the properties
  in \Cref{sec:dedlock-freedom}.
For the complete definition of $\stSub$, see \Cref{sec:subtyping}.

\subsection{Typing Contexts and their Semantics}%
\label{sec:gtype:lts-context}
Before introducing the typing rules for our calculus (in
\cref{sec:type-system:tyrules}), we first formalise typing contexts
(\cref{def:mpst-env}) and their semantics (\cref{def:mpst-env-reduction}).

\begin{definition}[Typing Contexts]%
  \label{def:mpst-env}%
  \label{def:mpst-env-closed}%
  \label{def:mpst-env-comp}%
  \label{def:mpst-env-subtype}%
  $\mpEnv$ denotes a partial mapping
  from process variables to $n$-tuples of types,
  and $\stEnv$ denotes a partial mapping
  from channels to types.
  Their syntax is:

  \smallskip
  \centerline{\(%
  \mpEnv
  \;\;\coloncolonequals\;\;
  \mpEnvEmpty
  \bnfsep
  \mpEnv \mpEnvComp\, \mpEnvMap{\mpX}{\stS[1],\ldots,\stS[n]}
  \qquad\qquad
  \stEnv
  \;\;\coloncolonequals\;\;
  \stEnvEmpty
  \bnfsep
  \stEnv \stEnvComp \stEnvMap{x}{\stS}
  \bnfsep
  \stEnv \stEnvComp \stEnvMap{\mpChanRole{\mpS}{\roleP}}{\stU}
  \)}%
  \smallskip

  \noindent
  The \,\emph{context composition} $\stEnv[1] \stEnvComp \stEnv[2]$\,
  is defined iff $\dom{\stEnv[1]} \cap \dom{\stEnv[2]} = \emptyset$.
  \noindent
  We write\;
  $\mpS \!\not\in\! \stEnv$
  \;iff\;
  $\forall \roleP: \mpChanRole{\mpS}{\roleP} \!\not\in\! \dom{\stEnv}$
  (\ie session $\mpS$ does not occur in $\stEnv$).
  \noindent
  We write\;
  $\stEnv \!\stSub\! \stEnvi$
  \;iff
  $\dom{\stEnv} \!=\! \dom{\stEnvi}$
  and
  $\forall \mpC \!\in\! \dom{\stEnv}:
  \stEnvApp{\stEnv}{\mpC} \!\stSub\! \stEnvApp{\stEnvi}{\mpC}$.
\end{definition}

\begin{figure}[t]
  \noindent
  \scalebox{0.9}{
  \begin{minipage}{1.1\linewidth}
  \centerline{\(
  \begin{array}{@{}c@{}}
    \inference[\iruleTCtxOut]{%
      k \in I%
    }{%
      \stEnvMap{%
        \mpChanRole{\mpS}{\roleP}%
      }{%
        \stIntSum{\roleQ}{i \in I}{\stChoice{\stLab[i]}{\stS[i]} \stSeq \stT[i]}%
      }%
      \,\stEnvMoveOutAnnot{\roleP}{\roleQ}{\stChoice{\stLab[k]}{\stS[k]}}\,%
      \stEnvMap{%
        \mpChanRole{\mpS}{\roleP}%
      }{%
        \stT[k]%
      }%
    }%
    \qquad%
    \inference[\iruleTCtxIn]{%
      k \in I%
    }{%
      \stEnvMap{%
        \mpChanRole{\mpS}{\roleP}%
      }{%
        \stExtSum{\roleQ}{i \in I}{\stChoice{\stLab[i]}{\stS[i]} \stSeq \stT[i]}%
      }%
      \,\stEnvMoveInAnnot{\roleP}{\roleQ}{\stChoice{\stLab[k]}{\stS[k]}}\,%
      \stEnvMap{%
        \mpChanRole{\mpS}{\roleP}%
      }{%
        \stT[k]%
      }%
    }%
    \\[2mm]%
    \inference[\iruleTCtxCom]{%
      \stEnv[1]%
      \stEnvMoveOutAnnot{\roleP}{\roleQ}{\stChoice{\stLab}{\stS}}%
      \stEnvi[1]%
      &%
      \stEnv[2]%
      \stEnvMoveInAnnot{\roleQ}{\roleP}{\stChoice{\stLab}{\stSi}}%
      \stEnvi[2]%
      &%
      \stS \!\stSub\! \stSi%
    }{%
      \stEnv[1] \stEnvComp \stEnv[2]%
      \,\stEnvMoveCommAnnot{\mpS}{\roleP}{\roleQ}{\stLab}\,%
      \stEnvi[1] \stEnvComp \stEnvi[2]%
    }%
    \quad
    \inference[\iruleTCtxRec]{%
      \stEnvMap{%
        \mpChanRole{\mpS}{\roleP}%
      }{%
        \stT\subst{\stRecVar}{\stRec{\stRecVar}{\stT}}%
      }%
      \stEnvMoveGenAnnot \stEnvi%
    }{%
      \stEnvMap{%
        \mpChanRole{\mpS}{\roleP}%
      }{%
        \stRec{\stRecVar}{\stT}%
      }%
      \stEnvMoveGenAnnot \stEnvi%
    }%
    \quad
    \inference[\iruleTCtxCong]{%
      \stEnv \stEnvMoveGenAnnot \stEnvi%
    }{%
      \stEnv \!\stEnvComp \stEnvMap{\mpC}{\stU}
      \stEnvMoveGenAnnot%
      \stEnvi \!\stEnvComp \stEnvMap{\mpC}{\stU}
    }%
    \\[1mm]
    \highlight{%
    \inference[\iruleTCtxCrash]{%
      \stT \stNotSub \stEnd %
    }{
      \stEnvMap{%
        \mpChanRole{\mpS}{\roleP}%
      }{\stT}
      \stEnvMoveAnnot{\ltsCrash{\mpS}{\roleP}}
      \stEnvMap{%
        \mpChanRole{\mpS}{\roleP}%
      }{\stStop}
    }
    }%
    \qquad
    \highlight{%
    \inference[\iruleTCtxCrashed]{
      \vphantom{X}
    }{
      \stEnvMap{\mpChanRole{\mpS}{\roleP}}{\stStop}\,\stEnvMoveAnnot{\actCrashed{\mpS}{\roleP}}\,\stEnvMap{\mpChanRole{\mpS}{\roleP}}{\stStop}
    }
    }%
    \qquad
    \inference[\iruleTCtxCongBasic]{%
      \stEnv \stEnvMoveGenAnnot \stEnvi%
    }{%
      \stEnv \!\stEnvComp \stEnvMap{\mpFmt{x}}{\tyGround}
      \stEnvMoveGenAnnot%
      \stEnvi \!\stEnvComp \stEnvMap{\mpFmt{x}}{\tyGround}
    }%
    \\[1mm]
    \highlight{%
    \inference[\iruleTCtxCrashDetect]{%
      \stEnv[1]%
      \stEnvMoveInAnnot{\roleQ}{\roleP}{\stCrashLab}%
      \stEnvi[1]%
      &
      \stEnv[2]%
      \stEnvMoveAnnot{\actCrashed{\mpS}{\roleP}}
      \stEnvi[2]%
    }{%
      \stEnv[1] \stEnvComp \stEnv[2]%
      \,\stEnvMoveAnnot{\ltsCrDe{\mpS}{\roleQ}{\roleP}}\,%
      \stEnvi[1] \stEnvComp \stEnvi[2]%
    }%
    }%
    \qquad
    \highlight{%
    \inference[\iruleTCtxSendToCrashed]{%
      \stEnv[1]%
      \stEnvMoveOutAnnot{\roleP}{\roleQ}{\stChoice{\stLab}{\stS}}%
      \stEnvi[1]%
      &
      \stEnv[2]%
      \stEnvMoveAnnot{\actCrashed{\mpS}{\roleQ}}
      \stEnvi[2]%
    }{%
      \stEnv[1] \stEnvComp \stEnv[2]%
      \,\stEnvMoveCommAnnot{\mpS}{\roleP}{\roleQ}{\stLab}\,%
      \stEnvi[1] \stEnvComp \stEnvi[2]%
    }%
    }%
  \end{array}
  \)}%
  \end{minipage}
  }%
  %
  \caption{Typing context semantics.}
  \label{fig:gtype:tc-red-rules}
\end{figure}

Unlike typical session typing systems,
our \Cref{def:mpst-env} allows a session endpoint $\mpChanRole{\mpS}{\roleP}$
to have either a session type $\stT$, or the crash type $\stStop$.
We equip our typing contexts with a labelled transition system (LTS) semantics
(in \Cref{def:mpst-env-reduction})
using the labels in \Cref{def:mpst-env-reduction-label}.

\begin{definition}[Transition Labels]
  \label{def:mpst-env-reduction-label}%
  \label{def:mpst-label-subject}%
  Let $\stEnvAnnotGenericSym$ %
  denote a transition label having the form:

  \smallskip%
  \centerline{\(%
  \begin{array}{rcll}
  \stEnvAnnotGenericSym &\bnfdef&
    \stEnvInAnnotSmall{\roleP}{\roleQ}{\stChoice{\stLab}{\stS}}&
    \text{(in session $\mpS$, $\roleP$ receives message $\stChoice{\stLab}{\stS}$ from $\roleQ$; we omit $\stS$ if $\stS = \tyUnit$)}
    \\
    &\bnfsep&\stEnvOutAnnotSmall{\roleP}{\roleQ}{\stChoice{\stLab}{\stS}}&
    \text{(in session $\mpS$, $\roleP$ sends message $\stChoice{\stLab}{\stS}$ to $\roleQ$; we omit $\stS$ if $\stS = \tyUnit$)}
    \\
    &\bnfsep&\stEnvCommAnnotSmall{\roleP}{\roleQ}{\stLab}&
    \text{(in session $\mpS$, message $\stLab$ is transmitted from $\roleP$ to $\roleQ$)}
    \\
    &\bnfsep&\ltsCrashSmall{\mpS}{\roleP}&
    \text{(in session $\mpS$, $\roleP$ crashes)}
    \\
    &\bnfsep&\ltsCrDe{\mpS}{\roleP}{\roleQ}&
    \text{(in session $\mpS$, $\roleP$ has detected that $\roleQ$ has crashed)}
    \\
    &\bnfsep&\actCrashed{\mpS}{\roleP}&
    \text{(in session $\mpS$, $\roleP$ has stopped due to a crash)}
  \end{array}
\)}%
\end{definition}

\begin{definition}[Typing Context Semantics]%
  \label{def:mpst-env-reduction}%
  The \emph{typing context transition $\stEnvMoveGenAnnot$} %
  is defined in \cref{fig:gtype:tc-red-rules}.
  We write $\stEnvMoveGenAnnotP{\stEnv}$ %
  iff\, $\stEnv \!\stEnvMoveGenAnnot\! \stEnvi$ for some $\stEnvi$. %
  We define the two \emph{reductions} $\stEnvMove$ %
  and $\stEnvMoveMaybeCrash[\mpS; \rolesS]$ (where $\mpS$ is a session, and $\rolesS$ is a set of roles)
  as follows:
  \begin{itemize}%
    \item $\stEnv \!\stEnvMove\! \stEnvi$  \;holds iff\;
    $\stEnv \stEnvMoveCommAnnot{\mpS}{\roleP}{\roleQ}{\stLab} \stEnvi$ %
    \;or\;
    $\stEnv \stEnvMoveAnnot{\ltsCrDe{\mpS}{\roleQ}{\roleP}} \stEnvi$
    (for some $\mpS, \roleP, \roleQ, \stLab$).
    This means that $\stEnv$ can advance via message transmission or
    crash detection, but it \emph{cannot} advance by crashing one of its entries.
    We write\; $\stEnvMoveP{\stEnv}$ %
    \;iff\; $\stEnv \!\stEnvMove\! \stEnvi$ for some $\stEnvi$, %
    \;and\; $\stEnvNotMoveP{\stEnv}$ \;for its negation %
    (\ie there is no $\stEnvi$ such that %
    $\stEnv \!\stEnvMove\! \stEnvi$), and $\stEnvMoveStar$ %
    \;for the reflexive and transitive closure of $\stEnvMove$;

  \item $\stEnv \!\stEnvMoveMaybeCrash[\mpS; \rolesS]\! \stEnvi$
    \,holds iff\,
    $\stEnv \stEnvMoveGenAnnot \stEnvi$
    with $\stEnvAnnotGenericSym \!\in\! \setcomp{
      \begin{array}{@{}l@{}}
        \ltsSendRecv{\mpS}{\roleQ}{\roleR}{\stLab},\, %
        \ltsCrDe{\mpS}{\roleQ}{\roleR},\, \ltsCrash{\mpS}{\roleP}
      \end{array}}{\roleP,\roleQ,\roleR \!\in\! \roleSet,\, \roleP \!\not\in\! \rolesS}$.
    This means that $\stEnv$ can advance via message transmission or crash detection
    on session $\mpS$, involving any roles $\roleQ$ and $\roleR$.
    (Recall that $\roleSet$ is the set of all roles.)
    \emph{Moreover}, $\stEnv$ can advance by crashing one of its entries $\mpChanRole{\mpS}{\roleP}$ -- unless $\roleP \!\in\! \rolesS$, which means that $\roleP$ is assumed to be \emph{reliable}.
    \;We write\; $\stEnvMoveMaybeCrashP[\mpS; \rolesS]{\stEnv}$ %
    \;iff\; $\stEnv \!\stEnvMoveMaybeCrash[\mpS; \rolesS]\! \stEnvi$ for some $\stEnvi$, %
    \;and\; $\stEnvNotMoveMaybeCrashP[\mpS; \rolesS]{\stEnv}$ \;for its negation, %
    and\; $\stEnvMoveMaybeCrashStar[\mpS; \rolesS]$ %
    \;as the reflexive and transitive closure of $\stEnvMoveMaybeCrash[\mpS;
    \rolesS]$.
    We write $\stEnv \!\stEnvMoveMaybeCrash\! \stEnvi$
    iff $\stEnv \!\stEnvMoveMaybeCrash[\mpS; \gtFmt{\emptyset}]\! \stEnvi$
    for some $\mpS$ (\ie $\stEnv$ may advance by crashing any role on any session).
  \end{itemize}
\end{definition}

\Cref{def:mpst-env-reduction} subsumes the standard typing context
reductions%
~\cite[Def.\@ 2.8]{POPL19LessIsMore}.
Rule $\inferrule{\iruleTCtxOut}$ (resp.\ $\inferrule{\iruleTCtxIn}$)
says that an entry can perform an output (resp.\ input) transition.
Rule $\inferrule{\iruleTCtxCom}$ synchronises matching input/output
transitions, provided that the %
payloads are compatible by subtyping;
as a result, the context advances via a message transmission label
$\stEnvCommAnnotSmall{\roleP}{\roleQ}{\stLab}$.
Other standard rules are
$\inferrule{\iruleTCtxRec}$ for recursion, and
$\inferrule{\iruleTCtxCong}$ and $\inferrule{\iruleTCtxCongBasic}$ for reductions in a larger context.

The key innovations are the (\highlightText{highlighted}) rules modelling crashes and crash
detection.
By rule~\inferrule{\iruleTCtxCrash}, an entry can crash
and become $\stStop$ at any time
(unless it is already $\stEnd$ed or $\stStop$ped);
then, by rule~\inferrule{\iruleTCtxCrashed}, it keeps signalling that it is crashed,
with label $\actCrashed{\mpS}{\roleP}$.

Rule \inferrule{\iruleTCtxCrashDetect} models crash detection and handling:
if $\mpChanRole{\mpS}{\roleP}$ signals that it has crashed and stopped,
another entry $\mpChanRole{\mpS}{\roleQ}$ can then take its $\stCrashLab$ handling branch
(part of an external choice from $\roleP$). %
This corresponds to the process reduction rule \inferrule{\iruleMPRedCommD} for crash detection.

Finally, rule \inferrule{\iruleTCtxSendToCrashed} models the case where
the entry $\mpChanRole{\mpS}{\roleP}$ is sending a message $\stChoice{\stLab}{\stS}$ to a crashed
$\mpChanRole{\mpS}{\roleQ}$: this yields a transmission label $\stEnvCommAnnotSmall{\roleP}{\roleQ}{\stLab}$,
and $\roleP$ %
continues -- although the sent message is not actually received by crashed $\roleQ$.
This corresponds to the process reduction rule \inferrule{\iruleMPRedCommE} where a
process sends a message to a crashed endpoint, and cannot detect its crash.

\subsection{Typing Context Safety}%
\label{sec:type-system-safety}

To ensure type safety
(\cref{cor:type-safety}),
\ie well-typed processes do not result in $\mpErr$ors,
we define a safety
property $\predPApp{\cdot}$
(\cref{def:mpst-env-safe})
as a predicate on typing contexts $\stEnv$.
The safety property $\predP$ is the key feature of generalised MPST
systems~\cite[Def.~4.1]{POPL19LessIsMore};
in this work, we extend its definition in two crucial ways:
    \emph{(1)} we support crashes and crash detection, and
    \emph{(2)} we make the property parametric upon a (possibly empty) set of
    \emph{reliable} roles $\rolesS$, thus introducing \emph{optional reliability
      assumptions} about roles in a session that never fail.

\begin{definition}[Typing Context Safety]\label{def:mpst-env-safe}%
  Given a set of reliable roles $\rolesS$ and a session $\mpS$, we say that
  $\predP$ is an \emph{$(\mpS;\rolesS)$-safety property} of typing contexts %
  iff, whenever $\predPApp{\stEnv}$, we have:

  \noindent%
  \begin{tabular}{@{\;\;}r@{\hskip 2mm}l}
    \inferrule{\iruleSafeComm}%
    &%
    $\stEnvMoveAnnotP{\stEnv}{\stEnvOutAnnot{\roleP}{\roleQ}{\stChoice{\stLab}{\stS}}}$
    \,and\,
    $\stEnvMoveAnnotP{\stEnv}{\stEnvInAnnot{\roleQ}{\roleP}{\stChoice{\stLabi}{\stSi}}}$
    \;\;implies\;\; %
    $\stEnvMoveAnnotP{\stEnv}{\stEnvCommAnnotSmall{\roleP}{\roleQ}{\stLab}}$;
    \\%
    \inferrule{\iruleSafeCrash}%
    &%
    $\stEnvMoveAnnotP{\stEnv}{\actCrashed{\mpS}{\roleP}}$
    \,and\,
    $\stEnvMoveAnnotP{\stEnv}{\stEnvInAnnot{\roleQ}{\roleP}{\stChoice{\stLab}{\stS}}}$
    \;\;implies\;\; %
    $\stEnvMoveAnnotP{\stEnv}{\ltsCrDe{\mpS}{\roleQ}{\roleP}}$;
    \\
    \inferrule{\iruleSafeMove}%
    &%
    $\stEnv \stEnvMoveMaybeCrash[\mpS; \rolesS] \stEnvi$
    \;\;implies\;\; %
    $\predPApp{\stEnvi}$.
  \end{tabular}

  \smallskip

  \noindent%
  We say \emph{$\stEnv$ is $(\mpS;\rolesS)$-safe}, %
  written $\stEnvSafeSessRolesSP{\mpS}{\rolesS}{\stEnv}$, %
  if $\predPApp{\stEnv}$ holds %
  for some $(\mpS;\rolesS)$-safety property $\predP$. %
  We say \emph{$\stEnv$ is safe}, %
  written $\stEnvSafeP{\stEnv}$, %
  if $\predPApp{\stEnv}$ holds %
  for some property $\predP$ which is an $(\mpS;\gtFmt{\emptyset})$-safety property
  for all sessions $\mpS$ occurring in $\dom{\stEnv}$.
\end{definition}

  By \Cref{def:mpst-env-safe}, safety is a \emph{coinductive} property
  \cite{SangiorgiBiSimCoInd}: fix $\mpS$ and $\rolesR$,
  $(\mpS;\rolesS)$-safe is the largest ($\mpS; \rolesR$)-safety property,
  \ie the union of all ($\mpS; \rolesR$)-safety properties; to prove that some $\stEnv$ is $(\mpS;\rolesS)$-safe,
  we must find a property $\predP$ such that $\stEnv \!\in\! \predP$,
  and prove that $\predP$ is an ($\mpS; \rolesR$)-safety property.
  Intuitively, we can construct such $\predP$ (if it exists)
  as the set containing $\stEnv$ and all its reductums
  (via transition $\stEnvMoveMaybeCrashStar[\mpS;\rolesS]$),
  and checking whether all elements of $\predP$ %
  satisfy all clauses of \cref{def:mpst-env-safe}.
By clause~\inferrule{\iruleSafeComm}, whenever two roles $\roleP$ and $\roleQ$
attempt to communicate, the communication must be possible, \ie the receiver
$\roleQ$ must support all output messages of
sender $\roleP$, with compatible payload types (by rule
\inferrule{\iruleTCtxCom} in \Cref{fig:gtype:tc-red-rules}).
For ``pure'' crash recovery types (with a singleton $\stCrashLab$ handling branch)
there would not be corresponding sender, so this clause holds trivially.
Clause~\inferrule{\iruleSafeCrash} states that if a role $\roleQ$ receives
from a crashed role $\roleP$, then $\roleQ$ must have a $\stCrashLab$ %
handling branch.
Clause \inferrule{\iruleSafeMove}
states that any typing context $\stEnvi$ that $\stEnv$ transitions to (on
session $\mpS$) must also be in $\predP$ (hence, $\stEnvi$ must also be $(\mpS; \rolesS)$-safe);
notice that, by using transition $\stEnvMoveMaybeCrash[\mpS;\rolesS]$, we
ignore crashes $\ltsCrash{\mpS}{\roleP}$ of
any reliable role $\roleP \!\in\! \rolesR$.%

\begin{example}
Consider the simple DNS scenario from \cref{sec:overview},
its types $\stTi[\roleP]$, $\stTi[\roleQ]$ and $\stTi[\roleR]$, and
the typing context\;
\(
  \stEnv = \stEnvMap{
      \mpChanRole{\mpS}{\roleP}
    }{
      \stTi[\roleP]
    }
    \stEnvComp
    \stEnvMap{
      \mpChanRole{\mpS}{\roleQ}
    }{
      \stTi[\roleQ]
    }
    \stEnvComp
    \stEnvMap{
      \mpChanRole{\mpS}{\roleR}
    }{
      \stTi[\roleR]
    }
\).\;
We know, and can verify, that $\stEnv$ is $(\mpS;\setenum{\roleP,\roleR})$-safe by
checking its reductions. For example, for the case where $\roleQ$ crashes immediately, we
have:
\(\small
  \stEnv \stEnvMoveMaybeCrash[\mpS; \setenum{\roleP,\roleR}] \stEnvMap{%
      \mpChanRole{\mpS}{\roleP}%
    }{%
      \stTi[\roleP]
    }
    \;\stEnvComp\;
    \stEnvMap{
      \mpChanRole{\mpS}{\roleQ}
    }{
      \stStop
    }
    \;\stEnvComp\;
    \stEnvMap{
      \mpChanRole{\mpS}{\roleR}
    }{
      \stTi[\roleR]
    }
\)
\(\small
    \stEnvMoveMaybeCrashStar[\mpS; \setenum{\roleP,\roleR}]
    \stEnvMap{%
      \mpChanRole{\mpS}{\roleP}%
    }{%
      \stEnd
    }
    \stEnvComp\,
    \stEnvMap{
      \mpChanRole{\mpS}{\roleQ}
    }{
      \stStop
    }
    \stEnvComp\,
    \stEnvMap{
      \mpChanRole{\mpS}{\roleR}
    }{
      \stEnd
    }
\)
and each reductum satisfies all clauses of \Cref{def:mpst-env-safe}.
Full reductions are available in \cref{sec:appendix:examples},
\cref{eg:running:type-safe}.
\end{example}

\begin{figure}[t]
  \small
  \centerline{\(
  \begin{array}{c}
    \inference[\iruleMPX]{
      \mpEnvApp{\mpEnv}{X} = \stFmt{\stS[1],\ldots,\stS[n]}
    }{
      \mpEnvEntails{\mpEnv}{X}{\stS[1],\ldots,\stS[n]}
    }
    \qquad
    \inference[\iruleMPGround]{
      \mpV \in \tyGround
    }{
      \stEnvEntails{\stEnvEmpty}{\mpV}{\tyGround}
    }
    \qquad
    \inference[\iruleMPEnd]{
      \forall i \in 1..n
      &
      \text{$\stS[i]$ is basic \;or\; }
      \stEnvEntails{\stEnvMap{\mpC[i]}{\stS[i]}}{
        \mpC[i]
      }{
        \stEnd
      }
    }{
      \stEnvEndP{
        \stEnvMap{\mpC[1]}{\stS[1]}
        \stEnvComp \ldots \stEnvComp
        \stEnvMap{\mpC[n]}{\stS[n]}
      }
    }
    \\[1mm]
    \inference[\iruleMPCall]{
        \mpEnvEntails{\mpEnv}{X}{
          \stS[1],\ldots,\stS[n]
        }
        &
        \stEnvEndP{\stEnv[0]}
        &
        \forall i \in 1..n
        &
        \stEnvEntails{\stEnv[i]}{\mpD[i]}{\stS[i]}
        &
        \highlight{\stS[i] \stNotSub \stEnd}
    }{
      \stJudge{\mpEnv}{
        \stEnv[0] \stEnvComp
        \stEnv[1] \stEnvComp \ldots \stEnvComp \stEnv[n]
      }{
        \mpCall{\mpX}{\mpD[1],\ldots,\mpD[n]}
      }
    }
    \\[2mm]
    \inference[\iruleMPNil]{
      \stEnvEndP{\stEnv}
    }{
      \stJudge{\mpEnv}{\stEnv}{\mpNil}
    }
    \qquad
    \inference[\iruleMPDef]{
        \stJudge{
          \mpEnv \mpEnvComp
          \mpEnvMap{\mpX}{\stS[1],\ldots,\stS[n]}
        }{
          \stEnvMap{x_1}{\stS[1]}
          \stEnvComp \ldots \stEnvComp
          \stEnvMap{x_n}{\stS[n]}
        }{
          \mpP
        }
        \qquad
        \stJudge{
          \mpEnv \mpEnvComp
          \mpEnvMap{\mpX}{\stS[1],\ldots,\stS[n]}
        }{
          \stEnv
        }{
          \mpQ
        }
    }{
      \stJudge{\mpEnv}{
        \stEnv
      }{
        \mpDef{\mpX}{
          \stEnvMap{x_1}{\stS[1]},
          \ldots,
          \stEnvMap{x_n}{\stS[n]}
        }{\mpP}{\mpQ}
      }
    }
    \\[2mm]
    \inference[\iruleMPBranch]{
        \stEnvEntails{\stEnv[1]}{\mpC}{
          \stExtSum{\roleQ}{i \in I}{\stChoice{\stLab[i]}{\tyS[i]} \stSeq \stT[i]}%
        }
        &
        \forall i \!\in\! I
        &
        \stJudge{\mpEnv}{
          \stEnv \stEnvComp
          \stEnvMap{y_i}{\tyS[i]} \stEnvComp
          \stEnvMap{\mpC}{\stT[i]}
        }{
          \mpP[i]
        }
    }{
      \stJudge{\mpEnv}{
        \stEnv \stEnvComp \stEnv[1]
      }{
        \mpBranch{\mpC}{\roleQ}{i \in I}{\mpLab[i]}{y_i}{\mpP[i]}{}
      }
    }
    \qquad
    \inference[\iruleMPPar]{
      \stJudge{\mpEnv}{
        \stEnv[1]
      }{
        \mpP[1]
      }
      \qquad
      \stJudge{\mpEnv}{
        \stEnv[2]
      }{
        \mpP[2]
      }%
    }{
      \stJudge{\mpEnv}{
        \stEnv[1] \stEnvComp \stEnv[2]
      }{
        \mpP[1] \mpPar \mpP[2]
      }
    }
    \\[2mm]
    \inference[\iruleMPSel]{
      \stEnvEntails{\stEnv[1]}{\mpC}{
        \stIntSum{\roleQ}{}{\stChoice{\stLab}{\stS} \stSeq \stT}
      }
      &
      \stEnvEntails{\stEnv[2]}{\mpD}{\tyS}
      &
      \highlight{\stS \stNotSub \stEnd}
      &
      \stJudge{\mpEnv}{
        \stEnv \stEnvComp \stEnvMap{\mpC}{\stT}
      }{
        \mpP
      }
    }{
      \stJudge{\mpEnv}{
        \stEnv \stEnvComp \stEnv[1] \stEnvComp \stEnv[2]
      }{
        \mpSel{\mpC}{\roleQ}{\mpLab}{\mpD}{\mpP}
      }
    }
    \qquad
    \inference[\iruleMPSub]{
      \stS \stSub \stSi
    }{
      \stEnvEntails{\stEnvMap{\mpC}{\stS}}{\mpC}{\stSi}
    }
    \\[2mm]
    \highlight{%
    \inference[\iruleMPStop]{
      \stEnvEndP{\stEnv}
    }{
      \stJudge{\mpEnv}{\stEnv,\stEnvMap{\mpChanRole{\mpS}{\roleP}}{\stStop}}{
        \mpStop{\mpS}{\roleP}
      }
    }
    }%
    \qquad
    \highlight{%
    \inference[\iruleMPResProp]{
      \stEnvi = \setenum{
        \stEnvMap{\mpChanRole{\mpS}{\roleP}}{\stT[\roleP]}
      }_{\roleP \in I}
      \quad
      \predPApp{\stEnvi}
      \quad
      \mpS \!\not\in\! \stEnv
      \quad%
      \stJudge{\mpEnv}{
        \stEnv \stEnvComp \stEnvi
      }{
        \mpP
      }
    }{
      \stJudge{\mpEnv}{
        \stEnv
      }{
        \mpRes{\stEnvMap{\mpS}{\stEnvi}}\mpP
      }
    }
    }%
  \end{array}
  \)}%
  %
  %
  %
  %
  \caption{
    Typing rules for processes;
    $\predP$ in \inferrule{\iruleMPResProp} is an ($\mpS$;$\rolesR$)-safety property, for some $\rolesR$.
  }
  \label{fig:mpst-rules}
  %
\end{figure}

\subsection{Typing Rules}%
\label{sec:type-system:tyrules}

Our type system uses two kinds of typing contexts (introduced in
\cref{def:mpst-env}):
$\mpEnv$ to assign an $n$-tuple of types to each process variable $\mpX$
(one type per argument),
and $\stEnv$
to map variables to payload types (basic types or session types),
and channels with roles
to session types or the $\stStop$ type.
Together, they are used in judgements of the form:

\smallskip\centerline{%
\(\stJudge{\mpEnv}{\stEnv}{\mpP}\)
\;\;(with $\mpEnv$ omitted when empty)
}\smallskip

\noindent%
which reads,
``given the process types in $\mpEnv$,
$\mpP$ uses its variables and channels \emph{linearly}
according to $\stEnv$.''
This \emph{typing judgement} %
is %
defined
by the rules in \Cref{fig:mpst-rules}, where, for convenience, we type-annotate
channels bound by process definitions and restrictions.

The main innovations in \Cref{fig:mpst-rules} are rules
\inferrule{\iruleMPResProp} and \inferrule{\iruleMPStop} (\highlightText{highlighted}).
Rule \inferrule{\iruleMPResProp} utilises a safety property $\predP$
(\cref{def:mpst-env-safe}) to validate session
restrictions, taking into account crashes and crash handling,
and any reliable role assumption in the (possibly empty) set $\rolesR$.
The rule can be instantiated by choosing a set $\rolesR$ and safety property $\predP$
(\eg among the stronger properties presented in \Cref{def:typing-ctx-properties} later on).
Rule \inferrule{\iruleMPStop}
types crashed session endpoints as $\stStop$.

The rest of the rules in \Cref{fig:mpst-rules} are mostly standard.
\inferrule{\iruleMPX} looks up process variables.
\inferrule{\iruleMPGround} types a value $\mpV$ if it belongs to a basic type $\tyGround$.
\inferrule{\iruleMPSub} holds for a singleton typing context
$\stEnvMap{\mpC}{\stS}$, and applies subtyping when assigning a type
$\stSi$ to a variable or channel $\mpC$.
\inferrule{\iruleMPEnd} defines a predicate $\stEnvEndP{\cdot}$ on typing
contexts, indicating all endpoints are terminated -- it is used in
\inferrule{\iruleMPNil} for typing an inactive process $\mpNil$, and in
$\inferrule{\iruleMPStop}$ for crashed endpoints.
\inferrule{\iruleMPSel} and \inferrule{\iruleMPBranch} assign
selection and branching types to channels used by selection and branching processes. %
Minor changes \wrt standard session types are the clauses ``$\stS \stNotSub \stEnd$''
in rules \inferrule{\iruleMPSel} and \inferrule{\iruleMPCall}: they forbid sending or passing
$\stEnd$-typed channels, while allowing sending/passing channels and data of any other type.\footnote{%
  This restriction is needed for \Cref{lem:subject-reduction,lem:stenv-proc-properties}.
  It does not limit the expressiveness of our typed calculus,
  since sending an $\stEnd$-typed channel (not usable for communication)
  amounts to sending a basic value.%
}
Rules $\inferrule{\iruleMPDef}$ and $\inferrule{\iruleMPCall}$ handle
recursive processes declarations and calls.
$\inferrule{\iruleMPPar}$ \emph{linearly} splits the typing context into two,
one for typing each sub-process.

\subsection{Subject Reduction and Session Fidelity}%
\label{sec:type-system:subj-red}

We present our key results on typed processes:
\emph{subject reduction} and \emph{session fidelity}
(\Cref{lem:subject-reduction,lem:session-fidelity}).
A main feature of our theory is that our results explicitly account for the
\emph{spectrum} of optional reliability assumptions used during typing.
\begin{itemize}%
  \item On one end of the spectrum, our results hold without any reliability
    assumption: any process and session endpoint may crash at any time. This is
    obtained if, for each $\stEnv$ used during typing, we assume
    $\stEnvSafeP{\stEnv}$ (\Cref{def:mpst-env-safe}), with no reliable roles.
  \item At the other end of the spectrum, we recover the classic MPST results
    by assuming that all roles in all sessions are reliable -- \ie if for each
    $\stEnv$ used during typing, and for all $\mpS \!\in\! \stEnv$, we assume
    $\stEnvSafeSessRolesSP{\mpS}{\rolesR[\mpS]}{\stEnv}$ with $\rolesR[\mpS] =
    \setcomp{\roleP\,}{\,\mpChanRole{\mpS}{\roleP} \!\in\! \dom{\stEnv}}$.
\end{itemize}

\smallskip
\noindent
\emph{Subject reduction} (\Cref{lem:subject-reduction} below) states that if
a well-typed process $\mpP$ reduces to $\mpPi$, then the reduction is simulated
by its typing context $\stEnv$, provided that the reliability assumptions
embedded in $\stEnv$ hold when $\mpP$ reduces.
In other words, if a channel endpoint $\mpChanRole{\mpS}{\roleP}$ occurring in
$\mpP$ is assumed reliable in $\stEnv$, then $\mpP$ should \emph{not} crash
$\mpChanRole{\mpS}{\roleP}$ while reducing;
any other reduction of $\mpP$ (including those that crash other session
endpoints) are type-safe.
To formalise this idea, we define \emph{reliable process reduction
$\mpMoveMaybeCrash[\mpS;\rolesR]$} as a subset of $\mpP$'s reductions.
We also define \emph{assumption-abiding reduction $\mpMoveMaybeCrashChecked$} to
enforce reliable process reductions across nested sessions.

\begin{definition}[Reliable Process Reductions and Assumption-Abiding
  Reductions]
  \label{def:reliable-proc-reduction}
  \label{def:assumption-abiding-reduction}
  The \emph{reliable process reduction} $\mpMoveMaybeCrash[\mpS;\rolesR]$ is defined
  as follows:
  
  \smallskip\centerline{\(
    \inference{
      \mpP \mpMoveMaybeCrash \mpPi
      \qquad
      \forall \roleP \in \rolesR:\; \nexists \mpR:\; \mpPi \equiv \mpR \mpPar \mpStop{\mpS}{\roleP}
    }{
      \mpP \;\mpMoveMaybeCrash[\mpS;\rolesR]\; \mpPi
    }
  \)}\smallskip

  \noindent
  Assume\, $\stJudge{\mpEnv}{\stEnv}{\mpP}$ \,where,
  for each $\mpS \!\in\! \stEnv$, there is a set of reliable roles $\rolesR[\mpS]$
  such that $\stEnvSafeSessRolesSP{\mpS}{\rolesR[\mpS]}{\stEnv}$.
  We define the \emph{assumption-abiding} reduction
  $\mpMoveMaybeCrashChecked$ such that\,
  $\mpP \!\mpMoveMaybeCrashChecked\! \mpPi$ \,holds when:
  \emph{(1)}
    $\mpP \!\mpMoveMaybeCrash[\mpS;{\rolesR[\mpS]}]\! \mpPi$
    \,for all $\mpS \in \stEnv$; and
  \emph{(2)}
    if $\mpP \equiv \mpRes{\stEnvMap{\mpSi}{\stEnv[\mpSi]}}\mpQ$ \,(for some $\mpSi,\stEnv[\mpSi],\mpQ$)
    \,and\, $\mpPi \equiv \mpRes{\mpSi}{\mpQi}$ \,and\, $\mpQ \mpMoveMaybeCrash \mpQi$,
    \;then $\exists \rolesRi$ such that\, $\stEnvSafeSessRolesSP{\mpSi}{\rolesRi}{\stEnv[\mpSi]}$
    \;and\; $\mpQ \mpMoveMaybeCrash[\mpSi;\rolesRi] \mpQi$.
  We write $\mpMoveMaybeCrashCheckedPlus$\slash\,$\mpMoveMaybeCrashCheckedStar$ for the transitive\,\slash\,reflexive-transitive closure of $\mpMoveMaybeCrashChecked$.
\end{definition}

\noindent%
Hence, when $\mpP \mpMoveMaybeCrash[\mpS;\rolesR] \mpPi$ holds, none of the session endpoints
$\mpChanRole{\mpS}{\roleP}$ (where $\roleP$ is a reliable role in set
$\rolesR$) are crashed in $\mpPi$.
When $\mpP$ is well-typed, the reduction $\mpP \mpMoveMaybeCrashChecked \mpPi$ covers all (and only) the reductions of $\mpP$
that do not violate any reliability assumption used for deriving $\stJudge{\mpEnv}{\stEnv}{\mpP}$;
notice that we use congruence $\equiv$ to quantify over all restricted
sessions in $\mpP$ and ensure their reductions respect all reliability assumptions in their typing,
by \inferrule{\iruleMPResProp} in \Cref{fig:mpst-rules}.

We can now use $\mpMoveMaybeCrashChecked$ to state our subject reduction
result.
\iftoggle{techreport}{Its proof is available in \cref{sec:proofs:subject-reduction}.}{Its proof is available in~\cite{techreport}.}

\begin{restatable}[Subject Reduction]{theorem}{lemSubjectReduction}
  \label{lem:subject-reduction}%
  Assume\, $\stJudge{\mpEnv}{\stEnv}{\mpP}$ \,where
  $\forall \mpS \in \stEnv: \exists
  \rolesR[\mpS]:$ $\stEnvSafeSessRolesSP{\mpS}{\rolesR[\mpS]}{\stEnv}$.
  If\, $\mpP
  \mpMoveMaybeCrashChecked \mpPi$,
  \,then\,
  $\exists \stEnvi$
  such that\,
  $\stEnv \stEnvMoveMaybeCrashStar \stEnvi$,
  \,and\,
  $\forall \mpS \in \stEnvi: \stEnvSafeSessRolesSP{\mpS}{\rolesR[\mpS]}{\stEnvi}$,
  \,and\,
  $\stJudge{\mpEnv}{\stEnvi}{\mpPi}$.%
\end{restatable}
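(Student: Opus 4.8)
The plan is to proceed by induction on the derivation of the process reduction underlying $\mpP \mpMoveMaybeCrashChecked \mpPi$, after establishing the standard auxiliary results. Before the induction I would prove: (i) inversion lemmas extracting the shape of $\stEnv$ from each judgement $\stJudge{\mpEnv}{\stEnv}{\mpP}$ (e.g.\ a selection process forces an internal-choice entry for its subject, modulo the subtyping admitted by \inferrule{\iruleMPSub}); (ii) a substitution lemma: if $\stJudge{\mpEnv}{\stEnv\stEnvComp\stEnvMap{x}{\stS}}{\mpP}$ and $\stEnvEntails{\stEnvi}{\mpW}{\stS}$, then $\stJudge{\mpEnv}{\stEnv\stEnvComp\stEnvi}{\mpP\subst{x}{\mpW}}$, used for both communication and process calls; and (iii) invariance of typing under $\equiv$, whose only genuinely new case is the crash-elimination rule \inferrule{\iruleCongStopElim}, handled by \inferrule{\iruleMPStop} and \inferrule{\iruleMPNil} with both sides typed under $\stEnvEmpty$. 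I would also record, directly from clause \inferrule{\iruleSafeMove} of \cref{def:mpst-env-safe}, that $(\mpS;\rolesR)$-safety is inherited along $\stEnvMoveMaybeCrash[\mpS;\rolesR]$.

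For the main induction, each reduction rule of \cref{fig:mpst-pi-semantics} is matched with a typing-context transition of \cref{fig:gtype:tc-red-rules}. For \inferrule{\iruleMPRedComm}, inversion yields an internal choice at $\mpChanRole{\mpS}{\roleQ}$ and an external choice at $\mpChanRole{\mpS}{\roleP}$; rules \inferrule{\iruleTCtxOut}, \inferrule{\iruleTCtxIn}, \inferrule{\iruleTCtxCom} give $\stEnv \stEnvMoveCommAnnot{\mpS}{\roleP}{\roleQ}{\stLab} \stEnvi$, the substitution lemma retypes the continuation, and this transition lies in $\stEnvMoveMaybeCrash[\mpS;{\rolesR[\mpS]}]$, so safety transfers. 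The error rule \inferrule{\iruleMPErrLabel} is ruled out: inversion provides matching input/output transitions, so clause \inferrule{\iruleSafeComm} forces the selected label to be offered by the receiver, contradicting the side condition; hence a well-typed process never takes this step. \inferrule{\iruleMPRedCall} leaves $\stEnv$ unchanged (the reflexive part of $\stEnvMoveMaybeCrashStar$), typing the unfolded body via \inferrule{\iruleMPDef} and the substitution lemma. The congruence and context rules \inferrule{\iruleMPRedCongr} and \inferrule{\iruleMPRedCtx} follow from the induction hypothesis and (iii); for a reduction occurring under a session restriction, condition (2) of \cref{def:assumption-abiding-reduction} supplies the reliable set $\rolesRi$ witnessing that the inner reduction is itself assumption-abiding, so the hypothesis applies to the body and \inferrule{\iruleMPResProp} rebuilds the judgement.

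The genuinely new cases are the crash and crash-handling rules. For \inferrule{\iruleMPCrashS} and \inferrule{\iruleMPCrashR}, the crashing process relinquishes all of its free endpoints $\{\mpChanRole{\mpS[j]}{\roleP[j]}\}_{j\in J} = \fc{\mpP}$; I would simulate this by a sequence of \inferrule{\iruleTCtxCrash} steps, one per endpoint, typing each resulting $\mpStop{\mpS[j]}{\roleP[j]}$ by \inferrule{\iruleMPStop}. The crucial point is that $\mpP \mpMoveMaybeCrashChecked \mpPi$ guarantees, through \cref{def:reliable-proc-reduction}, that a crashing role is never reliable for its own session; hence each type-level crash stays within the appropriate $\stEnvMoveMaybeCrash[\mpS;{\rolesR[\mpS]}]$ and preserves $(\mpS;\rolesR[\mpS])$-safety. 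The lost-message rules \inferrule{\iruleMPRedCommEGround} and \inferrule{\iruleMPRedCommE} correspond to \inferrule{\iruleTCtxSendToCrashed} (with one additional \inferrule{\iruleTCtxCrash} when a transmitted channel payload itself crashes), and the crash-detection rule \inferrule{\iruleMPRedCommD} corresponds to \inferrule{\iruleTCtxCrashDetect}, the handling-branch continuation type being supplied by inversion of the external choice; in each case safety descends by \inferrule{\iruleSafeMove}.

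The main obstacle I anticipate is the faithful alignment of the two layers of reliability bookkeeping: ensuring that every type-level crash transition I take really stays inside $\stEnvMoveMaybeCrash[\mpS;{\rolesR[\mpS]}]$, and not merely inside the permissive $\stEnvMoveMaybeCrash$ of the conclusion, so that $(\mpS;\rolesR[\mpS])$-safety is genuinely inherited by $\stEnvi$. This means carefully propagating condition (2) of \cref{def:assumption-abiding-reduction} through nested restrictions in the \inferrule{\iruleMPRedCtx} case, and, in \inferrule{\iruleMPCrashS}/\inferrule{\iruleMPCrashR}, matching the endpoints crashed by the process (precisely $\fc{\mpP}$, possibly spanning several sessions) against exactly the context entries that must crash, while respecting each session's own reliable set.
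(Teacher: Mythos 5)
Your plan mirrors the paper's proof closely: the same induction on the derivation of $\mpP \mpMoveMaybeCrashChecked \mpPi$, the same rule-by-rule correspondence between process reductions and context transitions (\inferrule{\iruleMPCrashS}/\inferrule{\iruleMPCrashR} simulated by sequences of \inferrule{\iruleTCtxCrash}, \inferrule{\iruleMPRedCommE}/\inferrule{\iruleMPRedCommEGround} by \inferrule{\iruleTCtxSendToCrashed} plus an extra crash of the channel payload, \inferrule{\iruleMPRedCommD} by \inferrule{\iruleTCtxCrashDetect}), the same auxiliary lemmas (substitution, subject congruence whose only new case is \inferrule{\iruleCongStopElim}), and the same use of condition (2) of \cref{def:assumption-abiding-reduction} to recover the inner reliable set $\rolesRi$ in the restriction sub-case. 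Your explicit dismissal of \inferrule{\iruleMPErrLabel} via clause \inferrule{\iruleSafeComm} is also right (the paper leaves that case implicit).

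There is, however, a concrete gap in your central communication case. You claim that inversion plus rules \inferrule{\iruleTCtxOut}, \inferrule{\iruleTCtxIn}, \inferrule{\iruleTCtxCom} give $\stEnv \stEnvMoveCommAnnot{\mpS}{\roleP}{\roleQ}{\stLab} \stEnvi$ and that ``the substitution lemma retypes the continuation''; as written, neither step goes through. First, \inferrule{\iruleTCtxCom} has the premise $\stS \stSub \stSi$ relating the sender's and receiver's payload types, and since the two parallel components are typed independently under \inferrule{\iruleMPPar}, inversion gives you \emph{no} relation between these two types: that premise must come from safety clause \inferrule{\iruleSafeComm} of $\stEnv$ (the paper invokes it exactly here). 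The same applies to label membership: because of \inferrule{\iruleMPSub}, the receiver's actual context entry may be a \emph{proper subtype} of the external choice produced by inversion, i.e.\ have strictly fewer input branches, so even the presence of the transmitted label in the receiver's type needs \inferrule{\iruleSafeComm}, not just the side condition $k \in I$ of \inferrule{\iruleMPRedComm}. Second, inversion types the continuations $\mpP[k]$ and $\mpQ$ against the \emph{supertype} continuations occurring in \inferrule{\iruleMPBranch}/\inferrule{\iruleMPSel}, whereas the reductum $\stEnvi$ of $\stEnv$ holds the (smaller) subtype continuations; substitution cannot bridge this, and you need a narrowing lemma ($\stJudge{\mpEnv}{\stEnv}{\mpP}$ and $\stEnvi \stSub \stEnv$ imply $\stJudge{\mpEnv}{\stEnvi}{\mpP}$). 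The paper packages this missing machinery into three lemmas — narrowing, preservation of $(\mpS;\rolesR)$-safety under supertyping, and simulation of a safe context's supertype transitions by the context itself (whose crash-detection case in turn needs clause \inferrule{\iruleSafeCrash}) — and then runs the argument on the canonical supertype before pulling the transition back down to $\stEnv$. A smaller instance of the same oversight appears in your crash cases: rule \inferrule{\iruleTCtxCrash} carries the side condition $\stT \stNotSub \stEnd$, so simulating \inferrule{\iruleMPCrashS}/\inferrule{\iruleMPCrashR} requires first proving that every free channel of a well-typed process has a type that is not a subtype of $\stEnd$ (this is precisely why the typing rules \inferrule{\iruleMPSel} and \inferrule{\iruleMPCall} carry their $\stS \stNotSub \stEnd$ clauses). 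None of this changes your overall strategy, but without these ingredients the \inferrule{\iruleMPRedComm}, \inferrule{\iruleMPRedCommD}, and crash cases do not close.
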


\vspace{-2ex}
\begin{restatable}[Type Safety]{corollary}{lemTypeSafety}
  \label{cor:type-safety}
  Assume $\stJudge{\emptyset}{\emptyset}{\mpP}$.
  If\, $\mpP \mpMoveMaybeCrashCheckedStar \mpPi$,
  \,then %
  $\mpPi$ has no error.%
\end{restatable}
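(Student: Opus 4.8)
The plan is to derive the corollary from subject reduction (\Cref{lem:subject-reduction}) in two moves: first show that typing is preserved along the whole assumption-abiding reduction sequence, so that $\mpPi$ is still well-typed; then show that no well-typed process can contain $\mpErr$ in an active position, because $\mpErr$ has no typing rule at all. The substantive work all lives in \Cref{lem:subject-reduction}, which I assume; the corollary is essentially its iteration plus a syntactic inversion argument.

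First I would prove, by induction on the number of steps of $\mpP \mpMoveMaybeCrashCheckedStar \mpPi$, that $\stJudge{\emptyset}{\emptyset}{\mpPi}$. The base case is zero steps, where $\mpPi = \mpP$ is typed by assumption. For the inductive step, write the reduction as $\mpP \mpMoveMaybeCrashCheckedStar \mpPii \mpMoveMaybeCrashChecked \mpPi$; the induction hypothesis gives $\stJudge{\emptyset}{\emptyset}{\mpPii}$, and the reliability premise of \Cref{lem:subject-reduction} (namely $\forall \mpS \in \emptyset$, there is a set $\rolesR[\mpS]$ making the context safe) holds vacuously for the empty top-level context. Applying the theorem yields some $\stEnvi$ with $\emptyset \stEnvMoveMaybeCrashStar \stEnvi$ and $\stJudge{\emptyset}{\stEnvi}{\mpPi}$. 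Since every rule in \Cref{fig:gtype:tc-red-rules} requires at least one entry in the context, the empty context cannot transition, so $\stEnvi = \emptyset$ and we conclude $\stJudge{\emptyset}{\emptyset}{\mpPi}$. The induction goes through precisely because the shape of the theorem's conclusion (well-typed under a context satisfying the per-session safety assumptions) matches its hypothesis; the bookkeeping of reliability assumptions for the internal, restricted sessions is threaded through clause~(2) of $\mpMoveMaybeCrashChecked$ (\Cref{def:assumption-abiding-reduction}) together with rule \inferrule{\iruleMPResProp}.

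Then I would establish the auxiliary fact: if $\stJudge{\mpEnv}{\stEnv}{\mpQ}$ is derivable, then there is no reduction context $\mpCtx$ with $\mpQ = \mpCtxApp{\mpCtx}{\mpErr}$. This is proved by induction on the structure of $\mpCtx$. For the hole $\mpCtxHole$ we would need $\stJudge{\mpEnv}{\stEnv}{\mpErr}$, but no rule in \Cref{fig:mpst-rules} has $\mpErr$ as its subject, so this case is impossible. For $\mpCtx = \mpCtxi \mpPar \mpR$, the only rule whose conclusion types a parallel composition is \inferrule{\iruleMPPar}; inverting it types $\mpCtxApp{\mpCtxi}{\mpErr}$ under a sub-context, and the induction hypothesis applies. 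The cases $\mpCtx = \mpRes{\mpS}{\mpCtxi}$ and $\mpCtx = \mpDefAbbrev{\mpDefD}{\mpCtxi}$ are symmetric, inverting \inferrule{\iruleMPResProp} and \inferrule{\iruleMPDef} respectively. These inversions are immediate because process typing is syntax-directed: there is no subsumption at the process level (\inferrule{\iruleMPSub} applies only to channel/variable entities), so each process constructor is typed by a unique rule. Combining this with the previous paragraph, $\mpPi$ is well-typed and hence has no error, as required.

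The only point needing care is that $\mpMoveMaybeCrashChecked$ is defined relative to a fixed typing derivation, so each step of the closure must be read against the typing of the \emph{current} process. Since typing under the empty top-level context is preserved at every step (second paragraph), the relation is well-defined all along the sequence and the induction is sound. I expect no genuine obstacle here beyond invoking \Cref{lem:subject-reduction}; the delicate reasoning about crashes and reliability assumptions is discharged inside that theorem.
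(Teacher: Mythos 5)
Your proposal is correct and follows essentially the same route as the paper's proof: induction on the length of the assumption-abiding reduction sequence, applying \Cref{lem:subject-reduction} at each step (with the reliability premise holding vacuously for the empty context), and then concluding via the untypeability of $\mpErr$. The paper's proof merely states this last point in one line, whereas you spell out the reduction-context inversion and the fact that the empty typing context cannot transition; these are sound elaborations, not a different argument.
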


\begin{example}[Subject reduction]
Take the DNS example (\cref{sec:overview}) and
consider the process acting as the (unreliable) role $\roleQ$:
$
\mpP[\roleQ] = \mpBranchSingle{\mpChanRole{\mpS}{\roleQ}}{\roleP}{req}{}{\mpSel{\mpChanRole{\mpS}{\roleQ}}{\roleP}{res}{}{\mpNil}}
$.
Using type $\stTi[\roleQ]$ from the same example,
can type $\mpP[\roleQ]$ with the typing context $\stEnv[\roleQ] =
\stEnvMap{\mpChanRole{\mpS}{\roleQ}}{\stTi[\roleQ]}$.
Following a crash reduction via \inferrule{\iruleMPCrashR}, the
process evolves as $\mpP[\roleQ] \mpMoveCrash \mpPi[\roleQ] = \mpStop{\mpS}{\roleQ}$.
Observe that the typing context $\stEnv[\roleQ]$ can reduce to $\stEnvi[\roleQ] = \stEnvMap{\mpChanRole{\mpS}{\roleP}}{\stStop}$,
via \inferrule{\iruleTCtxCrash};
and by typing rule \inferrule{\iruleMPStop}, we can type
$\mpPi[\roleQ]$ with $\stEnvi[\roleQ]$.
\end{example}

\emph{Session fidelity} states the opposite implication \wrt subject
reduction: if a process $\mpP$ is typed by $\stEnv$, and $\stEnv$ can
reduce along session $\mpS$ (possibly by crashing some endpoint of $\mpS$),
then $\mpP$ can
reproduce at least one of the reductions of $\stEnv$ (but maybe not all such
reductions, because $\stEnv$ over-approximates the behaviour of $\mpP$).
As a consequence, we can infer $\mpP$'s behaviour from $\stEnv$'s behaviour,
as shown in \Cref{lem:stenv-proc-properties}.
This result does \emph{not} hold for all well-typed processes: %
a well-typed process
can loop in a recursion like $\mpDef{\mpX}{...}{X}{\mpX}$, or
deadlock by suitably interleaving its communications across multiple
sessions~\cite{CDYP2015}.
Thus, similarly to
\cite{POPL19LessIsMore} and most session type works,
we prove session fidelity for processes with guarded recursion,
and implementing a single multiparty session as a parallel
composition of one sub-process per role.
Session fidelity is given in
\Cref{lem:session-fidelity} below, by leveraging \Cref{def:unique-role-proc}.

\begin{definition}[from \cite{POPL19LessIsMore}]
  \label{lem:guarded-definitions}
  \label{def:unique-role-proc}
  Assume\; $\stJudge{\mpEnvEmpty}{\stEnv}{\mpP}$.
  \;We say that $\mpP$:
  \begin{enumerate}[(1)]
  \item\label{item:guarded-definitions:stmt}%
    \textbf{has guarded definitions}
    iff
    in each process definition in $\mpP$ of the form
    \linebreak
    $\mpDef{\mpX}{
      \stEnvMap{x_1}{\stS[1]},...,\stEnvMap{x_n}{\stS[n]}}
      {\mpQ}{\mpPi}$,
    \,for all $i \in 1..n$,\,
    if $\stS[i]$ is a session type, then a call
    \linebreak
    $\mpCall{\mpY}{...,x_i,...}$
    can only occur in $\mpQ$
    as a subterm of\;
    $\mpBranch{x_i}{\roleQ}{j \in J}{\mpLab[j]}{y_j}{\mpP[j]}{}$
    \,or\,
    $\mpSel{x_i}{\roleQ}{\mpLab}{\mpD}{\mpPii}{}$
    (\ie after using $x_i$ for input or output);
  \item\label{item:unique-role-proc:stmt}%
    \textbf{only plays role $\roleP$ in $\mpS$, by $\stEnv$}
    \,iff:
    \!\!\!\descriptionlabel{\upshape (i)}
    $\mpP$ has guarded definitions;\; %
    \!\!\!\descriptionlabel{\upshape (ii)}
    $\fv{\mpP} \!=\! \emptyset$;\;
    \!\!\!\descriptionlabel{\upshape (iii)}
      $\stEnv \!=\!
      \stEnv[0] \stEnvComp \stEnvMap{\mpChanRole{\mpS}{\roleP}}{\stS}$
      with
      $\stS \!\stNotSub\! \stEnd$
      and
      $\stEnvEndP{\stEnv[0]}$;\;
    \!\!\!\descriptionlabel{\upshape (iv)}
      for all subterms
      $\mpRes{\stEnvMap{\mpSi}{\stEnvi}}{\mpPi}$
      in $\mpP$,
      $\stEnvEndP{\stEnvi}$.
  \end{enumerate}
  We say ``\emph{$\mpP$ only plays role $\roleP$ in $\mpS$}''
  \,iff\,
  $\exists\stEnv: \stJudge{\mpEnvEmpty}{\stEnv}{\mpP}$,
  and item~\ref{item:unique-role-proc:stmt} holds.
\end{definition}

Item~\ref{item:guarded-definitions:stmt}
of \Cref{lem:guarded-definitions} formalises guarded recursion for processes.
Item~\ref{item:unique-role-proc:stmt} %
identifies a process that plays
exactly \emph{one} role on \emph{one} session;
clearly,
an ensemble of such processes
cannot deadlock
by waiting for each other on multiple sessions.
All our examples %
satisfy \Cref{def:unique-role-proc}(\ref{item:unique-role-proc:stmt}).

We can now formalise our session fidelity result (\Cref{lem:session-fidelity}).
The statement is superficially similar to Thm.\@ 5.4 in
\cite{POPL19LessIsMore},
but it now includes explicit
reliability assumptions for $\stEnv$; %
it also covers more cases,
since our typing contexts and processes can reduce by crashing, handling
crashes, or losing messages sent to crashed session endpoints.
\iftoggle{techreport}{The proof is available in \Cref{sec:proofs:session-fidelity}.}{The proof is available in~\cite{techreport}.}

\begin{restatable}[Session Fidelity]{theorem}{lemSessionFidelity}%
  \label{lem:session-fidelity}%
  Assume\, $\stJudge{\mpEnvEmpty\!}{\!\stEnv}{\!\mpP}$, %
  with\, %
  $\stEnvSafeSessRolesSP{\mpS}{\rolesR}{\stEnv}$, %
  \,$\mpP \equiv \mpBigPar{\roleP \in I}{\mpP[\roleP]}$, %
  \,and\, $\stEnv = \bigcup_{\roleP \in I}\stEnv[\roleP]$ %
  such that for each \,$\mpP[\roleP]$: %
  (1) $\stJudge{\mpEnvEmpty\!}{\stEnv[\roleP]}{\!\mpP[\roleP]}$,
  and
  (2) either $\mpP[\roleP] \equiv \mpNil$, %
  or $\mpP[\roleP]$ only plays $\roleP$ in $\mpS$, by $\stEnv[\roleP]$. %
  Then, %
    $\stEnvMoveMaybeCrashP[\mpS;\rolesS]{\stEnv}$ %
    \,implies\; %
    $\exists \stEnvi,\mpPi$ %
    such that\, %
    $\stEnv \!\stEnvMoveMaybeCrash[\mpS;\rolesS]\! \stEnvi$,\, %
    $\mpP \!\mpMoveMaybeCrashCheckedPlus\! \mpPi$ %
    \,and\, %
    $\stJudge{\mpEnvEmpty\!}{\!\stEnvi}{\mpPi}$, %
    with\; %
    $\stEnvSafeSessRolesSP{\mpS}{\rolesR}{\stEnvi}$, %
    \,$\mpPi \equiv \mpBigPar{\roleP \in I}{\mpPi[\roleP]}$, %
    \,and\, $\stEnvi = \bigcup_{\roleP \in I}\stEnvi[\roleP]$ %
    such that for each $\mpPi[\roleP]$:
    (1) $\stJudge{\mpEnvEmpty\!}{\stEnvi[\roleP]}{\!\mpPi[\roleP]}$,
    and
    (2) either $\mpPi[\roleP] \equiv \mpNil$,
    or $\mpPi[\roleP]$ only plays $\roleP$ in $\mpS$, by $\stEnvi[\roleP]$. %
\end{restatable}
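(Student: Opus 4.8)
The plan is to perform a case analysis on the transition $\stEnv \stEnvMoveMaybeCrash[\mpS;\rolesS] \stEnvi$, which by \Cref{def:mpst-env-reduction} carries a label that is either a message transmission $\ltsSendRecv{\mpS}{\roleQ}{\roleR}{\stLab}$, a crash detection $\ltsCrDe{\mpS}{\roleQ}{\roleR}$, or a crash $\ltsCrash{\mpS}{\roleP}$ with $\roleP \!\notin\! \rolesS$. The core of the argument is an \emph{inversion} step: since each $\mpP[\roleP]$ (when not congruent to $\mpNil$) only plays role $\roleP$ in $\mpS$ by $\stEnv[\roleP]$ (\Cref{def:unique-role-proc}), its sole active endpoint is $\mpChanRole{\mpS}{\roleP}$, and its top-level shape is dictated by the type that $\stEnv[\roleP]$ assigns to $\mpChanRole{\mpS}{\roleP}$. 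First I would establish that, after a finite number of unfolding steps via rule \inferrule{\iruleMPRedCall} (justified by the guarded-definitions hypothesis), an internal-choice type forces $\mpP[\roleP]$ to a selection, an external-choice type forces it to a branching, $\stStop$ forces it to $\mpStop{\mpS}{\roleP}$, and $\stEnd$ forces it to a process congruent to $\mpNil$. These unfolding steps are exactly why the conclusion uses $\mpMoveMaybeCrashCheckedPlus$ (one or more steps) rather than a single reduction.

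Then I would dispatch each case by exhibiting the matching process reduction and reassembling the parallel composition. For a transmission produced by \inferrule{\iruleTCtxCom}, inversion gives a selection at $\roleP$ towards $\roleQ$ with label $\stLab$ and a branching at $\roleQ$ from $\roleP$; clause \inferrule{\iruleSafeComm} of safety (\Cref{def:mpst-env-safe}) guarantees the branching offers $\stLab$, so rule \inferrule{\iruleMPRedComm} fires. For a transmission produced by \inferrule{\iruleTCtxSendToCrashed}, the receiver endpoint is $\stStop$, so $\mpP[\roleQ] = \mpStop{\mpS}{\roleQ}$ and $\mpP[\roleP]$ is a selection; the message is lost via \inferrule{\iruleMPRedCommEGround} or \inferrule{\iruleMPRedCommE} according to the payload. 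For crash detection, $\mpChanRole{\mpS}{\roleP}$ is $\stStop$ (so $\mpP[\roleP] = \mpStop{\mpS}{\roleP}$) and $\mpP[\roleQ]$ is a branching from $\roleP$ which, by clause \inferrule{\iruleSafeCrash}, carries a $\stCrashLab$ handling branch, so rule \inferrule{\iruleMPRedCommD} fires. For a crash $\ltsCrash{\mpS}{\roleP}$ with $\roleP \!\notin\! \rolesS$, inversion gives $\mpP[\roleP]$ as a selection or branching, and rule \inferrule{\iruleMPCrashS} or \inferrule{\iruleMPCrashR} makes it crash; here I must verify that $\fc{\mpP[\roleP]} = \setenum{\mpChanRole{\mpS}{\roleP}}$, so that crashing affects only the intended endpoint --- this uses condition (iv) of \Cref{def:unique-role-proc}, which forces nested restrictions to be $\stEnd$-typed.

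In every case I would then close the loop: define $\mpPi[\roleP]$ as the continuation (or $\mpStop{\mpS}{\roleP}$), set $\mpPi \equiv \mpBigPar{\roleP \in I}{\mpPi[\roleP]}$, and take $\stEnvi = \bigcup_{\roleP \in I}\stEnvi[\roleP]$ matching the typing-context transition. Re-deriving $\stJudge{\mpEnvEmpty}{\stEnvi[\roleP]}{\mpPi[\roleP]}$ is routine from the typing rules (the continuation of a selection/branching is typed by the relevant premise, and $\mpStop{\mpS}{\roleP}$ is typed by \inferrule{\iruleMPStop}); preservation of the ``only plays $\roleP$ in $\mpS$'' invariant follows because the continuation of a single-role process remains single-role. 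Safety $\stEnvSafeSessRolesSP{\mpS}{\rolesR}{\stEnvi}$ is immediate from clause \inferrule{\iruleSafeMove} of \Cref{def:mpst-env-safe}, since $\stEnv \stEnvMoveMaybeCrash[\mpS;\rolesS] \stEnvi$. Finally, the assumption-abiding condition $\mpP \mpMoveMaybeCrashCheckedPlus \mpPi$ holds: communication and crash detection crash no endpoint, and the only crash introduced targets $\roleP \!\notin\! \rolesS$, so no reliability assumption is violated.

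The main obstacle I anticipate is the inversion step in the presence of recursion and the subtle crash semantics: I must show that the abstract type pins down the process shape up to a bounded sequence of definition unfoldings, and --- in the crash case --- that the single-role discipline (\Cref{def:unique-role-proc}, especially condition (iv)) guarantees a crash collapses precisely the one endpoint $\mpChanRole{\mpS}{\roleP}$ named by the label, so that the $\stEnvi$ reconstructed on the process side coincides with the $\stEnvi$ produced by the typing context. Keeping these two notions of $\stEnvi$ aligned across all crash-related rules is the delicate bookkeeping that dominates the proof.
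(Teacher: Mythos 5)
Your proposal follows essentially the same route as the paper's proof: a case analysis (the paper phrases it as induction on the derivation) of the typing-context transition, inversion of typing to pin down the shape of each $\mpP[\roleP]$ up to finitely many \inferrule{\iruleMPRedCall} unfoldings (justified by guarded definitions, and the reason the conclusion uses $\mpMoveMaybeCrashCheckedPlus$), mimicking with the corresponding process reduction rule, and reassembly, with $\stEnvSafeSessRolesSP{\mpS}{\rolesR}{\stEnvi}$ obtained from clause \inferrule{\iruleSafeMove}. The paper in fact only spells out the two genuinely new cases (crash detection, and transmission towards a crashed endpoint), deferring all standard cases to Thm.~5.4 of \cite{POPL19LessIsMore}, so your sketch is if anything more explicit about the overall structure.

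One step in your plan would fail as written. In the \inferrule{\iruleTCtxSendToCrashed} case you say the message is lost ``via \inferrule{\iruleMPRedCommEGround} or \inferrule{\iruleMPRedCommE} according to the payload''. The \inferrule{\iruleMPRedCommE} subcase (channel payload) cannot actually be completed: it creates a new crashed endpoint $\mpStop{\mpSi}{\roleR}$ in $\mpPi$, while the typing-context transition $\stEnv \stEnvMoveMaybeCrash[\mpS;\rolesS] \stEnvi$ leaves the entry for $\mpChanRole{\mpSi}{\roleR}$ untouched, so $\stJudge{\mpEnvEmpty}{\stEnvi}{\mpPi}$ would not hold and your ``two notions of $\stEnvi$'' would come apart. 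The paper instead shows this subcase is \emph{impossible}: the side condition $\stS \stNotSub \stEnd$ of typing rule \inferrule{\iruleMPSel} forces any channel payload to have a non-$\stEnd$ type, contradicting condition (iii) of \Cref{def:unique-role-proc}, by which every entry of $\stEnv[\roleP]$ other than $\mpChanRole{\mpS}{\roleP}$ is $\stEnd$-typed; hence only \inferrule{\iruleMPRedCommEGround} (basic-value payload) can fire. You already invoke exactly this single-role discipline in the crash case (your argument that $\fc{\mpP[\roleP]} = \setenum{\mpChanRole{\mpS}{\roleP}}$), so the same observation closes this gap; with that correction, your proof coincides with the paper's.
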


\subsection{Statically Verifying Run-Time Properties of Processes with Crashes}%
\label{sec:dedlock-freedom}

We conclude this section by showing how to infer run-time process properties from typing contexts,
even in the presence of arbitrary process crashes.
The formulations are based on \cite[Def.\@~5.1 \& Fig.\@~5(1)]{POPL19LessIsMore},
but
\emph{(1)} we cater for optional assumptions on reliable roles;
\emph{(2)} a successfully-terminated process or typing context may include crashed session endpoints
    and failover types/processes (like DNS server $\roleR$ in \Cref{sec:overview}) that only run after detecting a crash; and
\emph{(3)} non-failover reliable roles %
    terminate by reaching $\mpNil$ (in processes) or $\stEnd$ (in types).

\Cref{def:proc-properties} formalises several desirable process properties,
using the assumption-abiding reduction $\mpMoveMaybeCrashChecked$ (\Cref{def:assumption-abiding-reduction})
to embed any assumptions on reliable roles used for typing. The properties are mostly self-explanatory:
\emph{deadlock-freedom} means that if a process cannot reduce,
then it only contains inactive or crashed sub-processes, or recovery processes attempting to detect others' crashes;
\emph{liveness} means that if a process is trying to perform an input or
output, then it eventually succeeds (unless it is only attempting to detect others' crashes).

\begin{definition}[Runtime Process Properties]
  \label{def:proc-properties}%
  \label{def:proc-deadlock-free}
  \label{def:proc-term}%
  \label{def:proc-nterm}%
  \label{def:proc-df}%
  \label{def:proc-liveness}%
  Assume $\stJudge{\emptyset}{\stEnv}{\mpP}$ where,
  $\forall \mpS \!\in\! \stEnv$, there is a set of roles $\rolesR[\mpS]$
  such that $\stEnvSafeSessRolesSP{\mpS}{\rolesR[\mpS]}{\stEnv}$.
  We say $\mpP$ is:
  \begin{enumerate}[(1)]
  \item\label{item:proc-properties:df}\textbf{deadlock-free} %
    iff %
    $\mpP \!\mpMoveMaybeCrashCheckedStar\! \mpNotMoveP{\mpPi}$
    implies
    \smallskip
    \begin{center}
    \(
    \mpPi \equiv \mpNil \mpPar \mpBigPar{i \in
      I}{\mpStop{\mpS[i]}{\roleP[i]}} \mpPar \mpBigPar{j \in
    J}{(\mpDefAbbrev{\mpDefD[j,1]}{\ldots\mpDefAbbrev{\mpDefD[j,n_j]}{\mpBranchSingle{\mpChanRole{\mpS[j]}{\roleP[j]}}{\roleQ[j]}{\mpLabCrash}{}{\mpQi[j]}}})};\)
    \end{center}
  \item\label{item:proc-properties:term}\textbf{terminating} %
    iff %
    it is deadlock-free, %
    and %
    $\exists j$ finite such that %
    $\forall n \!\ge\! j\!:\!$
    $\mpP \!=\! \mpP[0] \!\mpMoveMaybeCrashChecked\! \mpP[1] \!\mpMoveMaybeCrashChecked\!%
    \cdots \!\mpMoveMaybeCrashChecked\! \mpP[n]$ \,implies\,
    $\mpNotMoveP{\mpP[n]}$;
  \item\label{item:proc-properties:nterm}\textbf{never-terminating} %
    iff \,%
    $\mpP \!\mpMoveMaybeCrashCheckedStar\! \mpPi$ %
    \,implies\, %
    $\mpMoveP{\mpPi}$;%
  \item\label{item:proc-properties:live}\textbf{live} %
    iff \,%
    $\mpP \!\mpMoveMaybeCrashCheckedStar\! \mpPi \!\equiv\! \mpCtxApp{\mpCtx}{\mpQ}$ \,
    implies:
    \begin{enumerate}[(i)]
    \item\label{item:process-liveness:sel}%
      if %
      \,$\mpQ = \mpSel{\mpC}{\roleQ}{\mpLab}{\mpW}{\mpQi}$ %
      \,then %
      $\exists \mpCtxi:$ %
      $\mpPi \mpMoveStar \mpCtxApp{\mpCtxi}{\mpQi}$;\;\; %
    \item\label{item:process-liveness:branch}%
      if \,%
      $\mpQ = \mpBranch{\mpC}{\roleQ}{i \in I}{\mpLab[i]}{x_i}{\mpQi[i]}{}$ %
      \,%
      where $\setcomp{\mpLab[i]}{i \!\in\! I} \!\neq\! \setenum{\mpLabCrash}$%
      , %
      \,then\, %
      $\exists \mpCtxi, k \in I, \mpW:$ %
      $\mpPi \mpMoveStar \mpCtxApp{\mpCtxi}{\mpQi[k]\subst{x_k}{\mpW}}$.%
    \end{enumerate}
  \end{enumerate}
\end{definition}

In \Cref{def:typing-ctx-properties} we formalise the type-level properties corresponding to \Cref{def:proc-properties}.
Type-level liveness means that all pending internal/external choices are eventually fired (via a message transmission or crash detection)
-- assuming \emph{fairness} (\Cref{def:stenv-fairness},
based on \emph{strong fairness of components} \cite[Fact 2]{VanGlabbeekLICS2021}) so all enabled message transmissions are eventually performed.

\begin{definition}[{Non-crashing, Fair, Live Paths
  {(adapted from \cite[Def.\@ 4.4]{POPL21AsyncMPSTSubtyping})}}]%
  \label{def:stenv-fairness}
  A \textbf{non\nobreakdash-crashing path} %
  is %
  a
  possibly infinite sequence of typing contexts %
  $(\stEnv[n])_{n \in N}$, %
  where $N = \{0,1,2,\ldots\}$ is a set of consecutive natural numbers, %
  and, $\forall n \!\in\! N$, $\stEnv[n] \stEnvMove \stEnv[n+1]$. %

  We say that a non-crashing path $(\stEnv[n])_{n \in N}$ is \textbf{fair for session $\mpS$} iff, %
  $\forall n \!\in\! N$:
  \begin{enumerate}[(1)]
    \item
    $\stEnvMoveAnnotP{\stEnv[n]}{\ltsSendRecv{\mpS}{\roleP}{\roleQ}{\stLab}}$ %
    implies $\exists k, \stLabi$ %
    such that $N \ni k \ge n$, %
    and $\stEnv[k] \stEnvMoveCommAnnot{\mpS}{\roleP}{\roleQ}{\stLabi}
    \stEnv[k+1]$;
    \item
    $\stEnvMoveAnnotP{\stEnv[n]}{\ltsCrDe{\mpS}{\roleP}{\roleQ}}$ %
    implies $\exists k$ %
    such that $N \ni k \ge n$, %
    and $\stEnv[k] \stEnvMoveCrDeAnnot{\mpS}{\roleP}{\roleQ} \stEnv[k+1]$.
    \footnotemark
  \end{enumerate}
  \footnotetext{This condition is missing in the published version. We thank
  Ping Hou for pointing out this omission.}

  We say that a non-crashing path $(\stEnv[n])_{n \in N}$ is \textbf{live for session $\mpS$} iff, %
  $\forall n \in N$:
  \begin{enumerate}[(1)]
  \item\label{item:liveness:send}%
    $\stEnvMoveAnnotP{\stEnv[n]}{\stEnvOutAnnot{\roleP}{\roleQ}{\stChoice{\stLab}{\stS}}}$ %
    implies $\exists k, \stLabi$ %
    such that $N \ni k \ge n$ %
    and $\stEnv[k] \stEnvMoveCommAnnot{\mpS}{\roleP}{\roleQ}{\stLabi}
    \stEnv[k+1]$;
  \item\label{item:liveness:recv}%
    $\stEnvMoveAnnotP{\stEnv[n]}{\stEnvInAnnot{\roleQ}{\roleP}{\stChoice{\stLab}{\stS}}}$ %
    and $\stLab \!\neq\! \stCrashLab$ %
    implies $\exists k, \stLabi$ %
    such that $N \ni k \ge n$ %
    and $\stEnv[k] \!\stEnvMoveCommAnnot{\mpS}{\roleP}{\roleQ}{\stLabi}\! \stEnv[k+1]$
    or $\stEnv[k] \!\stEnvMoveAnnot{\ltsCrDe{\mpS}{\roleQ}{\roleP}}\!
    \stEnv[k+1]$.
  \end{enumerate}
\end{definition}

\begin{definition}[Typing Context Properties]
  \label{def:typing-ctx-properties}
  Given a session $\mpS$ and a set of reliable roles $\rolesR$,
  we say $\stEnv$ is:
  \begin{enumerate}[(1)]
  \item\label{item:typing-ctx-properties:df}%
    \textbf{$(\mpS;\rolesR)$-deadlock-free} iff\,
    $\stEnv \!\stEnvMoveMaybeCrashStar[\mpS;\rolesR]\! \stEnvNotMoveP{\stEnvi}$
    \,implies\,
    $\forall \mpChanRole{\mpS}{\roleP} \!\in\! \stEnv:
    \stEnvApp{\stEnv}{\mpChanRole{\mpS}{\roleP}} \!\stSub\! \stEnd$ \,or\,
    $\stEnvApp{\stEnv}{\mpChanRole{\mpS}{\roleP}} \!=\! \stStop$ \,or\,
    $\exists \roleQ$: %
    $\stEnvApp{\stEnv}{\mpChanRole{\mpS}{\roleP}} \stSub \stInNB{\roleQ}{\stCrashLab}{}{} \stSeq \stTi$;
  \item\label{item:typing-ctx-properties:term}%
    \textbf{$(\mpS;\rolesR)$-terminating} iff it is deadlock-free,
    and $\exists j$ finite such that $\forall n \!\ge\! j$: $\stEnv \!=\! \stEnv[0]
    \!\stEnvMoveMaybeCrash[\mpS;\rolesR] \stEnv[1]
    \!\stEnvMoveMaybeCrash[\mpS;\rolesR] \cdots
    \!\stEnvMoveMaybeCrash[\mpS;\rolesR] \stEnv[n]$ \,implies\, $\stEnvNotMoveP{\stEnv[n]}$;
  \item\label{item:typing-ctx-properties:nterm}%
    \textbf{$(\mpS;\rolesR)$-never-terminating} iff\,
    $\stEnv \stEnvMoveMaybeCrashStar[\mpS;\rolesR] \stEnvi$ \,implies\, $\stEnvMoveP{\stEnvi}$;
  \item\label{item:typing-ctx-properties:live}%
    \textbf{$(\mpS;\rolesR)$-live} iff\,
    $\stEnv \stEnvMoveMaybeCrashStar[\mpS;\rolesR] \stEnvi$ \,implies
    all non-crashing paths starting with $\stEnvi$ which are fair for session $\mpS$ are also live for $\mpS$.
  \end{enumerate}
\end{definition}

\begin{example}
  Reliability assumptions $\rolesR$ can affect typing context properties, \eg consider:

\smallskip
\centerline{\small\(
\begin{array}{rcl}
\stEnv \;=\;
\stEnvMap{\mpChanRole{\mpS}{\roleP}}{
  \stRec{\stRecVar[\roleP]}{
    \stOut{\roleQ}{\stLabOK}{} \stSeq
    \stRecVar[\roleP]
  }
}
\stEnvComp\;
\stEnvMap{\mpChanRole{\mpS}{\roleQ}}{
  \stRec{\stRecVar[\roleQ]}{
    \stIn{\roleP}{\stLabOK}{}{
      \stRecVar[\roleQ],\,
      \stChoice{\stCrashLab}{} \stSeq
      \stRec{\stRecVari[\roleQ]}{
        \stIn{\roleR}{\stLabOK}{}{
          \stRecVari[\roleQ],\,
          \stChoice{\stCrashLab}{} \stSeq \stEnd
        }
      }
    }
 }
}
\stEnvComp\;
\stEnvMap{\mpChanRole{\mpS}{\roleR}}{
  \stRec{\stRecVar[\roleR]}{
    \stOut{\roleQ}{\stLabOK}{} \stSeq
    \stRecVar[\roleR]
  }
}
\end{array}
\)}
\smallskip
\noindent
If $\rolesS \!=\! \rolesSEmpty$,
$\stEnv$ is safe and deadlock-free but \emph{not} live: if $\roleP$ does \emph{not} crash, $\roleR$'s $\stLabFmt{ok}$ message is never received by $\roleQ$.
If we have $\rolesS \!=\! \gtFmt{\{\roleR\}}$, $\stEnv$ satisfies never-termination. Here, neither liveness nor termination can be satisfied by adding reliability assumptions. More examples in \Cref{sec:appendix:examples}, \cref{eg:allegs:props}.
\end{example}

We conclude by showing how the type-level properties in \Cref{def:typing-ctx-properties}
allow us to infer the corresponding process properties in \Cref{def:proc-properties}.
\iftoggle{techreport}{The proof is available in \cref{sec:proofs:proc-properties}.}{The proof is available in~\cite{techreport}.}

\begin{restatable}[Verification of Process Properties]{theorem}{lemProcessPropertiesVerif}%
  \label{lem:stenv-proc-properties}
  \label{lem:deadlock-freedom}%
  Assume\, $\stJudge{\mpEnvEmpty\!}{\!\stEnv}{\!\mpP}$, %
  where %
  $\stEnv$ is ($\mpS;\rolesR$)-safe, %
  \,$\mpP \equiv \mpBigPar{\roleP \in I}{\mpP[\roleP]}$, %
  \,and\, $\stEnv = \bigcup_{\roleP \in I}\stEnv[\roleP]$ %
  such that for each $\mpP[\roleP]$, %
  we have\, $\stJudge{\mpEnvEmpty\!}{\stEnv[\roleP]}{\!\mpP[\roleP]}$.
  \,Further, assume that each $\mpP[\roleP]$
  is either\, $\mpNil$ (up to $\equiv$), %
  or only plays $\roleP$ in $\mpS$, by $\stEnv[\roleP]$. %
  Then,
  for all $\predP \in \setenum{\text{deadlock-free}, \text{terminating}, \text{never-terminating}, \text{live}}$,
  if $\stEnv$ is $(\mpS;\rolesR)$-$\predP$,
  then $\mpP$ is $\predP$.
\end{restatable}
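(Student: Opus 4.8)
The plan is to transfer each property from the typing context $\stEnv$ to the process $\mpP$ through the operational correspondence furnished by subject reduction (\Cref{lem:subject-reduction}) and session fidelity (\Cref{lem:session-fidelity}). The backbone is a two-way simulation between $\mpP$ and $\stEnv$: since every $\mpP[\roleP]$ is either $\mpNil$ or only plays $\roleP$ in $\mpS$ by $\stEnv[\roleP]$, session fidelity applies, so whenever $\stEnv$ can take a $\stEnvMoveMaybeCrash[\mpS;\rolesR]$-step, $\mpP$ can realise a matching $\mpMoveMaybeCrashCheckedPlus$-reduction, while subject reduction maps every $\mpMoveMaybeCrashChecked$-step of $\mpP$ to a $\stEnvMoveMaybeCrashStar[\mpS;\rolesR]$-reduction of $\stEnv$. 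First I would record, as a standing invariant, that this single-role decomposition together with well-typedness and $(\mpS;\rolesR)$-safety is preserved along $\mpMoveMaybeCrashChecked$ (so that fidelity may be re-applied at any reachable $\mpPi$), which follows by combining \Cref{lem:subject-reduction} with the shape of the reduction rules.

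For deadlock-freedom, take $\mpP \mpMoveMaybeCrashCheckedStar \mpNotMoveP{\mpPi}$. Iterating subject reduction yields $\stEnv \stEnvMoveMaybeCrashStar[\mpS;\rolesR] \stEnvi$ with $\stJudge{\mpEnv}{\stEnvi}{\mpPi}$ and $\stEnvi$ still $(\mpS;\rolesR)$-safe. The crucial step is $\stEnvNotMoveP{\stEnvi}$: were $\stEnvi$ able to make a $\stEnvMove$-step (a message transmission or crash detection), session fidelity would produce a matching non-crash reduction of $\mpPi$, contradicting $\mpNotMoveP{\mpPi}$. Type-level deadlock-freedom (\Cref{def:typing-ctx-properties}(\ref{item:typing-ctx-properties:df})) then forces each entry of $\stEnvi$ to be $\stSub \stEnd$, equal to $\stStop$, or a pure crash-recovery type $\stInNB{\roleQ}{\stCrashLab}{}{}\stSeq\stTi$. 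Inverting the typing rules \inferrule{\iruleMPNil}, \inferrule{\iruleMPStop}, \inferrule{\iruleMPBranch}, \inferrule{\iruleMPDef} on each endpoint pins down $\mpPi$ as the required parallel composition of $\mpNil$, crashed endpoints $\mpStop{\mpS[i]}{\roleP[i]}$, and def-wrapped singleton $\mpLabCrash$-branching recovery processes.

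For termination and never-termination I would transfer, respectively, the length bound and perpetual reducibility across the same simulation. Never-termination is direct: given $\mpP \mpMoveMaybeCrashCheckedStar \mpPi$, subject reduction gives the simulating $\stEnvi$, the type-level property yields $\stEnvMoveP{\stEnvi}$, and session fidelity turns this into $\mpMoveP{\mpPi}$. Termination is subtler, since a single $\mpMoveMaybeCrashChecked$-step need not advance the type one-to-one (e.g.\ a process call expanded by \inferrule{\iruleMPRedCall} leaves $\stEnv$ unchanged); here I would use that the single-role processes have \emph{guarded} definitions (\Cref{lem:guarded-definitions}(\ref{item:guarded-definitions:stmt})), so any infinite $\mpMoveMaybeCrashChecked$-sequence performs infinitely many communications, each inducing a genuine $\stEnvMoveMaybeCrash[\mpS;\rolesR]$-step; a finite type-level bound $j$ thus bounds the productive process steps, and maximality of stuck states follows from the deadlock-free correspondence above.

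The liveness case is the main obstacle. Starting from $\mpP \mpMoveMaybeCrashCheckedStar \mpPi \equiv \mpCtxApp{\mpCtx}{\mpQ}$ with $\mpQ$ a selection, or a branching whose label set is not $\setenum{\mpLabCrash}$, the simulating $\stEnvi$ carries at $\mpChanRole{\mpS}{\roleP}$ an internal/external choice with a corresponding pending output/input transition. Since $\stEnv$ is $(\mpS;\rolesR)$-live, so is $\stEnvi$, hence every non-crashing path out of $\stEnvi$ that is fair for $\mpS$ is live. The plan is to build one such path explicitly --- repeatedly scheduling every enabled transmission and crash detection, in the style of \cite{VanGlabbeekLICS2021} --- so that liveness guarantees the pending choice at $\mpChanRole{\mpS}{\roleP}$ is fired by a $\stEnvMoveCommAnnot{\mpS}{\roleP}{\roleQ}{\stLabi}$ (or crash-detection) step after a finite prefix. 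Realising that finite prefix at the process level, step by step via session fidelity, then yields the required $\mpMoveStar$-reduction reaching $\mpCtxApp{\mpCtxi}{\mpQi}$ (resp.\ $\mpCtxApp{\mpCtxi}{\mpQi[k]\subst{x_k}{\mpW}}$). The delicate point is that session fidelity is existential in the move it realises: I expect the hardest work to be arguing, from the single-role decomposition, that at $\mpChanRole{\mpS}{\roleP}$ the only process transition enabling the scheduled $\roleP$-communication is precisely the pending selection/branching $\mpQ$, so that the fair type path is matched by a process reduction that actually fires $\mpQ$ rather than some other action.
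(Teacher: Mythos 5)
Your treatment of deadlock-freedom, termination, and never-termination follows the paper's proof essentially verbatim: the paper records your ``standing invariant'' as a separate proposition (\Cref{lem:single-session-persistent}, proved directly from the proof of \Cref{lem:subject-reduction}), derives $\stEnvNotMoveP{\stEnvi}$ from a stuck $\mpPi$ via the contrapositive of \Cref{lem:session-fidelity}, pins down the shape of stuck well-typed processes by inversion, and rules out infinite silent process reductions using guarded definitions and the single-role hypothesis. Your stating never-termination in the forward direction rather than by contradiction is an immaterial difference.

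The liveness case, however, contains a genuine gap, and it is not the one you flagged. You propose to schedule an explicit fair non-crashing path from $\stEnvi$, let type-level liveness fire the pending action after a finite prefix, and then ``realise that finite prefix at the process level, step by step via session fidelity''. That pull-back step fails: \Cref{lem:session-fidelity} is existential in \emph{both} the type move and the process move --- it guarantees that \emph{some} type transition is matched by \emph{some} process reduction, not that a \emph{prescribed} type transition is matched. Worse, a prescribed fair path need not be realisable by $\mpPi$ at all, because the typing context strictly over-approximates the process: by rules \inferrule{\iruleMPSel}, \inferrule{\iruleMPSub} and \inferrule{\iruleStSubOut}, an entry $\stEnvApp{\stEnvi}{\mpChanRole{\mpS}{\rolePi}}$ typing a selection process may be an internal choice carrying strictly more labels than the single label that process ever sends, so your scheduled path may contain transmissions $\ltsSendRecv{\mpS}{\rolePi}{\roleQi}{\stLabi}$ that no reduction of $\mpPi$ can ever produce. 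No analysis concentrated at $\mpChanRole{\mpS}{\roleP}$ (your flagged concern about which process transition fires the final scheduled communication) repairs this, since the obstruction sits at the intermediate steps of the path.

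The workable version of your idea reverses the direction of construction, and this is exactly what the paper does: assume, for contradiction, that a reachable $\mpPi \equiv \mpCtxApp{\mpCtx}{\mpQ}$ never fires the pending selection/branching $\mpQ$; then no parallel component of any reductum of $\mpPi$ ever exposes a matching dual choice, so the process's own reductions --- projected to the type level via \Cref{lem:subject-reduction} --- induce a non-crashing path from $\stEnvi$ that is fair in the sense of \Cref{def:stenv-fairness} yet never fires any transmission $\ltsSendRecv{\mpS}{\roleP}{\roleQ}{\stLabi}$, contradicting $(\mpS;\rolesR)$-liveness of $\stEnv$ by \Cref{def:typing-ctx-properties}, item~\ref{item:typing-ctx-properties:live}. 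Paths obtained this way are realisable by construction, which is precisely the property your scheduled paths lack.
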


\section{Verifying Type-Level Properties via Model Checking}
\label{sec:model-checking}

In our generalised typing system, we prove subject reduction
when a typing context satisfies a safety property (\Cref{def:mpst-env-safe});
we then give examples of more refined typing context properties (\Cref{def:typing-ctx-properties})
and show how they are inherited by typed processes (\Cref{lem:stenv-proc-properties}).
In this section, we highlight a major benefit of our theory:
we show how such typing context behavioural properties
can be verified using model checkers. %
We use our typing contexts and their semantics
(including crashes and crash handling) as models,
and we express our behavioural properties as modal $\mu$-calculus \formulae;
we then use a model checker (mCRL2~\cite{TACAS19mCRL2})
to verify whether a typing context enjoys a desired property.
\vspace{-2ex}
\subparagraph{Contexts as Models.}
We encode our typing contexts as mCRL2 processes, with LTS semantics
that match \Cref{def:mpst-env-reduction}.
To embed our optional reliability assumptions,
the context encoding reflects the transition relation
$\stEnvMoveMaybeCrash[\mpS; \rolesS]$,
so it never crashes any reliable role in $\rolesS$.

\vspace{-2ex}
\subparagraph{Properties as \Formulae.}
A modal $\mu$-calculus formula $\muPred$ accepts or rejects a typing context $\stEnv$
depending on the transition labels $\stEnv$ can fire while reducing.
We write $\muJudge{\stEnv}{\muPred}$ when a typing context $\stEnv$ satisfies $\muPred$.
Actions $\muAct$ range over transition labels in \cref{def:mpst-env-reduction-label};
$\muData$ (for $\muData$ata) ranges over
sessions, roles, message labels, and types.
Our \formulae $\muPred$ follow a standard syntax:

\smallskip
\centerline{\(
\begin{array}{l}
    \muPred \;\bnfdef\;
    \muTrue
    \hspace{-0.3mm}\bnfsep\hspace{-0.3mm} \muFalse
    \hspace{-0.3mm}\bnfsep\hspace{-0.3mm} \muBox{\muAct}{\muPred}
    \hspace{-0.3mm}\bnfsep\hspace{-0.3mm} \muDiamond{\muAct}{\muPred}
    \hspace{-0.3mm}\bnfsep\hspace{-0.3mm} \muPred[1] \muAnd \muPred[2]
    \hspace{-0.3mm}\bnfsep\hspace{-0.3mm} \muPred[1] \muOr \muPred[2]
    \hspace{-0.3mm}\bnfsep\hspace{-0.3mm} \muPred[1] \muImplies \muPred[2]
    \hspace{-0.3mm}\bnfsep\hspace{-0.3mm} \muLFP{\muVar}{\muPred}
    \hspace{-0.3mm}\bnfsep\hspace{-0.3mm} \muGFP{\muVar}{\muPred}
    \hspace{-0.3mm}\bnfsep\hspace{-0.3mm} \muVar
    \hspace{-0.3mm}\bnfsep\hspace{-0.3mm} \muForall{\muData}{\muPred}
    \hspace{-0.3mm}\bnfsep\hspace{-0.3mm} \muExists{\muData}{\muPred}
\end{array}
\)} %
\smallskip

\noindent
Truth ($\muTrue$) accepts any $\stEnv$; falsity ($\muFalse$) accepts no
$\stEnv$.
The box (resp.\@ diamond) modality, $\muBox{\muAct}{\muPred}$ (resp.\@
$\muDiamond{\muAct}{\muPred}$),
requires that $\muPred$ is satisfied in all cases (resp.\@ some cases)
after action $\muAct$ is fired.
The least (resp.\ greatest) fixed point $\muLFP{\muVar}{\muPred}$ (resp.\
$\muGFP{\muVar}{\muPred}$) allows one to iterate $\muPred$ for a finite (resp.\
infinite) number of times, where $\muVar$ denotes a variable for iteration.
Lastly, the forms $\muPred[1] \muImplies \muPred[2]$,
$\muForall{\muData}{\muPred}$, and $\muExists{\muData}{\muPred}$ denote
implication, and universal and existential quantification.

In \cref{fig:mc-formulae} we show the $\mu$-calculus \formulae corresponding to
our properties in \Cref{def:mpst-env-safe,def:typing-ctx-properties}.
Compared to~\cite{POPL19LessIsMore}, such properties are more complex,
since they cater for crashes and crash handling transitions.
Recall \cref{def:mpst-env-safe}, and take a safety property $\predP$:
for $\predPApp{\stEnv}$ to hold,
clause $\inferrule{\iruleSafeMove}$ requires that whenever
$\stEnv$ can transition to some $\stEnvi$ (via $\stEnvMoveMaybeCrash[\mpS; \rolesS]$),
then $\predPApp{\stEnvi}$ also holds.
To represent this clause in modal $\mu$-calculus,
we use fixed points for possibly infinite paths;
in \cref{fig:mc-formulae} we write $\muPred[\to](\muVar)$ for following a fixed point
$\muVar$ via any transmission, crash,\footnotemark~or crash handling actions,
and we define it as
\(
\muPred[\to](\muVar) =
\muForall{\mpS,\roleP,\roleQ,\stLab}{}%
\muBox{\stEnvCommAnnotSmall{\roleP}{\roleQ}{\stLab}}{\muVar}%
      \muAnd
      \muBox{\ltsCrashSmall{\mpS}{\roleP}}{\muVar}
      \muAnd
      \muBox{\ltsCrDe{\mpS}{\roleP}{\roleQ}}{\muVar}
\).
\footnotetext{%
  Since our typing contexts encoded in mCRL2 produce
  $\stEnvMoveMaybeCrash[\mpS; \rolesS]$-transitions
  that never crash reliable roles in $\rolesS$,
  our $\mu$-calculus \formulae can follow \emph{all} crash transitions;
  hence, the \formulae do not depend on $\rolesS$.
}%

\begin{figure}
\begin{center}
\footnotesize
\begin{tabular}{@{}r@{\hspace{3mm}}>{$}l<{$}}
\inferrule{\iruleFmlaSafe} &
\muJudge{\stEnv}{%
  \muGFP{\muVar}{%
    \left(%
    \begin{array}{@{}l@{}}
      \muForall{\mpS,\roleP,\roleQ,\stLab,\stLabi,\stS,\stSi}{}%
      \big(
      \muDiamond{\actCrashed{\mpS}{\roleP}}{\muTrue}%
      \muAnd%
      \muDiamond{\stEnvInAnnotSmall{\roleQ}{\roleP}{\stLabPi}}{\muTrue}%
      \muImplies%
      \muDiamond{\ltsCrDe{\mpS}{\roleQ}{\roleP}}{\muTrue}%
      \big)
      \\%
      \;\muAnd\;
      \big(%
      \muDiamond{\stEnvOutAnnotSmall{\roleP}{\roleQ}{\stLabP}}{\muTrue}%
      \muAnd%
      (
      \muDiamond{\stEnvInAnnotSmall{\roleQ}{\roleP}{\stLabPi}}{\muTrue}%
      \muOr
      \muDiamond{\actCrashed{\mpS}{\roleQ}}{\muTrue}
      )
      \muImplies%
      \muDiamond{\stEnvCommAnnotSmall{\roleP}{\roleQ}{\stLab}}{\muTrue}%
      \big)%
    \muAnd%
    \muPred[\to](\muVar)
    \end{array}
    \right)
  }
}
\\
\midrule
\inferrule{\iruleFmlaDF} &
\muJudge{\stEnv}{%
  \muGFP{\muVar}{%
    \left(%
    \begin{array}{@{}l@{}}
      \left(%
      \begin{array}{@{}l@{}}
        (\muForall{\mpS, \roleP,\roleQ,\stLab}{%
          \muBox{\stEnvCommAnnotSmall{\roleP}{\roleQ}{\stLab}}{\muFalse}%
          \muAnd
          \muBox{\ltsCrDe{\mpS}{\roleP}{\roleQ}}{\muFalse}
        })%
        \;\muImplies%
        \\%
        \quad%
        \muForall{\mpS,\roleP,\roleQ,\stLab,\stS}{%
          \muBox{\stEnvInAnnotSmall{\roleP}{\roleQ}{\stLabP}}{\muFalse}%
          \muAnd%
          \muBox{\stEnvOutAnnotSmall{\roleP}{\roleQ}{\stLabP}}{\muFalse}%
        }%
      \end{array}
      \right)%
    \end{array}
    \;\muAnd\;%
    \muPred[\to](\muVar)
    \right)%
  }%
}%
\\
\midrule
\inferrule{\iruleFmlaTerm} &
\muJudge{\stEnv}{%
  \muLFP{\muVar}{%
    \left(%
    \begin{array}{@{}l@{}}
      \left({%
        \begin{array}{@{}l@{}}
          \left({
            \muForall{\mpS, \roleP, \roleQ, \stLab}{%
              \muBox{\stEnvCommAnnotSmall{\roleP}{\roleQ}{\stLab}}{%
                \muFalse%
              }%
              \muAnd
              \muBox{\ltsCrDe{\mpS}{\roleP}{\roleQ}}{\muFalse}
            }%
          }\right)%
          \;\muImplies%
          \\%
          \quad%
          \muForall{\mpS, \roleP, \roleQ, \stLab, \stS}{%
            \muBox{\stEnvInAnnotSmall{\roleP}{\roleQ}{\stLabP}}{\muFalse}%
            \muAnd%
            \muBox{\stEnvOutAnnotSmall{\roleP}{\roleQ}{\stLabP}}{\muFalse}%
          }%
        \end{array}
      }\right)%
    \end{array}
    \;\muAnd\;%
    \muPred[\to](\muVar)
    \right)%
  }%
}%
\\
\midrule
\inferrule{\iruleFmlaNTerm} &
\muJudge{\stEnv}{%
  \muGFP{\muVar}{%
    \left(%
      \begin{array}{@{}l@{}}
      \left(%
      \muExists{\mpS,\roleP,\roleQ,\stLab}{%
        \muDiamond{\stEnvCommAnnotSmall{\roleP}{\roleQ}{\stLab}}{\muTrue}
        \muOr
        \muDiamond{\ltsCrDe{\mpS}{\roleP}{\roleQ}}{\muTrue}
      }
      \right)%
      \end{array}
      \;\muAnd\;%
      \muPred[\to](\muVar)
    \right)%
  }%
}%
\\
\midrule
\inferrule{\iruleFmlaLive} &
\muJudge{\stEnv}{%
  \muGFP{\muVar}{%
    \left(%
    \begin{array}{@{}l@{}}
      \muForall{\mpS,\roleP,\roleQ}{}%
      \\%
      \quad%
      \muPred[in] =
      \left({%
      \begin{array}{@{}l@{}}
        (\muExists{\stLab,\stS}{%
          \muDiamond{\stEnvInAnnotSmall{\roleQ}{\roleP}{\stLabP}}{\muTrue}%
        })%
        \;\muImplies %
      \\\quad
      \muLFP{\muVari}{%
        \left(
        \begin{array}{@{}l@{}}
        \muExists{\stLab}{%
          \muDiamond{%
            \stEnvCommAnnotSmall{\roleP}{\roleQ}{\stLab}%
          }{\muTrue}%
        }%
        \,\muOr\,%
        \muDiamond{\ltsCrDe{\mpS}{\roleQ}{\roleP}}{\muTrue}
        \\
        \muOr
        \muExists{\rolePi, \roleQi}{%
          \left({%
            \begin{array}{@{}l@{}}%
              \muExists{\stLabi}{%
                \muDiamond{%
                  \stEnvCommAnnotSmall{\rolePi}{\roleQi}{\stLabi}%
                }{\muTrue}%
                \muOr
                \muDiamond{\ltsCrDe{\mpS}{\roleQi}{\rolePi}}{\muTrue}
              }%
              \\%
              \muAnd\; %
              \muPredi[\to](\mpS, \rolePi, \roleQi, \muVari)
            \end{array}
          }\right)%
        }%
        \end{array}
        \right)
      }%
      \end{array}
      }\right)%
      \\%
      \quad\muAnd\;%
      \muPred[out] =
      \muForall{\stLab}{%
        \left(%
        \begin{array}{@{}l@{}}
          (\muExists{\stS}{%
            \muDiamond{\stEnvOutAnnotSmall{\roleP}{\roleQ}{\stLabP}}{\muTrue}%
          })%
          \;\muImplies %
          \\\quad
          \muLFP{\muVari}{%
            \left(
            \begin{array}{@{}l@{}}
            \muDiamond{%
              \stEnvCommAnnotSmall{\roleP}{\roleQ}{\stLab}%
            }{\muTrue}%
            \\
            \muOr
            \muExists{\rolePi, \roleQi}{%
              \left({%
                \begin{array}{@{}l@{}}%
                  \muExists{\stLabi}{%
                    \muDiamond{%
                      \stEnvCommAnnotSmall{\rolePi}{\roleQi}{\stLabi}%
                    }{\muTrue}%
                    \muOr
                    \muDiamond{\ltsCrDe{\mpS}{\roleQi}{\rolePi}}{\muTrue}
                  }%
                  \\%
                  \muAnd\; %
                  \muPredi[\to](\mpS, \rolePi, \roleQi, \muVari)
                \end{array}
              }\right)%
            }%
            \end{array}
            \right)
          }%
        \end{array}
        \right)%
      }%
    \\
    \;\muAnd\;%
    \muPred[\to](\muVar)
    \end{array}
    \right)%
  }%
}%
\end{tabular}
\end{center}
\vspace{-2ex}
\caption{Modal $\mu$-Calculus \Formulae corresponding to Properties in
  \cref{def:mpst-env-safe,def:typing-ctx-properties}, where
\(
\muPred[\to](\muVar) =
\muForall{\mpS,\roleP,\roleQ}{}%
\muPredi[\to](\mpS, \roleP, \roleQ, \muVar)
\), and
\(
\muPredi[\to](\mpS, \roleP, \roleQ, \muVar) =
\muForall{\stLab}{}%
\muBox{\stEnvCommAnnotSmall{\roleP}{\roleQ}{\stLab}}{\muVar}%
      \muAnd
      \muBox{\ltsCrashSmall{\mpS}{\roleP}}{\muVar}
      \muAnd
      \muBox{\ltsCrDe{\mpS}{\roleP}{\roleQ}}{\muVar}
\).
}
\label{fig:mc-formulae}
\end{figure}

\emph{\textbf{Safety}} (\inferrule{\iruleFmlaSafe})
requires
(in its second implication) that whenever $\stEnv$ can fire an input action, and either an output or $\actCrashed{\mpS}{\roleQ}$ action, then $\stEnv$ can also
fire a message transmission, $\stEnvCommAnnotSmall{\roleP}{\roleQ}{\stLab}$.
The first implication requires that, if $\stEnv$ can fire a
$\actCrashed{\mpS}{\roleP}$ action and an input action
$\stEnvInAnnotSmall{\roleQ}{\roleP}{\stLabPi}$,
then $\stEnv$
must be capable of firing a crash handling action,
$\ltsCrDe{\mpS}{\roleQ}{\roleP}$.

\emph{\textbf{Deadlock-Freedom}} (\inferrule{\iruleFmlaDF})
requires that, if $\stEnv$ is unable to reduce further
without crashing (via $\stEnvMove$),
then $\stEnv$ can hold only $\stEnd$ed or $\stStop$ped endpoints.
The antecedent of $\muImplies$ characterises a context that is unable to reduce
(since $\stEnvMove$ only allows for transmissions
$\stEnvCommAnnotSmall{\roleP}{\roleQ}{\stLab}$
and crash detection
$\ltsCrDe{\mpS}{\roleP}{\roleQ}$);
the consequent forbids the presence of any input
$\stEnvInAnnotSmall{\roleP}{\roleQ}{\stChoice{\stLab}{\stS}}$
or output
$\stEnvOutAnnotSmall{\roleP}{\roleQ}{\stChoice{\stLab}{\stS}}$
transitions. By \Cref{def:mpst-env-reduction},
this means all session endpoints in $\stEnv$ are $\stEnd$ed or
$\stStop$ped.

\emph{\textbf{Terminating}} (\inferrule{\iruleFmlaTerm})
holds when $\stEnv$ can reach a terminal configuration (\ie cannot further reduce
via $\stEnvMove$) within a \emph{finite} number of steps.
Hence, the formula
is similar to deadlock-freedom, except that it uses the \emph{least} fixed
point ($\muLFP{\muVar}{\ldots}$)
to ensure %
finiteness.

\emph{\textbf{Never-Terminating}} (\inferrule{\iruleFmlaNTerm})
requires that
$\stEnv$ can always keep reducing via $\stEnvMove$ transitions.
Therefore, we require some transmission
$\stEnvCommAnnotSmall{\roleP}{\roleQ}{\stLab}$
or crash detection action
$\ltsCrDe{\mpS}{\roleP}{\roleQ}$
to be always fireable, even after some of the non-reliable roles crash.

\emph{\textbf{Liveness}} (\inferrule{\iruleFmlaLive})
requires that any enabled input/output action is triggered
by a corresponding message transmission or crash detection,
within a finite number of steps.
For input actions (sub-formula $\muPred[in]$): if an input
$\stEnvInAnnotSmall{\roleQ}{\roleP}{\stChoice{\stLab}{\stS}}$ is enabled
(left of $\muImplies$),
then, in a finite number of steps ($\muLFP{\muVari}{\ldots}$)
involving \emph{other} roles $\rolePi, \roleQi$,
a transmission
$\stEnvCommAnnotSmall{\roleP}{\roleQ}{\stLabi}$
or a crash detection
$\ltsCrDe{\mpS}{\roleQ}{\roleP}$
can be fired.
For output actions, the sub-formula $\muPred[out]$ is similar.
The $\mu$-calculus formula embeds fairness (\Cref{def:stenv-fairness})
by finding \emph{some} roles $\rolePi, \roleQi$ that,
no matter how they interact (sub-formula $\muPredi[\to]$),
lead to the desired transmission or crash detection.

\vspace{-2ex}
\subparagraph{Tool Implementation and Example.}
To verify the properties in~\cref{fig:mc-formulae}, we implement a prototype tool
that extends \theTool~\cite{SY19Artifact} (based on the mCRL2 model checker~\cite{TACAS19mCRL2}) with support for our crash-stop semantics. The updated tool is available at:

\smallskip\centerline{
  \url{https://github.com/alcestes/mpstk-crash-stop}
}\smallskip%

\noindent%
We now illustrate how this new tool helps in writing correct session protocols
with crash handling, and briefly discuss its performance.

\begin{figure}
  \input{5_modelchecking/twobuyer.tex}
  \caption{Two-Buyers protocol extended with crash-handling.}
  \label{fig:mc:twobuyers}
\end{figure}

In the \emph{two-buyers protocol} from MPST literature \cite{HYC08},
buyers $\roleFmt{b1}$ and $\roleFmt{b2}$ agree on splitting the cost
of buying a book from seller $\roleFmt{s}$. We tackle this protocol with crashes and \emph{no reliability assumptions}:
all roles may crash, and survivors must end the session correctly.
The resulting \emph{crash-tolerant two-buyers protocol}
(\cref{fig:mc:twobuyers}) is much more complex than the one in the literature.
In fact, the possibility of crashes introduces a variety of scenarios
where different roles may be crashed (or not),
hence the protocol needs many $\stCrashLab$ branches.
The protocol exhibits two crash-handling patterns: \emph{i)} exiting gracefully,
and \emph{ii)} recovery behaviour. The former occurs either when $\roleFmt{s}$
crashes or when $\roleFmt{b1}$ crashes prior to the agreed split. The latter
occurs should $\roleFmt{b2}$ crash after the agreed split, whereupon
$\roleFmt{b2}$ concludes the transaction if both $\roleFmt{b2}$ and
$\roleFmt{s}$ do not crash. This behaviour is activated via a recovery type in
$\mpChanRole{\mpS}{\roleFmt{b1}}$, where the labels $\stLabFmt{rp{\text{$n$}}}$
represent the point at which $\roleFmt{b2}$ crashed: $\stLabFmt{rp1}$ represents
$\roleFmt{b2}$ failing prior to confirmation with $\roleFmt{s}$;
$\stLabFmt{rp2}$ corresponds to before the sending of $\stLabFmt{addr}$; and
$\stLabFmt{rp3}$ prior to receiving the $\stLabFmt{date}$.
Overlooking or mishandling some cases is easy; our tool spots such errors,
so the protocol can be tweaked until all desired properties hold.
We used our tool to verify the protocol: it has $1409$ states and
$10248$ transitions; it is safe, %
deadlock-free,
live,
and it is terminating;
it is %
\emph{not} never-terminating. %
All properties verify within $100$ms on a 4.20~GHz
Intel Core i7-7700K CPU with 16 GB RAM. %
More experimental results can be found in \cref{sec:appdx:examples}.

\section{Related Work, Conclusions, and Future Work}\label{sec:related}

\subparagraph{Previous Work on Failure Handling in Session Types}
\!\!can be generally classified under two main approaches:
\emph{affine} and \emph{coordinator model}.
The former adapts %
session types
to allow session endpoints to cease prematurely (\eg by throwing an exception);
the latter assumes reliable process coordination to handle failures.

\emph{Affine failure handling}
is first proposed %
in~\cite{LMCS18Affine}
for a $\pi$-calculus with binary sessions (\ie
two roles), and
\cite{DBLP:journals/pacmpl/FowlerLMD19} presents a concurrent
$\lambda$-calculus %
with
binary sessions %
and exception handling;
exceptions are also found in
\cite{DBLP:journals/mscs/CapecchiGY16,
DBLP:conf/concur/CarboneHY08}.
These works model failures at the application level, %
via throw/catch constructs.
Our key innovations are:
\emph{(1) }we model arbitrary failures (\eg hardware failures);
\emph{(2)} we specify what to do when a failure is detected \emph{at the type level};
\emph{(3)} we support multiparty sessions; and %
\emph{(4)} we seamlessly support
    handling the crash of a role while handling another role's crash,
    whereas the \emph{do-catch} constructs %
    cannot be nested.

\emph{Coordinator model approaches} include %
\cite{DBLP:conf/forte/AdameitPN17}, which extends MPST
with \emph{optional blocks} %
where \emph{default values} are used when communications fail;
and \cite{DBLP:conf/forte/ChenVBZE16}, which
uses synchronisation points %
to detect and handle failures.
Both need processes to coordinate to handle failures.
\cite{ESOP18CrashHandling} extends
MPST
 with a \emph{try-handle} construct: a reliable coordinator %
detects and broadcasts failures,
and the remaining processes %
proceed with failure handling.
Unlike these works, we do \emph{not} assume reliable processes,
failure broadcasts, or coordination/synchronisation points.

Other papers address failures with different approaches.
The recent work \cite{DBLP:conf/forte/PetersNW22} annotates
global and local types to specify which interactions may fail, and how
(process crash, message loss). Their failure model is different from ours;
and unlike us, they handle failures by continuing the
protocol via \emph{default branches and values}.
Instead, our types include $\stCrashLab$ branches defining recovery behaviours
that are only executed upon crash detection; further, by nesting such
$\stCrashLab$ branches, we can specify different behaviours depending on
which roles have crashed.
\cite{NY2017}
uses an MPST specification to %
build a dependency graph among running processes, %
supervise them, and restart them %
in case of failure.
\cite{OOPSLA21FaultTolerantMPST} utilises
MPST to specify fault-tolerant, event-driven distributed systems,
where processes are monitored and restarted if they fail;
unlike our work, they require certain reliable
roles,
but their model tolerates false crash suspicions.
More on the theory side,
\cite{DBLP:conf/esop/CairesP17} presents
a Curry-Howard interpretation of a language with binary session types
and internal non-determinism, which is used to model failures
(that are propagated to all relevant sessions, similarly to \cite{LMCS18Affine,DBLP:journals/pacmpl/FowlerLMD19}).
Process calculi with \emph{localities} have been proposed to model
distributed systems with failures
\cite{ICALP97Locality,COORDINATION97Locality,CastellaniLocalities};
unlike our work, they do not have a typing system to verify failure handling.

\subparagraph{Generalised Multiparty Session Type Systems}
\!\!(introduced in~\cite{POPL19LessIsMore})
depart from ``classic'' MPST~\cite{HYC16}
by not requiring top-down syntactic notions of protocol correctness
(global types, projection, \etc);
rather, they %
check behavioural predicates (safety, liveness, \etc) %
against (local) session types.
\cite{ECOOP21MPSTActor} adopts the approach
to model actor systems with explicit connections in their types~\cite{HY2017}.
By adopting this general framework,
we support protocols %
not representable as global types in classic MPST (\eg DNS in \Cref{sec:overview},
two-buyers in \Cref{sec:model-checking}, and all examples in
\Cref{sec:appdx:examples}, %
excepting \exampleName{Adder}).

\subparagraph{Model Checking Behavioural Types.}
\cite{POPL02ModelChecking} develops a behavioural type
system for the $\pi$-calculus,
and check LTL formul\ae\ against such types.
In~\cite{KobayashiS10Hybrid}, the type system combines typing and local
analyses, with liveness properties verified via model checking.
A similar approach is introduced in~\cite{POPL19LessIsMore} for MPST.
Regarding applications,
\cite{ICSE18Go, ECOOP20Go}
verify behavioural types extracted from Go source code;
and in~\cite{PLDI19Effpi}, the \texttt{Effpi} Scala library assigns
behavioural types to communicating programs.
These works use a model checker to validate \eg liveness through
type-level properties, but do not support crashes or crash
handling.

\subparagraph{Conclusions and Future Work.}%
\label{sec:conclusion}

We presented a multiparty session typing system for
verifying processes with crash-stop failures.
We model crashes and crash handling in a session $\pi$-calculus and
its typing contexts, and prove type safety, protocol conformance, deadlock
freedom and liveness.
Our system is generalised in two ways:
\emph{(1)}
    it supports \emph{optional} reliability assumptions, ranging from fully
    reliable (as in classic MPST), to fully unreliable (every process may
    crash); and
\emph{(2)}
    it is parametric on a behavioural property
    $\predP$ (validated by model checking)
    which can %
    ensure deadlock-freedom, liveness, \etc
    \emph{even in presence of crashes}.
We also present a prototype implementation of our approach.
As future work, we plan to study more crash models (\eg crash-recover) and types
of failure (\eg link failures). We also plan to study the use of
\emph{asynchronous} global types for specifying protocols with failure handling
--- but unlike \cite{DBLP:conf/forte/PetersNW22}, we plan to
support the type-level specification of dedicated recovery behaviours
that are only executed upon crash detection.

\bibliography{main}

\newpage
\appendix
\section{Structural Congruence}
\label{sec:structural-congruence}
\label{fig:mpst-pi-congruence}

The structural congruence relation of our MPST $\pi$-calculus, %
mentioned in \Cref{fig:mpst-pi-semantics}, %
is formalised below.
These rules are standard, and taken from~\cite{POPL19LessIsMore}; the only extension is rule $\inferrule{\iruleCongStopElim}$.
Here, %
$\fpv{\mpDefD}$ %
is the set of \emph{free process variables} in $\mpDefD$, %
and
$\dpv{\mpDefD}$ %
is the set of \emph{declared process variables} in $\mpDefD$.%
  \[
    \begin{array}{@{\hskip 0mm}c@{\hskip 0mm}}
      \inferenceSingle[\iruleCongParComm]{
        \mpP \mpPar \mpQ
        \equiv
        \mpQ \mpPar \mpP
      }
      \quad\;\;%
      \inferenceSingle[\iruleCongParAssoc]{
        \mpFmt{(\mpP \mpPar \mpQ) \mpPar \mpR}%
        \equiv%
        \mpP \mpPar \mpFmt{(\mpQ \mpPar \mpR)}%
      }
      \quad\;\;%
      \inferenceSingle[\iruleCongParId]{
        \mpP \mpPar \mpNil%
        \equiv%
        \mpP%
      }
      \\[2mm]
      \inferenceSingle[\iruleCongResElim]{
        \mpRes{\mpS}{\mpNil}%
        \equiv%
        \mpNil%
      }
      \quad\;\;%
      \inferenceSingle[\iruleCongResVar]{
        \mpRes{\mpS}{%
          \mpRes{\mpSi}{%
            \mpP%
          }%
        }%
        \equiv%
        \mpRes{\mpSi}{%
          \mpRes{\mpS}{%
            \mpP%
          }%
        }%
      }
      \\[2mm]
      \inferenceSingle[\iruleCongResLift]{
        \mpRes{\mpS}{(\mpP \mpPar \mpQ)}%
        \equiv%
        \mpP \mpPar \mpRes{\mpS}{\mpQ}%
        \;\;%
        \text{\footnotesize{}if $\mpS \!\not\in\! \fc{\mpP}$}%
      }
      \quad\;\;
      \inferenceSingle[\iruleCongStopElim]{
        \mpRes{\mpS}{(\mpStop{\mpS}{\roleP[0]} \mpPar \cdots \mpPar \mpStop{\mpS}{\roleP[n]})}
        \equiv
        \mpNil
      }
      \\[2mm]
      \inferenceSingle[\iruleCongDefElim]{
        \mpDefAbbrev{\mpDefD}{\mpNil}%
        \equiv%
        \mpNil%
      }
      \qquad%
      \inferenceSingle[\iruleCongDefLift]{
        \mpDefAbbrev{\mpDefD}{\mpRes{\mpS}{\mpP}}%
        \,\equiv\,%
        \mpRes{\mpS}{(
          \mpDefAbbrev{\mpDefD}{\mpP}%
          )}%
        \quad%
        \text{\footnotesize{}if\, $\mpS \!\not\in\! \fc{\mpDefD}$}%
      }
      \\[2mm]%
      \inferenceSingle[\iruleCongDefParLift]{
        \mpDefAbbrev{\mpDefD}{(\mpP \mpPar \mpQ)}%
        \,\equiv\,%
        \mpFmt{(\mpDefAbbrev{\mpDefD}{\mpP})} \mpPar \mpQ%
        \quad%
        \text{\footnotesize{}if\, $\dpv{\mpDefD} \cap \fpv{\mpQ} = \emptyset$}%
      }
      \\[2mm]%
      \inferenceSingle[\iruleCongDefOrd]{
        \mpDefAbbrev{\mpDefD}{%
          (\mpDefAbbrev{\mpDefDi}{\mpP})%
        }%
        \;\equiv\;%
        \mpDefAbbrev{\mpDefDi}{%
          (\mpDefAbbrev{\mpDefD}{\mpP})%
        }%
      }
      \\%
      \text{\footnotesize%
        if\, %
        \(%
        (\dpv{\mpDefD} \cup \fpv{\mpDefD}) \cap \dpv{\mpDefDi}%
        \,=\,%
        (\dpv{\mpDefDi} \cup \fpv{\mpDefDi}) \cap \dpv{\mpDefD}%
        \,=\,%
        \emptyset%
        \)%
      }%
      \\[2mm]
    \end{array}
    \]
\section{Session Subtyping}
\label{sec:subtyping}

We formalise our \emph{subtyping} relation $\stSub$ %
in \cref{def:subtyping} below.
The relation is mostly standard~\cite[Def.\@ 2.5]{POPL19LessIsMore}, except
for the new rule $\inferrule{\iruleStSubStop}$,
and the new ($\highlight{$\text{highlighted}$}$) side condition ``$|I| = 1 \implies \ldots$'' in rule \inferrule{\iruleStSubIn}:
this condition prevents the supertype from adding input branches to ``pure'' crash recovery
external choices.

\begin{definition}[Subtyping]\label{def:subtyping}
  Given a standard subtyping $\tyGroundSub$ for basic types
  (\eg including $\tyInt \tyGroundSub \tyReal$),
  the \emph{session subtyping relation $\stSub$} is coinductively defined:

\smallskip\centerline{\(
\begin{array}{c}
\cinference[\iruleStSubGround]{
  \tyGround \tyGroundSub \tyGroundi
}{
    \tyGround \stSub \tyGroundi
}

\qquad

\cinference[\iruleStSubEnd]{}{
  \stEnd \stSub \stEnd
}

\qquad

\cinference[\iruleStSubOut]{
  \forall i \in I
  &
  \tySi[i] \stSub \tyS[i]
  &
  \stT[i] \stSub \stTi[i]
}{
  \stIntSum{\roleP}{i \in I \cup J}{\stChoice{\stLab[i]}{\tyS[i]} \stSeq \stT[i]}
  \stSub
  \stIntSum{\roleP}{i \in I}{\stChoice{\stLab[i]}{\tySi[i]} \stSeq \stTi[i]}
}

\\[2ex]
\cinference[\iruleStSubStop]{}{
  \stStop \stSub \stStop
}
\qquad

\cinference[\iruleStSubIn]{
  \forall i \in I
  &
  \tyS[i] \stSub \tySi[i]
  &
  \stT[i] \stSub \tyTi[i]
  &
  \highlight{|I| = 1 \implies \left(\stLab[i] \neq \stCrashLab \text{ or } J = \emptyset\right)}
}{
  \stExtSum{\roleP}{i \in I}{\stChoice{\stLab[i]}{\tyS[i]} \stSeq \stT[i]}%
  \stSub
  \stExtSum{\roleP}{i \in I\cup{}J}{\stChoice{\stLab[i]}{\tySi[i]} \stSeq \stTi[i]}%
}

\\[2ex]
\cinference[\iruleStSubRecL]{
  \stT{}[\stRec{\stRecVar}{\stT}/\stRecVar] \stSub \stTi
}{
  \stRec{\stRecVar}{\stT} \stSub \stTi
}

\qquad

\cinference[\iruleStSubRecR]{
  \stT \stSub \stTi{}[\stRec{\stRecVar}{\stTi}/\stRecVar]
}{
  \stT \stSub \stRec{\stRecVar}{\stTi}
}
\end{array}
\)}%
\end{definition}

Rule \inferrule{\iruleStSubGround} lifts $\stSub$ to basic types.
The rest of the rules say that a subtype describes a more permissive session
protocol \wrt its supertype.
By rule \inferrule{\iruleStSubOut}, the subtype of an internal choice
allows for selecting from a wider set of message labels, and sending more
generic payloads.
By rule \inferrule{\iruleStSubIn}, the subtype of an external choice
can support a smaller set of input message labels, and less generic payloads;
the side condition ``$|I| = 1 \ldots$'' ensures that if the subtype only has a
singleton $\stCrashLab$ branch,
then the same applies to the supertype --- hence, both subtype and supertype
describe a ``pure'' crash recovery behaviour,
and do not expect to receive any other input.\footnote{%
    Notice, however, that rule \inferrule{\iruleStSubIn} allows a supertype to
    have a $\stCrashLab$-handling branch even when the subtype does not have
    one.%
}
By rules \inferrule{\iruleStSubEnd} and \inferrule{\iruleStSubStop},
the types $\stEnd$ and $\stStop$ are only subtypes of themselves.
Finally, rules \inferrule{\iruleStSubRecL} and \inferrule{\iruleStSubRecR}
say that recursive types are related up to their unfolding.
\ifTR{We study the properties of session subtyping in \Cref{sec:proofs:subtyping-properties}.}

\section{Additional Examples}\label{sec:appendix:examples}

\begin{example} \label{eg:carried}
  We show an example of our crashing semantics.
  Processes $\mpP$ and $\mpQ$ below communicate on a session $\mpS$;
  $\mpP$ uses the endpoint $\mpChanRole{\mpS}{\roleP}$ to
  send an endpoint $\mpChanRole{\mpS}{\roleR}$ to role $\roleQ$;
  $\mpQ$ uses the endpoint $\mpChanRole{\mpS}{\roleQ}$ to receive an
  endpoint $\mpFmt{x}$, then sends a message to role $\roleP$ via
  $\mpFmt{x}$.
\[
\begin{array}{c}
\mpP = \mpSel{\mpChanRole{\mpS}{\roleP}}{\roleQ}
      {\mpLabi}{\mpChanRole{\mpS}{\roleR}}{%
        \mpBranchSingle{\mpChanRole{\mpS}{\roleP}}{\roleR}{\mpLab}{x}{}
} \qquad
\mpQ = \mpBranchSingle{\mpChanRole{\mpS}{\roleQ}}{\roleP}{\mpLabi}{x}{
  \mpSel{x}{\roleP}{\mpLab}{42}{}
}{}
\end{array}
\]

\noindent
On a successful reduction (without crashes), we have:
\[
  \begin{array}{rl}
  &\mpRes{\mpS}{(\mpP \mpPar \mpQ)} \\
  =&\mpRes{\mpS}{(
    \mpSel{\mpChanRole{\mpS}{\roleP}}{\roleQ}{\mpLabi}{\mpChanRole{\mpS}{\roleR}}{%
        \mpBranchSingle{\mpChanRole{\mpS}{\roleP}}{\roleR}{\mpLab}{x}{}}
    \mpPar
    \mpBranchSingle{\mpChanRole{\mpS}{\roleQ}}{\roleP}{\mpLabi}{x}{\mpSel{x}{\roleP}{\mpLab}{42}{}}
  )}\\
  \mpMove&
  \mpRes{\mpS}{(
    \mpBranchSingle{\mpChanRole{\mpS}{\roleP}}{\roleR}{\mpLab}{x}{}{}
    \mpPar
    \mpSel{\mpChanRole{\mpS}{\roleR}}{\roleP}{\mpLab}{42}{}%
  )}\\
  \mpMove&
  \mpNil
  \end{array}
\]

\noindent
Now, suppose that $\mpP$
crashes before sending; this gives rise to the reduction:
\[
  \begin{array}{rl}
  &\mpRes{\mpS}{(\mpP \mpPar \mpQ)} \\
  =&\mpRes{\mpS}{(
    \mpSel{\mpChanRole{\mpS}{\roleP}}{\roleQ}{\mpLabi}{\mpChanRole{\mpS}{\roleR}}{%
        \mpBranchSingle{\mpChanRole{\mpS}{\roleP}}{\roleR}{\mpLab}{x}{}}
    \mpPar
    \mpBranchSingle{\mpChanRole{\mpS}{\roleQ}}{\roleP}{\mpLabi}{x}{\mpSel{x}{\roleP}{\mpLab}{42}{}}
  )}\\
  \mpMoveCrash&
  \mpRes{\mpS}{(
    \mpStop{\mpS}{\roleP}
    \mpPar
    \mpStop{\mpS}{\roleR}
    \mpPar
    \mpBranchSingle{\mpChanRole{\mpS}{\roleQ}}{\roleP}{\mpLabi}{x}{\mpSel{x}{\roleP}{\mpLab}{42}{}}
  )}
  \end{array}
\]
\noindent
We can observe that when the sending process $\mpP$ crashes (by
$\inferrule{\iruleMPCrashS}$), all
endpoints in $\mpP$ (\ie both $\mpChanRole{\mpS}{\roleP}$ and
$\mpChanRole{\mpS}{\roleR}$) crash.
If $\mpQ$ has a crash handling branch, it can be triggered via
$\inferrule{\iruleMPRedCommD}$, suppose instead we have
\[
  \mpQi =
    \mpBranch{\mpChanRole{\mpS}{\roleQ}}{\roleP}{}{\mpLabi}{x}{\mpSel{x}{\roleP}{\mpLab}{42}{}}{\mpNil}
\]
\noindent
A crash handling reduction can trigger when $\mpP$ crashes:
\[
  \begin{array}{rl}
  &\mpRes{\mpS}{(\mpP \mpPar \mpQi)} \\
  =&\mpRes{\mpS}{(
    \mpSel{\mpChanRole{\mpS}{\roleP}}{\roleQ}{\mpLabi}{\mpChanRole{\mpS}{\roleR}}{%
        \mpBranchSingle{\mpChanRole{\mpS}{\roleP}}{\roleR}{\mpLab}{x}{}}
    \mpPar
    \mpBranch{\mpChanRole{\mpS}{\roleQ}}{\roleP}{}{\mpLabi}{x}{\mpSel{x}{\roleP}{\mpLab}{42}{}}{\mpNil}
  )}\\
  \mpMoveCrash&
  \mpRes{\mpS}{(
    \mpStop{\mpS}{\roleP}
    \mpPar
    \mpStop{\mpS}{\roleR}
    \mpPar
    \mpBranch{\mpChanRole{\mpS}{\roleQ}}{\roleP}{}{\mpLabi}{x}{\mpSel{x}{\roleP}{\mpLab}{42}{}}{\mpNil}
  )}\\
  \mpMove &
  \mpRes{\mpS}{(
    \mpStop{\mpS}{\roleP}
    \mpPar
    \mpStop{\mpS}{\roleR}
    \mpPar
    \mpNil
  )}
  \end{array}
\]

\end{example}

\begin{example} \label{eg:running:type-safe}
Recall the types of the DNS example in \cref{sec:overview}:

\smallskip
\centerline{\(
\stTi[\roleP] = \stOut{\roleQ}{\stLabFmt{req}}{} \stSeq
  \stExtSum{\roleQ}{}{
    \begin{array}{@{}l@{}}
    \stChoice{\stLabFmt{res}}{} \stSeq \stEnd \\
    \stChoice{\stCrashLab}{} \stSeq
    \stOut{\roleR}{\stLabFmt{req}}{} \stSeq
    \stInNB{\roleR}{\stLabFmt{res}}{}{} \stSeq
    \stEnd
    \end{array}
  }
\qquad
\begin{array}{l}
\stTi[\roleQ] = \stInNB{\roleP}{\stLabFmt{req}}{}{
  \stOut{\roleP}{\stLabFmt{res}}{} \stSeq \stEnd
}
\\
\stTi[\roleR] = \stInNB{\roleQ}{\stCrashLab}{}{} \stSeq
  \stInNB{\roleP}{\stLabFmt{req}}{}{} \stSeq
  \stOut{\roleP}{\stLabFmt{res}}{}{} \stSeq \stEnd
\end{array}
\)}
\smallskip

\noindent%
Now, consider the following typing context, containing such types:

\smallskip
\centerline{
\(
  \stEnv \;=\; \stEnvMap{%
      \mpChanRole{\mpS}{\roleP}%
    }{%
      \stTi[\roleP]
    }
    \stEnvComp
    \stEnvMap{
      \mpChanRole{\mpS}{\roleQ}
    }{
      \stTi[\roleQ]
    }
    \stEnvComp
    \stEnvMap{
      \mpChanRole{\mpS}{\roleR}
    }{
      \stTi[\roleR]
    }
\)
}%
\smallskip

Such $\stEnv$ is $(\mpS;\setenum{\roleP,\roleR})$-safe. We can
verify it by checking its reductions.
When no crashes occur, we have the following two reductions, where each reductum satisfies \Cref{def:mpst-env-safe}:

\smallskip
\centerline{
\(\small
  \begin{array}{@{}r@{\;\;}c@{\;\;}l@{}}
  \stEnv & \stEnvMoveMaybeCrash[\mpS; \setenum{\roleP,\roleR}] & \stEnvMap{%
      \mpChanRole{\mpS}{\roleP}%
    }{%
      \stExtSum{\roleQ}{}{
        \begin{array}{@{}l@{}}
        \stChoice{\stLabFmt{res}}{} %
        \\
        \stChoice{\stCrashLab}{} \stSeq
        \stOut{\roleR}{\stLabFmt{req}}{} \stSeq
        \stInNB{\roleR}{\stLabFmt{res}}{}{} %
        \end{array}
      }
    }
    \stEnvComp
    \stEnvMap{
      \mpChanRole{\mpS}{\roleQ}
    }{
      \stOut{\roleP}{\stLabFmt{res}}{} %
    }
    \stEnvComp
    \stEnvMap{
      \mpChanRole{\mpS}{\roleR}
    }{
      \stTi[\roleR]
    }
    \\&
    \stEnvMoveMaybeCrash[\mpS; \setenum{\roleP,\roleR}] %
    &
    \stEnvMap{%
      \mpChanRole{\mpS}{\roleP}%
    }{%
      \stEnd
    }
    \stEnvComp
    \stEnvMap{
      \mpChanRole{\mpS}{\roleQ}
    }{
      \stEnd
    }
    \stEnvComp
    \stEnvMap{
      \mpChanRole{\mpS}{\roleR}
    }{
      \stTi[\roleR]
    }
  \end{array}
\)
}%
\smallskip

\noindent%
In the case where $\roleQ$ crashes immediately,
we have:

\smallskip
\centerline{
\(\small
  \begin{array}{@{}r@{\;\;}c@{\;\;}l@{}}
  \stEnv & \stEnvMoveMaybeCrash[\mpS; \setenum{\roleP,\roleR}] & \stEnvMap{%
      \mpChanRole{\mpS}{\roleP}%
    }{%
      \stTi[\roleP]
    }
    \stEnvComp\,
    \stEnvMap{
      \mpChanRole{\mpS}{\roleQ}
    }{
      \stStop
    }
    \stEnvComp\,
    \stEnvMap{
      \mpChanRole{\mpS}{\roleR}
    }{
      \stTi[\roleR]
    }
    \\&
    \stEnvMoveMaybeCrash[\mpS; \setenum{\roleP,\roleR}]
    &
    \stEnvMap{%
      \mpChanRole{\mpS}{\roleP}%
    }{%
      \stExtSum{\roleQ}{}{
        \stChoice{\stLabFmt{res}}{},\, %
        \stChoice{\stCrashLab}{} \stSeq
        \stOut{\roleR}{\stLabFmt{req}}{} \stSeq
        \stInNB{\roleR}{\stLabFmt{res}}{}{} %
      }
    }
    \stEnvComp\,
    \stEnvMap{
      \mpChanRole{\mpS}{\roleQ}
    }{
      \stStop
    }
    \stEnvComp\,
    \stEnvMap{
      \mpChanRole{\mpS}{\roleR}
    }{
      \stTi[\roleR]
    }
    \\
    & \stEnvMoveMaybeCrash[\mpS; \setenum{\roleP,\roleR}] & \stEnvMap{%
      \mpChanRole{\mpS}{\roleP}%
    }{%
      \stOut{\roleR}{\stLabFmt{req}}{} \stSeq
      \stInNB{\roleR}{\stLabFmt{res}}{}{} %
    }
    \stEnvComp\,
    \stEnvMap{
      \mpChanRole{\mpS}{\roleQ}
    }{
      \stStop
    }
    \stEnvComp\,
    \stEnvMap{
      \mpChanRole{\mpS}{\roleR}
    }{
      \stTi[\roleR]
    }
    \\&
    \stEnvMoveMaybeCrash[\mpS; \setenum{\roleP,\roleR}]
    &
    \stEnvMap{%
      \mpChanRole{\mpS}{\roleP}%
    }{%
      \stOut{\roleR}{\stLabFmt{req}}{} \stSeq
      \stInNB{\roleR}{\stLabFmt{res}}{}{} %
    }
    \stEnvComp\,
    \stEnvMap{
      \mpChanRole{\mpS}{\roleQ}
    }{
      \stStop
    }
    \stEnvComp\,
    \stEnvMap{
      \mpChanRole{\mpS}{\roleR}
    }{
      \stInNB{\roleP}{\stLabFmt{req}}{}{} \stSeq
      \stOut{\roleP}{\stLabFmt{res}}{}{} %
    }
    \\
    & \stEnvMoveMaybeCrash[\mpS; \setenum{\roleP,\roleR}] & \stEnvMap{%
      \mpChanRole{\mpS}{\roleP}%
    }{%
      \stInNB{\roleR}{\stLabFmt{res}}{}{} %
    }
    \stEnvComp\,
    \stEnvMap{
      \mpChanRole{\mpS}{\roleQ}
    }{
      \stStop
    }
    \stEnvComp\,
    \stEnvMap{
      \mpChanRole{\mpS}{\roleR}
    }{
      \stOut{\roleP}{\stLabFmt{res}}{} %
    }
    \\&
    \stEnvMoveMaybeCrash[\mpS; \setenum{\roleP,\roleR}]
    &
    \stEnvMap{%
      \mpChanRole{\mpS}{\roleP}%
    }{%
      \stEnd
    }
    \stEnvComp\,
    \stEnvMap{
      \mpChanRole{\mpS}{\roleQ}
    }{
      \stStop
    }
    \stEnvComp\,
    \stEnvMap{
      \mpChanRole{\mpS}{\roleR}
    }{
      \stEnd
    }
  \end{array}
\)
}%
\smallskip

\noindent%
and each reductum satisfies \Cref{def:mpst-env-safe}.
The case where $\roleQ$ crashes after receiving the $\stLabFmt{req}$uest is similar.
There are no other crash reductions to consider, since $\roleP$ and
$\roleR$ are reliable.
\end{example}

\begin{example} \label{eg:allegs:props}
We illustrate safety, deadlock-freedom, liveness, termination, and never-termination over typing contexts via a series of small examples.
We first consider the typing context $\stEnv[\stFmt{A}] = \stEnv[\stFmt{A\roleP}] \stEnvComp \stEnv[\stFmt{A\roleQ}] \stEnvComp \stEnv[\stFmt{A\roleR}]$ where:

\smallskip
\centerline{\(
\begin{array}{rcl}
\stEnv[\stFmt{A\roleP}] &=&
\stEnvMap{\mpChanRole{\mpS}{\roleP}}{
  \stRec{\stRecVar[\roleP]}{
    \stIntSum{\roleQ}{}{
      \stChoice{\stLabOK}{} \stSeq
      \stExtSum{\roleQ}{}{
        \stChoice{\stLabOK}{} \stSeq
        \stRecVar[\roleP]
        \stEnvComp\;
        \stChoice{\stLabKO}{} \stSeq
        \stEnd
        \stEnvComp\;
        \stChoice{\stCrashLab}{} \stSeq
        \stEnd
      }
      \stEnvComp\;
      \stChoice{\stLabKO}{} \stSeq \stEnd
    }
  }
}
\\
\stEnv[\stFmt{A\roleQ}] &=&
\stEnvMap{\mpChanRole{\mpS}{\roleQ}}{
  \stRec{\stRecVar[\roleQ]}{
    \stExtSum{\roleP}{}{
      \stChoice{\stLabOK}{} \stSeq
      \stIntSum{\roleP}{}{
        \stChoice{\stLabOK}{} \stSeq
        \stRecVar[\roleQ]
        \stEnvComp\;
        \stChoice{\stLabKO}{} \stSeq
        \stEnd
      }
      \stEnvComp\;
      \stChoice{\stLabKO}{} \stSeq
      \stEnd
      \stEnvComp
      \stChoice{\stCrashLab}{} \stSeq
      \stOut{\roleR}{\stLabOK}{} \stSeq
      \stEnd
    }
  }
}
\\
\stEnv[\stFmt{A\roleR}] &=&
\stEnvMap{\mpChanRole{\mpS}{\roleR}}{
  \stIn{\roleP}{\stCrashLab}{}{
    \stIn{\roleQ}{\stLabOK}{}{
    \stEnd
    \stEnvComp\;
    \stChoice{\stCrashLab}{} \stSeq
    \stEnd
    }
  }
}
\end{array}
\)}\smallskip

\noindent
If we assume that all roles in $\stEnv[\stFmt{A}]$ are unreliable,
$\stEnv[\stFmt{A}]$ is safe since its inputs/outputs are dual.
However, $\stEnv[\stFmt{A}]$ is \emph{neither} deadlock-free \emph{nor} live since it is possible for $\roleP$ to crash immediately before $\roleQ$ sends $\stLabKO$ to $\roleP$. In such cases, $\roleQ$ will \emph{not} detect that $\roleP$ has crashed (since we only detect crashes on receive actions) and terminate \emph{without} sending a message to the backup process $\roleR$. This results in a deadlock because  $\roleR$ \emph{will} detect that $\roleP$ has crashed, and \emph{will} expect a message from $\roleQ$.

We observe that changing the reliability assumptions, without changing the typing context, may influence whether a typing context property holds.
For example, consider the typing context $\stEnv[\stFmt{B}] = \stEnv[\stFmt{B\roleP}] \stEnvComp \stEnv[\stFmt{B\roleQ}] \stEnvComp \stEnv[\stFmt{B\roleR}]$ where:

\smallskip
\centerline{\(
\begin{array}{rcl}
\stEnv[\stFmt{B\roleP}] &=&
\stEnvMap{\mpChanRole{\mpS}{\roleP}}{
  \stRec{\stRecVar[\roleP]}{
    \stOut{\roleQ}{\stLabOK}{} \stSeq
    \stRecVar[\roleP]
  }
}
\\
\stEnv[\stFmt{B\roleQ}] &=&
\stEnvMap{\mpChanRole{\mpS}{\roleQ}}{
  \stRec{\stRecVar[\roleQ]}{
    \stIn{\roleP}{\stLabOK}{}{
      \stRecVar[\roleQ]
      \stEnvComp\;
      \stChoice{\stCrashLab}{} \stSeq
      \stRec{\stRecVari[\roleQ]}{
        \stIn{\roleR}{\stLabOK}{}{
          \stRecVari[\roleQ]
          \stEnvComp\;
          \stChoice{\stCrashLab}{} \stSeq
          \stEnd
        }
      }
    }
 }
}
\\
\stEnv[\stFmt{B\roleR}] &=&
\stEnvMap{\mpChanRole{\mpS}{\roleR}}{
  \stRec{\stRecVar[\roleR]}{
    \stOut{\roleQ}{\stLabOK}{} \stSeq
    \stRecVar[\roleR]
  }
}
\end{array}
\)}\smallskip

\noindent
If we assume that all roles are unreliable,
$\stEnv[\stFmt{B}]$ is safe and deadlock-free but \emph{not} live --- because $\roleP$ may never crash, and in this case, $\roleR$'s outputs are never received by $\roleQ$.
Notably, $\stEnv[\stFmt{B}]$ is \emph{not} never-terminating because
if both $\roleP$ and $\roleR$ crash, then the surviving $\roleQ$ can reach $\stEnd$;
however, if we assume that just $\roleR$ is reliable (\ie $\rolesS = \gtFmt{\{\roleR\}}$), then $\stEnv[\stFmt{B}]$ becomes also never-terminating 
--- because even if both $\roleP$ and $\roleQ$ crash, role $\roleR$ can keep running by sending forever $\stLabOK$ messages that are lost (by rule \inferrule{\iruleTCtxSendToCrashed} in \Cref{fig:gtype:tc-red-rules}).

Notice that, in the case of $\stEnv[\stFmt{B}]$, we are unable to make liveness hold purely via combinations of reliable roles: this is because (unless $\roleP$ crashes) $\roleR$'s output will never be received by $\roleQ$, irrespective of reliability assumptions. The typing context itself must instead be adapted; for example, by only permitting $\roleR$ to send once it has detected that $\roleP$ has crashed.

Instead, in the case of $\stEnv[\stFmt{A}]$, we \emph{can} obtain liveness by adjusting the reliability assumptions: in fact, if we assume $\roleR \in \rolesS$, then $\stEnv[\stFmt{A}]$ is both deadlock-free and live.

Finally, consider the typing context $\stEnv[\stFmt{C}] = \stEnv[\stFmt{C\roleP}] \stEnvComp \stEnv[\stFmt{C\roleQ}] \stEnvComp \stEnv[\stFmt{C\roleR}]$ where:

\smallskip
\centerline{\(
\begin{array}{rcl}
\stEnv[\stFmt{C\roleP}] &=&
\stEnvMap{\mpChanRole{\mpS}{\roleP}}{
  \stOut{\roleQ}{\stLab[1]}{} \stSeq
  \stIn{\roleQ}{\stLab[2]}{}{
    \stEnd
    \stEnvComp\;
    \stChoice{\stCrashLab}{} \stSeq
    \stRec{\stRecVar[\roleP]}{
      \stOut{\roleR}{\stLabOK}{} \stSeq
      \stRecVar[\roleP]
    }
  }
}
\\
\stEnv[\stFmt{C\roleQ}] &=&
\stEnvMap{\mpChanRole{\mpS}{\roleQ}}{
  \stIn{\roleP}{\stLab[1]}{}{
    \stOut{\roleP}{\stLab[2]}{} \stSeq
    \stEnd
  }
}
\\
\stEnv[\stFmt{C\roleR}] &=&
\stEnvMap{\mpChanRole{\mpS}{\roleR}}{
  \stIn{\roleP}{\stCrashLab}{}{
    \stRec{\stRecVar[\roleQ]}{
      \stIn{\roleP}{\stLabOK}{}{
        \stRecVar[\roleQ]
      }
    }
  }
}
\end{array}
\)}\smallskip

\noindent
$\stEnv[\stFmt{C}]$ satisfies safety, deadlock-freedom, and termination when \emph{all} roles are assumed to be reliable.
However, should we instead assume that only $\roleP$ is reliable, then $\stEnv[\stFmt{C}]$ does not satisfy termination.
Since external choices in $\stEnv[\stFmt{C}]$ do not feature a crash-handling branch when receiving from $\roleP$, should no roles be assumed reliable, $\stEnv[\stFmt{C}]$ satisfies only safety.
\end{example}

\begin{figure}
{\footnotesize
\begin{tabular}{@{}l@{\hspace{0mm}}l}
\toprule
($\alpha$) &
\begin{minipage}{0.88\textwidth}
\(
\begin{array}{l}
\stEnvMap{\mpChanRole{\mpS}{\roleP}}{
\stOut{\roleQ}{\stLabFmt{req}}{} \stSeq
  \stIn{\roleQ}{\stLabFmt{res}}{}{\stEnd
    \stEnvComp\;
    \stChoice{\stCrashLab}{} \stSeq
    \stOut{\roleR}{\stLabFmt{req}}{} \stSeq
    \stInNB{\roleR}{\stLabFmt{res}}{}{} \stSeq
    \stEnd
  }
}
\\
\stEnvMap{\mpChanRole{\mpS}{\roleQ}}{
\stInNB{\roleP}{\stLabFmt{req}}{}{
  \stOut{\roleP}{\stLabFmt{res}}{} \stSeq \stEnd
}
}
\\
\stEnvMap{\mpChanRole{\mpS}{\roleR}}{
\stInNB{\roleQ}{\stCrashLab}{}{} \stSeq
  \stInNB{\roleP}{\stLabFmt{req}}{}{} \stSeq
  \stOut{\roleP}{\stLabFmt{res}}{}{} \stSeq \stEnd
}
\end{array}
\)
\end{minipage}
\\
\midrule
($\beta$) &
\begin{minipage}{0.88\textwidth}
\(
\begin{array}{l}
\stEnvMap{\mpChanRole{\mpS}{\roleP}}{
  \stRec{\stRecVar}{
    \stOut{\roleQ}{\stLabFmt{add}}{\stTypeInt}
      \stSeq
      \stIntSum{\roleQ}{}{
        \stLabFmt{add}(\stTypeInt) \stSeq
          \stIn{\roleQ}{\stLabFmt{res}}{\stTypeInt}{
            \stRecVar
            \stEnvComp\;
            \stT[\lightning]
          }
        \stEnvComp\;
        \stLabKO \stSeq
          \stIn{\roleQ}{\stT[\stLabKO]}{}{
            \stEnvComp\;
            \stT[\lightning]
          }
      }
  }
}
\\
\stEnvMap{\mpChanRole{\mpS}{\roleQ}}{
  \stRec{\stRecVar}{
    \stIn{\roleP}{\stLabFmt{add}}{\stTypeInt}{
      \stIn{\roleP}{\stLabFmt{add}}{\stTypeInt}{
          \stOut{\roleP}{\stLabFmt{res}}{\stTypeInt} \stSeq \stRecVar
        \stEnvComp\;
        \stLabKO \stSeq
          \stOut{\roleP}{\stT[\stLabKO]}{} %
        \stEnvComp\;
        \stT[\lightning]
        }
      \stEnvComp\;
      \stT[\lightning]
    }
  }
}
\end{array}
\)
\end{minipage}
\\
\midrule
($\gamma$) &
\begin{minipage}{0.88\textwidth}
\(
\begin{array}{l}
  \stEnvMap{\mpChanRole{\mpS}{\roleFmt{b1}}}{
    \stOut{\roleS}{\stLabFmt{r}}{\stTypeString} \stSeq
    \stIn{\roleS}{\stLabFmt{q}}{\stTypeInt}{
      \stOut{\roleFmt{b2}}{\stLabFmt{s}}{\stTypeInt} \stSeq
        \stIn{\roleFmt{b2}}{\stCrashLab}{}{ \stT[1] }
      \stEnvComp\;
      \stCrashLab \stSeq
      \stOut{\roleFmt{b2}}{\stT[\stLabKO]}{} %
    }
  }
  \\
  \hspace{2.5mm}\stT[1] = \stIn{\roleS}{\stLabFmt{rp1}}{}{
    \stIntSum{\roleS}{}{
      \stLabOK \stSeq \stT[2] %
      \stEnvComp\;
      \stT[\stLabKO]
    }
    \stEnvComp\;
    \stLabFmt{rp2} \stSeq \stT[2] %
    \stEnvComp\;
    \stLabFmt{rp3} \stSeq \stIn{\roleS}{\stLabFmt{d}}{\stTypeString}{\stEnd \stEnvComp\; \stT[\lightning]}
    \stEnvComp\;
    \stT[\lightning]
  }
  \,
  \\
  \hspace{2.5mm}\stT[2] = \stOut{\roleS}{\stLabFmt{a}}{\stTypeString} \stSeq \stIn{\roleS}{\stLabFmt{d}}{\stTypeString}{\stEnd \stEnvComp\; \stT[\lightning]}
  \\
  \stEnvMap{\mpChanRole{\mpS}{\roleFmt{b2}}}{
    \stIn{\roleS}{\stLabFmt{q}}{\stTypeInt}{
        \stT[1]
      \stEnvComp\;
      \stT[\stLabKO]
      \stEnvComp\;
      \stCrashLab \stSeq \stT[1]
    }
  }
  \\
  \hspace{2.5mm}\stT[1] = \stIn{\roleFmt{b1}}{\stLabFmt{s}}{\stTypeInt}{
    \stIntSum{\roleS}{}{
      \stLabOK \stSeq
      \stOut{\roleS}{\stLabFmt{a}}{\stTypeString} \stSeq
      \stIn{\roleS}{\stLabFmt{d}}{\stTypeString}{
        \stEnd
        \stEnvComp\;
        \stT[\lightning]
      }
      \stEnvComp\;
      \stT[\stLabKO]
    }
    \stEnvComp\;
    \stT[\stLabKO]
    \stEnvComp\;
    \stCrashLab \stSeq
    \stOut{\roleS}{\stT[\stLabKO]}{} %
  }

  \\
  \stEnvMap{\mpChanRole{\mpS}{\roleS}}{
    \stIn{\roleFmt{b1}}{\stLabFmt{r}}{\stTypeString}{
      \stOut{\roleFmt{b1}}{\stLabFmt{q}}{\stTypeInt} \stSeq
      \stOut{\roleFmt{b2}}{\stLabFmt{q}}{\stTypeInt} \stSeq
      \stIn{\roleFmt{b2}}{\stLabOK}{}{
        \stT[1]
        \stEnvComp\;
        \stT[\stLabKO]
        \stEnvComp\;
        \stCrashLab \stSeq
        \stOut{\roleFmt{b1}}{\stLabFmt{rp1}}{} \stSeq
        \stT[2]
      }
      \stEnvComp\;
      \stCrashLab \stSeq
      \stOut{\roleFmt{b2}}{\stT[\stLabKO]}{} %
    }
  }
  \\
  \hspace{2.5mm}\stT[1] = \stIn{\roleFmt{b2}}{\stLabFmt{a}}{\stTypeString}{
    \stOut{\roleFmt{b2}}{\stLabFmt{d}}{\stTypeString} \stSeq
    \stIn{\roleFmt{b2}}{\stCrashLab}{}{
      \stOut{\roleFmt{b1}}{\stLabFmt{rp3}}{}
      \stSeq
      \stT[4]
    }
    \stEnvComp\;
    \stCrashLab \stSeq
    \stOut{\roleFmt{b1}}{\stLabFmt{rp2}}{}
    \stSeq
    \stT[3]
  }
  \\
  \hspace{2.5mm}\stT[2] = \stIn{\roleFmt{b1}}{\stLabOK}{}{
    \stT[3]
    \stEnvComp\;
    \stT[\stLabKO]
    \stEnvComp\;
    \stT[\lightning]
  }
  \;\;
  \hspace{5mm}\stT[3] = \stIn{\roleFmt{b1}}{\stLabFmt{a}}{\stTypeString}{
    \stT[4]
    \stEnvComp\;
    \stT[\lightning]
  }
  \;\;
  \hspace{5mm}\stT[4] = \stOut{\roleFmt{b1}}{\stLabFmt{d}}{\stTypeString} \stSeq \stEnd
  \end{array}
\)
\end{minipage}
\\
\midrule
($\delta$) &
\begin{minipage}{0.95\textwidth}
\(
\begin{array}{l}
\stEnvMap{\mpChanRole{\mpS}{\roleFmt{n}}}{
  \stIn{\roleFmt{c}}{\stLabFmt{o}}{\stTypeInt}{
    \stRec{\stRecVar}{
      \stOut{\roleFmt{c}}{\stLabFmt{g}}{} \stSeq
      \stIntSum{\roleFmt{c}}{}{
        \stLabFmt{o}(\stTypeInt) \stSeq
        \stIn{\roleFmt{c}}{\stLabFmt{o}}{\stTypeInt}{
          \stRecVar
          \stEnvComp\;
          \stLabOK \stSeq
          \stOut{\roleFmt{c}}{\stT[\stLabOK]}{} %
          \stEnvComp\;
          \stT[\stLabKO]
          \stEnvComp\;
          \stT[\lightning]
        }
        \stEnvComp\;
        \stLabOK \stSeq
        \stExtSum{\roleFmt{c}}{}{
          \stT[\stLabOK]
          \stEnvComp\;
          \stT[\lightning]
        }
        \stEnvComp\;
        \stT[\stLabKO]
      }
    }
    \stEnvComp\;
    \stT[\lightning]
  }
}
\\
\stEnvMap{\mpChanRole{\mpS}{\roleFmt{c}}}{
\stOut{\roleFmt{n}}{\stLabFmt{o}}{\stTypeInt} \stSeq
\stRec{\stRecVar[0]}{
  \stIn{\roleFmt{n}}{\stLabFmt{g}}{}{
    \stIn{\roleFmt{n}}{\stLabFmt{o}}{\stTypeInt}{
      \stT[1]
      \stEnvComp\;
      \stLabOK \stSeq
      \stIn{\roleFmt{n}}{\stCrashLab}{}{
        \stOut{\roleFmt{b}}{\stT[\stLabOK]}{} %
      }
      \stEnvComp\;
      \stT[\stLabKO]
      \stEnvComp\;
      \stCrashLab \stSeq
      \stT[2]
    }
    \stEnvComp\;
    \stCrashLab \stSeq
    \stT[2]
  }
}
}
\\
\hspace{2.5mm}\stT[1] =
\stIntSum{\roleFmt{n}}{}{
  \stLabFmt{o}(\stTypeInt) \stSeq
  \stRecVar[0]
  \stEnvComp\;
  \stLabOK \stSeq
  \stExtSum{\roleFmt{n}}{}{
    \stT[\stLabOK]
    \stEnvComp\;
    \stCrashLab \stSeq
    \stT[2]
  }
  \stEnvComp\;
  \stLabKO \stSeq
  \stIn{\roleFmt{n}}{\stCrashLab}{}{
    \stOut{\roleFmt{b}}{\stT[\stLabKO]}{} %
  }
}
\\
\hspace{2.5mm}\stT[2] = \stOut{\roleFmt{b}}{\stLabFmt{o}}{\stTypeInt} \stSeq
\stRec{\stRecVar[1]}{
  \stIn{\roleFmt{b}}{\stLabFmt{g}}{}{
    \stIn{\roleFmt{b}}{\stLabFmt{o}}{\stTypeInt}{
      \stIntSum{\roleFmt{b}}{}{
        \stLabFmt{o}(\stTypeInt) \stSeq
        \stRecVar[1]
        \stEnvComp\;
        \stLabOK \stSeq
        \stExtSum{\roleFmt{b}}{}{\stT[\stLabOK]}
        \stEnvComp\;
        \stT[\stLabKO]
      }
      \stEnvComp\;
      \stLabOK \stSeq
      \stExtSum{\roleFmt{b}}{}{\stT[\stLabOK]}
      \stEnvComp\;
      \stT[\stLabKO]
    }
  }
}
\\
\stEnvMap{\mpChanRole{\mpS}{\roleFmt{b}}}{
  \stIn{\roleFmt{n}}{\stCrashLab}{}{
    \stIn{\roleFmt{c}}{\stLabFmt{o}}{\stTypeInt}{
      \stRec{\stRecVar}{
        \stOut{\roleFmt{c}}{\stLabFmt{g}}{} \stSeq
        \stIntSum{\roleFmt{c}}{}{
          \stLabFmt{o}(\stTypeInt) \stSeq
          \stT[1]
          \stEnvComp\;
          \stLabOK \stSeq
          \stT[2]
          \stEnvComp\;
          \stT[\stLabKO]
        }
      }
      \stEnvComp\;
      \stT[\stLabOK]
      \stEnvComp\;
      \stT[\stLabKO]
      \stEnvComp\;
      \stT[\lightning]
    }
  }
}
\\
\hspace{2.5mm}\stT[1] =
\stIn{\roleFmt{c}}{\stLabFmt{o}}{\stTypeInt}{
  \stRecVar
  \stEnvComp\;
  \stLabOK \stSeq
  \stOut{\roleFmt{c}}{\stT[\stLabOK]}{} %
  \stEnvComp\;
  \stT[\stLabKO]
}
\quad
\stT[2] =
\stIn{\roleFmt{c}}{\stLabOK}{}{
  \stOut{\roleFmt{c}}{\stT[\stLabOK]}{} %
  \stEnvComp\;
  \stT[\lightning]
}
\end{array}
\)
\end{minipage}
\\
\midrule
($\varepsilon$) &
\begin{minipage}{0.88\textwidth}
\(
\begin{array}{l}
\stEnvMap{\mpChanRole{\mpS}{\roleP}}{
  \stOut{\roleQ}{\stLabFmt{data}}{\stTypeString} \stSeq
  \stOut{\roleR}{\stLabFmt{data}}{\stTypeString} \stSeq
  \stEnd
}
\\
\stEnvMap{\mpChanRole{\mpS}{\roleQ}}{
  \stIn{\roleP}{\stLabFmt{data}}{\stTypeString}{
    \stIn{\roleP}{\stCrashLab}{}{
      \stIn{\roleR}{\stLabFmt{h}}{}{
        \stOut{\roleR}{\stLabFmt{data}}{\stTypeString} \stSeq
        \stEnd
        \stEnvComp\;
        \stT[\lightning]
      }
    }
    \stEnvComp\;
    \stCrashLab \stSeq
    \stIn{\roleR}{\stLabFmt{req}}{}{
      \stOut{\roleR}{\stT[\stLabKO]}{} %
      \stEnvComp\;
      \stT[\lightning]
    }
  }
}
\\
\stEnvMap{\mpChanRole{\mpS}{\roleR}}{
  \stIn{\roleP}{\stLabFmt{data}}{\stTypeString}{
    \stEnd
    \stEnvComp\;
    \stCrashLab \stSeq
    \stOut{\roleQ}{\stLabFmt{req}}{} \stSeq
    \stIn{\roleQ}{\stLabFmt{data}}{\stTypeString}{
      \stEnd
      \stEnvComp\;
      \stT[\stLabKO]
      \stEnvComp\;
      \stT[\lightning]
    }
  }
}
\end{array}
\)
\end{minipage}
\\
\bottomrule
\end{tabular}
}
\caption{Typing contexts for ($\alpha$) \exampleName{DNS}, ($\beta$) \exampleName{Adder}, ($\gamma$) \exampleName{TwoBuyers}, ($\delta$) \exampleName{Negotiate}, and ($\varepsilon$) \exampleName{Broadcast}. Roles $\roleP$ and $\roleQ$ of \exampleName{DNS}, and $\roleFmt{b}$ of \exampleName{Negotiate} are reliable; all other roles are unreliable.
Let $\stT[\lightning] = \stCrashLab \stSeq \stEnd$, $\stT[\stLabFmt{ko}] = \stLabKO \stSeq \stEnd$, and $\stT[\stLabOK] = \stLabOK \stSeq \stEnd$.}
\label{fig:eval-examples-types}
\end{figure}

\section{Tool Evaluation}
\label{sec:appdx:examples}

To verify the properties in~\cref{fig:mc-formulae},
we extend the Multiparty Session Types
toolKit (\theTool) \cite{SY19Artifact}, which uses the mCRL2 model
checker~\cite{TACAS19mCRL2}. Our extended tool is available at:

\smallskip\centerline{
  \url{https://github.com/alcestes/mpstk-crash-stop}
}\smallskip%

We evaluate our approach with 5 examples:
\exampleName{DNS}, from \Cref{sec:overview};
\exampleName{Adder}, \exampleName{Two Buyers}, and
\exampleName{Negotiate}, extended from the session type
literature~\cite{OOPSLA20VerifiedRefinements} with crashes and crash handling behaviour;
and
\exampleName{Broadcast}, inspired by the reliable broadcast algorithms
in~\cite[Ch.\ 3]{DBLP:books/daglib/0025983}.
The full typing contexts for each example are given in
\Cref{fig:eval-examples-types}. 
We model and verify both fully reliable and (partially) unreliable versions of each example. In all examples, we show how the introduction of unreliability leads to an increase of model sizes and verification times.
The increased model size reflects how the addition of crash handling can
complicate even simple protocols, and motivates the use of automatic model
checking. Still, we show that the verification of our examples always completes in less than 100 ms.

\subsection{Description of the Examples in \Cref{fig:eval-examples-types}}

\begin{description}
\item[DNS] is the example described in \cref{sec:overview}. The example demonstrates both backup processes and optional reliability assumptions.

\item[Adder] demonstrates a minimal extension of the fully reliable protocol, in
which $\roleQ$ receives two numbers from $\roleP$, sums them, and communicates
the result to $\roleP$. In our extension, both roles are unreliable and the
protocol ends when a crash is detected. It satisfies safety, deadlock-freedom,
and liveness.

\item[TwoBuyers] is the example described in \Cref{sec:model-checking}. It assumes that both the $\roleFmt{s}$eller and buyers $\roleFmt{b1}$ and $\roleFmt{b2}$ are unreliable. In cases where the split has been agreed upon, and $\roleFmt{b2}$ has crashed, $\roleFmt{b1}$ concludes the sale. It satisfies safety, deadlock-freedom, liveness, and terminating.
This form of \exampleName{TwoBuyers} is not projectable from a global type, since $\roleFmt{b1}$ would need to be informed on conclusion of a sale. \exampleName{TwoBuyers} uses recovery behaviour in order to satisfy deadlock-freedom.
Finally, \exampleName{TwoBuyers} demonstrates the flexibility of crash-handling that our approach permits: $\roleFmt{b2}$ does not alter its behaviour having detected that $\roleFmt{s}$ has crashed (\ie continues as $\stT[1]$), instead leaving $\roleFmt{b1}$ to instigate crash-handling behaviour.

\item[Negotiate] introduces a (reliable) $\roleFmt{b}$ackup negotiator to the
version found in the literature. During normal operation, a $\roleFmt{c}$lient
will send an opening offer to a $\roleFmt{n}$egotiator. Both $\roleFmt{c}$ and
$\roleFmt{n}$ can then choose to repeatedly exchange counter offers until the
other accepts the offer, or rejects it outright, bringing the protocol to an
end.
In our extension, should the
$\roleFmt{c}$ustomer detect that the original $\roleFmt{n}$egotiator crashes,
the $\roleFmt{b}$ackup negotiator activates and continues the negotiation with
$\roleFmt{c}$.
The example satisfies safety, deadlock-freedom, and liveness.
Recovery actions are necessary for $\roleFmt{c}$ in two locations in order to avoid deadlocks: it is otherwise possible for an $\stLabFmt{o}$ffer to be declined or agreed upon, then for $\roleFmt{n}$ to crash without $\roleFmt{c}$ noticing; this results in $\roleFmt{b}$ activating, and expecting a message from the terminated $\roleFmt{c}$.

\item[Broadcast] contains an unreliable broadcaster $\roleP$ attempting to send $\stLabFmt{data}$ to two receivers $\roleQ$ and $\roleR$. In cases where $\roleP$ crashes, $\roleR$ $\stLabFmt{req}$uests the data from $\roleQ$, who
responds with the data it received before $\roleP$ crashed, or with $\stLabKO$ when $\roleP$ crashed immediately.
The example is not projectable from a global type, since $\roleQ$ would otherwise require a message from $\roleR$ even when $\roleP$ had not crashed.
\exampleName{Broadcast} satisfies safety, deadlock-freedom, liveness, and termination.
As in \exampleName{Negotiate}, recovery behaviour is necessary for \exampleName{Broadcast} to satisfy deadlock-freedom.
\end{description}

Notably, \exampleName{Adder}, \exampleName{TwoBuyers} and \exampleName{Broadcast}
have \emph{no} reliability assumptions: any role may crash at any point.
Barring \exampleName{Adder}, our examples
cannot be written using \emph{global types} in the session types literature.
This demonstrates the flexibility of our
generalised MPST system over the classic one.
Moreover, the examples include the use of failover processes
(\exampleName{DNS} and \exampleName{Negotiate})
and complex recovery behaviour
(\exampleName{TwoBuyers},
\exampleName{Negotiate}, and \exampleName{Broadcast}), thus
showcasing the expressivity of our approach.

\begin{table}
\begin{center}
\footnotesize
\begin{tabular}{@{\hspace{0mm}}c@{\hspace{3mm}}C@{\hspace{3mm}}C@{\hspace{3mm}}C@{\hspace{3mm}}C@{\hspace{3mm}}C@{\hspace{3mm}}C@{\hspace{3mm}}C@{\hspace{3mm}}C}
\toprule
& $\rolesS$ & \text{states} & \text{transitions} & \text{safe} & \text{df} & \text{live} & \text{nterm} & \text{term} \\
\midrule
\multirow{2}{5mm}{($\alpha$)} & \{\roleP,\roleR\} & 101 & 427 & 12.28	\pm 1\% &	17.14	\pm 1\% &	11.24	\pm 1\% &	15.47	\pm 0\% &	12.33	\pm 0\% \\
& \roleSet & 10	& 15 & 7.61 \pm 1\% & 8.23 \pm 1\% & 7.46 \pm 1\% & 7.78 \pm 1\% & 7.6 \pm 1\%
\\
\midrule
\multirow{2}{5mm}{($\beta$)} & \rolesSEmpty & 37 & 159 & 12.43	\pm 0\% &	15.74	\pm 0\% &	12.24	\pm 1\% &	14.46	\pm 0\% & 12.06	\pm 1\% \\
& \roleSet & 26 & 56 & 8.92 \pm 2\% & 10.06 \pm 0\% & 8.71 \pm 1\% & 9.42 \pm 0\% & 8.79 \pm 0\%
\\
\midrule
\multirow{2}{5mm}{($\gamma$)} & \rolesSEmpty & 1409 & 10248 & 45.6 \pm 0\% & 88.26 \pm 0\% & 31.33 \pm 0\% & 77.2 \pm 0\% & 45.65 \pm 0\% \\
& \roleSet & 169 & 510 & 11.12 \pm 1\% & 15.94 \pm 0\% & 10.9 \pm 1\% & 12.19 \pm 0\% & 11.06 \pm 1\%
\\
\midrule
\multirow{2}{5mm}{($\delta$)} & \gtFmt{\{\roleFmt{b}\}} & 1089 & 8106 & 34.61 \pm 0\% & 55.07 \pm 0\% & 25.69 \pm 0\%	& 47.46 \pm 0\% & 26.04 \pm 0\% \\
& \roleSet & 50 & 157 & 10.17 \pm 0\% & 12.7 \pm 0\% & 9.93 \pm 0\%	& 11.33 \pm 0\% & 9.72 \pm 0\%
\\
\midrule
\multirow{2}{5mm}{($\varepsilon$)} & \rolesSEmpty & 161 & 925 & 17.99 \pm 1\% & 28.13 \pm 0\% & 14.08 \pm 0\% & 25.72 \pm 1\% & 17.74 \pm 0\% \\
& \roleSet & 13 & 25 & 7.85 \pm 3\% & 8.65 \pm 1\% & 7.7 \pm 0\% & 8.12 \pm 1\% & 7.85 \pm 0\% \\
\bottomrule
\end{tabular}
\end{center}
\caption{Average times (in milliseconds $\pm$ std.\ dev.) for the verification of
\exampleName{DNS} ($\alpha$), \exampleName{Adder} ($\beta$), \exampleName{TwoBuyers} ($\gamma$), \exampleName{Negotiate} ($\delta$), and \exampleName{Broadcast} ($\varepsilon$) in \cref{fig:eval-examples-types}
over safety (safe), deadlock-freedom (df), liveness (live), never-terminating (nterm) and terminating (term).
Each example has two rows of measurements, varying the sets of reliable roles $\rolesS$:
either zero/one/two reliable roles (first row), or all reliable roles (second row).
(Benchmarking specs: Intel Core i7-7700K CPU, 4.20 GHz, 16 GB RAM, mCRL2 202106.0 invoked 30 times with: \texttt{pbes2bool -{}-solve-strategy=2}.)
}
\label{fig:eval-examples-results-full}
\end{table}

\subsection{Experimental Results}

We applied our extended implementation of \theTool to the examples in \cref{fig:eval-examples-types}.
\cref{fig:eval-examples-results-full} gives the full set of verification times, reported in milliseconds with standard deviations, where each time is an average of 30 runs.
\ifTR{
  These results were generated by running \theTool with the \texttt{-{}-benchmark=30} option.
  The number of states reported measures the number of states in the LTS generated by mCRL2 and was procured using the \texttt{ltsinfo} command, via the \texttt{-s} \theTool option. Similarly, the number of transitions generated was retrieved from the \texttt{ltsinfo} command via an extension to \theTool.
}

For each example, we give verification times for both the typing contexts in \cref{fig:eval-examples-types} and a corresponding fully reliable version (\ie where all roles in the protocol are reliable; $\rolesS = \roleSet$). For \exampleName{Adder}, \exampleName{TwoBuyers}, and \exampleName{Negotiate}, we use the standard protocol definitions from the literature. For \exampleName{DNS} and \exampleName{Broadcast}, we omit crash-handling branches. For \exampleName{DNS}, this has the consequence of removing the backup role $\roleR$ entirely.

All examples satisfy safety,
deadlock-freedom, and liveness;
\exampleName{Adder} and \exampleName{Broadcast} satisfy termination; no example
satisfies never-termination.

Unsurprisingly, all examples demonstrate an increase in verification times and the number of states and transitions when comparing unreliable to reliable versions. Even \exampleName{Adder}, which represents minimal crash-handling, demonstrates relevant increases to the number of states and transitions: this is a direct consequence of the unreliable roles, and the resulting generation of crash and crash-detection transitions in the LTS generated by mCRL2. Verification times also increase because the verified properties follow crash and communication actions, thus requiring the exploration of a larger state space compared to the fully reliable versions.

Nevertheless, our verification times do not increase as quickly as the state space grows,
and are always under 100 ms.  This is
because our $\mu$-calculus furmul\ae\xspace only follow communication, crash, and crash detection transitions, and thus, their verification may not need to follow every possible transition into every state.
%
This suggests greater scalability of the approach that would otherwise be suggested by the size of the state space. This also lends greater motivation to the use of model checkers, as it is infeasible to manually determine the properties of a large LTS with complex crash-handling behaviour.

\ifTR{
  \fontsize{10}{11}\selectfont
  \section{Subtyping Properties}
\label{sec:proofs:subtyping-properties}

\begin{restatable}{lemma}{lemStenvReductionSubSafe}
    \label{lem:stenv-safe-reduction-sub}%
    Assume that\, $\stEnv$ is $(\mpS;\rolesR)$-safe and
    $\stEnv \stSub \stEnvi \stEnvMoveGenAnnot \stEnvii$ %
    with:
    \[
      \stEnvAnnotGenericSym \in \setcomp{
        \ltsSendRecv{\mpS}{\roleQ}{\roleR}{\stLab},%
        \ltsCrDe{\mpS}{\roleQ}{\roleR},\, \ltsCrash{\mpS}{\roleP}
      }{\,\roleQ,\!\roleR \!\in\! \roleSet,\, \roleP \!\in\! \roleSet \!\setminus\! \rolesR}
    \]
    \;Then, there is\, $\stEnviii$ such that\; %
    $\stEnv \stEnvMoveGenAnnot \stEnviii \stSub \stEnvii$.%
  \end{restatable}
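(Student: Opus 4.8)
The statement is a \emph{subtyping simulates reduction} lemma: the supertype $\stEnvi$ fires a label, and we must let the subtype $\stEnv$ fire the \emph{same} label to a context still below the supertype's residual. I would proceed by case analysis on the shape of $\stEnvAnnotGenericSym$, i.e.\ whether the move $\stEnvi \stEnvMoveGenAnnot \stEnvii$ is a message transmission, a crash detection, or a crash of a non-reliable role. First I would record that $\stEnv\stSub\stEnvi$ forces $\dom{\stEnv}=\dom{\stEnvi}$ with pointwise $\stEnvApp{\stEnv}{\mpC}\stSub\stEnvApp{\stEnvi}{\mpC}$ (\Cref{def:mpst-env-subtype}), and that transitions preserve domains, so $\stEnviii$ will only differ from $\stEnv$ on the one or two entries touched by the rule. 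The recursion bookkeeping is routine: since \inferrule{\iruleTCtxRec} unfolds types on the fly, I would first prove a small \emph{subtyping-inversion} lemma stating that if $\stS\stSub\stSi$ and $\stSi$ unfolds to an internal choice / external choice / $\stStop$ / $\stEnd$, then $\stS$ unfolds to a term of the same shape, with the branch inclusions, payload relations and continuation subtypings dictated by \inferrule{\iruleStSubOut}, \inferrule{\iruleStSubIn}, \inferrule{\iruleStSubStop}, \inferrule{\iruleStSubEnd} (via \inferrule{\iruleStSubRecL}/\inferrule{\iruleStSubRecR}). This lets me treat each entry as if already unfolded.

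\textbf{Easy cases.} For a crash $\ltsCrash{\mpS}{\roleP}$ with $\roleP\in\roleSet\setminus\rolesR$, rule \inferrule{\iruleTCtxCrash} applied to $\stEnvi$ means $\stEnvApp{\stEnvi}{\mpChanRole{\mpS}{\roleP}}=\stT'$ is a session type with $\stT'\stNotSub\stEnd$. The subtype entry $\stT=\stEnvApp{\stEnv}{\mpChanRole{\mpS}{\roleP}}\stSub\stT'$ is again a session type (a subtype of a non-$\stStop$ type is not $\stStop$), and by inversion an $\stEnd$-equivalent type can only be a subtype of $\stEnd$-equivalent types, so $\stT'\stNotSub\stEnd$ yields $\stT\stNotSub\stEnd$; hence $\stEnv$ may crash the same entry by \inferrule{\iruleTCtxCrash}, and $\stStop\stSub\stStop$ gives $\stEnviii\stSub\stEnvii$. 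A transmission arising from \inferrule{\iruleTCtxSendToCrashed} is equally direct: the receiver $\roleQ$ is $\stStop$ in both contexts (as $\stStop\stSub\stStop$), while the sender's entry, being a subtype internal choice, offers a \emph{wider} label set by \inferrule{\iruleStSubOut}, so it still carries the fired label; firing it in $\stEnv$ leaves a continuation below $\stEnvii$ by the same rule. Neither of these cases needs safety.

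\textbf{Hard cases (where safety is essential).} The genuine difficulty is in the transmission via \inferrule{\iruleTCtxCom} and the crash detection \inferrule{\iruleTCtxCrashDetect}, because here the move hinges on a \emph{receiving} entry, and by \inferrule{\iruleStSubIn} the subtype of an external choice may offer \emph{fewer} input branches than the supertype --- so a priori $\stEnv$'s receiver might lack exactly the $\stLab$ (resp.\ $\stCrashLab$) branch that $\stEnvi$ used. This is precisely where $(\mpS;\rolesR)$-safety of $\stEnv$ is deployed. In the \inferrule{\iruleTCtxCom} case, the subtype sender still offers the fired label (wider internal choice), and the subtype receiver, being an external choice addressed to the same role, can fire \emph{some} input; clause \inferrule{\iruleSafeComm} of \Cref{def:mpst-env-safe} then forces the receiver to actually support that label with compatible payload, so the transmission $\stEnvCommAnnotSmall{\roleQ}{\roleR}{\stLab}$ is enabled in $\stEnv$, and \inferrule{\iruleStSubOut}/\inferrule{\iruleStSubIn} give the continuation subtypings, whence $\stEnviii\stSub\stEnvii$. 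Symmetrically, for crash detection the crashed partner is $\stStop$ in $\stEnv$ (by $\stStop\stSub\stStop$) and the detecting entry can fire some input from it, so clause \inferrule{\iruleSafeCrash} forces the $\stCrashLab$ branch to be present, enabling the detection and yielding $\stEnviii\stSub\stEnvii$. The main obstacle is thus this input-branch matching: without safety the subtype could have discarded the needed branch and the lemma would fail, so the crux of the write-up is to phrase the two safety clauses so they fire on the existence of \emph{any} matching input/crash partner in $\stEnv$ and thereby recover the specific branch exercised by $\stEnvi$.
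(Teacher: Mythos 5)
Your proposal is correct and follows essentially the same route as the paper's own proof: the paper likewise proceeds by cases on the fired label, inherits the live--live \inferrule{\iruleTCtxCom} case from \cite{POPL19LessIsMore}, and treats the three crash-related cases exactly as you do --- $\stStop \stSub \stStop$ and the wider internal choice of subtypes for crashes and for sends to crashed endpoints, and safety clauses \inferrule{\iruleSafeComm}\slash\inferrule{\iruleSafeCrash} to recover the input branch that the subtype's (possibly narrower) external choice must still offer. Your preliminary unfolding-inversion lemma and the $\stNotSub \stEnd$ argument in the crash case are careful spellings-out of steps the paper leaves implicit.
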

\begin{proof}
    Similar to \cite{POPL19LessIsMore},
    except that we have three more cases to consider for the transition
    $\stEnvi \stEnvMoveGenAnnot \stEnvii$.
    \begin{itemize}
      \item $\stEnvi \stEnvMoveAnnot{\ltsCrash{\mpS}{\roleP}} \stEnvii$ with $\roleP \not\in \rolesR$.\quad
        This means $\stEnvApp{\stEnvi}{\mpChanRole{\mpS}{\roleP}} \neq \stStop$,
        and thus, $\stEnvApp{\stEnv}{\mpChanRole{\mpS}{\roleP}} \neq \stStop$
        (by subtyping);
        moreover, $\stEnvii = \stEnvi\subst{\mpChanRole{\mpS}{\roleP}}{\stStop}$
        (by rule \inferrule{\iruleTCtxCrash} in \Cref{def:mpst-env-reduction}).
        Therefore, we conclude by taking
        $\stEnviii = \stEnv\subst{\mpChanRole{\mpS}{\roleP}}{\stStop}$, which
        implies $\stEnv \stEnvMoveAnnot{\ltsCrash{\mpS}{\roleP}} \stEnviii$
        and $\stEnviii \stSub \stEnvii$, which is the thesis.

      \item $\stEnvi \stEnvMoveAnnot{\ltsCrDe{\mpS}{\roleP}{\roleQ}} \stEnvii$.
        This means $\stEnvApp{\stEnvi}{\mpChanRole{\mpS}{\roleQ}} = \stEnvApp{\stEnvii}{\mpChanRole{\mpS}{\roleQ}} = \stStop$,
        and $\stEnvApp{\stEnvi}{\mpChanRole{\mpS}{\roleP}} = \stExtSum{\roleQ}{j \in J}{\stChoice{\stLab[j]}{\stSi[j]} \stSeq \stTi[j]}$.
        By subtyping, we also have
        $\stEnvApp{\stEnv}{\mpChanRole{\mpS}{\roleQ}} = \stStop$, and
        $\stEnvApp{\stEnv}{\mpChanRole{\mpS}{\roleP}} = \stExtSum{\roleQ}{i \in I}{\stChoice{\stLab[i]}{\stS[i]} \stSeq \stT[i]}$ with $I \subseteq J$
        and $\forall i \in I: \stS[i] \stSub \stSi[i]$ and $\stT[i] \stSub \stTi[i]$.
        Since $\stEnv$ is $(\mpS;\rolesR)$-safe by hypothesis,
        by clause \inferrule{\iruleSafeCrash} of \Cref{def:mpst-env-safe} we know that $\exists k \in I: \stLab[k] = \stCrashLab$
        --- which means that we also have\;
        $\stEnvii = \stEnvi\subst{\mpChanRole{\mpS}{\roleP}}{\stTi[k]}$ \;(since\; $\stEnvi \stEnvMoveAnnot{\ltsCrDe{\mpS}{\roleP}{\roleQ}} \stEnvii$ \;and\; $k \in I \subseteq J$). Therefore, we conclude by taking\;
        $\stEnviii = \stEnv\subst{\mpChanRole{\mpS}{\roleQ}}{\stT[k]}$,
        \;and we obtain\;
        $\stEnv \stEnvMoveAnnot{\ltsCrDe{\mpS}{\roleP}{\roleQ}} \stEnviii$
        \;and\;
        $\stEnviii \stSub \stEnvii$,
        \;which is the thesis.

      \item $\stEnvi \stEnvMoveAnnot{\ltsSendRecv{\mpS}{\roleP}{\roleQ}{\stLab}} \stEnvii$,
        with $\stEnvApp{\stEnvi}{\mpChanRole{\mpS}{\roleQ}} = \stEnvApp{\stEnvii}{\mpChanRole{\mpS}{\roleQ}} = \stStop$.
        This means $\stEnvApp{\stEnvi}{\mpChanRole{\mpS}{\roleP}} = \stIntSum{\roleQ}{j \in J}{\stChoice{\stLab[j]}{\stSi[j]} \stSeq \stTi[j]}$,
        with $\stLab = \stLab[k]$ for some $k \in J$.
        By subtyping, we also have
        $\stEnvApp{\stEnv}{\mpChanRole{\mpS}{\roleQ}} = \stStop$, and
        $\stEnvApp{\stEnv}{\mpChanRole{\mpS}{\roleP}} = \stIntSum{\roleQ}{i \in I}{\stChoice{\stLab[i]}{\stS[i]} \stSeq \stT[i]}$ with $J \subseteq I$
        and $\forall i \in I: \stS[i] \stSub \stSi[i]$ and $\stT[i] \stSub \stTi[i]$.
        Observe that from $\stEnvi \stEnvMoveAnnot{\ltsSendRecv{\mpS}{\roleP}{\roleQ}{\stLab[k]}} \stEnvii$ we have
        $\stEnvii = \stEnvi\subst{\mpChanRole{\mpS}{\roleP}}{\stTi[k]}$;
        also observe that since $k \in J \subseteq I$,
        we can take $\stEnviii$ such that $\stEnv \stEnvMoveAnnot{\ltsSendRecv{\mpS}{\roleP}{\roleQ}{\stLab[k]}} \stEnviii = \stEnv\subst{\mpChanRole{\mpS}{\roleQ}}{\stT[k]}$,
        \;thus also getting\;
        $\stEnviii \stSub \stEnvii$:
        \;this is the thesis.
    \end{itemize}
\end{proof}

\begin{restatable}{proposition}{lemStenvReductionSubSafeInd}%
  \label{lem:stenv-safe-reduction-sub-ind}%
  Assume that\, $\stEnv$ is $(\mpS;\rolesS)$-safe and
  $\stEnv \stSub \stEnvi \stEnvMoveAnnot{\stEnvAnnotGenericSym[1]}\cdots\stEnvMoveAnnot{\stEnvAnnotGenericSym[n]} \stEnvii$, with:
  \[
      \forall i \in 1..n: \stEnvAnnotGenericSym[i] \in \setcomp{
        \ltsSendRecv{\mpS}{\roleQ}{\roleR}{\stLab[i]}, %
        \ltsCrDe{\mpS}{\roleQ}{\roleR},\, \ltsCrash{\mpS}{\roleP}
      }{\,\roleQ,\!\roleR \!\in\! \roleSet,\, \roleP \!\in\! \roleSet \!\setminus\! \rolesS}
  \]
  Then, there is\, $\stEnviii$ such that\; %
  $\stEnv \stEnvMoveAnnot{\stEnvAnnotGenericSym[1]}\cdots\stEnvMoveAnnot{\stEnvAnnotGenericSym[n]} \stEnviii \stSub \stEnvii$.%
\end{restatable}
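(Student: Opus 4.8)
The statement to prove is \Cref{lem:stenv-safe-reduction-sub-ind}, which lifts the single-step simulation result \Cref{lem:stenv-safe-reduction-sub} to sequences of transitions. My plan is a straightforward induction on the length $n$ of the transition sequence $\stEnvi \stEnvMoveAnnot{\stEnvAnnotGenericSym[1]}\cdots\stEnvMoveAnnot{\stEnvAnnotGenericSym[n]} \stEnvii$, using \Cref{lem:stenv-safe-reduction-sub} for each step.

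For the base case $n = 0$, there is nothing to do: we have $\stEnv \stSub \stEnvi = \stEnvii$, so we take $\stEnviii = \stEnv$ and the result holds trivially (with the empty transition sequence). For the inductive step, suppose the claim holds for sequences of length $n$, and consider a sequence of length $n+1$: $\stEnv \stSub \stEnvi \stEnvMoveAnnot{\stEnvAnnotGenericSym[1]} \stEnv[1] \stEnvMoveAnnot{\stEnvAnnotGenericSym[2]}\cdots\stEnvMoveAnnot{\stEnvAnnotGenericSym[n+1]} \stEnvii$. First I would apply \Cref{lem:stenv-safe-reduction-sub} to the single step $\stEnv \stSub \stEnvi \stEnvMoveAnnot{\stEnvAnnotGenericSym[1]} \stEnv[1]$ (whose label $\stEnvAnnotGenericSym[1]$ lies in the required set): this yields some $\stEnv[1]'$ such that $\stEnv \stEnvMoveAnnot{\stEnvAnnotGenericSym[1]} \stEnv[1]' \stSub \stEnv[1]$.

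The crucial observation needed before invoking the induction hypothesis is that $(\mpS;\rolesS)$-safety is \emph{preserved} by the reduction $\stEnv \stEnvMoveAnnot{\stEnvAnnotGenericSym[1]} \stEnv[1]'$. This follows from clause \inferrule{\iruleSafeMove} of \Cref{def:mpst-env-safe}: since $\stEnvAnnotGenericSym[1]$ is a transmission, crash-detection, or crash label (with the crashed role $\roleP \notin \rolesS$), the step $\stEnv \stEnvMoveAnnot{\stEnvAnnotGenericSym[1]} \stEnv[1]'$ is one of the transitions subsumed by $\stEnvMoveMaybeCrash[\mpS;\rolesS]$, so $\stEnv[1]'$ is again $(\mpS;\rolesS)$-safe. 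Now I can apply the induction hypothesis to $\stEnv[1]' \stSub \stEnv[1] \stEnvMoveAnnot{\stEnvAnnotGenericSym[2]}\cdots\stEnvMoveAnnot{\stEnvAnnotGenericSym[n+1]} \stEnvii$ (a sequence of length $n$ starting from the safe context $\stEnv[1]'$), obtaining $\stEnviii$ with $\stEnv[1]' \stEnvMoveAnnot{\stEnvAnnotGenericSym[2]}\cdots\stEnvMoveAnnot{\stEnvAnnotGenericSym[n+1]} \stEnviii \stSub \stEnvii$. Prepending the first step gives $\stEnv \stEnvMoveAnnot{\stEnvAnnotGenericSym[1]} \stEnv[1]' \stEnvMoveAnnot{\stEnvAnnotGenericSym[2]}\cdots\stEnvMoveAnnot{\stEnvAnnotGenericSym[n+1]} \stEnviii \stSub \stEnvii$, which is exactly the thesis.

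The main subtlety — and the step I would be most careful about — is the safety-preservation argument, since it is what makes \Cref{lem:stenv-safe-reduction-sub} applicable at every step of the induction rather than just the first. I must confirm that each intermediate label $\stEnvAnnotGenericSym[i]$ in the sequence indeed belongs to the label set quantified over in \inferrule{\iruleSafeMove}'s transition $\stEnvMoveMaybeCrash[\mpS;\rolesS]$, with the reliability side-condition $\roleP \in \roleSet \setminus \rolesS$ matching on crash labels; this is guaranteed by the hypothesis that every $\stEnvAnnotGenericSym[i]$ lies in the stated set. Everything else is routine bookkeeping of the simulation diagram, so I expect no further obstacles.
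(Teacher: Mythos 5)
Your proof is correct and follows essentially the same route as the paper's: induction on the length of the transition sequence, applying \Cref{lem:stenv-safe-reduction-sub} once per step. The only difference is structural and cosmetic --- you peel off the \emph{first} transition and therefore must (and do) make explicit the safety-preservation argument via clause \inferrule{\iruleSafeMove} before invoking the induction hypothesis, whereas the paper peels off the \emph{last} transition and leaves the analogous safety-preservation step (needed to apply the single-step lemma to the reductum $\stEnviii[0]$) implicit; your version is, if anything, slightly more careful on this point.
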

\begin{proof}
    By induction on the number of transitions $n$ in $\stEnvi \stEnvMoveAnnot{\stEnvAnnotGenericSym[1]}\cdots\stEnvMoveAnnot{\stEnvAnnotGenericSym[n]} \stEnvii$.
    The base case ($n = 0$ transitions) is immediate: we have $\stEnvi = \stEnvii$,
    hence we conclude by taking $\stEnviii = \stEnv$.
    In the inductive case with $n = m+1$ transitions,
    there is $\stEnvii[0]$ such that $\stEnvi \stEnvMoveAnnot{\stEnvAnnotGenericSym_1}\cdots\stEnvMoveAnnot{\stEnvAnnotGenericSym[m]} \stEnvii[0] \stEnvMoveAnnot{\stEnvAnnotGenericSym[n]} \stEnvii$.
    By the induction hypothesis, there is $\stEnviii[0]$ such that
    $\stEnv \stEnvMoveAnnot{\stEnvAnnotGenericSym[1]}\cdots\stEnvMoveAnnot{\stEnvAnnotGenericSym[m]} \stEnviii[0] \stSub \stEnvii[0]$. %
    Hence, by \Cref{lem:stenv-safe-reduction-sub}, there exists $\stEnviii$
    such that $\stEnviii[0] \stEnvMoveAnnot{\stEnvAnnotGenericSym[n]} \stEnviii$
    and $\stEnviii \stSub \stEnvii$. Therefore, we have
    $\stEnv \stEnvMoveAnnot{\stEnvAnnotGenericSym[1]}\cdots\stEnvMoveAnnot{\stEnvAnnotGenericSym[n]} \stEnviii \stSub \stEnvii$,
    which is the thesis.
\end{proof}

\begin{restatable}{lemma}{lemStenvSubSafe}%
  \label{lem:stenv-supertype-safe}%
  If\; $\stEnv$ is $(\mpS;\rolesS)$-safe %
  \;and\; %
  $\stEnv \stSub \stEnvi$, %
  \;then\; $\stEnvi$ is $(\mpS;\rolesS)$-safe.
\end{restatable}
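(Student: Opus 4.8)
The plan is to exhibit a concrete $(\mpS;\rolesS)$-safety property containing $\stEnvi$ and verify the three clauses of \Cref{def:mpst-env-safe} for it, reusing the reduction/subtyping compatibility already established in \Cref{lem:stenv-safe-reduction-sub}. First I would set
\[
  \predP \;=\; \setcomp{\stEnvii}{\exists \stEnv[0]:\; \stEnvSafeSessRolesSP{\mpS}{\rolesS}{\stEnv[0]} \;\text{and}\; \stEnv[0] \stSub \stEnvii}
\]
i.e.\ the collection of all supertypes of $(\mpS;\rolesS)$-safe contexts. By the hypotheses $\stEnvi \in \predP$ (with witness $\stEnv$), so once $\predP$ is shown to be an $(\mpS;\rolesS)$-safety property, the definition of safety immediately yields $\stEnvSafeSessRolesSP{\mpS}{\rolesS}{\stEnvi}$, which is the thesis. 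Throughout I would use that $\stEnv[0] \stSub \stEnvii$ forces $\dom{\stEnv[0]} = \dom{\stEnvii}$ together with a pointwise subtyping on each endpoint (possibly after unfolding recursion, which $\stSub$ respects via \inferrule{\iruleStSubRecL} and \inferrule{\iruleStSubRecR}).

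Clause \inferrule{\iruleSafeMove} is discharged entirely by \Cref{lem:stenv-safe-reduction-sub}. Taking $\stEnvii \in \predP$ with witness $\stEnv[0]$ and a step $\stEnvii \stEnvMoveMaybeCrash[\mpS;\rolesS] \stEnviii$, the label of this step is a transmission, a crash detection, or a crash of some $\roleP \!\notin\! \rolesS$, so the lemma applies and gives $\stEnv[1]$ with $\stEnv[0] \stEnvMoveGenAnnot \stEnv[1]$ (same label) and $\stEnv[1] \stSub \stEnviii$; as that label again lies in the $\stEnvMoveMaybeCrash[\mpS;\rolesS]$ set, we in fact have $\stEnv[0] \stEnvMoveMaybeCrash[\mpS;\rolesS] \stEnv[1]$. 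Safety of $\stEnv[0]$ then gives, by its own clause \inferrule{\iruleSafeMove}, that $\stEnv[1]$ is $(\mpS;\rolesS)$-safe; together with $\stEnv[1] \stSub \stEnviii$ this shows $\stEnviii \in \predP$.

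For the remaining two clauses I would argue directly, transferring the enabling transitions of $\stEnvii$ down to $\stEnv[0]$, applying the corresponding clause of $\stEnv[0]$'s safety, and transferring the resulting transmission or crash detection back up to $\stEnvii$. For \inferrule{\iruleSafeComm}: if $\stEnvii$ can fire an output toward $\roleQ$ with label $\stLab$ and an input at $\roleQ$ from $\roleP$, then since internal-choice subtyping only widens the set of available outputs, $\stEnv[0]$ can also output $\stLab$ from $\roleP$ to $\roleQ$, and since its external choice at $\mpChanRole{\mpS}{\roleQ}$ from $\roleP$ is non-empty it can input from $\roleP$; clause \inferrule{\iruleSafeComm} for $\stEnv[0]$ then yields $\stEnvCommAnnotSmall{\roleP}{\roleQ}{\stLab}$, so $\stEnv[0](\mpChanRole{\mpS}{\roleQ})$ offers $\stLab$ with a compatible payload; external-choice subtyping enlarges the label set upwards, so $\stEnvii(\mpChanRole{\mpS}{\roleQ})$ also offers $\stLab$, and chaining the payload relations gives compatibility, so $\stEnvii$ fires $\stEnvCommAnnotSmall{\roleP}{\roleQ}{\stLab}$ too. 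Clause \inferrule{\iruleSafeCrash} is analogous: a crash signal $\actCrashed{\mpS}{\roleP}$ of $\stEnvii$ forces $\stEnvii(\mpChanRole{\mpS}{\roleP}) = \stStop$, hence $\stEnv[0](\mpChanRole{\mpS}{\roleP}) = \stStop$ (as $\stStop \stSub \stStop$ only), and the non-empty external choice at $\roleQ$ lets $\stEnv[0]$'s clause \inferrule{\iruleSafeCrash} produce a $\stCrashLab$ branch, which label-widening carries up to $\stEnvii$, yielding $\ltsCrDe{\mpS}{\roleQ}{\roleP}$.

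The main obstacle is precisely this label-and-payload bookkeeping in the last two clauses: one must keep straight that internal choices widen their output labels toward subtypes while external choices widen their input labels toward supertypes (with the dual covariance/contravariance on payloads), and check that the singleton-$\stCrashLab$ side condition of rule \inferrule{\iruleStSubIn} does not interfere, since it only constrains the singleton case and never deletes a $\stCrashLab$ branch already offered by $\stEnv[0]$. Once these inclusions are pinned down (and recursion unfolding is dispatched as a routine invariant of the transition rules), all three clauses close and $\predP$ is confirmed to be an $(\mpS;\rolesS)$-safety property.
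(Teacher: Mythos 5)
Your proof is correct, and it takes a genuinely different route from the paper's. The paper argues by contradiction: it assumes $\stEnvi$ is \emph{not} $(\mpS;\rolesS)$-safe, extracts a finite $\stEnvMoveMaybeCrash[\mpS;\rolesS]$-reduction sequence of $\stEnvi$ ending in a context $\stEnvii$ violating clause \inferrule{\iruleSafeComm} or \inferrule{\iruleSafeCrash}, uses the \emph{multi-step} simulation result (\Cref{lem:stenv-safe-reduction-sub-ind}) to let $\stEnv$ mimic that whole sequence down to some $\stEnviii \stSub \stEnvii$, and then claims (``by cases on the subtyping'') that the violation transfers down to $\stEnviii$, contradicting safety of $\stEnv$. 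You instead argue directly and coinductively: you exhibit the explicit witness property $\predP$ (the up-closure of safe contexts under $\stSub$), discharge \inferrule{\iruleSafeMove} with only the \emph{single-step} lemma (\Cref{lem:stenv-safe-reduction-sub}), and prove \inferrule{\iruleSafeComm} and \inferrule{\iruleSafeCrash} by transferring enabled transitions down to the safe subtype and the resulting transmission/crash-detection back up. The content of your clause-transfer arguments (output labels widen toward subtypes, input labels toward supertypes, $\stStop$ only below $\stStop$, payload compatibility by chaining contravariant and covariant payload relations through transitivity of $\stSub$, and the harmlessness of the singleton-$\stCrashLab$ side condition) is exactly the contrapositive of what the paper's ``by cases'' step silently requires, so your version makes explicit what the paper leaves implicit. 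What each approach buys: yours is self-contained at the level of the coinductive definition, needs no induction on reduction sequences, and produces a reusable closure construction; the paper's is shorter on the page because it delegates the sequence-tracing to \Cref{lem:stenv-safe-reduction-sub-ind} and compresses the case analysis into one sentence.
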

\begin{proof}
    Assume that $\stEnv$ is $(\mpS;\rolesS)$-safe. By contradiction, also assume that $\stEnvi$ is
    \emph{not} $(\mpS;\rolesS)$-safe. This means that there is a series of reductions $\stEnvi \stEnvMoveAnnot{\stEnvAnnotGenericSym[1]}\cdots\stEnvMoveAnnot{\stEnvAnnotGenericSym[n]} \stEnvii$, with:
    \[
        \forall i \in 1..n: \stEnvAnnotGenericSym[i] \in \setcomp{
        \begin{array}{@{}l@{}}
          \ltsSendRecv{\mpS}{\roleQ}{\roleR}{\stLab[i]},\, %
          \ltsCrDe{\mpS}{\roleQ}{\roleR},\, \ltsCrash{\mpS}{\roleP}
        \end{array}}{\,\roleQ,\!\roleR \!\in\! \roleSet,\, \roleP \!\in\! \roleSet \!\setminus\! \rolesS}
    \]
    and with $\stEnvii$ violating clause %
    \inferrule{\iruleSafeComm}
    or \inferrule{\iruleSafeCrash} of \Cref{def:mpst-env-safe}. Now,
    observe that by \Cref{lem:stenv-safe-reduction-sub-ind}, $\stEnv$ can simulate all such reductions of $\stEnvi$, reaching a typing context $\stEnviii \stSub \stEnvii$; by cases on the subtyping, we can easily verify that $\stEnviii$
    violates clause %
    \inferrule{\iruleSafeComm} or \inferrule{\iruleSafeCrash}, similarly to $\stEnvii$.
    But then, we obtain that $\stEnv$ is \emph{not} safe either --- contradiction. %
    Therefore, we conclude that $\stEnvi$ is safe.
\end{proof}

  \section{Type System Properties}
\label{sec:proofs:type-system-properties}

\begin{restatable}[Narrowing]{lemma}{lemStenvNarrowingMpstStd}
    \label{lem:narrowing-mpst-std}%
    \label{lem:narrowing}%
    If\; %
    $\stJudge{\mpEnv}{\stEnv}{\mpP}$
    \;and\; %
    $\stEnvi \stSub \stEnv$, %
    \;then\; %
    $\stJudge{\mpEnv}{\stEnvi}{\mpP}$.
  \end{restatable}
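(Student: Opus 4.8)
The plan is to prove the lemma by induction on the derivation of $\stJudge{\mpEnv}{\stEnv}{\mpP}$, after setting up a few auxiliary facts about subtyping and the entailment judgment. Recall that $\stEnvi \stSub \stEnv$ means $\dom{\stEnvi} = \dom{\stEnv}$ together with $\stEnvApp{\stEnvi}{\mpC} \stSub \stEnvApp{\stEnv}{\mpC}$ for every $\mpC \in \dom{\stEnv}$, so narrowing replaces each entry by a subtype without changing the domain. Throughout I would rely on \emph{reflexivity} and \emph{transitivity} of $\stSub$, established by the usual coinductive arguments on \Cref{def:subtyping}; the only extra care concerns the crash-recovery side condition of \inferrule{\iruleStSubIn}, which confines a pure $\stCrashLab$-recovery type to relate only to pure $\stCrashLab$-recovery types and is therefore stable under composition of subtypings.

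Before the main induction I would establish two preservation properties. First, the $\stEnd$-predicate is stable under narrowing: if $\stEnvEndP{\stEnv}$ and $\stEnvi \stSub \stEnv$, then $\stEnvEndP{\stEnvi}$. This holds because, by \inferrule{\iruleStSubEnd} and \inferrule{\iruleStSubStop}, the only subtype of $\stEnd$ is $\stEnd$ and the only subtype of $\stStop$ is $\stStop$ (and, by guardedness, no recursive type unfolds to a subtype of $\stEnd$), while by \inferrule{\iruleStSubGround} a subtype of a basic type is again basic. Second, I would prove \emph{narrowing for the entailment judgment}: if $\stEnvEntails{\stEnv}{\mpD}{\stS}$ and $\stEnvi \stSub \stEnv$, then $\stEnvEntails{\stEnvi}{\mpD}{\stS}$. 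The only entailment rule that inspects a channel/variable type is \inferrule{\iruleMPSub}, where $\stEnv = \stEnvMap{\mpC}{\stSi}$ with $\stSi \stSub \stS$; narrowing gives $\stEnvi = \stEnvMap{\mpC}{\stSii}$ with $\stSii \stSub \stSi$, hence $\stSii \stSub \stS$ by transitivity, and \inferrule{\iruleMPSub} re-applies. The cases \inferrule{\iruleMPGround} (empty context) and \inferrule{\iruleMPX} (a $\mpEnv$-only premise) are immediate.

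With these facts in hand, the main induction is largely routine. For the axiom-like rules \inferrule{\iruleMPNil} and \inferrule{\iruleMPStop} I would invoke $\stEnd$-preservation, noting that the $\stStop$ entry in \inferrule{\iruleMPStop} can only be narrowed to $\stStop$. For the rules that split the context --- \inferrule{\iruleMPPar}, \inferrule{\iruleMPCall}, \inferrule{\iruleMPBranch}, \inferrule{\iruleMPSel}, \inferrule{\iruleMPDef} --- I would partition $\stEnvi$ by domain according to the split of $\stEnv$, obtaining pieces that are pointwise subtypes of the original pieces (using reflexivity on the unchanged fragments, such as the fixed continuation entries $\stEnvMap{\mpC}{\stT[i]}$ and $\stEnvMap{y_i}{\tyS[i]}$, and the restricted session context of \inferrule{\iruleMPResProp}); I then apply entailment-narrowing to premises of the form $\stEnvEntails{\cdot}{\mpD}{\cdot}$ and the induction hypothesis to the sub-derivations, reassembling the judgment with the same rule. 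Crucially, in \inferrule{\iruleMPSel} and \inferrule{\iruleMPBranch} I would \emph{not} need to inspect the shape of the internal/external choice: the channel's type enters only through the entailment premise, so the structural subtyping content (a wider internal choice still offers the selected label; a narrower external choice is still covered by the process's branches) is already discharged by entailment-narrowing plus transitivity. For \inferrule{\iruleMPResProp} the restricted session context is fixed by the restriction annotation, so its safety premise $\predPApp{\stEnvi}$ is unaffected, and only the outer context is narrowed before reapplying the hypothesis.

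The main obstacle is the groundwork rather than the induction itself: everything hinges on transitivity of $\stSub$, and on checking that transitivity survives the new crash-handling features --- specifically the singleton-$\stCrashLab$ side condition of \inferrule{\iruleStSubIn} and the coinductive treatment of recursion via \inferrule{\iruleStSubRecL} and \inferrule{\iruleStSubRecR}. I would handle the recursion by the standard technique of relating types up to unfolding and exhibiting a subtyping relation closed under the rules, and handle the side condition by observing that it confines pure crash-recovery types to relate only among themselves, so it composes transitively. Once transitivity and the two preservation lemmas are secured, the remaining cases reduce to mechanical applications of the induction hypothesis.
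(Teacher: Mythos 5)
Your proposal is correct and takes essentially the same route as the paper: the paper's entire proof is a one-sentence induction on the derivation of $\stJudge{\mpEnv}{\stEnv}{\mpP}$ that absorbs the narrowing by inserting (possibly vacuous) instances of rule \inferrule{\iruleMPSub}, which is precisely your entailment-narrowing step via transitivity of $\stSub$, spelled out with the auxiliary lemmas made explicit. One harmless slip: preservation of $\stEnvEndP{\cdot}$ does not need the claim that $\stEnd$ has no proper subtypes (it does, e.g.\ $\stRec{\stRecVar}{\stEnd} \stSub \stEnd$, since $\stRecVar$ need not occur in the body); because \inferrule{\iruleMPEnd} only requires each entry to be basic or a \emph{subtype} of $\stEnd$, preservation follows from transitivity alone, which you already have.
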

  \begin{proof}
    By induction on the derivation of\; $\stJudge{\mpEnv}{\stEnv}{\mpP}$,
    \;we obtain a derivation that concludes\; $\stJudge{\mpEnv}{\stEnvi}{\mpP}$
    \;by inserting (possibly vacuous) instances of rule \inferrule{\iruleMPSub} (\Cref{fig:mpst-rules}).
\end{proof}

\begin{restatable}[Substitution]{lemma}{lemSubstitution}
  \label{lem:substitution}%
  Assume\; %
  $\stJudge{\mpEnv}{%
    \stEnv \stEnvComp%
    \stEnvMap{x}{\stS}%
  }{\mpP}$ %
  \;and\; %
  $\stEnvEntails{\stEnvi}{
    \mpW
  }{\stS}$, %
  \,with\, $\stEnv \stEnvComp \stEnvi$ defined. %
  \;Then,\; %
  $\stJudge{\mpEnv}{%
    \stEnv \stEnvComp \stEnvi%
  }{\mpP\subst{\mpFmt{x}}{
    \mpW
  }}$.%
\end{restatable}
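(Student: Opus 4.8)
The plan is to prove this by induction on the derivation of $\stJudge{\mpEnv}{\stEnv \stEnvComp \stEnvMap{x}{\stS}}{\mpP}$, following the standard substitution-lemma strategy for linear session-typing systems. Before the induction, I would first analyse the hypothesis $\stEnvEntails{\stEnvi}{\mpW}{\stS}$ to pin down the shape of $\mpW$ and $\stEnvi$: by inversion on the entailment rules of \Cref{fig:mpst-rules}, either $\mpW$ is a basic value $\mpV$, in which case rule \inferrule{\iruleMPGround} forces $\stEnvi = \stEnvEmpty$ and $\stS$ to be a basic type; or $\mpW$ is a channel $\mpChanRole{\mpSi}{\roleR}$, in which case rule \inferrule{\iruleMPSub} gives $\stEnvi = \stEnvMap{\mpChanRole{\mpSi}{\roleR}}{\stSi}$ with $\stSi \stSub \stS$. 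This dichotomy governs how the single entry $\stEnvMap{x}{\stS}$ is replaced by $\stEnvi$ throughout the conclusion.

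For the base cases where $x$ is consumed directly, I would close the goal using the entailment hypothesis itself, combined where needed with \inferrule{\iruleMPSub} and the Narrowing Lemma (\Cref{lem:narrowing}) to absorb the subtyping $\stSi \stSub \stS$ in the channel case. The inductive cases for the structural rules (\inferrule{\iruleMPDef}, \inferrule{\iruleMPCall}, \inferrule{\iruleMPResProp}, and the non-subject cases of \inferrule{\iruleMPSel} and \inferrule{\iruleMPBranch}) are routine: each either extends or splits the context, so I would track which sub-context carries $\stEnvMap{x}{\stS}$, apply the induction hypothesis to the corresponding sub-derivation with $\stEnvi$ substituted in, and re-apply the rule. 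For \inferrule{\iruleMPNil} and \inferrule{\iruleMPStop} there is nothing to substitute beyond checking that the $\stEnvEndPred$ predicate is preserved when a basic-typed $\stEnvMap{x}{\stS}$ is replaced by $\stEnvEmpty$.

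The delicate part will be the parallel rule \inferrule{\iruleMPPar}, where the linear context is split as $\stEnv[1] \stEnvComp \stEnv[2]$. Since $\stEnv \stEnvComp \stEnvMap{x}{\stS}$ is split, the binding $\stEnvMap{x}{\stS}$ lies in exactly one part, say $\stEnv[1]$; by linearity (when $\stS$ is a session type) the occurrences of $x$ reside only in the matching subprocess $\mpP[1]$, so I would apply the induction hypothesis to $\mpP[1]$ with $\stEnvi$, leave $\mpP[2]$ untouched, and re-split the resulting context with $\stEnvi$ in place of $\stEnvMap{x}{\stS}$ --- which is well-defined precisely because $\stEnv \stEnvComp \stEnvi$ is assumed defined. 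The other genuinely delicate cases are \inferrule{\iruleMPSel} and \inferrule{\iruleMPBranch} when the subject channel $\mpC$ is itself the substituted variable $x$: here substitution turns the communication subject into $\mpW$, so I must check that $\mpW$ carries the required selection/branching type --- which follows from the entailment hypothesis and, in the channel case, from narrowing to align the subtype $\stSi$ --- and that the payload and continuation remain well-typed. Once these cases are discharged, the remaining ones follow mechanically.
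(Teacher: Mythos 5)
Your proposal is correct and takes essentially the approach the paper intends: the paper's own ``proof'' is just the remark that the lemma is a ``minor adaptation of \cite[Lemma 5]{Coppo2015GentleIntroMAPST}'', and that standard proof is precisely the induction on the typing derivation you describe --- inversion on the entailment hypothesis to split the basic-value\,/\,channel cases, linear tracking of the $\stEnvMap{x}{\stS}$ entry through context splits (notably in \inferrule{\iruleMPPar}), and subtyping\,/\,narrowing to handle the cases where the substituted variable is the communication subject. Your explicit treatment of the crash-stop-specific rules \inferrule{\iruleMPStop} and \inferrule{\iruleMPResProp} is exactly the ``minor adaptation'' the paper alludes to.
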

\begin{proof}
  Minor adaptation of \cite[Lemma 5]{Coppo2015GentleIntroMAPST}.
\end{proof}

\begin{restatable}[Subject Congruence]{lemma}{lemSubjectCongruence}
  \label{lem:subject-congruence}%
  Assume\; %
  $\stJudge{\mpEnv}{\stEnv}{\mpP}$ %
  \;and\; %
  $\mpP \equiv \mpPi$. %
  \;Then,\, %
  $\stJudge{\mpEnv}{\stEnv}{\mpPi}$.
\end{restatable}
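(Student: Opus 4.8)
The plan is to proceed by induction on the derivation of $\mpP \equiv \mpPi$. Since $\equiv$ is the least congruence closed under the axioms of \Cref{sec:structural-congruence}, the induction splits into: (i) the cases for reflexivity, symmetry, and transitivity; (ii) the contextual-closure cases (congruence under $\mpPar$, $\mpRes{\mpS}{\cdot}$, and $\mpDefAbbrev{\mpDefD}{\cdot}$); and (iii) one base case per axiom. To absorb symmetry I would strengthen the statement to an ``iff'', so that each axiom is discharged in both directions; reflexivity is immediate and transitivity simply composes two appeals to the induction hypothesis. The contextual cases are uniform: I would invert the typing rule applied at the root of the derivation for the compound term, apply the induction hypothesis to the typing of the reshaped sub-term, and re-apply the same rule with the identical context $\stEnv$, since none of the side conditions (context splitting, safety of the restricted context, freshness) depend on the internal shape of the sub-term.

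For the base cases I would invert the typing derivation on one side of the axiom and rebuild it on the other. The parallel axioms $\inferrule{\iruleCongParComm}$ and $\inferrule{\iruleCongParAssoc}$ follow directly from commutativity and associativity of context composition $\stEnvComp$ (a disjoint-domain union, by \Cref{def:mpst-env}), so the splittings of $\stEnv$ produced by $\inferrule{\iruleMPPar}$ can be re-bracketed freely. For $\inferrule{\iruleCongParId}$, the direction $\mpP \equiv \mpP \mpPar \mpNil$ is easy: I would split $\stEnv = \stEnv \stEnvComp \stEnvEmpty$, type $\mpNil$ with $\stEnvEmpty$ via $\inferrule{\iruleMPNil}$ (as $\stEnvEndP{\stEnvEmpty}$ holds vacuously), and apply $\inferrule{\iruleMPPar}$. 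The converse requires the one genuinely auxiliary fact, end-weakening: $\stJudge{\mpEnv}{\stEnv[1]}{\mpP}$ and $\stEnvEndP{\stEnv[2]}$ (with $\stEnv[1] \stEnvComp \stEnv[2]$ defined) imply $\stJudge{\mpEnv}{\stEnv[1] \stEnvComp \stEnv[2]}{\mpP}$. This is admissible by a straightforward induction on typing, since adding $\stEnd$-typed channels and basic-typed variables breaks neither linearity nor any premise of the typing rules.

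For the restriction and definition axioms I would exploit rule $\inferrule{\iruleMPResProp}$, whose only non-structural premise is that the session-local context $\stEnvi$ satisfies some safety property $\predP$. Crucially, none of $\inferrule{\iruleCongResVar}$, $\inferrule{\iruleCongResLift}$, $\inferrule{\iruleCongResElim}$, $\inferrule{\iruleCongDefLift}$, or $\inferrule{\iruleCongDefElim}$ alters the restricted session context, so the safety witness carries over verbatim; the free-channel side conditions (\eg $\mpS \!\not\in\! \fc{\mpP}$ in $\inferrule{\iruleCongResLift}$) are precisely what lets me move $\stEnvi$ across a $\inferrule{\iruleMPPar}$ split without clashing domains, and $\inferrule{\iruleCongDefParLift}$ and $\inferrule{\iruleCongDefOrd}$ are handled analogously using their disjointness conditions on declared/free process variables. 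The new ingredient is $\inferrule{\iruleCongStopElim}$: in the direction $\mpNil \equiv \mpRes{\mpS}{(\cdots)}$ I would take $\stEnvi = \setenum{\stEnvMap{\mpChanRole{\mpS}{\roleP[j]}}{\stStop}}_{j}$, observe that an all-$\stStop$ context is trivially an $(\mpS;\rolesSEmpty)$-safety property (clauses $\inferrule{\iruleSafeComm}$ and $\inferrule{\iruleSafeCrash}$ hold vacuously, since no input or output is enabled, and clause $\inferrule{\iruleSafeMove}$ holds because the only transitions are the $\actCrashed{\mpS}{\roleP}$ self-loops of $\inferrule{\iruleTCtxCrashed}$), type each $\mpStopDef$ via $\inferrule{\iruleMPStop}$, and conclude with $\inferrule{\iruleMPResProp}$ after $\alpha$-renaming $\mpS$ fresh for $\stEnv$; the reverse direction reads this derivation backwards, noting that the leftover outer context satisfies $\stEnvEndP{\cdot}$ and hence types $\mpNil$.

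I expect the main obstacle to be bookkeeping rather than conceptual: threading the safety side-condition of $\inferrule{\iruleMPResProp}$ through the restriction-commuting axioms while keeping every context domain disjoint, and establishing end-weakening cleanly. \Cref{lem:narrowing} is available should any case need to realign a context up to subtyping, although I anticipate it is not strictly required, as the reconstructed derivations reuse the original context unchanged.
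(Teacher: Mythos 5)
Your proposal is correct and takes essentially the same approach as the paper: the paper's proof simply defers all standard congruence axioms to \cite{POPL19LessIsMore} and details only the new axiom \inferrule{\iruleCongStopElim}, inverting the typing of $\mpRes{\mpS}{(\mpStop{\mpS}{\roleP[1]} \mpPar \cdots \mpPar \mpStop{\mpS}{\roleP[n]})}$ through \inferrule{\iruleMPResProp}, \inferrule{\iruleMPPar} and \inferrule{\iruleMPStop} to obtain $\stEnvEndP{\stEnv}$ and then re-typing $\mpNil$ via \inferrule{\iruleMPNil}. The extra details you supply --- the end-weakening lemma for \inferrule{\iruleCongParId}, and especially the reverse direction of \inferrule{\iruleCongStopElim} using the all-$\stStop$ restricted context as a (vacuous) $(\mpS;\rolesSEmpty)$-safety witness for \inferrule{\iruleMPResProp} --- are correct and fill in steps that the paper's terse, one-directional presentation leaves implicit but needs, since $\equiv$ is symmetric.
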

\begin{proof}
The proof follows as in~\cite{POPL19LessIsMore}. For \inferrule{\iruleCongStopElim}, which does not appear,
\[
\begin{array}{r@{\;\;}c@{\;\;}l}
\mpP &=& \mpRes{\mpS}{(\mpStop{\mpS}{\roleP[1]} \mpPar \cdots \mpPar \mpStop{\mpS}{\roleP[n]})}
\\
\mpPi &=& \mpNil
\end{array}
\]
\noindent
By inversion of \inferrule{\iruleMPRedRes}, we have $s \notin \stEnv$ and
$\stJudge{\mpEnv}{\stEnv\stEnvComp\stEnvi}{\mpStop{\mpS}{\roleP[1]} \mpPar \cdots \mpPar \mpStop{\mpS}{\roleP[n]}}$.
Then, by inversion of \inferrule{\iruleMPPar} and \inferrule{\iruleMPStop}, we have $\stEnvEndP{\stEnv}$ and $\forall i \in 1..n: \stJudge{\mpEnv}{\stEnv\stEnvComp\stEnvMap{\mpChanRole{\mpS}{\roleP[i]}}{\stStop}}{\mpStop{\mpS}{\roleP[i]}}$.
Therefore, by %
\inferrule{\iruleMPNil}, we conclude $\stJudge{\mpEnv}{\stEnv}{\mpP'}$.
\end{proof}

  \section{Proofs for Subject Reduction and Type Safety}
\label{sec:proofs:subject-reduction}

\begin{proposition}
  \label{lem:uncrashed-process-no-crash-type}
  If\; $\stJudge{\mpEnv}{\stEnv}{\mpP}$ \;and $\mpP \not\equiv \mpStop{\mpS}{\roleP} \mpPar \mpR$
  (for all $\mpS,\roleP,\mpR$),
  then $\forall \mpC \in \dom{\stEnv}: \stEnvApp{\stEnv}{\mpC} \neq \stStop$.
\end{proposition}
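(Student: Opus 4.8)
The plan is to argue by induction on the derivation of $\stJudge{\mpEnv}{\stEnv}{\mpP}$, establishing the contrapositive: if $\stEnvApp{\stEnv}{\mpC} = \stStop$ for some $\mpC \in \dom{\stEnv}$, then $\mpP \equiv \mpStop{\mpS}{\roleP} \mpPar \mpR$ for some $\mpS$, $\roleP$, $\mpR$. The organising observation is that a $\stStop$ entry can be created only by rule $\inferrule{\iruleMPStop}$, which types a bare crashed endpoint; every other rule either merely threads existing entries into its conclusion or introduces fresh entries that are necessarily session types. In particular, wherever a rule imposes $\stEnvEndP{\stEnv}$ on part of a context, that part is $\stStop$-free: by \Cref{def:subtyping} we have $\stStop \stSub \stStop$ as the only instance involving $\stStop$, hence $\stStop \stNotSub \stEnd$, so a $\stStop$ entry fails the premise of $\inferrule{\iruleMPEnd}$.

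First I would dispatch the structural cases, where the congruence $\equiv$ lets me relocate a crashed endpoint to the top level. For $\inferrule{\iruleMPStop}$ the process is exactly $\mpStop{\mpS}{\roleP} \equiv \mpStop{\mpS}{\roleP} \mpPar \mpNil$ (by $\inferrule{\iruleCongParId}$), giving the conclusion directly. For $\inferrule{\iruleMPNil}$, and for the $\stEnvEndP{\stEnv}$-constrained component of $\inferrule{\iruleMPCall}$, the antecedent is vacuous by the observation above, while the remaining components of $\inferrule{\iruleMPCall}$ assign to the arguments $\mpD$ subtypes of the declared payload types, which the grammar forbids from being $\stStop$. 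For $\inferrule{\iruleMPPar}$ a $\stStop$ entry sits in one sub-context; the induction hypothesis exhibits the corresponding sub-process as $\equiv \mpStop{\mpS}{\roleP} \mpPar \mpR$, and closing $\equiv$ under parallel composition yields the claim for $\mpP$. The cases $\inferrule{\iruleMPDef}$ and $\inferrule{\iruleMPResProp}$ are handled in the same spirit, lifting the crashed endpoint out with $\inferrule{\iruleCongDefParLift}$ and $\inferrule{\iruleCongResLift}$ respectively (the side condition of the latter being met because the premise $\mpS \not\in \stEnv$ forces the offending endpoint to live on a session other than the restricted one).

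The genuinely delicate cases are the prefix rules $\inferrule{\iruleMPSel}$ and $\inferrule{\iruleMPBranch}$. Here the conclusion context splits into (i) the entry assigned to the subject channel $\mpC$, which is a subtype of an internal (resp.\ external) choice and hence not $\stStop$, since $\stStop$ is a subtype of no choice type; (ii) for selection, the entry typing the payload $\mpD$, which again cannot be $\stStop$ because payloads are basic or session types by the grammar; and (iii) the context shared with the continuation. Thus any $\stStop$ in the conclusion must already occur in the continuation's context, and I would appeal to the induction hypothesis on the continuation. This is exactly where the main obstacle lies: the induction hypothesis would present the continuation as congruent to a top-level crashed endpoint, but $\equiv$ cannot commute a parallel component out from under a communication prefix, so the argument used for parallel composition does not transfer. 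I expect the resolution to rest on the invariant that, in the processes to which this proposition is applied, crashed endpoints never occur guarded beneath a prefix --- they are spawned only at top level by the crash rules $\inferrule{\iruleMPCrashS}$ and $\inferrule{\iruleMPCrashR}$ and stay there along reductions --- so a well-typed prefix has a $\stStop$-free continuation context, and the prefix subderivation inherits the absence of $\stStop$ directly. Making this invariant precise, rather than relying on the clean congruence propagation available in the other cases, is the crux I would need to discharge.
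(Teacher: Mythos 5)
Your proof strategy coincides with the paper's own: the paper's entire proof of \Cref{lem:uncrashed-process-no-crash-type} is the sentence ``by easy induction on the derivation'', and your structural cases (\inferrule{\iruleMPStop}, \inferrule{\iruleMPNil}, \inferrule{\iruleMPCall}, \inferrule{\iruleMPPar}, \inferrule{\iruleMPDef}, \inferrule{\iruleMPResProp}) are handled correctly. But the obstacle you isolate at the prefix rules \inferrule{\iruleMPSel} and \inferrule{\iruleMPBranch} is not a detail that more care could discharge: it is fatal, because the proposition as stated is \emph{false} for well-typed processes in which a crashed endpoint occurs guarded under a prefix, and nothing in \Cref{fig:mpst-rules} excludes such processes. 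Concretely, take $\mpP = \mpBranchSingle{\mpChanRole{\mpS}{\roleP}}{\roleQ}{\mpLab}{x}{\mpStop{\mpSi}{\roleR}}$ with $\mpS \neq \mpSi$. Rule \inferrule{\iruleMPStop} derives $\stJudge{\mpEnvEmpty}{\stEnvMap{\mpChanRole{\mpSi}{\roleR}}{\stStop} \stEnvComp \stEnvMap{x}{\tyUnit} \stEnvComp \stEnvMap{\mpChanRole{\mpS}{\roleP}}{\stEnd}}{\mpStop{\mpSi}{\roleR}}$, since $\stEnvEndP{\stEnvMap{x}{\tyUnit} \stEnvComp \stEnvMap{\mpChanRole{\mpS}{\roleP}}{\stEnd}}$ holds; then \inferrule{\iruleMPBranch} (singleton branch) derives $\stJudge{\mpEnvEmpty}{\stEnv}{\mpP}$ for $\stEnv = \stEnvMap{\mpChanRole{\mpS}{\roleP}}{\stExtSum{\roleQ}{}{\stChoice{\stLab}{\tyUnit} \stSeq \stEnd}} \stEnvComp \stEnvMap{\mpChanRole{\mpSi}{\roleR}}{\stStop}$. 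Here $\stEnvApp{\stEnv}{\mpChanRole{\mpSi}{\roleR}} = \stStop$, yet $\mpP$ is a single branching prefix, and no axiom of the congruence in \Cref{sec:structural-congruence} moves a subterm across a prefix (\inferrule{\iruleCongStopElim} can only \emph{erase} crashed endpoints already enclosed in a restriction of their own session), so $\mpP \not\equiv \mpStop{\mpSii}{\roleQi} \mpPar \mpR$ for any $\mpSii$, $\roleQi$, $\mpR$. Both hypotheses of the proposition hold while its conclusion fails, so no completion of your induction --- nor of the paper's ``easy induction'' --- can exist.

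Your closing intuition is therefore the right repair, but it must change the \emph{statement}, not just the proof: the invariant ``every crashed endpoint subterm of $\mpP$ is unguarded'' has to be added as a hypothesis, since (as the example shows) it cannot be recovered from well-typedness. With that hypothesis the prefix cases become straightforward: the continuations then contain no $\mpStop{\mpS}{\roleP}$ subterms at all, and a direct auxiliary induction shows that a process with no crashed-endpoint subterms never types against a context with a $\stStop$ entry, because \inferrule{\iruleMPStop} is the only rule that can introduce one. The restriction is harmless for the paper's purposes: the proposition is invoked only in the restriction case of the proof of \Cref{lem:subject-reduction}, on reducts whose crashed endpoints were produced by rules \inferrule{\iruleMPCrashS}, \inferrule{\iruleMPCrashR}, and \inferrule{\iruleMPRedCommE}, all of which create them unguarded in parallel, so the implicit invariant holds there. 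In short, your attempt is incomplete exactly where you said it was; the fault lies in the stated proposition rather than in your argument, and your diagnosis identifies the correct fix.
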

\begin{proof}
  By easy induction on the derivation of $\stJudge{\mpEnv}{\stEnv}{\mpP}$.
\end{proof}

\begin{proposition}
  \label{lem:typing-fc}
  If\; $\stJudge{\mpEnv}{\stEnv}{\mpP}$, \;then $\fc{\mpP} \subseteq \dom{\stEnv}$
  \;and\; $\forall \mpChanRole{\mpS}{\roleP} \in \dom{\stEnv} \setminus \fc{\mpP}: \stEnvApp{\stEnv}{\mpChanRole{\mpS}{\roleP}} \stSub \stEnd$.
\end{proposition}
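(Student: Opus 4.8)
The plan is to proceed by induction on the derivation of the typing judgment $\stJudge{\mpEnv}{\stEnv}{\mpP}$, proving the two conjuncts \emph{simultaneously}: the second conjunct (``every non-free channel-with-role entry is $\stSub \stEnd$'') must sit in the inductive hypothesis, because it is exactly what lets me discharge continuations in the rules that consume a channel. Before the induction I would record a trivial observation about the value/channel entailment judgment $\stEnvEntails{\stEnv[i]}{\mpD}{\stS}$: by inversion on \inferrule{\iruleMPGround} and \inferrule{\iruleMPSub}, if $\mpD$ is a channel with role $\mpChanRole{\mpS}{\roleP}$ then $\stEnv[i] = \stEnvMap{\mpChanRole{\mpS}{\roleP}}{\stS}$ and $\fc{\mpD} = \{\mpChanRole{\mpS}{\roleP}\}$ coincides with the channel-with-role part of $\dom{\stEnv[i]}$, whereas if $\mpD$ is a basic value or a plain variable then $\dom{\stEnv[i]}$ contains no channel with role and $\fc{\mpD} = \emptyset$. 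This aligns the free channels of subjects and payloads with their entailment contexts.

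For the base cases, rule \inferrule{\iruleMPNil} gives $\fc{\mpNil} = \emptyset$ together with $\stEnvEndP{\stEnv}$, which (unfolding \inferrule{\iruleMPEnd} and \inferrule{\iruleMPSub}) forces every channel-with-role in $\dom{\stEnv}$ to have a type $\stSub \stEnd$; rule \inferrule{\iruleMPStop} is analogous, the single $\stStop$-typed endpoint lying in $\fc{\mpStopDef}$ and all other entries being governed by $\stEnvEndP{\cdot}$. The compositional cases are routine but hinge on the disjointness of context composition $\stEnvComp$: for \inferrule{\iruleMPPar} the free channels and domains split along $\stEnv[1] \stEnvComp \stEnv[2]$, so a leftover channel-with-role sits in exactly one component and the hypothesis applies directly. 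For \inferrule{\iruleMPBranch} and \inferrule{\iruleMPSel} the subject $\mpC$ (and, for selection, the payload $\mpD$) is free in $\mpP$, so the entailment observation shows that the channels-with-roles of $\stEnv[1]$ (and $\stEnv[2]$) are all in $\fc{\mpP}$; any leftover endpoint therefore lives in the shared $\stEnv$, and I would invoke the hypothesis on one continuation premise (available since $I \neq \emptyset$), exploiting that the continuation rebinds $\mpC$ with its residual type while leaving the $\stEnv$-part unchanged. The \inferrule{\iruleMPDef} case uses that the definition body is typed under a context whose domain contains only variables, so its free channels-with-roles are empty, and \inferrule{\iruleMPCall} combines the entailment observation for each argument with $\stEnvEndP{\stEnv[0]}$ for the terminated part.

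The step I expect to require the most care is rule \inferrule{\iruleMPResProp}, since restriction $\mpRes{\mpS}{\mpP}$ binds \emph{all} endpoints of session $\mpS$ at once. Here I would use the side condition $\mpS \!\not\in\! \stEnv$ to guarantee that the block $\stEnvi = \setenum{\stEnvMap{\mpChanRole{\mpS}{\roleP}}{\stT[\roleP]}}_{\roleP \in I}$ is disjoint from $\stEnv$ and consists exactly of session-$\mpS$ entries; then $\fc{\mpRes{\mpS}{\mpP}} = \fc{\mpP} \setminus \dom{\stEnvi}$, and the hypothesis on $\stJudge{\mpEnv}{\stEnv \stEnvComp \stEnvi}{\mpP}$ yields $\fc{\mpP} \subseteq \dom{\stEnv} \cup \dom{\stEnvi}$, giving the first conjunct after removing $\dom{\stEnvi}$. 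For the second conjunct, a leftover endpoint of $\mpRes{\mpS}{\mpP}$ necessarily belongs to $\dom{\stEnv}$ (never to $\stEnvi$, by $\mpS \!\not\in\! \stEnv$) and is not free in $\mpP$ either, so its $\stSub \stEnd$ type is delivered by the hypothesis. Crucially, the safety assumption $\predPApp{\stEnvi}$ plays no role in this proposition, so no reasoning about crashes or reliability assumptions is needed anywhere in the argument.
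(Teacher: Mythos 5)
Your proof is correct and takes exactly the same approach as the paper: the paper's entire proof of this proposition reads ``By easy induction on the derivation of $\stJudge{\mpEnv}{\stEnv}{\mpP}$'', and your argument is a faithful elaboration of that induction. The details you supply---proving both conjuncts simultaneously so the $\stSub \stEnd$ clause is available in the inductive hypothesis, the inversion observation on $\stEnvEntails{\stEnv[i]}{\mpD}{\stS}$ via \inferrule{\iruleMPGround}/\inferrule{\iruleMPSub}, and the use of $\mpS \!\not\in\! \stEnv$ in the \inferrule{\iruleMPResProp} case---are all sound and are precisely what the paper's ``easy induction'' leaves implicit.
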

\begin{proof}
  By easy induction on the derivation of $\stJudge{\mpEnv}{\stEnv}{\mpP}$.
\end{proof}

\begin{proposition}
  \label{lem:fc-not-end-stop}
  Assume $\stJudge{\mpEnv}{\stEnv}{\mpP}$.
  Then, for all $\mpChanRole{\mpS}{\roleP} \in \fc{\mpP}$,
  we have $\stEnvApp{\stEnv}{\mpChanRole{\mpS}{\roleP}} \stNotSub \stEnd$.
\end{proposition}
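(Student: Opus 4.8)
The plan is to proceed by induction on the derivation of $\stJudge{\mpEnv}{\stEnv}{\mpP}$, with a case analysis on the last typing rule of \Cref{fig:mpst-rules}. Before the induction I would isolate two facts about the subtyping relation of \Cref{def:subtyping}, both read off directly from which rules can place $\stEnd$ on either side of $\stSub$. \emph{(F1):} no internal choice, external choice, or $\stStop$ is a subtype of $\stEnd$ --- indeed the only rules concluding $\stU \stSub \stEnd$ are \inferrule{\iruleStSubEnd} and \inferrule{\iruleStSubRecL}, so $\stU \stSub \stEnd$ forces $\stU$ to unfold to $\stEnd$; dually (via \inferrule{\iruleStSubEnd} and \inferrule{\iruleStSubRecR}) $\stEnd \stSub \stU$ forces $\stU$ to unfold to $\stEnd$. \emph{(F2):} the predicate ``$\stNotSub \stEnd$'' is downward closed, i.e.\ if $\stSi \stSub \stS$ and $\stS \stNotSub \stEnd$, then $\stSi \stNotSub \stEnd$. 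This follows from (F1): were $\stSi \stSub \stEnd$, then $\stSi$ would unfold to $\stEnd$, whence $\stEnd \stSub \stS$, whence $\stS$ would unfold to $\stEnd$ and $\stS \stSub \stEnd$, contradicting the hypothesis.

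The base and structural cases are then routine. For \inferrule{\iruleMPNil} we have $\fc{\mpNil} = \emptyset$, so the claim is vacuous. For \inferrule{\iruleMPStop} the endpoint $\mpChanRole{\mpS}{\roleP}$ carries type $\stStop$, and $\stStop \stNotSub \stEnd$ directly by (F1). For \inferrule{\iruleMPPar}, \inferrule{\iruleMPResProp}, and \inferrule{\iruleMPDef} I would invoke the induction hypothesis on the premise typing the relevant sub-process, using \Cref{lem:typing-fc} to locate each free channel in the correct component of the (disjointly composed) context: any $\mpChanRole{\mpS}{\roleP} \in \fc{\mpP}$ lies in exactly one component, where its type is unchanged, so its non-$\stEnd$-ness transfers from the IH. In \inferrule{\iruleMPResProp} the bound session $\mpS$ is excluded from $\fc{\mpP}$, so the lookups fall into $\stEnv$ rather than the restricted $\stEnvi$; and in \inferrule{\iruleMPDef} the body is typed under a context of variable bindings only, so by \Cref{lem:typing-fc} it has no free channels and $\fc{\mpP}$ reduces to the free channels of the continuation.

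The genuinely load-bearing cases are \inferrule{\iruleMPSel}, \inferrule{\iruleMPBranch}, and \inferrule{\iruleMPCall}, where (F1) and (F2) do the work. For the subject channel $\mpC$ of a selection (resp.\ branching), inversion through \inferrule{\iruleMPSub} gives $\stEnvApp{\stEnv}{\mpC} \stSub \stIntSum{\roleQ}{}{\ldots}$ (resp.\ $\stSub \stExtSum{\roleQ}{}{\ldots}$); by (F1) the choice type is $\stNotSub \stEnd$, so by (F2) $\stEnvApp{\stEnv}{\mpC} \stNotSub \stEnd$ (when $\mpC$ is a channel; otherwise it is a variable, hence not in $\fc{\mpP}$). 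For a payload channel $\mpD$ in \inferrule{\iruleMPSel} and each channel argument $\mpD[i]$ in \inferrule{\iruleMPCall}, inversion gives a type $\stSi \stSub \stS$ with $\stS$ the declared payload type, and the highlighted side condition $\stS \stNotSub \stEnd$ together with (F2) yields $\stSi \stNotSub \stEnd$; the remaining $\stEnd$-ed entries of $\stEnv[0]$ in \inferrule{\iruleMPCall} (where $\stEnvEndP{\stEnv[0]}$) are disjoint from the argument channels and hence not in $\fc{\mpP}$. For channels occurring in a continuation or branch body, the IH applies in the premise context, which differs from $\stEnv$ only on $\mpC$ (already handled as the subject) and on freshly-bound variables (which are not channels-with-roles), so the IH transfers unchanged. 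I expect the main obstacle to be stating and justifying (F1)/(F2) cleanly, since they hinge on $\stEnd$ being related only to its own unfoldings on \emph{both} sides of $\stSub$; everything else is bookkeeping over the disjoint context splits, for which \Cref{lem:typing-fc} is the key tool.
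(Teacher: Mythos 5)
Your proposal is correct and follows essentially the same route as the paper's proof: induction on the typing derivation, with the load-bearing cases being \inferrule{\iruleMPSel}, \inferrule{\iruleMPBranch}, and \inferrule{\iruleMPCall}, where the highlighted side conditions $\stS \stNotSub \stEnd$ and the structure of subtyping yield the claim, and \Cref{lem:typing-fc} handles the bookkeeping over disjoint context splits. The only difference is presentational: you isolate as explicit facts (F1)/(F2) the properties of $\stSub$ (that $\stEnd$ is related only to its own unfoldings, hence ``$\stNotSub \stEnd$'' is downward closed) which the paper uses silently when it passes from $\stS[i] \stNotSub \stEnd$ in the rule premises to $\stEnvApp{\stEnv[i]}{\mpD[i]} \stNotSub \stEnd$ for the actual context entries.
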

\begin{proof}
  By induction on the typing derivation of $\stJudge{\mpEnv}{\stEnv}{\mpP}$,
  using the rules in \Cref{fig:mpst-rules}.
  We develop the two most interesting case (the others are similar and easier).

  \medskip
  \noindent
  Base case \inferrule{\iruleMPCall}.  We have:
  \begin{flalign}
    \label{eq:fc-not-end-stop:call:typing}
    \begin{array}{@{}l@{}}
    \begin{array}{@{}l@{}}
      \mpP = \mpCall{\mpX}{\mpD[1],\ldots,\mpD[n]}
      \\
      \stEnv = \stEnv[0] \stEnvComp \stEnv[1] \stEnvComp \ldots \stEnvComp \stEnv[n]
    \end{array}
    \;\text{ such that}\;
    \\\qquad\qquad\qquad
    \inference[\iruleMPCall]{
        \mpEnvEntails{\mpEnv}{X}{
          \stS[1],\ldots,\stS[n]
        }
        &
        \stEnvEndP{\stEnv[0]}
        &
        \forall i \in 1..n
        &
        \stEnvEntails{\stEnv[i]}{\mpD[i]}{\stS[i]}
        &
        \stS[i] \stNotSub \stEnd
    }{
      \stJudge{\mpEnv}{
        \stEnv[0] \stEnvComp
        \stEnv[1] \stEnvComp \ldots \stEnvComp \stEnv[n]
      }{
        \mpCall{\mpX}{\mpD[1],\ldots,\mpD[n]}
      }
    }
    \end{array}
  \end{flalign}
  Now observe:
  \begin{flalign}
    \label{eq:fc-not-end-stop:call:fc-p}
    &\fc{\mpP} \subseteq \setcomp{\mpD[i]}{i \in 1..n}
    \subseteq \bigcup_{i \in 1..n}\dom{\stEnv[i]}
    &\text{(by \eqref{eq:fc-not-end-stop:call:typing} and \Cref{lem:typing-fc})}
    \\
    \label{eq:fc-not-end-stop:call:di}
    &\forall i \in 1..n: \mpD[i] \in \fc{\mpP} \implies \stEnvApp{\stEnv[i]}{\mpD[i]} \stNotSub \stEnd
    &\text{(by \eqref{eq:fc-not-end-stop:call:typing})}
    \\
    \nonumber
    &\forall \mpChanRole{\mpS}{\roleP} \in \fc{\mpP}:
    \stEnvApp{\stEnv}{\mpChanRole{\mpS}{\roleP}} \stNotSub \stEnd
    &\text{(by \eqref{eq:fc-not-end-stop:call:typing}, \eqref{eq:fc-not-end-stop:call:fc-p}, and \eqref{eq:fc-not-end-stop:call:di})}
  \end{flalign}
  which is the thesis.

  \medskip
  \noindent
  Inductive case \inferrule{\iruleMPSel}.  We have:
  \begin{flalign}
    \label{eq:fc-not-end-stop:sel:typing}
    &
      \begin{array}{@{}l@{}}
      \begin{array}{@{}l@{}}
      \mpP = \mpSel{\mpC}{\roleQ}{\mpLab}{\mpD}{\mpPi}
      \\
      \stEnv = \stEnv[0] \stEnvComp \stEnv[1] \stEnvComp \stEnv[2]
    \end{array}
    \;\text{ such that}\;
    \\\qquad\qquad\qquad
    \inference[\iruleMPSel]{
      \stEnvEntails{\stEnv[1]}{\mpC}{
        \stIntSum{\roleQ}{}{\stChoice{\stLab}{\stS} \stSeq \stT}
      }
      &
      \stEnvEntails{\stEnv[2]}{\mpD}{\tyS}
      &
      \stS \stNotSub \stEnd
      &
      \stJudge{\mpEnv}{
        \stEnv[0] \stEnvComp \stEnvMap{\mpC}{\stT}
      }{
        \mpPi
      }
    }{
      \stJudge{\mpEnv}{
        \stEnv[0] \stEnvComp \stEnv[1] \stEnvComp \stEnv[2]
      }{
        \mpSel{\mpC}{\roleQ}{\mpLab}{\mpD}{\mpPi}
      }
    }
    \end{array}
  \end{flalign}
  Now observe:
  \begin{flalign}
    \label{eq:fc-not-end-stop:sel:fc-p}
    &\fc{\mpP} \subseteq \setenum{\mpC, \mpD} \cup \fc{\mpPi}
    \subseteq \dom{\stEnv[0]} \cup \dom{\stEnv[1]} \cup \dom{\stEnv[2]}
    \hspace{-10mm}
    &\text{(by \eqref{eq:fc-not-end-stop:sel:typing} and \Cref{lem:typing-fc})}
    \\
    \label{eq:fc-not-end-stop:sel:c-d}
    &\stEnvApp{\stEnv[1]}{\mpC} \stNotSub \stEnd \;\text{ and }\;
    \left(\mpD \in \fc{\mpP} \implies \stEnvApp{\stEnv[2]}{\mpD} \stNotSub \stEnd\right)
    &\text{(by \eqref{eq:fc-not-end-stop:sel:typing})}
    \\
    \label{eq:fc-not-end-stop:sel:ih}
    &\forall \mpChanRole{\mpS}{\roleP} \in \fc{\mpPi}: \stEnvApp{(\stEnv[0] \stEnvComp \stEnvMap{\mpC}{\stT})}{\mpChanRole{\mpS}{\roleP}} \stNotSub \stEnd
    &\text{(by i.h.)}
    \\
    \label{eq:fc-not-end-stop:sel:stenv-zero}
    &\forall \mpChanRole{\mpS}{\roleP} \in \fc{\mpPi} \setminus \setenum{\mpC}: \stEnvApp{\stEnv[0]}{\mpChanRole{\mpS}{\roleP}} \stNotSub \stEnd
    &\text{(by \eqref{eq:fc-not-end-stop:sel:ih})}
    \\
    \nonumber
    &\forall \mpChanRole{\mpS}{\roleP} \in \fc{\mpP}:
    \stEnvApp{\stEnv}{\mpChanRole{\mpS}{\roleP}} \stNotSub \stEnd
    &\text{(by \eqref{eq:fc-not-end-stop:sel:typing},
      \eqref{eq:fc-not-end-stop:sel:fc-p},
      \eqref{eq:fc-not-end-stop:sel:stenv-zero},
      and \eqref{eq:fc-not-end-stop:sel:c-d})}
  \end{flalign}
  which is the thesis.
\end{proof}

\lemSubjectReduction*
\begin{proof}
  Let us recap the assumptions:
  \begin{align}
    \label{item:subjred:typed-unused}
    &\stJudge{\mpEnv}{\stEnv}{\mpP}
    \\
    \label{item:subjred:stenv-safe}
    &\forall \mpS \in \stEnv: \exists
    \rolesR[\mpS]:\stEnvSafeSessRolesSP{\mpS}{\rolesR[\mpS]}{\stEnv}
    \\
    \label{item:subjred:no-reliable-crash}
    &\mpP \mpMoveMaybeCrashChecked \mpPi
  \end{align}

  The proof proceeds by induction of the derivation of %
  $\mpP \mpMoveMaybeCrashChecked \mpPi$, %
  and when the reduction holds by rule $\inferrule{\iruleMPRedCtx}$, %
  with a further structural induction on the reduction context $\mpCtx$. %
  Most cases hold %
  by inversion of the typing $\stJudge{\mpEnv}{\stEnv}{\mpP}$, %
  and by applying the induction hypothesis. %

  \noindent
  Case \inferrule{\iruleMPRedComm}:
  \begin{flalign}
    \label{eq:subj-red:comm:p-pi}%
    &\begin{array}{rcl}%
      \textstyle%
      \mpP &=&%
      \mpBranch{\mpChanRole{\mpS}{\roleP}}{\roleQ}{i \in I}{%
        \mpLab[i]}{x_i}{\mpP[i]}{}
      \,\mpPar\,%
      \mpSel{\mpChanRole{\mpS}{\roleQ}}{\roleP}{\mpLab[k]}{%
        \mpW
      }{\mpQ}%
      \\[1mm]%
      \mpPi &=&%
      \mpP[k]\subst{\mpFmt{x_k}}{
        \mpW
      }%
      \,\mpPar\,%
      \mpQ%
      \quad%
      (k \in I)%
    \end{array}
    &\text{%
      (by inversion of \inferrule{\iruleMPRedComm})%
    }%
    \\
    \label{eq:subj-red:comm:p-typing}%
    &
    \begin{array}{l}
    \stEnv = \stEnv[\stExtC] \stEnvComp \stEnv[\stIntC]
    \quad\text{s.t.}\quad%
    \\\qquad
    \inference[\iruleMPPar]{%
      \begin{array}{l}
        \stJudge{\mpEnv}{%
          \stEnv[\stExtC]%
        }{%
          \mpBranch{\mpChanRole{\mpS}{\roleP}}{\roleQ}{i \in I}{%
            \mpLab[i]}{x_i}{\mpP[i]}{}
        }%
        \\%
        \stJudge{\mpEnv}{%
          \stEnv[\stIntC]%
        }{%
          \mpSel{\mpChanRole{\mpS}{\roleQ}}{\roleP}{\mpLab[k]}{%
            \mpW
          }{\mpQ}
        }%
      \end{array}
    }{%
      \stJudge{\mpEnv}{%
        \stEnv%
      }{%
        \mpP%
      }%
    }%
    \end{array}
    &\text{%
      (by \eqref{eq:subj-red:comm:p-pi} %
      and inv.~of \inferrule{\iruleMPPar})%
    }%
    \\[1mm]%
    &\label{eq:subj-red:comm:p-branch-typing}
    \begin{array}{@{}l@{}}
    \stEnv[\stExtC] = \stEnv[0] \stEnvComp \stEnv[1]%
    \quad\text{s.t.}\quad%
    \\\qquad
    \inference[\iruleMPBranch]{%
      \begin{array}{l}
        \stEnvEntails{\stEnv[1]}{\mpChanRole{\mpS}{\roleP}}{%
          \stExtSum{\roleQ}{i \in I}{\stChoice{\stLab[i]}{\stS[i]} \stSeq \stT[i]}%
        }%
        \\%
        \forall i \in I%
        \quad%
        \stJudge{\mpEnv}{%
          \stEnv[0] \stEnvComp%
          \stEnvMap{x_i}{\stS[i]} \stEnvComp%
          \stEnvMap{\mpChanRole{\mpS}{\roleP}}{\stT[i]}%
        }{%
          \mpP[i]%
        }%
      \end{array}
    }{%
      \stJudge{\mpEnv}{%
        \stEnv[\stExtC]%
      }{%
        \mpBranch{\mpChanRole{\mpS}{\roleP}}{\roleQ}{i \in I}{\mpLab[i]}{x_i}{\mpP[i]}{}
      }%
    }%
    \hspace{-0mm}%
    \end{array}
    &\text{%
      (by \eqref{eq:subj-red:comm:p-typing} %
      and inv.~of \inferrule{\iruleMPBranch})%
    }%
    \\[2mm]%
    &\label{eq:subj-red:comm:p-sel-typing}%
    \begin{array}{@{}l@{}}
    \stEnv[\stIntC] = \stEnv[2] \stEnvComp \stEnv[3] \stEnvComp \stEnv[4]%
    \;\;\text{s.t.}\;\;%
    \\\quad
    \inference[\iruleMPSel]{%
      \begin{array}{l}
        \stEnvEntails{\stEnv[4]}{\mpChanRole{\mpS}{\roleQ}}{
          \stIntSum{\roleP}{}{\stChoice{\stLab[k]}{\stSi[k]} \stSeq \stTi[k]}%
        }%
        \\%
        \stEnvEntails{\stEnv[3]}{%
          \mpW
        }{\stSi[k]}%
        \quad
        \stSi[k] \stNotSub \stEnd
        \quad%
        \stJudge{\mpEnv}{%
          \stEnv[2] \stEnvComp \stEnvMap{\mpChanRole{\mpS}{\roleQ}}{\stTi[k]}%
        }{%
          \mpQ%
        }%
      \end{array}
    }{%
      \stJudge{\mpEnv}{%
        \stEnv[\stIntC]%
      }{%
        \mpSel{\mpChanRole{\mpS}{\roleQ}}{\roleP}{\mpLab[k]}{
          \mpW
        }{\mpQ}%
      }%
    }%
    \end{array}
    \hspace{-15mm}%
    &\text{%
      (by \eqref{eq:subj-red:comm:p-typing} %
      and inv.~of \inferrule{\iruleMPSel})%
    }%
  \end{flalign}
  Now, notice that:%
  \begin{flalign}
    &\label{eq:subj-red:comm:stenv-composition}%
    \stEnv = \stEnv[0] \stEnvComp \stEnv[1] \stEnvComp \stEnv[2] \stEnvComp%
    \stEnv[3] \stEnvComp \stEnv[4]%
    &\text{%
      (by \eqref{eq:subj-red:comm:p-typing}, %
      \eqref{eq:subj-red:comm:p-branch-typing}, %
      and \eqref{eq:subj-red:comm:p-sel-typing})%
    }%
    \\%
    &\label{eq:subj-red:comm:stenv-i-composition}
    \stEnv[1] = \stEnvMap{\mpChanRole{\mpS}{\roleP}}{\stT}%
    \;\text{with}\;%
    \stT \stSub %
    \stExtSum{\roleQ}{i \in I}{\stChoice{\stLab[i]}{\stS[i]} \stSeq \stT[i]}%
    &\text{%
      (by \eqref{eq:subj-red:comm:p-branch-typing} %
      and \Cref{fig:mpst-rules}, rule \inferrule{\iruleMPSub})%
    }%
    \\%
    &\label{eq:subj-red:comm:stenv-iv-composition}
    \stEnv[4] = \stEnvMap{\mpChanRole{\mpS}{\roleQ}}{\stTi}%
    \;\text{with}\;
    \stTi \stSub%
    \stIntSum{\roleQ}{}{\stChoice{\stLab[k]}{\stSi[k]} \stSeq \stTi[k]}%
    &\text{%
      (by \eqref{eq:subj-red:comm:p-sel-typing} %
      and \Cref{fig:mpst-rules}, rule \inferrule{\iruleMPSub})%
    }%
    \\%
    &\label{eq:subj-red:comm:stenvii-suptype}%
    \begin{array}{l}
    \stEnv \stSub \stEnvii =%
    \stEnv[0] \stEnvComp%
    \stEnvi[1]
    \stEnvComp \stEnv[2] \stEnvComp%
    \stEnv[3] \stEnvComp%
    \stEnvi[4]
    \text{ \;where}
    \\
    \stEnvi[1] = \stEnvMap{\mpChanRole{\mpS}{\roleP}}{%
      \stExtSum{\roleQ}{i \in I}{\stChoice{\stLab[i]}{\stS[i]} \stSeq \stT[i]}%
    }%
    \\
    \stEnvi[4] =
    \stEnvMap{\mpChanRole{\mpS}{\roleQ}}{%
      \stIntSum{\roleQ}{}{\stChoice{\stLab[k]}{\stSi[k]} \stSeq \stTi[k]}%
    }%
    \hspace{-16mm}%
    \end{array}
    &\text{%
      (by \eqref{eq:subj-red:comm:stenv-composition}, %
      \eqref{eq:subj-red:comm:stenv-i-composition}, %
      \eqref{eq:subj-red:comm:stenv-iv-composition},
      and \Cref{def:mpst-env-subtype})%
    }%
    \\%
    &\label{eq:subj-red:comm:stenvii-safe}%
    \forall \mpS \in \stEnv: \stEnvSafeSessRolesSP{\mpS}{\rolesR[\mpS]}{\stEnvii}
    &\text{%
      (by %
      \ref{item:subjred:stenv-safe},
      \eqref{eq:subj-red:comm:stenvii-suptype} %
      and \Cref{lem:stenv-supertype-safe})%
    }%
    \\%
    &\label{eq:subj-red:comm:stenv-i-k-carried-sub}%
    k \in I%
    \quad\text{and}\quad%
    \stSi[k] \stSub \stS[k]%
    &%
    \hspace{-50mm}%
    \text{%
      (by \eqref{eq:subj-red:comm:stenvii-suptype}, %
      \eqref{eq:subj-red:comm:stenvii-safe} %
      and \Cref{def:mpst-env-safe}, %
      clause \inferrule{\iruleSafeComm})%
    }%
    \\%
    &\label{eq:subj-red:comm:stenvii-move-stenviii}%
    \stEnvii \stEnvMove \stEnviii%
    = \stEnv[0] \stEnvComp%
    \stEnvMap{\mpChanRole{\mpS}{\roleP}}{\stT[k]} \stEnvComp%
    \stEnv[2] \stEnvComp \stEnv[3] \stEnvComp%
    \stEnvMap{\mpChanRole{\mpS}{\roleQ}}{\stTi[k]}%
    &\text{%
      (by \eqref{eq:subj-red:comm:stenvii-suptype}, %
      \eqref{eq:subj-red:comm:stenv-i-k-carried-sub} %
      and \Cref{def:mpst-env-reduction})%
    }%
    \\%
    &\label{eq:subj-red:comm:stenviii-safe}%
    \forall \mpS \in \stEnv: \stEnvSafeSessRolesSP{\mpS}{\rolesR[\mpS]}{\stEnviii}%
    &\text{%
      (by \eqref{eq:subj-red:comm:stenvii-safe}, %
      \eqref{eq:subj-red:comm:stenvii-move-stenviii} %
      and \Cref{def:mpst-env-safe}, clause \inferrule{\iruleSafeMove})%
    }%
  \end{flalign}
  We can now use $\stEnviii$ to type $\mpPi$:
  \begin{flalign}
    &\label{eq:subj-red:pi-branch-cont-typing}%
    \stJudge{\mpEnv}{%
      \stEnv[0] \stEnvComp%
      \stEnvMap{x_k}{\stS[k]} \stEnvComp%
      \stEnvMap{\mpChanRole{\mpS}{\roleP}}{\stT[k]}%
    }{%
      \mpP[k]%
    }%
    &\text{%
      (by \eqref{eq:subj-red:comm:stenv-i-k-carried-sub}, %
      \eqref{eq:subj-red:comm:p-sel-typing} %
      and \eqref{eq:subj-red:comm:p-branch-typing})%
    }%
    \\[1mm]%
    &\label{eq:subj-red:comm-payload-entails}%
    \stEnvEntails{\stEnv[3]}{%
      \mpW
    }{\stS[k]}%
    &
      \begin{array}{@{}l@{}}
      \text{%
      (by \eqref{eq:subj-red:comm:p-sel-typing} %
      (for $\stEnvEntails{\stEnv[3]}{%
        \mpW
      }{\stSi[k]}$),}
      \eqref{eq:subj-red:comm:stenv-i-k-carried-sub}, %
      \\\text{
      transitivity of $\stSub$, %
      and \inferrule{\iruleMPSub})%
    }%
    \end{array}
    \\%
    &\label{eq:subj-red:pi-branch-cont-typing-env-subst-def}%
    \stEnv[0] \stEnvComp \stEnv[3] \stEnvComp%
    \stEnvMap{\mpChanRole{\mpS}{\roleP}}{\stT[k]}%
    \text{\; defined}%
    &\text{%
      (by \eqref{eq:subj-red:comm:p-sel-typing},
      \eqref{eq:subj-red:comm:p-branch-typing}, %
      and \eqref{eq:subj-red:comm:p-typing})%
    }%
    \\%
    &\label{eq:subj-red:comm:pi-branch-cont-typing-subst}%
    \stJudge{\mpEnv}{%
      \stEnv[0] \stEnvComp%
      \stEnv[3] \stEnvComp%
      \stEnvMap{\mpChanRole{\mpS}{\roleP}}{\stT[k]}%
    }{%
      \mpP[k]\subst{\mpFmt{x_k}}{
        \mpW
      }%
    }%
    &\text{%
      (by \eqref{eq:subj-red:pi-branch-cont-typing}, %
      \eqref{eq:subj-red:comm-payload-entails}, %
      \eqref{eq:subj-red:pi-branch-cont-typing-env-subst-def}, %
      and \Cref{lem:substitution})%
    }%
    \\[1mm]%
    &\label{eq:subj-red:comm:pi-branch-cont-typing-subst-stenviii}%
    \inference[\iruleMPPar]{%
      \begin{array}{l}
      \stJudge{\mpEnv}{%
        \stEnv[0] \stEnvComp%
        \stEnv[3] \stEnvComp%
        \stEnvMap{\mpChanRole{\mpS}{\roleP}}{\stT[k]}%
      }{%
        \mpP[k]\subst{\mpFmt{x_k}}{
          \mpW
        }%
      }%
      \\
      \stJudge{\mpEnv}{%
        \stEnv[2] \stEnvComp \stEnvMap{\mpChanRole{\mpS}{\roleQ}}{\stTi[k]}%
      }{%
        \mpQ%
      }%
      \end{array}
    }{%
      \stJudge{\mpEnv}{%
        \stEnviii%
      }{%
        \mpPi%
      }%
    }%
    &%
     \text{%
      (by \eqref{eq:subj-red:comm:pi-branch-cont-typing-subst}, %
      \eqref{eq:subj-red:comm:p-sel-typing}, %
      \eqref{eq:subj-red:comm:stenvii-move-stenviii}, %
      \eqref{eq:subj-red:comm:stenviii-safe} %
      and \eqref{eq:subj-red:comm:p-pi})%
    }%
  \end{flalign}
  We conclude this case by showing that there exists
  some $\stEnvi$ that satisfies the statement:
  \begin{flalign}
    &\label{eq:subj-red:comm:exists-stenvi}%
    \exists \stEnvi: %
    \stEnv \stEnvMove \stEnvi \stSub \stEnviii%
    &\text{%
      (by \eqref{eq:subj-red:comm:stenvii-suptype}, %
      \eqref{eq:subj-red:comm:stenvii-move-stenviii}, %
      and \Cref{lem:stenv-safe-reduction-sub})%
    }%
    \\%
    &\label{eq:subj-red:comm:stenvi-safe}%
    \forall \mpS \in \stEnvi: \stEnvSafeSessRolesSP{\mpS}{\rolesR[\mpS]}{\stEnvi}
    &\text{%
      (by \eqref{eq:subj-red:comm:exists-stenvi} %
      and \Cref{def:mpst-env-safe}, clause \inferrule{\iruleSafeMove})%
    }%
    \\%
    &\nonumber%
    \stJudge{\mpEnv}{%
      \stEnvi%
    }{%
      \mpPi%
    }%
    &\text{%
      (by \eqref{eq:subj-red:comm:pi-branch-cont-typing-subst-stenviii}, %
      \eqref{eq:subj-red:comm:exists-stenvi}, %
      and \Cref{lem:narrowing})%
    }%
  \end{flalign}

\noindent
Case \inferrule{\iruleMPRedCommE}:
\begin{flalign}
\label{eq:subj-red:comm-ii:p-pi}
&
\begin{array}{rcl}
\mpP &\equiv& \mpStop{\mpS}{\roleP}
           \,\mpPar\,
           \mpSel{\mpChanRole{\mpS}{\roleQ}}{\roleP}{\mpLab}{
             \mpChanRole{\mpSi}{\roleR}
           }{\mpQ}
\\
\mpPi &\equiv& \mpStop{\mpS}{\roleP}
            \,\mpPar\,
            \mpStop{\mpSi}{\roleR}
            \,\mpPar\,
            \mpQ
\end{array}
&\text{
  (by inversion of \inferrule{\iruleMPRedCommE})%
}
\\
\label{eq:subj-red:comm2:p-typing}%
&
\stEnv = \stEnv[\stStopSym],\stEnv[\stIntC]
  \;\;\text{s.t.}\;\;
  \inference[\iruleMPPar]{%
    \begin{array}{l}
      \stJudge{\mpEnv}{
        \stEnv[\stStopSym]%
      }{%
        \mpStop{\mpS}{\roleP}
      }%
      \\%
      \stJudge{\mpEnv}{%
        \stEnv[\stIntC]%
      }{%
        \mpSel{\mpChanRole{\mpS}{\roleQ}}{\roleP}{\mpLab}{%
          \mpChanRole{\mpSi}{\roleR}%
        }{\mpQ}
      }%
    \end{array}
  }{%
    \stJudge{\mpEnv}{%
      \stEnv%
    }{%
      \mpP%
    }%
  }%
&\text{
  (by \eqref{eq:subj-red:comm-ii:p-pi} and inv.\ of
  \inferrule{\iruleMPRedPar})%
}
\\
\label{eq:subj-red:comm2:p-crash-typing}
&\stEnv[\stStopSym] =
    \stEnv[0]\stEnvComp\stEnvMap{\mpChanRole{\mpS}{\roleP}}{\stStop}
  \;\;\text{s.t.}\;\;
  \inference[\iruleMPStop]{
    \stEnvEndP{\stEnv[0]}
  }{
    \stJudge{\mpEnv}{
      \stEnv[0]\stEnvComp\stEnvMap{\mpChanRole{\mpS}{\roleP}}{\stStop}
    }{
      \mpStop{\mpS}{\roleP}
    }
  }
&\text{
  (by \eqref{eq:subj-red:comm2:p-typing} and inv.\ of \inferrule{\iruleMPStop})
}
\\[2mm]
&\label{eq:subj-red:comm2:p-sel-typing}%
\begin{array}{l}
\stEnv[\stIntC] = \stEnv[1] \stEnvComp \stEnv[2] \stEnvComp \stEnv[3]%
\;\;\text{s.t.}\;\;%
\\\qquad
\inference[\iruleMPSel]{%
  \begin{array}{l}
    \stEnvEntails{\stEnv[3]}{\mpChanRole{\mpS}{\roleQ}}{
      \stIntSum{\roleP}{}{\stChoice{\stLab}{\stS} \stSeq \stT}%
    }%
    \\%
    \stEnvEntails{\stEnv[2]}{\mpChanRole{\mpSi}{\roleR}}{\stS}%
    \\%
    \stS \stNotSub \stEnd
    \\
    \stJudge{\mpEnv}{%
      \stEnv[1] \stEnvComp \stEnvMap{\mpChanRole{\mpS}{\roleQ}}{\stT}%
    }{%
      \mpQ%
    }%
  \end{array}
}{%
  \stJudge{\mpEnv}{%
    \stEnv[\stIntC]%
  }{%
    \mpSel{\mpChanRole{\mpS}{\roleQ}}{\roleP}{\mpLab[k]}{\mpChanRole{\mpSi}
{\roleR}}{\mpQ}%
  }%
}%
\hspace{-5mm}%
\end{array}
&\text{%
  (by \eqref{eq:subj-red:comm2:p-typing} %
  and inv.~of \inferrule{\iruleMPSel})%
}%
\end{flalign}

  Now, notice that:%
  \begin{flalign}
    &\label{eq:subj-red:comm2:stenv-composition}%
    \stEnv = \stEnv[0]
      \stEnvComp \stEnvMap{\mpChanRole{\mpS}{\roleP}}{\stStop}
      \stEnvComp \stEnv[1]
      \stEnvComp \stEnv[2]
      \stEnvComp \stEnv[3]
    &\text{
      (by \eqref{eq:subj-red:comm2:p-typing},
      \eqref{eq:subj-red:comm2:p-crash-typing},
      and \eqref{eq:subj-red:comm2:p-sel-typing})%
    }%
    \\%
    &\label{eq:subj-red:comm2:stenv-iv-composition}
    \stEnv[3] = \stEnvMap{\mpChanRole{\mpS}{\roleQ}}{\stTi}%
    \quad\text{with}\quad%
    \stTi \stSub%
    \stIntSum{\roleP}{}{\stChoice{\stLab}{\stS} \stSeq \stT}%
    &\text{%
      (by \eqref{eq:subj-red:comm:p-sel-typing} %
      and \Cref{fig:mpst-rules}, rule \inferrule{\iruleMPSub})%
    }%
    \\%
    &\label{eq:subj-red:comm2:stenvii-suptype}%
    \begin{array}{@{}l@{}}
      \stEnv \stSub \stEnvii \text{ \;where}\\
      \qquad\stEnvii = \stEnv[0] \stEnvComp%
      \stEnvMap{\mpChanRole{\mpS}{\roleP}}{%
        \stStop
      }%
      \stEnvComp \stEnv[1] \stEnvComp%
      \stEnv[2] \stEnvComp%
      \stEnvMap{\mpChanRole{\mpS}{\roleQ}}{%
        \stIntSum{\roleQ}{}{\stChoice{\stLab}{\stS} \stSeq \stT}%
      }%
    \end{array}
    \hspace{-10mm}
    &
    \text{%
          (by \eqref{eq:subj-red:comm2:stenv-composition}, %
          \eqref{eq:subj-red:comm2:stenv-iv-composition},
          and \Cref{def:mpst-env-subtype})%
        }%
    \\%
    &\label{eq:subj-red:comm2:stenvii-safe}%
    \forall \mpS \in \stEnv: \stEnvSafeSessRolesSP{\mpS}{\rolesR[\mpS]}{\stEnv}
    &\text{%
      (by \ref{item:subjred:stenv-safe}, %
      \eqref{eq:subj-red:comm2:stenvii-suptype} %
      and \Cref{lem:stenv-supertype-safe})%
    }%
    \\%
    &\label{eq:subj-red:comm2:stenvii-sr-not-stop}%
    \stEnvApp{\stEnv[2]}{\mpChanRole{\mpSi}{\roleR}} \stNotSub \stStop
    &
      \begin{array}{@{}r@{}}
      \text{%
      (by \eqref{eq:subj-red:comm2:p-sel-typing}
      (where $\stS$ cannot }
      \\\text{be $\stStop$)
      and \Cref{def:subtyping})%
    }
      \end{array}
    \\
    &\label{eq:subj-red:comm2:stenvii-sr-stop}%
    \stEnv[2] \stEnvMoveAnnot{\ltsCrash{\mpSi}{\roleR}} \stEnvMap{\mpChanRole{\mpSi}{\roleR}}{\stStop}
    &
      \begin{array}{@{}r@{}}
      \text{%
      (by \eqref{eq:subj-red:comm2:p-sel-typing}
      (for $\stEnvApp{\stEnv[2]}{\mpChanRole{\mpSi}{\roleR}} \stNotSub \stEnd$),}
      \\
      \text{\eqref{eq:subj-red:comm2:stenvii-sr-not-stop},
      and \Cref{def:mpst-env-reduction} rule \inferrule{\iruleTCtxCrash})%
    }
      \end{array}
    \\
    &\label{eq:subj-red:comm2:stenvii-move-stenviii}%
    \begin{array}{@{}l@{}}
      \stEnvii \stEnvMoveAnnot{\ltsSendRecv{\mpS}{\roleQ}{\roleP}{\stLab}} \stEnvMoveAnnot{\ltsCrash{\mpSi}{\roleR}} \stEnviii%
      \text{ \;where}
      \\
      \qquad\stEnviii%
      = \stEnv[0] \stEnvComp
      \stEnvMap{\mpChanRole{\mpS}{\roleP}}{\stStop} \stEnvComp
      \stEnv[1] \stEnvComp
      \stEnvMap{\mpChanRole{\mpSi}{\roleR}}{\stStop} \stEnvComp%
      \stEnvMap{\mpChanRole{\mpS}{\roleQ}}{\stT}%
    \end{array}
    &\text{%
      (by \eqref{eq:subj-red:comm2:stenvii-suptype}, %
      \eqref{eq:subj-red:comm2:stenvii-sr-stop},
      \Cref{def:mpst-env-reduction} rule \inferrule{\iruleTCtxSendToCrashed})%
    }%
    \\
    &\label{subj-red:comm2:si-r-unreliable}
    \roleR \not\in \rolesR[\mpSi]
    &\text{
      (by \eqref{eq:subj-red:comm-ii:p-pi},
      \ref{item:subjred:no-reliable-crash}
      and \Cref{def:assumption-abiding-reduction})%
    }
    \\%
    &\label{eq:subj-red:comm2:stenviii-safe}%
    \forall \mpS \in \stEnv: \stEnvSafeSessRolesSP{\mpS}{\rolesR[\mpS]}{\stEnviii}%
    &
      \begin{array}{@{}r@{}}
      \text{%
      (by \eqref{eq:subj-red:comm2:stenvii-safe}, %
      \eqref{eq:subj-red:comm2:stenvii-move-stenviii}, }
      \\
      \text{
      \eqref{subj-red:comm2:si-r-unreliable}
      and \Cref{def:mpst-env-safe}, clause \inferrule{\iruleSafeMove})%
    }%
      \end{array}
  \end{flalign}

  We can now use $\stEnviii$ to type $\mpPi$:
  \begin{flalign}
    &\label{eq:subj-red:comm2:pi-branch-cont-typing-subst-stenviii}%
  \begin{array}{@{}l@{}}
  \inference[\iruleMPPar]{
    \stJudge{\mpEnv}{
      \stEnv[0]
      \stEnvComp \stEnvMap{\mpChanRole{\mpS}{\roleP}}{\stStop}
    }{
      \mpStop{\mpS}{\roleP}
    }
    &
    \inference[\iruleMPPar]{
      \stJudge{\mpEnv}{
        \stEnvMap{\mpChanRole{\mpSi}{\roleR}}{\stStop}
      }{
        \mpStop{\mpSi}{\roleR}
      }
      \\[2mm]
      \stJudge{\mpEnv}{
        \stEnv[1]
        \stEnvComp \stEnvMap{\mpChanRole{\mpS}{\roleQ}}{\stT}
      }{
        \mpQ
     }
    }{
      \stJudge{\mpEnv}{
        \stEnvMap{\mpChanRole{\mpSi}{\roleR}}{\stStop}
        \stEnvComp
        \stEnv[1]
        \stEnvComp \stEnvMap{\mpChanRole{\mpS}{\roleQ}}{\stT}
      }{
        \mpStop{\mpSi}{\roleR} \mpPar \mpQ
      }
    }
  }{
    \stJudge{\mpEnv}{
        \stEnviii
      }{
        \mpPi
      }
  }
  \\
    \qquad\qquad\qquad\qquad\qquad\qquad\qquad\qquad\qquad
    \quad
    \text{%
      (by
      \eqref{eq:subj-red:comm2:p-sel-typing}, %
      \eqref{eq:subj-red:comm2:stenvii-move-stenviii}, %
      \eqref{eq:subj-red:comm2:stenviii-safe},
      \eqref{eq:subj-red:comm2:p-crash-typing},
      and \eqref{eq:subj-red:comm-ii:p-pi})%
    }%
  \end{array}
  \end{flalign}

  We conclude this case by showing that there exists
  some $\stEnvi$ that satisfies the statement:
  \begin{flalign}
    &\label{eq:subj-red:comm2:exists-stenvi}%
    \exists \stEnvi: %
    \stEnv \stEnvMoveMaybeCrashStar \stEnvi \stSub \stEnviii%
    &\text{%
      (by \eqref{eq:subj-red:comm2:stenvii-suptype}, %
      \eqref{eq:subj-red:comm2:stenvii-move-stenviii}, %
      \eqref{subj-red:comm2:si-r-unreliable}
      and \Cref{lem:stenv-safe-reduction-sub-ind})%
    }%
    \\%
    &\label{eq:subj-red:comm2:stenvi-safe}%
    \forall \mpS \in \stEnvi: \stEnvSafeSessRolesSP{\mpS}{\rolesR[\mpS]}{\stEnvi}%
    &\text{%
      (by \ref{item:subjred:stenv-safe},
      \eqref{eq:subj-red:comm2:exists-stenvi} %
      and \Cref{def:mpst-env-safe}, clause \inferrule{\iruleSafeMove})%
    }%
    \\%
    &\nonumber%
    \stJudge{\mpEnv}{%
      \stEnvi%
    }{%
      \mpPi%
    }%
    &\text{%
      (by \eqref{eq:subj-red:comm2:pi-branch-cont-typing-subst-stenviii}, %
      \eqref{eq:subj-red:comm:exists-stenvi}, %
      and \Cref{lem:narrowing})%
    }%
  \end{flalign}

\noindent
Case \inferrule{\iruleMPRedCommEGround}:
\begin{flalign}
  \nonumber%
  &
  \begin{array}{rcl}
  \mpP &\equiv& \mpStop{\mpS}{\roleP}
             \,\mpPar\,
             \mpSel{\mpChanRole{\mpS}{\roleQ}}{\roleP}{\mpLab}{
               \mpV
             }{\mpQ}
  \\
  \mpPi &\equiv& \mpStop{\mpS}{\roleP}
              \,\mpPar\,
              \mpQ
  \end{array}
  &\text{
    (by inversion of \inferrule{\iruleMPRedCommEGround})
  }
\end{flalign}
The proof is similar to case \inferrule{\iruleMPRedCommE} above, but simpler:
since a basic value $v$ is being sent to a crashed endpoint $\mpChanRole{\mpS}{\roleP}$,
we have that $\mpPi$ does not contain a new crashed session endpoint $\mpChanRole{\mpSi}{\roleR}$,
and the typing context $\stEnv[2]$ (which types the message payload $\mpV$) is empty
(by rule \inferrule{\iruleMPGround} in \Cref{fig:mpst-rules}).
Consequently, we can adapt the proof by omitting the crashed endpoint $\mpChanRole{\mpSi}{\roleR}$,
skipping step \eqref{eq:subj-red:comm2:stenvii-sr-stop},
and adjusting step \eqref{eq:subj-red:comm2:stenvii-move-stenviii} to have
$\stEnvii \stEnvMoveAnnot{\ltsSendRecv{\mpS}{\roleQ}{\roleP}{\stLab}} \stEnviii$.

\medskip

\noindent
Case \inferrule{\iruleMPRedCommD}:
\begin{flalign}
\label{eq:subj-red:comm3:p-pi}
&
\begin{array}{rcl}
\mpP &=& \mpBranch{\mpChanRole{\mpS}{\roleP}}{\roleQ}{i \in I}{\mpLab[i]}{x_i}{\mpP[i]}{\mpPii}
        \,\mpPar\,
        \mpStop{\mpS}{\roleQ}
\\
\mpPi &=& \mpPii
          \,\mpPar\,
          \mpStop{\mpS}{\roleQ}
\end{array}
&\text{
  (by inversion of \inferrule{\iruleMPRedCommD})%
}
\\
\label{eq:subj-red:comm3:p-typing}%
&
\stEnv = \stEnv[\stExtC],\stEnv[\stStopSym]
  \;\;\text{s.t.}\;\;
  \inference[\iruleMPPar]{%
    \begin{array}{l}
      \stJudge{\mpEnv}{%
        \stEnv[\stExtC]%
      }{%
        \mpBranch{\mpChanRole{\mpS}{\roleP}}{\roleQ}{i \in I}{\mpLab[i]}{x_i}{\mpP[i]}{\mpPii}
      }%
      \\%
      \stJudge{\mpEnv}{
        \stEnv[\stStopSym]%
      }{%
        \mpStop{\mpS}{\roleQ}
      }%
    \end{array}
  }{%
    \stJudge{\mpEnv}{%
      \stEnv%
    }{%
      \mpP%
    }%
  }%
\hspace{-5mm}
&\text{
  (by \eqref{eq:subj-red:comm3:p-pi} and inv.\ of \inferrule{\iruleMPRedPar})%
}
\\[2mm]
&\label{eq:subj-red:comm3:p-branch-typing}%
\stEnv[\stExtC] = \stEnv[0] \stEnvComp \stEnv[1] %
\;\;\text{s.t.}\;\;
\inference[\iruleMPBranch]{%
  \begin{array}{l}
    \stEnvEntails{\stEnv[1]}{\mpChanRole{\mpS}{\roleP}}{%
      \stExtSumErr{\roleQ}{i \in I}{\stChoice{\stLab[i]}{\stS[i]} \stSeq \stT[i]}{\stT}
    }%
    \\%
    \forall i \in I%
    \quad%
    \stJudge{\mpEnv}{%
      \stEnv[0] \stEnvComp%
      \stEnvMap{x_i}{\stS[i]} \stEnvComp%
      \stEnvMap{\mpChanRole{\mpS}{\roleP}}{\stT[i]}%
    }{%
      \mpP[i]%
    }%
    \\
    \stJudge{\mpEnv}{
      \stEnv[0] \stEnvComp \stEnvMap{\mpChanRole{\mpS}{\roleP}
    }{
      \stT
    }}{\mpPii}
  \end{array}
}{%
  \stJudge{\mpEnv}{%
    \stEnv[\stExtC]%
  }{%
    \mpBranch{\mpChanRole{\mpS}{\roleP}}{\roleQ}{i \in I}{\mpLab[i]}{x_i}{\mpP[i]}{\mpPii}
  }%
}%
\hspace{-5mm}%
&\text{%
  (by \eqref{eq:subj-red:comm3:p-typing} %
  and inv.~of \inferrule{\iruleMPBranch})%
}%
\\
\label{eq:subj-red:comm3:p-crash-typing}
&\stEnv[\stStopSym] =
    \stEnv[2]\stEnvComp\stEnvMap{\mpChanRole{\mpS}{\roleQ}}{\stStop}
  \;\;\text{s.t.}\;\;
  \inference[\iruleMPStop]{
    \stEnvEndP{\stEnv[2]}
  }{
    \stJudge{\mpEnv}{
      \stEnv[2]\stEnvComp\stEnvMap{\mpChanRole{\mpS}{\roleQ}}{\stStop}
    }{
      \mpStop{\mpS}{\roleQ}
    }
  }
&\text{
  (by \eqref{eq:subj-red:comm3:p-typing} and inv.\ of
  \inferrule{\iruleMPStop})%
}
\end{flalign}
Now, notice that:
\begin{flalign}
  &\label{eq:subj-red:comm3:stenv-composition}%
  \stEnv = \stEnv[0]
    \stEnvComp \stEnv[1]
    \stEnvComp \stEnv[2]
    \stEnvComp \stEnvMap{\mpChanRole{\mpS}{\roleQ}}{\stStop}
  &\text{%
    (by \eqref{eq:subj-red:comm3:p-typing},
    \eqref{eq:subj-red:comm3:p-branch-typing},
    and \eqref{eq:subj-red:comm3:p-crash-typing})%
  }%
  \\
  &\label{eq:subj-red:comm3:stenv-i-composition}
  \stEnv[1] = \stEnvMap{\mpChanRole{\mpS}{\roleP}}{\stTi}%
  \quad\text{with}\quad%
  \stTi \stSub %
  \stExtSumErr{\roleQ}{i \in I}{\stChoice{\stLab[i]}{\stS[i]} \stSeq \stT[i]}{\stT}
  &\text{%
    (by \eqref{eq:subj-red:comm3:p-branch-typing} %
    and \Cref{fig:mpst-rules}, rule \inferrule{\iruleMPSub})%
  }%
  \\%
  &\label{eq:subj-red:comm3:stenvii-suptype}%
  \begin{array}{@{}l@{}}
    \stEnv \stSub \stEnvii \text{ \;where} \\
    \qquad \stEnvii = \stEnv[0] \stEnvComp%
    \stEnvMap{\mpChanRole{\mpS}{\roleP}}{%
      \stExtSumErr{\roleQ}{i \in I}{\stChoice{\stLab[i]}{\stS[i]} \stSeq \stT[i]}{\stT}
    }%
    \stEnvComp \stEnv[2] \stEnvComp
    \stEnvMap{\mpChanRole{\mpS}{\roleQ}}{%
      \stStop
    }%
  \end{array}
  &\text{%
    (by \eqref{eq:subj-red:comm3:stenv-composition}, %
    \eqref{eq:subj-red:comm3:stenv-i-composition}, %
    and \Cref{def:mpst-env-subtype})%
  }%
  \\%
  &\label{eq:subj-red:comm3:stenvii-move-stenviii}%
  \stEnvii \stEnvMoveAnnot{\ltsCrDe{\mpS}{\roleP}{\roleQ}} \stEnviii%
  = \stEnv[0] \stEnvComp%
  \stEnvMap{\mpChanRole{\mpS}{\roleP}}{\stT} \stEnvComp%
  \stEnv[2] \stEnvComp
  \stEnvMap{\mpChanRole{\mpS}{\roleQ}}{\stStop}%
  &\text{%
    (by \eqref{eq:subj-red:comm3:stenvii-suptype} %
    and \Cref{def:mpst-env-reduction})%
  }%
\end{flalign}

  We can now use $\stEnviii$ to type $\mpPi$:
  \begin{flalign}
    &\label{eq:subj-red:comm3:pi-branch-cont-typing-subst-stenviii}%
    \inference[\iruleMPPar]{%
      \stJudge{\mpEnv}{%
        \stEnv[0] \stEnvComp%
        \stEnvMap{\mpChanRole{\mpS}{\roleP}}{\stT}%
      }{%
        \mpPii%
      }%
      \quad%
      \stJudge{\mpEnv}{%
        \stEnv[2] \stEnvComp \stEnvMap{\mpChanRole{\mpS}{\roleQ}}{\stStop}%
      }{%
        \mpStop{\mpS}{\roleQ}
      }%
    }{%
      \stJudge{\mpEnv}{%
        \stEnviii%
      }{%
        \mpPi%
      }%
    }%
    &\text{%
      (by \eqref{eq:subj-red:comm3:p-branch-typing}, %
      \eqref{eq:subj-red:comm3:p-crash-typing}, %
      \eqref{eq:subj-red:comm3:stenvii-move-stenviii}, %
      and \eqref{eq:subj-red:comm3:p-pi})%
    }%
  \end{flalign}

  We conclude this case by showing that there exists
  some $\stEnvi$ that satisfies the statement:
  \begin{flalign}
    &\label{eq:subj-red:comm3:exists-stenvi}%
    \exists \stEnvi: %
    \stEnv \stEnvMoveMaybeCrashStar \stEnvi \stSub \stEnviii%
    &\text{%
      (by \eqref{eq:subj-red:comm3:stenvii-suptype}, %
      \eqref{eq:subj-red:comm3:stenvii-move-stenviii}, %
      and \Cref{lem:stenv-safe-reduction-sub})%
    }%
    \\%
    &\label{eq:subj-red:comm3:stenvi-safe}%
    \forall \mpS \in \stEnvi: \stEnvSafeSessRolesSP{\mpS}{\rolesR[\mpS]}{\stEnvi}
    &\text{%
      (by \eqref{eq:subj-red:comm3:exists-stenvi} %
      and \Cref{def:mpst-env-safe}, clause \inferrule{\iruleSafeMove})%
    }%
    \\%
    &\nonumber%
    \stJudge{\mpEnv}{%
      \stEnvi%
    }{%
      \mpPi%
    }%
    &\text{%
      (by \eqref{eq:subj-red:comm3:pi-branch-cont-typing-subst-stenviii}, %
      \eqref{eq:subj-red:comm:exists-stenvi}, %
      and \Cref{lem:narrowing})%
    }%
  \end{flalign}

\noindent
Case \inferrule{\iruleMPCrashS}:
\begin{flalign}
  \label{eq:subj-red:crash1:p-pi}%
  &\begin{array}{r@{\;}c@{\;}l}%
    \textstyle%
    \mpP &=&%
    \mpSel{\mpChanRole{\mpS}{\roleP}}{\roleQ}{\mpLab}{
              \mpW
            }{\mpQ}
    \\[1mm]%
    \mpPi &=&%
    \mpBigPar{j \in J}{\mpStop{\mpS[j]}{\roleP[j]}}
    \quad\text{where }%
    \setenum{\mpChanRole{\mpS[j]}{\roleP[j]}}_{j \in J} =
    \fc{\mpP} %
  \end{array}
  &\text{%
    (by inversion of \inferrule{\iruleMPCrashS})%
  }%
  \\
  &\label{eq:subj-red:crash1:p-stenv}
  \stEnv = \stEnv[0]\stEnvComp\stEnv[1]\stEnvComp\stEnv[2]
  \;\;\text{s.t.}\;\;
  \inference[\iruleMPSel]{%
    \begin{array}{l}
      \stEnvEntails{\stEnv[2]}{\mpChanRole{\mpS}{\roleP}}{
        \stIntSum{\roleQ}{}{\stChoice{\stLab}{\stS} \stSeq \stT}
      }
      \\
      \stEnvEntails{\stEnv[1]}{%
        \mpW
      }{\stS}
      \\
      \stJudge{\mpEnv}{
        \stEnv[0] \stEnvComp \stEnvMap{\mpChanRole{\mpS}{\roleP}}{\stT}
      }{
        \mpQ
      }
    \end{array}
  }{
    \stJudge{\mpEnv}{
      \stEnv
    }{
      \mpSel{\mpChanRole{\mpS}{\roleP}}{\roleQ}{\mpLab}{
        \mpW
      }{\mpQ}
    }
  }
  \hspace{-10mm}
  &
  \text{
    (by \eqref{eq:subj-red:crash1:p-pi}, inv.\ \inferrule{\iruleMPCrashS})%
  }
  \\
  \label{eq:subj-red:crash1:p-gamma-fc-not-end}
  &\forall j \in J: \mpChanRole{\mpS[j]}{\roleP[j]} \stNotSub \stEnd
  &\text{%
    (by \eqref{eq:subj-red:crash1:p-pi}, \eqref{eq:subj-red:crash1:p-stenv}, and \Cref{lem:fc-not-end-stop})%
  }
\end{flalign}

Now, notice that:
\begin{flalign}
  &\label{eq:subj-red:crash1:stenv-i}
  \begin{array}{@{}l@{}}
    \stEnv \stEnvMoveMaybeCrashStar \stEnvi
    \;\text{ s.t.}\\
    \quad\forall j \in J: \stEnvApp{\stEnvi}{\mpChanRole{\mpS[j]}{\roleP[j]}} = \stStop
  \end{array}
  \hspace{-0mm}
  &\text{%
    (by \eqref{eq:subj-red:crash1:p-gamma-fc-not-end},
    rule~\inferrule{\iruleTCtxCrash} in \Cref{def:mpst-env-reduction})%
  }
  \\
  &\label{eq:subj-red:crash1:pi-typing}
  \stJudge{\mpEnv}{%
    \stEnvi%
  }{%
    \mpPi%
  }%
  &\text{%
    (by \eqref{eq:subj-red:crash1:p-pi}, %
    \eqref{eq:subj-red:crash1:stenv-i}, %
    and \inferrule{\iruleMPPar} and \inferrule{\iruleMPStop})%
  }%
  \\%
  &\label{eq:subj-red:crash1:stenv-pj-not-reliable}%
  \forall j \in J: \roleP[j] \not\in \rolesR[{\mpS[j]}]
  &\text{%
    (by \eqref{eq:subj-red:crash1:p-pi}
    and \ref{item:subjred:no-reliable-crash})
  }
  \\
  &\label{eq:subj-red:crash1:stenvi-safe}%
  \forall \mpS \in \stEnvi: \stEnvSafeSessRolesSP{\mpS}{\rolesR[\mpS]}{\stEnvi}
  &\text{%
    (by \ref{item:subjred:stenv-safe}, %
    \eqref{eq:subj-red:crash1:stenv-i}, %
    \eqref{eq:subj-red:crash1:stenv-pj-not-reliable}
    and \Cref{def:mpst-env-safe}, clause \inferrule{\iruleSafeMove})%
  }%
\end{flalign}
Hence, we obtain the thesis by \eqref{eq:subj-red:crash1:stenv-i}, \eqref{eq:subj-red:crash1:stenvi-safe} and \eqref{eq:subj-red:crash1:pi-typing}.

\noindent
Case \inferrule{\iruleMPCrashR}: similar to case \inferrule{\iruleMPCrashS} above,
except that we proceed by inversion of \inferrule{\iruleMPCrashR}.

\noindent
Cases \inferrule{\iruleMPRedCtx} and \inferrule{\iruleMPRedCtxCrash}.
The proofs for these two cases are similar.
By inversion of the rule and \Cref{def:mpst-proc-context}, we have to prove the statement in the following sub-cases:
\begin{enumerate}[\inferrule{\iruleMPRedCtx} (1)]
  \item\label{item:subj-red:ctx:par}
    $\mpP = \mpQ \mpPar \mpR$ \;\;and\;\; $\mpPi = \mpQi \mpPar \mpR$ \;\;and\;\; $\mpQ \mpMoveMaybeCrash \mpQi$
  \item\label{item:subj-red:ctx:res}
    $\mpP = \mpRes{\mpSi}{\mpQ}$ \;\;and\;\; $\mpPi = \mpRes{\mpSi}{\mpQi}$ \;\;and\;\; $\mpQ \mpMoveMaybeCrash \mpQi$
  \item\label{item:subj-red:ctx:def}
    $\mpP = \mpDefAbbrev{\mpDefD}{\mpQ}$ \;\;and\;\; $\mpPi = \mpDefAbbrev{\mpDefD}{\mpQi}$ \;\;and\;\; $\mpQ \mpMoveMaybeCrash \mpQi$
\end{enumerate}
Cases \ref{item:subj-red:ctx:par} and \ref{item:subj-red:ctx:def} are easily proved using the induction hypothesis.  Therefore, here we focus on case \ref{item:subj-red:ctx:res}.
\begin{flalign}
  \label{eq:subj-red:ctx:res:p-typing}%
  &
  \exists \stEnv[\mpSi], \rolesRi \;\;\text{s.t.}\;\;
  \inference[\iruleMPResProp]{%
    \begin{array}{@{}l@{}}
      \stEnv[\mpSi] = \setenum{
        \stEnvMap{\mpChanRole{\mpSi}{\roleP}}{\stT[\roleP]}
      }_{\roleP \in I}
      \\
      \stEnvSafeSessRolesSP{\mpSi}{\rolesRi}{\stEnv[\mpSi]}
      \\
      \mpSi \!\not\in\! \stEnv
      \qquad%
      \stJudge{\mpEnv}{
        \stEnv \stEnvComp \stEnv[\mpSi]
      }{
        \mpQ
      }
    \end{array}
  }{%
    \stJudge{\mpEnv}{%
      \stEnv%
    }{%
      \mpP %
    }%
  }%
  &\text{%
    (by \ref{item:subj-red:ctx:res} and inv.\ of \inferrule{\iruleMPResProp})
  }
  \\
  \label{eq:subj-red:ctx:res:stenvi-stenvsi}%
  &\exists \stEnvi, \stEnvi[\mpSi] \;\;\text{s.t.}\;
  \left\{\begin{array}{@{}l@{}}
    \stEnvi[\mpSi] = \setenum{
      \stEnvMap{\mpChanRole{\mpSi}{\roleP}}{\stTi[\roleP]}
    }_{\roleP \in I}
    \\ %
    \mpSi \!\not\in\! \stEnvi
    \\ %
    \stEnv \stEnvMoveMaybeCrashStar \stEnvi
    \\ %
    \stEnv[\mpSi] \stEnvMoveMaybeCrashStar \stEnvi[\mpSi]
    \\[2mm]
    \forall \mpS \in \stEnvi: \stEnvSafeSessRolesSP{\mpS}{\rolesR[\mpS]}{\stEnvi}
    \\ %
    \stJudge{\mpEnv}{\stEnvi \stEnvComp \stEnvi[\mpSi]}{\mpQi}
  \end{array}\right\}
  &\text{%
    (by \eqref{eq:subj-red:ctx:res:p-typing}
    and i.h.)
  }
  \\
  \label{eq:subj-red:ctx:res:qi-nocrash}%
  &
  \forall \roleP \in \rolesRi:
  \not\exists \mpR: \mpQi \equiv \mpR \mpPar \mpStop{\mpS}{\roleP}
  &\text{%
    (by \ref{item:subj-red:ctx:res},
    \ref{item:subjred:no-reliable-crash}
    and \Cref{def:assumption-abiding-reduction})
  }
  \\
  \label{eq:subj-red:ctx:res:stenvii-nostop}%
  &
  \forall \roleP \in \rolesRi:
  \stEnvApp{\stEnv[\mpSi]}{\mpChanRole{\mpSi}{\roleP}} \neq \stStop
  &\text{%
    (by \ref{eq:subj-red:ctx:res:qi-nocrash}
    and \Cref{lem:uncrashed-process-no-crash-type})
  }
  \\
  \label{eq:subj-red:ctx:res:stenvi-stenvisi}%
  &
  \stEnvSafeSessRolesSP{\mpSi}{\rolesRi}{\stEnvi[\mpSi]}
  &\hspace{-20mm}%
  \text{%
    (by \eqref{eq:subj-red:ctx:res:p-typing},
     \eqref{eq:subj-red:ctx:res:stenvi-stenvsi}
     \eqref{eq:subj-red:ctx:res:stenvii-nostop}
     and \Cref{def:mpst-env-safe}, clause \inferrule{\iruleSafeMove})
  }
  \\
  \label{eq:subj-red:ctx:res:pi-typing}%
  &
  \inference[\iruleMPResProp]{%
    \begin{array}{@{}l@{}}
      \stEnv[\mpSi] = \setenum{
        \stEnvMap{\mpChanRole{\mpSi}{\roleP}}{\stTi[\roleP]}
      }_{\roleP \in I}
      \\
      \stEnvSafeSessRolesSP{\mpSi}{\rolesRi}{\stEnvi[\mpSi]}
      \\
      \mpSi \!\not\in\! \stEnvi
      \quad%
      \stJudge{\mpEnv}{
        \stEnvi \stEnvComp \stEnvi[\mpSi]
      }{
        \mpQi
      }
    \end{array}
  }{%
    \stJudge{\mpEnv}{%
      \stEnvi%
    }{%
      \mpPi %
    }%
  }%
  &\text{%
    (by \eqref{eq:subj-red:ctx:res:stenvi-stenvsi},
    \eqref{eq:subj-red:ctx:res:stenvi-stenvisi}
    and \ref{item:subj-red:ctx:res})
  }
\end{flalign}
Hence, we obtain the thesis by \eqref{eq:subj-red:ctx:res:stenvi-stenvsi} and \eqref{eq:subj-red:ctx:res:pi-typing}.
\end{proof}

\lemTypeSafety*
\begin{proof}
  From the hypothesis $\mpP \mpMoveMaybeCrashCheckedStar \mpPi$,
  we know that $\mpP = \mpP[0] \mpMoveMaybeCrashChecked \mpP[1] \mpMoveMaybeCrashChecked \cdots
  \mpMoveMaybeCrashChecked \mpP[n] = \mpPi$ (for some $n$).
  The proof proceeds by induction on $n$.  The base case $n=0$ is immediate: we have $\mpP = \mpPi$,
  hence $\mpPi$ is well-typed --- and since the term $\mpErr$ is not typeable, $\mpPi$ cannot contain such a term.
  In the inductive case $n = m+1$, we know (by the induction hypothesis) that $\mpP[m]$ is well-typed,
  and we apply \Cref{lem:subject-reduction} to conclude that $\mpP[m+1] = \mpPi$ is also well-typed and has no $\mpErr$ subterms.
\end{proof}

  \section{Proofs for Session Fidelity and Process Properties}
\label{sec:proofs:session-fidelity}
\label{sec:proofs:proc-properties}

\lemSessionFidelity*
\begin{proof}
  The proof structure is similar to Thm.\@ 5.4 in \cite{POPL19LessIsMore}: by
  induction on the derivation of the reduction of $\stEnv$,
  we infer the contents of $\stEnv$ and then the shape of $\mpP$ and its
  sub-processes $\mpP[\roleP]$, showing that they can mimic the reduction of
  $\stEnv$.
  The main differences \wrt \cite{POPL19LessIsMore} are that
  \begin{enumerate}[(1)]
    \item we now account for crashed session endpoints with type $\stStop$; %
      and
    \item the proof %
      covers more cases, as it now includes crash detection reductions, and outputs to crashed
      processes.
  \end{enumerate}

    Compared to the proof of Thm.\@ 5.4 in \cite{POPL19LessIsMore},
    we have the following additional cases to consider
    when a crash is detected, or a selection targets a crashed process.
    \begin{itemize}
    \item case $\stEnv \stEnvMoveAnnot{\ltsCrDe{\mpS}{\roleP}{\roleQ}} \stEnvi$.\quad
      In this case, the process $\mpP[\roleP]$  playing role $\roleP$ in session $\mpS$ is a branching on $\mpChanRole{\mpS}{\roleP}$ from $\roleQ$ (possibly within a process definition) including crash detection; therefore, $\mpP[\roleP]$ can correspondingly detect that the channel endpoint $\mpChanRole{\mpS}{\roleQ}$ is crashed, by rule \inferrule{\iruleMPRedCommD} in \Cref{fig:mpst-pi-semantics} (possibly after a finite number of transitions under rule \inferrule{\iruleMPRedCall}).
      The resulting continuation process $\mpPi$ is typed by $\stEnvi$; %
    \item case $\stEnv \stEnvMoveAnnot{\ltsSendRecv{\mpS}{\roleP}{\roleQ}{\stLab}} \stEnvi$ and $\stEnvApp{\stEnv}{\mpChanRole{\mpS}{\roleQ}} = \stStop$.\quad
      In this case, the process $\mpP[\roleP]$ playing role $\roleP$ in session $\mpS$ is a selection on $\mpChanRole{\mpS}{\roleP}$ towards $\roleQ$ (possibly within a process definition); therefore, $\mpP$ could correspondingly reduce to $\mpPi$ by sending either a basic value $\mpV$ or a channel endpoint $\mpChanRole{\mpSi}{\rolePi}$ (possibly after a finite number of transitions under rule \inferrule{\iruleMPRedCall}) to the crashed channel endpoint $\mpChanRole{\mpS}{\roleQ}$.  We have two possible cases for the communication reduction leading from $\mpP$ to $\mpPi$:
      \begin{itemize}
        \item rule \inferrule{\iruleMPRedCommE} in \Cref{fig:mpst-pi-semantics}, with $\mpChanRole{\mpSi}{\rolePi}$ crashed in $\mpPi$.  This case is impossible: in fact, by the side condition of the typing rule \inferrule{\iruleMPSel} (\Cref{fig:mpst-rules}), we must have $\stEnvApp{\stEnv[\roleP]}{\mpChanRole{\mpSi}{\rolePi}} \stNotSub \stEnd$ --- and this would contradict the assumption that $\mpP[\roleP]$ only plays role $\roleP$ in session $\mpS$, by $\stEnv[\roleP]$;
        \item rule \inferrule{\iruleMPRedCommEGround} in \Cref{fig:mpst-pi-semantics}.  In this case, we have $\stEnv[\roleP] \stEnvMoveOutAnnot{\roleP}{\roleQ}{\stChoice{\stLab}{\tyGround}} \stEnvi[\roleP]$ (for some $\stLab$ and basic type $\tyGround$), and the continuation process $\mpPi$ is typed by the resulting $\stEnvi$.
      \qedhere
      \end{itemize}
    \end{itemize}
\end{proof}

\Cref{lem:single-session-persistent} below says that if a process $\mpP$ satisfies the assumptions of session fidelity (\Cref{lem:session-fidelity})
then all its reductums will satisfy such assumptions, too. This means that if $\mpP$ enjoys session fidelity, then all its reductums enjoy session fidelity, too.

\begin{restatable}{proposition}{lemSingleSessionPersistent}%
    \label{lem:single-session-persistent}%
    Assume\, $\stJudge{\mpEnvEmpty\!}{\!\stEnv}{\!\mpP}$, %
    where %
    $\stEnv$ is ($\mpS; \rolesR$)-safe, %
    \,$\mpP \equiv \mpBigPar{\roleP \in I}{\mpP[\roleP]}$, %
    \,and\, $\stEnv = \bigcup_{\roleP \in I}\stEnv[\roleP]$ %
    such that, for each $\mpP[\roleP]$, %
    we have\, $\stJudge{\mpEnvEmpty\!}{\stEnv[\roleP]}{\!\mpP[\roleP]}$.
    \,Further, assume that each $\mpP[\roleP]$
    is either\, $\mpNil$ (up to $\equiv$), %
    or only plays $\roleP$ in $\mpS$, by $\stEnv[\roleP]$. %
    Then,\, $\mpP \mpMoveMaybeCrash[\mpS;\rolesR] \mpPi$
    \,implies\, $\exists \stEnvi$ %
    such that\, %
    $\stEnv \!\stEnvMoveMaybeCrashStar\! \stEnvi$ %
    \,and\, %
    $\stJudge{\mpEnvEmpty\!}{\!\stEnvi}{\mpPi}$, %
    \;with\; %
    $\stEnvi$ ($\mpS; \rolesR$)-safe, %
    \,$\mpPi \equiv \mpBigPar{\roleP \in I}{\mpPi[\roleP]}$, %
    \,and\, $\stEnvi = \bigcup_{\roleP \in I}\stEnvi[\roleP]$ %
    such that, for each $\mpPi[\roleP]$, ;
    we have\, $\stJudge{\mpEnvEmpty\!}{\stEnvi[\roleP]}{\!\mpPi[\roleP]}$;
    \,furthermore, each $\mpPi[\roleP]$
    is $\mpNil$ (up to $\equiv$),
    or only plays $\roleP$ in $\mpS$, by $\stEnvi[\roleP]$.%
\end{restatable}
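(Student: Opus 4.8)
The plan is to derive this from Subject Reduction (\Cref{lem:subject-reduction}) together with a structural analysis of the single reduction step showing that the decomposition into single-role sub-processes is preserved. The first move is to observe that, under the present hypotheses, the reliable reduction $\mpP \mpMoveMaybeCrash[\mpS;\rolesR] \mpPi$ actually entails the assumption-abiding reduction $\mpP \mpMoveMaybeCrashChecked \mpPi$ of \Cref{def:assumption-abiding-reduction}. Indeed, since each $\mpP[\roleP]$ only plays $\roleP$ in $\mpS$ by $\stEnv[\roleP]$, every session other than $\mpS$ occurring in $\stEnv$ is $\stEnd$-typed (condition (iii)) and every restricted subterm has an $\stEnd$ed context (condition (iv)); moreover, by \Cref{lem:fc-not-end-stop} the only non-$\stEnd$ free channels of $\mpP$ are of session $\mpS$, so no reduction can crash an endpoint of another session. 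Hence clauses (1) and (2) of $\mpMoveMaybeCrashChecked$ collapse to the single hypothesis $\mpP \mpMoveMaybeCrash[\mpS;\rolesR] \mpPi$, and \Cref{lem:subject-reduction} applies, yielding some $\stEnvi$ with $\stEnv \stEnvMoveMaybeCrashStar \stEnvi$, $\stEnvSafeSessRolesSP{\mpS}{\rolesR}{\stEnvi}$, and $\stJudge{\mpEnvEmpty}{\stEnvi}{\mpPi}$. This already provides the safe reductum $\stEnvi$, the typing of $\mpPi$, and the $(\mpS;\rolesR)$-safety; what remains is to re-establish the decomposition $\mpPi \equiv \mpBigPar{\roleP \in I}{\mpPi[\roleP]}$ with $\stEnvi = \bigcup_{\roleP \in I}\stEnvi[\roleP]$ and the per-role single-role conditions.

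For the structural part I would proceed by cases on the reduction rule used in $\mpP \mpMoveMaybeCrash[\mpS;\rolesR] \mpPi$, first using Subject Congruence (\Cref{lem:subject-congruence}) to normalise $\mpP$ to $\mpBigPar{\roleP \in I}{\mpP[\roleP]}$ and to push the reduction under \inferrule{\iruleMPRedCtx}/\inferrule{\iruleMPRedCongr}. Each rule touches at most two parallel components, leaving the others syntactically unchanged: in \inferrule{\iruleMPRedComm} a branching $\mpP[\roleP]$ and a selection $\mpP[\roleQ]$ synchronise, each reducing to a continuation that again only plays its own role; \inferrule{\iruleMPRedCommD} lets a single $\mpP[\roleP]$ take its $\mpLabCrash$ branch; and \inferrule{\iruleMPCrashS}/\inferrule{\iruleMPCrashR} turn one unreliable $\mpP[\roleP]$ (necessarily $\roleP \notin \rolesR$, consistent with $\mpMoveMaybeCrash[\mpS;\rolesR]$) into $\mpStop{\mpS}{\roleP}$, which by \Cref{lem:fc-not-end-stop} is its only free endpoint. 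Crucially, the case \inferrule{\iruleMPRedCommE} with a carried session endpoint cannot arise, because a single-role sub-process holds no second endpoint to transmit (and the side condition $\stS \stNotSub \stEnd$ of \inferrule{\iruleMPSel} forbids sending the $\stEnd$-typed ones), so only \inferrule{\iruleMPRedCommEGround} is relevant. In every case the residual of an affected component is either $\mpNil$ (up to $\equiv$), a crashed endpoint $\mpStop{\mpS}{\roleP}$ typed $\stStop$ (which, since $\stStop \stNotSub \stEnd$, still ``plays $\roleP$'' in the sense of \Cref{def:unique-role-proc} via rule \inferrule{\iruleMPStop}), or a guarded continuation that still only plays its role.

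To match this with the context side, I would read off from the derivation the per-entry type-context transitions corresponding to the process step, set $\stEnvi[\roleP]$ to be the resulting reductum of $\stEnv[\roleP]$ for each affected $\roleP$ and $\stEnvi[\roleR] = \stEnv[\roleR]$ for the untouched roles, and check $\stEnvi = \bigcup_{\roleP \in I}\stEnvi[\roleP]$ by inverting \inferrule{\iruleMPPar} on $\stJudge{\mpEnvEmpty}{\stEnvi}{\mpPi}$ and appealing to Narrowing (\Cref{lem:narrowing}) to absorb any subtyping introduced along the way. Verifying each single-role condition for $\mpPi[\roleP]$ is then routine: closedness and guardedness are inherited, condition (iii) holds because each $\stEnvi[\roleP]$ still assigns $\mpChanRole{\mpS}{\roleP}$ a type $\stNotSub \stEnd$ (a choice continuation, or $\stStop$), and condition (iv) is preserved since no reduction rule introduces new non-$\stEnd$ed restrictions.

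I expect the main obstacle to be the alignment carried out in the third paragraph: \Cref{lem:subject-reduction} hands back a single global $\stEnvi$, and I must argue that it admits a decomposition $\bigcup_{\roleP}\stEnvi[\roleP]$ that is simultaneously compatible with the typing of each $\mpPi[\roleP]$ and with the single-role invariant --- in particular, that the subtyping slack permitted by \inferrule{\iruleMPSub} and \Cref{lem:narrowing} never turns a choice continuation into $\stEnd$, which would break condition (iii). A secondary, more tedious point is the treatment of guarded process definitions: an affected component may need finitely many \inferrule{\iruleMPRedCall} unfoldings before exposing its visible action, and I must invoke guardedness (\Cref{def:unique-role-proc}(\ref{item:guarded-definitions:stmt})) to ensure these unfoldings terminate and preserve the single-role shape.
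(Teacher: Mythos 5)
Your proposal is correct and takes essentially the same route as the paper: the paper's proof is a one-line appeal to the proof of \Cref{lem:subject-reduction}, observing that its case analysis covers every possible transition from $\mpP$ to $\mpPi$ and in each case yields the required typing context, safety, and per-role decomposition. Your explicit bridging from $\mpMoveMaybeCrash[\mpS;\rolesR]$ to $\mpMoveMaybeCrashChecked$ (via conditions (iii)/(iv) of \Cref{def:unique-role-proc} and \Cref{lem:fc-not-end-stop}) and your rule-by-rule structural analysis are precisely the details the paper compresses into ``straightforward''.
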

\begin{proof}
  Straightforward from the proof of \Cref{lem:subject-reduction}, which accounts for all possible transitions from $\mpP$ to $\mpPi$, and in all cases yields the desired properties for its typing context $\stEnvi$.
\end{proof}

\lemProcessPropertiesVerif*
\begin{proof}
  \textbf{Deadlock-freedom}\quad
  Consider any $\mpPi$ such that $\mpP \!\mpMoveMaybeCrashCheckedStar\! \mpNotMoveP{\mpPi}$ \,with\, $\mpP = \mpP[0] \!\mpMoveMaybeCrashChecked\! \mpP[1] \!\mpMoveMaybeCrashChecked\! \cdots \!\mpMoveMaybeCrashChecked\! \mpP[n] = \mpNotMoveP{\mpPi}$ (for some $n$)
  with each reduction $\mpP[i] \!\mpMoveMaybeCrashChecked\! \mpP[i+1]$ ($i \!\in\! 0..n\!-\!1$)
  satisfying~%
  \Cref{def:assumption-abiding-reduction}.
  By \Cref{lem:single-session-persistent}, we know that each $\mpP[i]$ is well-typed
  and its typing context $\stEnv[i]$ is such that $\stEnv \stEnvMoveMaybeCrashStar[\mpS; \rolesR] \stEnv[i]$;
  moreover, $\mpP[i]$ satisfies the single-session requirements of \Cref{lem:session-fidelity}.
  Now observe that, since the process $\mpP[n] = \mpNotMoveP{\mpPi}$ cannot reduce further (except by crashing),
  by the contrapositive of \Cref{lem:session-fidelity}
  we obtain $\stEnvNotMoveP{\stEnv[n]}$;
  and since $\stEnv$ is ($\mpS;\rolesR$)-deadlock-free by hypothesis, by
  \Cref{def:typing-ctx-properties} (item \textbf{Deadlock-freedom}%
  )
  we have
  $\forall \mpChanRole{\mpS}{\roleP} \!\in\! \stEnv[n]$: $\stEnvApp{\stEnv[n]}{\mpChanRole{\mpS}{\roleP}} \!\stSub\! \stEnd$ or $\stEnvApp{\stEnv[n]}{\mpChanRole{\mpS}{\roleP}} \!=\! \stStop$ or $\stEnvApp{\stEnv[n]}{\mpChanRole{\mpS}{\roleP}} \!\stSub\! \stInNB{\roleQ}{\stCrashLab}{}{} \stSeq \stTi$.
  Therefore, by inversion of typing,
  we have $\mpPi \equiv \mpNil \mpPar \mpBigPar{i \in I}{\mpStop{\mpS}{\roleP[i]}} \mpPar \mpBigPar{j \in J}{(\mpDefAbbrev{\mpDefD[j,1]}{\ldots\mpDefAbbrev{\mpDefD[j,n_j]}{\mpBranchSingle{\mpChanRole{\mpS}{\roleP[j]}}{\roleQ[j]}{\mpLabCrash}{}{\mpQi[j]}}})}$
  --- which (by \Cref{def:proc-properties}, item~\textbf{Deadlock-freedom}) is the thesis.

  \textbf{Terminating}\quad
  We know that
  $\exists j$ finite such that, $\forall n \ge j$, $\stEnv = \stEnv[0] \stEnvMoveMaybeCrash[\mpS;\rolesR] \stEnv[1] \stEnvMoveMaybeCrash[\mpS;\rolesR] \cdots \stEnvMoveMaybeCrash[\mpS;\rolesR] \stEnv[n]$ implies $\stEnvNotMoveP{\stEnv[n]}$;
  moreover,  since $\stEnv$ is $(\mpS;\rolesR)$-deadlock-free (by \Cref{def:typing-ctx-properties}, item~\ref{item:typing-ctx-properties:term}),
  whenever $\stEnv \stEnvMoveMaybeCrash[\mpS;\rolesR] \stEnv[1] \stEnvMoveMaybeCrash[\mpS;\rolesR] \cdots \stEnvMoveMaybeCrash[\mpS;\rolesR] \stEnvNotMoveP{\stEnv[n]}$ (for any $n$), then $\forall \mpChanRole{\mpS}{\roleP} \!\in\! \stEnv[n]$: $\stEnvApp{\stEnv[n]}{\mpChanRole{\mpS}{\roleP}} \!\stSub\! \stEnd$ or $\stEnvApp{\stEnv[n]}{\mpChanRole{\mpS}{\roleP}} \!=\! \stStop$ or $\stEnvApp{\stEnv[n]}{\mpChanRole{\mpS}{\roleP}} \!\stSub\! \stInNB{\roleQ}{\stCrashLab}{}{} \stSeq \stTi$.
  Considering all the possible sequences of reductions of $\mpP$, we have the following cases:
  \begin{enumerate}[(1)]
  \item $\mpP \!=\! \mpP[0] \!\mpMoveMaybeCrashChecked\! \mpP[1] \!\mpMoveMaybeCrashChecked\!%
    \cdots \!\mpMoveMaybeCrashChecked\! \mpP[m]$ and $\mpNotMoveP{\mpP[m]}$ (for some $m$).
    Since $\stEnv$ is $(\mpS;\rolesR)$-deadlock-free (by \Cref{def:typing-ctx-properties}, item~\ref{item:typing-ctx-properties:term}),
    we obtain that, by item~\textbf{Deadlock-freedom} above, $\mpP[m] \equiv \mpNil \mpPar \mpBigPar{i \in I}{\mpStop{\mpS}{\roleP[i]}} \mpPar \mpBigPar{j \in J}{(\mpDefAbbrev{\mpDefD[j,1]}{\ldots\mpDefAbbrev{\mpDefD[j,n_j]}{\mpBranchSingle{\mpChanRole{\mpS}{\roleP[j]}}{\roleQ[j]}{\mpLabCrash}{}{\mpQi[j]}}})}$;
  \item there is an infinite sequence of reductions $\mpP \!=\! \mpP[0] \!\mpMoveMaybeCrashChecked\! \mpP[1] \!\mpMoveMaybeCrashChecked\! \mpP[2] \!\mpMoveMaybeCrashChecked\! \cdots$ such that $\forall i \ge 0: \mpMoveP{\mpP[i]}$.
    This case is impossible.  In fact, if we admit it, by the proof of \Cref{lem:subject-reduction}
    we have two possibilities (both leading to a contradiction):
    \begin{itemize}
      \item  there is an infinite sequence of typing context reductions $\stEnv = \stEnv[0] \stEnvMoveMaybeCrash[\mpS;\rolesR] \stEnv[1] \stEnvMoveMaybeCrash[\mpS;\rolesR] \stEnv[2] \stEnvMoveMaybeCrash[\mpS;\rolesR] \cdots$ to type each $\mpP[i]$ with a suitable $\stEnv[j]$ (with $j \leq i$); moreover, for each such $\stEnv[j]$, we have $\stEnvMoveP{\stEnv[j]}$ (otherwise, $\stEnv$ would not be $(\mpS;\rolesR)$-deadlock-free, hence by \Cref{def:typing-ctx-properties}, item~\ref{item:typing-ctx-properties:term}, it would also not be $(\mpS;\rolesR)$-terminating). But then, we contradict the hypothesis that $\exists j$ finite such that, $\forall n \ge j$, $\stEnv = \stEnv[0] \stEnvMoveMaybeCrash[\mpS;\rolesR] \stEnv[1] \stEnvMoveMaybeCrash[\mpS;\rolesR] \cdots \stEnvMoveMaybeCrash[\mpS;\rolesR] \stEnv[n]$ implies $\stEnvNotMoveP{\stEnv[n]}$;
      \item there are infinitely many processes reductums that can be typed by a same $\stEnv[i]$.  By the proof of \Cref{lem:subject-reduction}, this can only happen in the following ways (all leading to a contradiction):
      \begin{itemize}
        \item firing infinitely many reduction within some restricted session --- which would contradict the hypothesis that each parallel sub-process of $\mpP$ only plays one role in session $\mpS$ (\Cref{def:unique-role-proc});
        \item performing infinitely many process calls by firing rule rule \inferrule{\iruleMPRedCall} (in \Cref{fig:mpst-pi-semantics}) infinitely many times, without other message transmissions or error detection reductions (which would cause the typing context to reduce). However, this would contradict the hypothesis that $\mpP$ has guarded definitions (\Cref{lem:guarded-definitions});
        \item having a recursive protocol in $\stEnv$ such that $\stEnv[i] = \stEnv[i+1]$. This would lead to the same contradiction addressed in the first case above.
      \end{itemize}
    \end{itemize}
  \end{enumerate}
  Summing up, all possible sequences of reductions of $\mpP$ are finite, and they are all deadlock-free --- which is the thesis.

  \textbf{Never-Terminating}\quad
    By hypothesis and \Cref{def:typing-ctx-properties} (item~\ref{item:typing-ctx-properties:nterm}), we know that
    $\stEnv \stEnvMoveMaybeCrashStar[\mpS;\rolesR] \stEnvi$ implies $\stEnvMoveP{\stEnvi}$.
    By contradiction, assume that $\mpP$ is \emph{not} never-terminating, \ie
    $\exists \mpPi$ such that $\mpP \!=\! \mpP[0] \!\mpMoveMaybeCrashChecked\! \mpP[1] \!\mpMoveMaybeCrashChecked\! \cdots \!\mpMoveMaybeCrashChecked\! \mpP[n] = \mpNotMoveP{\mpPi}$.
    By \Cref{lem:single-session-persistent}, we know that each $\mpP[i]$ is well-typed
    and its typing context $\stEnv[i]$ is such that $\stEnv \stEnvMoveMaybeCrashStar[\mpS; \rolesR] \stEnv[i]$;
    moreover, $\mpP[i]$ satisfies the single-session requirements of \Cref{lem:session-fidelity}.
    Now observe that, since the process $\mpP[n] = \mpNotMoveP{\mpPi}$ cannot reduce further (except by crashing),
    by the contrapositive of \Cref{lem:session-fidelity}
    we obtain $\stEnvNotMoveP{\stEnv[n]}$
    --- but this contradicts the hypothesis that $\stEnv$ is $(\mpS;\rolesR)$-never-terminating.  Therefore, we conclude that $\mpP$ is never-terminating.

  \textbf{Live}\quad
    By contradiction, assume that $\mpP$ is \emph{not} live.
    Since (by hypothesis) each parallel component of $\mpP$ only plays one role $\roleP$ in session $\mpS$,
    this means that there are $\mpPi, \mpCtx, \mpQ$ such that
    $\mpP = \mpP[0] \!\mpMoveMaybeCrashChecked\! \mpP[1] \!\mpMoveMaybeCrashChecked\! \cdots \!\mpMoveMaybeCrashChecked\! \mpP[n] = \mpPi \!\equiv\! \mpCtxApp{\mpCtx}{\mpQ}$ where either:
    \begin{itemize}
    \item%
      $\mpQ = \mpSel{\mpChanRole{\mpS}{\roleP}}{\roleQ}{\mpLab}{\mpW}{\mpQi}$ %
      (for some $\mpLab, \mpW, \mpQi$), %
      \;and\; %
      $\not\exists \mpCtxi$: %
      $\mpPi \!\mpMoveStar\! \mpCtxApp{\mpCtxi}{\mpQi}$.\quad%
      By \Cref{lem:single-session-persistent}, we know that each $\mpP[i]$ is well-typed
      and its typing context $\stEnv[i]$ is such that $\stEnv \stEnvMoveMaybeCrashStar[\mpS; \rolesR] \stEnv[i]$;
      moreover, each $\mpP[i]$ satisfies the single-session requirements of \Cref{lem:session-fidelity}.
      Therefore, $\mpPi$ satisfies the single-session requirements of \Cref{lem:session-fidelity},
      and is typed by some $\stEnvi$ such that $\stEnv \stEnvMoveMaybeCrashStar[\mpS; \rolesR] \stEnv[i]$
      --- hence, by inversion of typing, $\mpQ$ is typed by some $\stEnvi[\roleP]$ (part of $\stEnvi$)
      where $\stEnvApp{\stEnvi[\roleP]}{\mpChanRole{\mpS}{\roleP}}$ is a (possibly recursive) internal choice
      towards $\roleQ$, including a choice $\stChoice{\stLab}{\stS}$ (where $\stS$ types the message payload $\mpW$). Therefore, we have $\stEnvMoveAnnotP{\stEnvi}{\stEnvOutAnnot{\roleP}{\roleQ}{\stChoice{\stLab}{\stS}}}$. Now, recall that (for the sake of the proof by contradiction) we are assuming that
      no sequence of reductions of $\mpPi$ can fire the top-level selection of $\mpQ$;
      this means that no parallel component of $\mpPi$ ever exposes an external choice by role $\roleQ$
      including message label $\mpLab$; correspondingly, there is at least
      one fair and non-crashing path beginning with $\stEnvi$
      (yielded by \Cref{lem:subject-reduction}) that never fires a transmission label $\ltsSendRecv{\mpS}{\roleP}{\roleQ}{\stLabi}$ (for any $\stLabi$).
      But then, such a fair path starting from $\stEnvi$ is not live, hence
      (by \Cref{def:typing-ctx-properties}, item~\ref{item:typing-ctx-properties:live}) we obtain that $\stEnv$ is \emph{not} live
      --- contradiction;

    \item%
      $\mpQ = \mpBranch{\mpChanRole{\mpS}{\roleP}}{\roleQ}{i \in I}{\mpLab[i]}{x_i}{\mpQi[i]}{}$ %
      (for some $I$, $\mpLab[i], \mpFmt{x_i}, \mpQi[i]$ such that either $|I| \neq 1$ or $\forall i \!\in\! I: \mpLab[i] \!\neq\! \mpLabCrash$), %
      \;and\; %
      $\not\exists \mpCtxi, k \!\in\! I, \mpW$:\, %
      $\mpPi \mpMoveStar \mpCtxApp{\mpCtxi}{\mpQi[k]\subst{x_k}{\mpW}}$. %
      The proof is similar to the previous case, and reaches a similar contradiction.
    \end{itemize}
    Summing up, we have shown that if we assume $\mpP$ not live, we reach a contradiction.
    Therefore, we conclude that $\mpP$ is live.
\end{proof}

}

\end{document}